\documentclass[journal,onecolumn]{IEEEtran}

\usepackage{amsmath,amssymb,amsfonts,amsthm,mathtools,bm}
\usepackage[noadjust]{cite}
\usepackage{enumerate}
\usepackage{tikz}
\usetikzlibrary{arrows.meta,calc,positioning,decorations.pathreplacing}
\usepackage{hyperref}
\usepackage{cleveref}

\newtheorem{theorem}{Theorem}[section]
\newtheorem{lemma}{Lemma}[section]
\newtheorem{proposition}{Proposition}[section]
\newtheorem{corollary}{Corollary}[section]
\theoremstyle{definition} 
\newtheorem{definition}{Definition}[section]

\crefname{construction}{Construction}{Constructions}
\Crefname{construction}{Construction}{Constructions}

\newtheorem{remark}{Remark}[section]
\newtheorem{assumption}{Assumption}
\crefname{assumption}{Assumption}{Assumptions}
\Crefname{assumption}{Assumption}{Assumptions}

\newtheorem*{clm}{Claim}
\newtheorem{example}{Example}[section]

\newcommand{\cfcite}[2]{\textit{cf.}~\cite[#2]{#1}}

\newcommand\nc\newcommand
\nc{\cA}{\mathcal{A}}\nc{\cB}{\mathcal{B}}\nc{\cC}{\mathcal{C}}\nc{\cD}{\mathcal{D}}
\nc{\cE}{\mathcal{E}}\nc{\cF}{\mathcal{F}}\nc{\cG}{\mathcal{G}}\nc{\cH}{\mathcal{H}}
\nc{\cI}{\mathcal{I}}\nc{\cJ}{\mathcal{J}}\nc{\cK}{\mathcal{K}}\nc{\cL}{\mathcal{L}}
\nc{\cM}{\mathcal{M}}\nc{\cN}{\mathcal{N}}\nc{\cO}{\mathcal{O}}\nc{\cP}{\mathcal{P}}
\nc{\cQ}{\mathcal{Q}}\nc{\cR}{\mathcal{R}}\nc{\cS}{\mathcal{S}}\nc{\cT}{\mathcal{T}}
\nc{\cU}{\mathcal{U}}\nc{\cV}{\mathcal{V}}\nc{\cW}{\mathcal{W}}\nc{\cX}{\mathcal{X}}
\nc{\cY}{\mathcal{Y}}\nc{\cZ}{\mathcal{Z}}

\nc{\bba}{\mathbf{a}}\nc{\bbb}{\mathbf{b}}\nc{\bbc}{\mathbf{c}}\nc{\bbd}{\mathbf{d}}
\nc{\bbe}{\mathbf{e}}\nc{\bbf}{\mathbf{f}}\nc{\bbg}{\mathbf{g}}\nc{\bbh}{\mathbf{h}}
\nc{\bbi}{\mathbf{i}}\nc{\bbj}{\mathbf{j}}\nc{\bbk}{\mathbf{k}}\nc{\bbl}{\mathbf{l}}
\nc{\bbm}{\mathbf{m}}\nc{\bbn}{\mathbf{n}}\nc{\bbo}{\mathbf{o}}\nc{\bbp}{\mathbf{p}}
\nc{\bbq}{\mathbf{q}}\nc{\bbr}{\mathbf{r}}\nc{\bbs}{\mathbf{s}}\nc{\bbt}{\mathbf{t}}
\nc{\bbu}{\mathbf{u}}\nc{\bbv}{\mathbf{v}}\nc{\bbw}{\mathbf{w}}\nc{\bfx}{\mathbf{x}}
\nc{\bby}{\mathbf{y}}\nc{\bbz}{\mathbf{z}}

\nc{\bbA}{\mathbf{A}}\nc{\bbB}{\mathbf{B}}\nc{\bbC}{\mathbf{C}}\nc{\bbD}{\mathbf{D}}
\nc{\bbE}{\mathbf{E}}\nc{\bbF}{\mathbf{F}}\nc{\bbG}{\mathbf{G}}\nc{\bbH}{\mathbf{H}}
\nc{\bbI}{\mathbf{I}}\nc{\bbJ}{\mathbf{J}}\nc{\bbK}{\mathbf{K}}\nc{\bbL}{\mathbf{L}}
\nc{\bbM}{\mathbf{M}}\nc{\bbN}{\mathbf{N}}\nc{\bbO}{\mathbf{O}}\nc{\bbP}{\mathbf{P}}
\nc{\bbQ}{\mathbf{Q}}\nc{\bbR}{\mathbf{R}}\nc{\bbS}{\mathbf{S}}\nc{\bbT}{\mathbf{T}}
\nc{\bbU}{\mathbf{U}}\nc{\bbV}{\mathbf{V}}\nc{\bbW}{\mathbf{W}}\nc{\bfX}{\mathbf{X}}
\nc{\bbY}{\mathbf{Y}}\nc{\bbZ}{\mathbf{Z}}

\nc{\sA}{\mathsf{A}}\nc{\sB}{\mathsf{B}}\nc{\sC}{\mathsf{C}}\nc{\sD}{\mathsf{D}}
\nc{\sE}{\mathsf{E}}\nc{\sF}{\mathsf{F}}\nc{\sG}{\mathsf{G}}\nc{\sH}{\mathsf{H}}
\nc{\sI}{\mathsf{I}}\nc{\sJ}{\mathsf{J}}\nc{\sK}{\mathsf{K}}\nc{\sL}{\mathsf{L}}
\nc{\sM}{\mathsf{M}}\nc{\sN}{\mathsf{N}}\nc{\sO}{\mathsf{O}}\nc{\sP}{\mathsf{P}}
\nc{\sQ}{\mathsf{Q}}\nc{\sR}{\mathsf{R}}\nc{\sS}{\mathsf{S}}\nc{\sT}{\mathsf{T}}
\nc{\sU}{\mathsf{U}}\nc{\sV}{\mathsf{V}}\nc{\sW}{\mathsf{W}}\nc{\sX}{\mathsf{X}}
\nc{\sY}{\mathsf{Y}}\nc{\sZ}{\mathsf{Z}}

\nc{\bma}{\bm{a}}\nc{\bmb}{\bm{b}}\nc{\bmc}{\bm{c}}\nc{\bmd}{\bm{d}}
\nc{\bme}{\bm{e}}\nc{\bmf}{\bm{f}}\nc{\bmg}{\bm{g}}
\nc{\bmh}{\bm{h}}\nc{\bmi}{\bm{i}}\nc{\bmj}{\bm{j}}
\nc{\bmk}{\bm{k}}\nc{\bml}{\bm{l}}\nc{\bmm}{\bm{m}}
\nc{\bmn}{\bm{n}}\nc{\bmo}{\bm{o}}\nc{\bmp}{\bm{p}}
\nc{\bmq}{\bm{q}}\nc{\bmr}{\bm{r}}\nc{\bms}{\bm{s}}\nc{\bmt}{\bm{t}}\nc{\bmu}{\bm{u}}\nc{\bmv}{\bm{v}}\nc{\bmw}{\bm{w}}\nc{\bmx}{\bm{x}}\nc{\bmy}{\bm{y}}\nc{\bmz}{\bm{z}}

\nc{\bmA}{\bm{A}}\nc{\bmB}{\bm{B}}\nc{\bmC}{\bm{C}}\nc{\bmD}{\bm{D}}
\nc{\bmE}{\bm{E}}\nc{\bmF}{\bm{F}}\nc{\bmG}{\bm{G}}
\nc{\bmH}{\bm{H}}\nc{\bmI}{\bm{I}}\nc{\bmJ}{\bm{J}}
\nc{\bmK}{\bm{K}}\nc{\bmL}{\bm{L}}\nc{\bmM}{\bm{M}}
\nc{\bmN}{\bm{N}}\nc{\bmO}{\bm{O}}\nc{\bmP}{\bm{P}}
\nc{\bmQ}{\bm{Q}}\nc{\bmR}{\bm{R}}\nc{\bmS}{\bm{S}}\nc{\bmT}{\bm{T}}\nc{\bmU}{\bm{U}}\nc{\bmV}{\bm{V}}\nc{\bmW}{\bm{W}}\nc{\bmX}{\bm{X}}\nc{\bmY}{\bm{Y}}\nc{\bmZ}{\bm{Z}}

\nc{\bmalpha}{\bm{\alpha}}\nc{\bmbeta}{\bm{\beta}}\nc{\bmgamma}{\bm{\gamma}}

\nc{\Part}{\mathsf{Part}}

\newcommand{\mathset}[1]{\left\{#1\right\}}

\newcommand{\abs}[1]{ \left|#1\right|}
\newcommand{\ceilenv}[1]{\left\lceil #1 \right\rceil}
\newcommand{\floorenv}[1]{\left\lfloor #1 \right\rfloor}
\newcommand{\parenv}[1]{\left( #1 \right)}
\newcommand{\sparenv}[1]{\left[ #1 \right]}

\newcommand{\lwt}[1]{\mathsf{wt}_{L_1}\left(#1\right)}

\newcommand{\vt}{\mathsf{VT}}

\newcommand{\e}{\mathsf{e}}

\newcommand{\Lab}{\mathsf{Lab}}
\DeclareMathOperator{\lcm}{lcm}
\newcommand{\red}{\mathsf{red}_2}

\title{Substring Edit Correcting Codes and Optimal Single Burst-Deletion Correcting Codes}
\author{Zuo~Ye and Gennian~Ge%
\thanks{This research is supported by the National Key Research and Development Program of China under Grant 2025YFC3409900, the National Natural Science Foundation of China under Grant 12231014 and Grant 12501466, Beijing Scholars Program, and Xiaomi Young Scholars Program.%

Z. Ye is with the Institute of Mathematics and Interdisciplinary Sciences, Xidian University,
Xi'an 710126, China. Email: yezuo@xidian.edu.cn.%

G. Ge is with the School of Mathematical Sciences, Capital Normal University, Beijing 100048, China. Email: gnge@zju.edu.cn.
}
}

\begin{document}
\maketitle
\begin{abstract}
A $k$-substring edit in a sequence first deletes a substring of length at most $k$, and then inserts a sequence of length at most $k$ at the same position. A code that can correct a $k$-substring edit is called a $k$-substring edit code. In this paper, we develop a new localization method and use it to construct a $q$-ary $k$-substring edit correcting code with $\log n+8\log\log n+o(\log\log n)$ bits of redundancy for any fixed $q\ge2$ and $k\ge1$, where $n$ is the code length. For the binary alphabet, this improves upon the redundancy $\log n+16k\log\log n+o(\log\log n)$ obtained by Li \emph{et al}. 
When the deleted substring and the inserted sequence have different lengths, we further construct codes with redundancy $\log n+O_{q,k}(1)$, which is optimal up to an additive constant. As corollaries, for all fixed $q\ge2$ and $1\le t\le T$, we obtain $q$-ary $(\le t)$-burst-deletion correcting codes and $(t,T)$-localized deletion correcting codes with redundancies $\log n+O_{q,t}(1)$ and $\log n+O_{q,T}(1)$, respectively. To the best of our knowledge, these are the first constructions attaining optimal redundancy up to an additive constant for these two deletion models over the full range of fixed parameters. For $(\le t)$-burst-deletion correction with $t\ge2$, such redundancy had previously been achieved only for $q=t=2$ by Levenshtein in 1967.
\end{abstract}
\begin{IEEEkeywords} 
substring edit, burst deletions, burst insertions, localized deletions, error-correcting codes
\end{IEEEkeywords}

\section{Introduction}
A $k$-substring edit replaces a substring of length at most $k$ by another sequence of length at most $k$. More precisely, over the alphabet $\Sigma_q=\mathset{0,1,\ldots,q-1}$, it transforms a sequence $\bmx=\bmu\bms\bmv$ into $\bmy=\bmu\bmt\bmv$, where $\bmu$, $\bms$, $\bmt$, and $\bmv$ are sequences over $\Sigma_q$, with the lengths of $\bms,\bmt$ being at most $k$. This operation encompasses insertions, deletions, substitutions, and combinations of these errors  within one short region. A $k$-substring edit correcting code must uniquely recover the original codeword $\bmx$ from $\bmy$ without knowing the location of the edit or the lengths of $\bms$ and $\bmt$.

Tang \emph{et al.} introduced the $k$-substring edit model in \cite{YuanyuanTang2025TCOMM}. Their analysis of source-code revisions and simulated nanopore sequencing reads provides empirical evidence that edits tend to occur in clusters. In file synchronization, for example, updating an expression in a source-code file may replace a block of characters while leaving the rest unchanged. Similarly, a cluster of insertion, deletion, and substitution errors in a stored or transmitted sequence can be represented by a substring replacement. From a coding perspective, the main feature of a $k$-substring edit is that all changes are confined to one short region, even though the symbols after that region may shift. Using this restriction can substantially reduce the redundancy compared with protecting against the same number of edits at arbitrary positions.

For the binary alphabet and fixed $k$, Tang \emph{et al.} \cite{YuanyuanTang2025TCOMM} constructed codes correcting one $k$-substring edit with redundancy $2\log n+o(\log n)$, where $n$ is the code length. Li \emph{et al.} \cite{YutingLi2025IT} subsequently improved the redundancy bound to $\log n+16k\log\log n+o(\log\log n)$. As shown later in the current paper, for fixed $q$ and $k$, any $q$-ary code of length $n$ correcting one $k$-substring edit requires at least $\log n-O_{q,k}(1)$ bits of redundancy. Thus, Li's construction achieves the optimal leading term $\log n$ in the binary case, with an additive $O_k(\log\log n)$ gap between their redundancy bound and the lower bound.

\subsection{Related Error Models}\label{sec_relatedmodel}
We review some error models which are special cases of the $k$-substring edit model.
\subsubsection{\textbf{Burst/Localized Deletions}}
A burst of exactly $t$ deletions removes $t$ consecutive symbols, whereas a burst of at most $t$ deletions removes a substring of length at most $t$. Codes correcting these errors are called $t$-burst-deletion correcting codes and $(\le t)$-burst-deletion correcting codes, respectively. The study of these codes goes back to Levenshtein's work in 1967 \cite{Levenshtein1967Burst}, where a binary $(\le 2)$-burst-deletion correcting code was constructed. Applications in DNA storage and file synchronization have motivated further studies of these codes in the past two decades \cite{Schoeny2017IT,Lenz2020ISIT,Song2023IT,Shuche2024IT,Tuan2024IT,Song2025Entropy,Yubo2025tsburstIT,ZuoYe2026IT}. A central objective is to construct $q$-ary codes whose redundancy matches the lower bound $\log n-O_{q,t}(1)$ \cite{Schoeny2017IT,Shuche2024IT} up to an additive constant, for fixed $q\ge2$ and $t\ge1$.

For a prescribed burst length $t$, Sun \emph{et al.} \cite{Yubo2025tsburstIT} constructed $q$-ary $t$-burst-deletion correcting codes with redundancy $\log n+O_{q,t}(1)$, attaining this objective. For $(\le t)$-burst-deletion correcting codes, several redundancy upper bounds are available in different parameter ranges \cite{Song2023IT,Song2025Entropy,Yubo2025tsburstIT,ZuoYe2026IT}. For arbitrary fixed $q,t\ge2$, Song, Cai, and Quek \cite{Song2025Entropy} obtained redundancy $\log n+8\log\log n+o(\log\log n)$. For even $q$, Sun \emph{et al.}\ \cite[Theorem~10]{Yubo2025tsburstIT} obtained redundancy $\log n+t\log\log n+O_{q,t}(1)$. Very recently, Ye, Sun, and Ge \cite[Theorem~17]{ZuoYe2026IT} reduced the redundancy to $\log n+(t-1)\log\log n+O_{q,t}(1)$ when $t<q$, or when $q$ is even and $t<2q$. Sharper bounds are available in some special cases. For $t=1$, optimal $q$-ary single-deletion codes have redundancy $\log n+O_q(1)$ \cite{Tenengolts1984,Tuan2024IT}. For $q=2$, Levenshtein's code achieves redundancy $\log n+O(1)$ when $t=2$ \cite{Levenshtein1967Burst}, and \cite[Remark~18]{ZuoYe2026IT} gives redundancy $\log n+\log\log n+O(1)$ when $t=3$.

A $t$-localized deletion removes at most $t$ symbols from an unknown window of $t$ consecutive symbols, without requiring the deleted positions to be consecutive. Schoeny \emph{et al.} introduced and studied this model under the name ``bursts of non-consecutive deletions'' \cite{Schoeny2017IT}.
Bitar \emph{et al.} \cite{RawadBitar2021ISIT} constructed binary codes with redundancy $\log n+O\bigl(t\log^2(t\log n)\bigr)$, which becomes $\log n+O_t((\log\log n)^2)$ for fixed $t$. For even $q$, Sun \emph{et al.} \cite[Theorem~11]{Yubo2025tsburstIT} obtained $q$-ary codes with redundancy $\log n+2t\log\log n+O_{q,t}(1)$.

More generally, a $(t,T)$-localized deletion removes at most $t$ symbols from an unknown window of $T$ consecutive symbols, where the window length $T\ge t$ is specified separately from $t$. The $t$-localized-deletion model is the special case $T=t$. For fixed $q,t$ and $T$, Ye, Sun, and Ge \cite[Theorem~20]{ZuoYe2026IT} constructed $q$-ary codes correcting $(t,T)$-localized deletions with redundancy $\log n+(t-1)\log\log n+O_{q,t,T}(1)$ when $t<q$, or when $q$ is even and $t<2q$.

Clearly, the $(t,T)$-localized-deletion model includes both the $t$-burst-deletion and $(\le t)$-burst-deletion models. Its connection with substring edits can be seen directly. Let $\bms$ be the affected length-$T$ window, and let $\bmt$ be the subsequence remaining after $d$ deletions, where $1\le d\le t$. Replacing $\bms$ by $\bmt$ is a $T$-substring edit.

\subsubsection{\textbf{The \texorpdfstring{$(t,s)$}{(t,s)}-Burst Error}}
For prescribed nonnegative integers $t$ and $s$ with $t+s>0$, a $(t,s)$-burst error replaces a length-$t$ substring by a length-$s$ sequence. The special case $(t,s)=(2,1)$ was introduced by Schoeny \emph{et al.} in \cite{Schoeny2017IT} as a component of their constructions for correcting localized deletions. Lu and Zhang subsequently formulated the general model in \cite{LuZhang2023IT}. This model also generalizes the $t$-deletion-burst error model.

Sun \emph{et al.} \cite{Yubo2025tsburstIT} later constructed $q$-ary $(t,s)$-burst correcting codes with redundancy $\log n+O_{q,t,s}(1)$ for every fixed $q\ge2$ and every fixed pair $(t,s)$ with $t+s>0$. Their redundancy is optimal up to an additive constant depending only on $q,t,s$.

Every $(t,s)$-burst error with $\max\{t,s\}\le k$ is a $k$-substring edit. A $(t,s)$-burst correcting code is required to correct errors for one prescribed pair $(t,s)$, whereas a $k$-substring edit correcting code must handle all such pairs simultaneously. For a codeword of length $n$, the received sequence has length $n-t+s$, which determines $s-t$ but generally does not determine $t$ and $s$ individually. Consequently, the optimal redundancy result for each fixed pair does not by itself give a single code with the same redundancy for the more general substring edit model.

\subsubsection{\textbf{The \texorpdfstring{$t$}{t}-Burst-Edit}}
A $t$-burst-edit, as defined in \cite{Yubosun2025IT}, is one of three operations: deleting exactly $t$ consecutive symbols, inserting a sequence of length $t$ at one position, or replacing a length-$t$ substring by another length-$t$ substring. When $t=1$, this reduces to the single-edit model, whose binary case was studied by Levenshtein \cite{Levenshtein1966}. Sun, Xi, and Ge \cite{SunXiGe2023IT} considered the burst-edit channel in the sequence reconstruction setting. Sun and Ge \cite{Yubosun2025IT} subsequently studied codes correcting these errors. The objective is to construct a single code that protects against both synchronization errors and length-preserving substitutions.

For a single binary edit, redundancy $\log n+O(1)$ is achievable \cite{Levenshtein1966}. For arbitrary fixed alphabet size $q\ge2$, Gabrys \emph{et al.} \cite{Ryan2023IT} obtained single-edit correcting codes with redundancy $\log n+O_q(\log\log n)$. For fixed $t\ge1$ and fixed $q\ge2$ that is a power of two, Sun and Ge \cite[Theorem~12]{Yubosun2025IT} constructed $q$-ary $t$-burst-edit correcting codes with redundancy $\log n+\log\log n+O_{q,t}(1)$. This redundancy is within an additive $\log\log n+O_{q,t}(1)$ term of the $\log n-O_{q,t}(1)$ lower bound inherited from $t$-burst-deletion correcting codes.

Writing $(a,b)$ for the lengths of the removed and replacement substrings, the three operations above correspond to $(t,0)$, $(0,t)$, and $(t,t)$. Thus, every $t$-substring edit correcting code also corrects a $t$-burst-edit. The two error models coincide when $t=1$. For $t\ge2$, however, the substring edit model also allows pairs $(a,b)$ with $1\le a,b\le t$ and $a\ne b$, as well as bursts of fewer than $t$ deletions or insertions. Correcting a $t$-burst-edit alone therefore does not guarantee correction of all these additional errors.

\subsubsection{\textbf{Single Deletion or Single Adjacent Transposition}}
An adjacent transposition exchanges two neighboring symbols, replacing $ab$ by $ba$, where $a,b\in\Sigma_q$ and $a\ne b$. Motivated by DNA-based storage, Gabrys, Yaakobi, and Milenkovic \cite{Gabrys2018Damerau} initiated the study of codes correcting deletions and adjacent transpositions. A basic case requires a single code to correct either one deletion or one adjacent transposition, thereby providing protection against both symbol loss and a local change in symbol order.

For the binary alphabet, Gabrys \emph{et al.} \cite[Theorem~2]{Ryan2023IT} constructed codes correcting either one deletion or one adjacent transposition with redundancy $\log n+O(\log\log n)$, which is within an additive $O(\log\log n)$ of the lower bound $\log n-O(1)$. For arbitrary fixed $q\ge2$, Khuat and Kim \cite{KhuatKim2023} constructed $q$-ary codes for the stronger model allowing one insertion, deletion, substitution, or adjacent transposition, with redundancy $2\log n+O_q(1)$ bits.

A single deletion replaces a length-one substring by the empty sequence, whereas an adjacent transposition replaces a length-two substring by another sequence of length two. Both are $2$-substring edits. Consequently, taking $k=2$ in our first construction gives $q$-ary codes for this model, and also for the stronger one-edit-or-transposition model.

\subsection{Our Contributions and Main Ideas}
This paper continues the study of $k$-substring edit correcting codes. For fixed $q\ge2$ and $k\ge1$ and sufficiently large code length $n$, we obtain the following two constructions.
\begin{itemize}
\item We construct $q$-ary $k$-substring edit correcting codes with redundancy at most $\log n+8\log\log n+8\log\log\log n+O_{q,k}(1)$. This extends the leading redundancy term $\log n$ achieved by the binary construction in \cite{YutingLi2025IT} to arbitrary fixed alphabets. Moreover, the coefficient of $\log\log n$ is reduced from $16k$ to $8$ and is independent of both $q$ and $k$. 

\item When the removed substring and its replacement have different lengths, we construct $q$-ary codes correcting one such edit with redundancy $\log n+O_{q,k}(1)$.
\end{itemize}

These constructions also yield codes for several related error models mentioned in \Cref{sec_relatedmodel}. Applying \Cref{thm_codefine} with $k=2$ gives $q$-ary codes correcting one insertion, deletion, substitution, or adjacent transposition with redundancy $\log n+8\log\log n+o(\log\log n)$ for every fixed $q\ge2$.
Applying \Cref{thm_uneqfinalcode} with $k=T$ gives $q$-ary $(t,T)$-localized deletion correcting codes with redundancy $\log n+O_{q,T}(1)$ for all fixed $q\ge2$ and $1\le t\le T$. Taking $T=t$ gives, in particular, $(\le t)$-burst-deletion correcting codes with redundancy $\log n+O_{q,t}(1)$. For both deletion models, the redundancy is optimal up to an additive constant, since correcting even a single deletion requires $\log n-O_q(1)$ bits of redundancy.

We next outline the first construction. An overview of the second construction is given in \Cref{sec_outlineuneq}.

The first construction builds on the partition and integer-valued mapping of Li \emph{et al} \cite{YutingLi2025IT}. Their method partitions a binary sequence into segments at occurrences of a prescribed pattern and maps each segment to an integer. It first locates the induced edit within $O_k(\log n)$ consecutive entries of the resulting integer sequence. Since each entry represents a segment of length $O_k(\log n)$, this locates the edit in the original sequence within an interval of $O_k((\log n)^2)$ symbols.

We extend the partition and mapping to $q$-ary sequences and sharpen the localization bound by accounting for the actual lengths of the segments. When the edit changes the number of segments, we estimate the VT-syndrome difference associated with each possible edit location. As this location moves across a segment, the estimate changes monotonically by at least a fixed positive multiple of the segment length. The bound on the estimation error then implies that the segments between candidate edit locations have total length $O_{q,k}(\log n)$. When the number of segments is unchanged, we show that the induced edit changes at most two adjacent entries of the integer sequence. The VT syndrome, together with two additional modular sums, identifies the first changed entry exactly. In either case, we obtain an interval of $O_{q,k}(\log n)$ symbols covering the edit region. The general localization framework is developed in \Cref{sec_framework}.

To complete recovery, we adapt the distributed graph coloring method of Li, Gabrys, and Farnoud \cite{LiGabrysFarnoud2026IT} to obtain a local syndrome requiring $4\log m+4\log\log m+O_{q,k}(1)$ bits for a block of length $m$. We use two partitions into blocks of length $m=\Theta_{q,k}(\log n)$, with their boundaries shifted by half a block. This ensures that the localized interval obtained above lies entirely within one block of at least one partition. For each partition, we store modular sums of the components of the block syndromes. Since the other blocks in the selected partition can be read from the received sequence, their contributions can be subtracted to recover the syndrome of the affected block. The two sets of sums, together with the $\log n+O_{q,k}(1)$ bits used for localization, yield the stated redundancy bound.

\subsection{Organization}
\Cref{sec_pre} defines the notation and error model, and establishes bounds on code size. \Cref{sec_framework} develops the general localization framework, and \Cref{sec_instantiation} gives its $q$-ary instantiation. \Cref{sec_substringeditcode} supplies the local recovery component and the first code construction. \Cref{sec_unequal} presents the unequal-length construction and its application to localized/burst deletions. \Cref{sec_conlcusion} concludes the paper.

\section{Preliminaries}\label{sec_pre}
For any $q\ge2$, denote by $\Sigma_q$ the $q$-ary alphabet $\{0,1,\ldots,q-1\}$. We write $\mathbb Z$ for the set of integers and $\mathbb{Z}_{\ge0}$ for the set of nonnegative integers. For any alphabet $\Sigma$, sequences over $\Sigma$ are written in bold lowercase letters, and symbols in $\Sigma$ are written in plain lowercase letters. The length of a sequence $\bmx$, denoted by $\abs{\bmx}$, is the number of symbols in $\bmx$. Note that the empty sequence, denoted by $\epsilon$, is the unique sequence of length $0$.

The notation $|\cdot|$ is also used for the cardinality of a finite set and the absolute value of a real number. Its meaning will always be clear from the argument.
For $n\in\mathbb{Z}_{\ge0}$, let $\Sigma^n$ be the set of all sequences of length $n$ over $\Sigma$. Define $\Sigma^{*}\triangleq\cup_{n\ge0}\Sigma^{n}$. By convention, the $i$-th symbol of sequence $\bmx\in\Sigma^n$ is denoted by $x_i$, for any $1\le i\le n$. Thus, the sequence $\bmx$ of length $n$ can be written as $x_1\cdots x_n$ or $\parenv{x_1,\ldots,x_n}$. The concatenation of two sequences $\bmx$ and $\bmy$ over the same alphabet is denoted by $\bmx\bmy$. For any sequence $\bmx$ over $\mathbb{Z}_{\ge0}$ or $\Sigma_q$, define
\begin{equation}\label{eq_L1VT}
    \lwt{\bmx}\triangleq\sum_{i=1}^{\abs{\bmx}} x_i,\quad \vt(\bmx)\triangleq\sum_{i=1}^{\abs{\bmx}} i\cdot x_i.
\end{equation}
We call $\lwt{\bmx}$ the $L_1$-weight of $\bmx$, and $\vt(\bmx)$ the VT syndrome of $\bmx$.

For integers $a$ and $b$, define $[a,b]=\mathset{i\in\mathbb{Z}:a\le i\le b}$, with $[a,b]=\varnothing$ when $a>b$. For $n\in\mathbb{Z}_{\ge0}$, abbreviate $[1,n]$ as $[n]$. Let $\bmx$ be a sequence of length $n$. For a set $I=\mathset{i_1,i_2,\ldots,i_r}$ with $1\le i_1<i_2<\cdots<i_r\le n$, denote $\bmx_{I}=x_{i_1}x_{i_2}\cdots x_{i_r}$. We set $\bmx_I=\epsilon$ when $I$ is empty. In particular, if $I=[i,j]$ for some $1\le i\le j\le n$, the sequence $\bmx_I$ is called a \emph{substring} of $\bmx$.

\begin{definition}[$k$-substring edit]\label{def_substringedit}
Let $\Sigma$ be an alphabet and $\bmx,\bmy\in\Sigma^*$. We say that $\bmy$ is obtained from $\bmx$ by a \emph{$k$-substring edit} if
\begin{equation}\label{eq_editbasic}
 \bmx=\bmu\bms\bmv,
 \quad
 \bmy=\bmu\bmt\bmv,
\end{equation}
for some sequences $\bmu,\bms,\bmt,\bmv\in\Sigma^{*}$ with $\abs{\bms},\abs{\bmt}\le k$ and $\abs{\bms}+\abs{\bmt}>0$.
In other words, the edit replaces $\bms$ by $\bmt$. We write $\bmx\to_k\bmy$ whenever such a representation exists, and call the edit \emph{nontrivial} if
$\bmx\ne\bmy$.
\end{definition}

In the above definition, if $\bms=\epsilon$, the edit inserts $\bmt$ between $\bmu$ and $\bmv$. If $\bmt=\epsilon$, the edit deletes $\bms$.

When $\Sigma=\Sigma_q$ and the $k$-substring edit $\bmx\to_k\bmy$ is nontrivial, we always choose the representation in \eqref{eq_editbasic} so that the replaced substring is minimal. Concretely, when both $\bms$
and $\bmt$ are nonempty, we always assume that the first symbols of $\bms$ and $\bmt$ are different, and that the last symbols of $\bms$ and $\bmt$ are different, i.e.,
\begin{equation}\label{eq_minimalends}
 s_1\ne t_1,
 \quad
 s_{\abs{\bms}}\ne t_{\abs{\bmt}}.
\end{equation}
Pure insertions (when $\bms=\epsilon$) and pure deletions (when $\bmt=\epsilon$) already have this form.

For $\bmx\in\Sigma^*$, let $\cE_{b,k}(\bmx)$ be the set of all sequences obtainable from $\bmx$ by at most $b$ $k$-substring edits.

\begin{definition}
A nonempty set $\cC\subseteq\Sigma_q^n$ is a
\emph{$(b,k)$-substring-edit correcting code} if
$\cE_{b,k}(\bmx)\cap\cE_{b,k}(\bmy)=\varnothing$ for all distinct $\bmx,\bmy\in\cC$.  When $b=1$, we simply call $\cC$ a \emph{$k$-substring-edit correcting code}.
\end{definition}

The bit redundancy of a code $\cC\subseteq\Sigma_q^n$ is
$\red(\cC)\triangleq\log(q^n/|\cC|)$, where $\log$ denotes the base-$2$ logarithm. A subscript on $O(\cdot)$ or $\Omega(\cdot)$ lists the fixed parameters on which the implicit constants may depend.

Regarding the maximum size (or the minimum redundancy) of a $(b,k)$-substring-edit correcting code, we have the next theorem.
\begin{theorem}
    Let $q\ge2$, $b\ge1$, and $1\le k\le n$ be integers. Set $\tau=\min\mathset{b,\floorenv{n/k}}$ and
    \begin{equation*}
        V_{q,k,b}(n)=\sum_{j=0}^{\tau}\binom{n-(k-1)j}{j}\parenv{(q-1)q^{k-1}}^j.
    \end{equation*}
    Let $\cC\subseteq\Sigma_q^n$ be a $(b,k)$-substring edit correcting code with maximum size. Then 
    \begin{equation}\label{eq_uplowbound}
        \frac{q^{n-2kb}}{(n+kb)^{2b}(k+1)^{4b}}\le\abs{\cC}\le\min\mathset{q^{\max\mathset{0,n-2kb}},\frac{q^n}{V_{q,k,b}(n)}}.
    \end{equation}
    Equivalently, it holds that
    \begin{equation*}
       \max\mathset{\min\mathset{n,2kb}\log q,\log V_{q,k,b}(n)}\le\red(\cC)\le 2b\log(n+kb)+4b\log(k+1)+2kb\log q.
    \end{equation*}
    Suppose that $q$ and $b$ are fixed. When $n\ge2kb$, we have
    \begin{equation}\label{eq_lowredfixedqandb}
        \red(\cC)\ge\max\mathset{2kb\log q,b\log n+kb\log q}-O_{q,b}(1).
    \end{equation}
\end{theorem}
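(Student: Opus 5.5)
The plan is to prove the two upper bounds on $\abs{\cC}$ by two separate packing/projection arguments, the lower bound by a greedy (Gilbert--Varshamov type) argument, and then to read off the redundancy statements by taking logarithms. Throughout I use that the inverse of a $k$-substring edit is again a $k$-substring edit (swap $\bms$ and $\bmt$). Hence $\bmz\in\cE_{b,k}(\bmy)$ if and only if $\bmy\in\cE_{b,k}(\bmz)$.

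\emph{Upper bound $q^{\max\{0,n-2kb\}}$.} I claim that two distinct codewords cannot agree outside their first $2kb$ positions. Suppose $n\ge 2kb$, $\bmx=\bma\bmc\bmu$ and $\bmy=\bma'\bmc'\bmu$ with $\abs{\bma}=\abs{\bma'}=\abs{\bmc}=\abs{\bmc'}=kb$.
\begin{itemize}
\item Apply $b$ edits to $\bmx$, each replacing one length-$k$ block of $\bma$ by the corresponding block of $\bma'$; this gives $\bma'\bmc\bmu$.
\item Similarly apply $b$ edits to $\bmy$ that turn $\bmc'$ into $\bmc$; this also gives $\bma'\bmc\bmu$.
\end{itemize}
So $\cE_{b,k}(\bmx)\cap\cE_{b,k}(\bmy)\ne\varnothing$. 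The projection onto the last $n-2kb$ coordinates is therefore injective on $\cC$, giving $\abs{\cC}\le q^{n-2kb}$. When $n<2kb$, the same split of $\bmx$ and $\bmy$ into two halves of length at most $kb$ shows that any two words are confusable, so $\abs{\cC}\le1$.

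\emph{Upper bound $q^n/V_{q,k,b}(n)$.} This is sphere packing. I will show $\abs{\cE_{b,k}(\bmx)\cap\Sigma_q^n}\ge V_{q,k,b}(n)$ for every $\bmx\in\Sigma_q^n$; disjointness of these sets over $\cC$ then gives the bound. For each $0\le j\le\tau$, choose $j$ pairwise disjoint windows of length $k$ in $[n]$; there are $\binom{n-(k-1)j}{j}$ such choices, by the standard stars-and-bars count. Replace each window by a length-$k$ word whose first symbol differs from the original, giving $((q-1)q^{k-1})^j$ choices. Each result lies in $\cE_{b,k}(\bmx)$, since $j\le b$. The step needing care is injectivity, and I expect this to be the only subtle point. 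Given the output $\bmz$, the window set is recovered greedily: the first window starts at the first position where $\bmz$ differs from $\bmx$. Each subsequent window starts at the first differing position after the end of the previous window. This works because every window has a changed first symbol and windows are disjoint. Thus distinct choices, including distinct $j$, give distinct $\bmz$, and the count is exactly $V_{q,k,b}(n)$.

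\emph{Lower bound.} I run a greedy construction: repeatedly pick any word and discard all words confusable with it. The number of $\bmy$ confusable with a fixed $\bmx$ is at most $\abs{\cE_{b,k}(\bmx)}\cdot\max_{\bmz}\abs{\cE_{b,k}(\bmz)}$, by the symmetry above, with $\bmz$ ranging over $\cE_{b,k}(\bmx)$. Each intermediate sequence has length at most $n+kb$. A single edit is determined by:
\begin{itemize}
\item a position, at most $n+kb$ choices;
\item $\abs{\bms}\in[0,k]$, at most $k+1$ choices;
\item the word $\bmt$, at most $\sum_{\ell\le k}q^\ell\le(k+1)q^k$ choices.
\end{itemize}
So one edit has at most $(n+kb)(k+1)^2q^k$ outcomes. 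Counting "at most $b$" edits as exactly $b$ edits, where trivial edits are allowed, gives $\abs{\cE_{b,k}(\cdot)}\le(n+kb)^b(k+1)^{2b}q^{kb}$. Hence the number of discarded words per step is at most $(n+kb)^{2b}(k+1)^{4b}q^{2kb}$, which includes $\bmx$ itself. This yields the left inequality of \eqref{eq_uplowbound}.

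\emph{Redundancy statements.} Taking $\log$ of \eqref{eq_uplowbound} gives the equivalent redundancy form. For \eqref{eq_lowredfixedqandb} with $n\ge 2kb$, the first term $2kb\log q$ is immediate. For the second term, keep only the $j=b$ summand of $V_{q,k,b}(n)$. Using $n-(k-1)b-b+1\ge n/2$ once $n\ge 2kb$, this gives $\binom{n-(k-1)b}{b}\ge (n/2)^b/b!$. Combined with $((q-1)q^{k-1})^b\ge q^{kb}/2^b$, this yields $\log V_{q,k,b}(n)\ge b\log n+kb\log q-O_{q,b}(1)$. The constant depends only on $b$, since the $2^b$ and $b!$ factors do not involve $k$.
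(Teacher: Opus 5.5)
Your proposal is correct and matches the paper's proof for both upper bounds: replacing at most $b$ length-$k$ blocks on each side inside a window of length $\min\{n,2kb\}$, and the sphere-packing count with greedy recovery of the windows. It also matches for the redundancy consequences, the only difference being that the paper cites \cite[Lemma~2]{YuanyuanTang2025TCOMM} for the lower bound while you prove it with the standard greedy count. In that count, your claim that all intermediate sequences have length at most $n+kb$ holds only if you bound $\abs{\cE_{b,k}(\bmz)\cap\Sigma_q^n}$ rather than the full ball $\abs{\cE_{b,k}(\bmz)}$ (a path from a $\bmz$ of length $n+kb$ to a word of length $n$ never passes through a longer word), and an insertion at the end gives $n+kb+1$ gaps, an off-by-one that is absorbed by slack elsewhere in your estimate.
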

\begin{IEEEproof}
The lower bound in \eqref{eq_uplowbound} is given in \cite[Lemma 2]{YuanyuanTang2025TCOMM}. We only have to prove the upper bound.

We first prove the first upper bound in \eqref{eq_uplowbound}. Put $m=\min\mathset{n,2kb}$, and fix an interval $I=[i,i+m-1]\subseteq[n]$. Let
$r=\lceil m/k\rceil$. Then $r\le2b$. For $j\in[r-1]$, denote $I_j=\sparenv{i+(j-1)k,i+jk-1}$. Denote $I_r=\sparenv{i+(r-1)k,i+m-1}$. Clearly, $I_1,\ldots,I_r$ are sub-intervals of $I$ of length at most $k$. Moreover, they form a partition of $I$.

Suppose that codewords $\bmx,\bmy\in\cC$ agree outside $I$, i.e., $\bmx_{[n]\setminus I}=\bmy_{[n]\setminus I}$. Form $\bmz\in\Sigma_q^n$ as follows:
\begin{equation*}
    \begin{array}{c}
       \bmz_{[n]\setminus I}=\bmx_{[n]\setminus I}=\bmy_{[n]\setminus I},\\
        \bmz_{I_j}=\bmx_{I_j}\text{ for }j\le\floorenv{r/2},\\
        \bmz_{I_j}=\bmy_{I_j}\text{ for }j>\floorenv{r/2}.
    \end{array}
\end{equation*}
Since $r\le 2b$, we have $\floorenv{r/2}, r-\floorenv{r/2}\le b$. Then it follows that $\bmz\in\cE_{b,k}(\bmx)\cap\cE_{b,k}(\bmy)$. This implies $\bmx=\bmy$ because $\cC$ is a $(b,k)$-substring-edit correcting code. Therefore, it holds that $\bmx_{[n]\setminus I}\ne\bmy_{[n]\setminus I}$ for any distinct $\bmx,\bmy\in\cC$, which gives
$\abs{\cC}\le q^{n-m}=q^{\max\mathset{n-2kb,0}}$.

We next prove the second upper bound. Since $\cC$ is a $(b,k)$-substring-edit correcting code, by the sphere-packing argument, it is sufficient to establish the following lower bound on the size of $\cE_{b,k}(\bmx)\cap\Sigma_q^n$.
\begin{clm}
    For every $\bmx\in\Sigma_q^n$, it holds that $\abs{\cE_{b,k}(\bmx)\cap\Sigma_q^n}\ge V_{q,k,b}(n)$.
\end{clm}
\noindent\textit{Proof of Claim:}
It is clear that $\bmx\in\cE_{b,k}(\bmx)\cap\Sigma_q^n$. We next estimate the size of $\cE_{b,k}(\bmx)\cap\Sigma_q^n\setminus\{\bmx\}$.

For each integer $j\in[\tau]$, choose $j$ pairwise disjoint intervals
\begin{equation}\label{eq_constraintofI}
 I_h=[i_h,i_h+k-1],\quad
 1\le i_1<\cdots<i_j\le n-k+1,
 \quad i_{h+1}\ge i_h+k\quad(1\le h<j).
\end{equation}
For each choice of the tuple $\parenv{I_1,\ldots,I_j}$, define $j+1$ integers $a_1,a_2,\ldots,a_{j+1}$ as follows:
\begin{equation}\label{eq_itoa}
    \begin{array}{c}
        a_1=i_1-1,\\
        a_h=i_h-i_{h-1},\text{ for every }2\le h\le j,\\
        a_{j+1}=n+1-i_j.
    \end{array}
\end{equation}
Clearly, the tuple $\parenv{a_1,\ldots,a_{j+1}}$ satisfies
\begin{equation}\label{eq_constraintofa}
   \begin{array}{c}
        a_1+a_2+\cdots+a_{j+1}=n,\\
        a_1\ge0,\\
        a_h\ge k,\text{ for every }2\le h\le j+1.
   \end{array}
\end{equation}
On the other hand, it follows from \eqref{eq_itoa} that each tuple $\parenv{a_1,\ldots,a_{j+1}}\in\mathbb{Z}^{j+1}$ satisfying \eqref{eq_constraintofa} uniquely determines a tuple $\parenv{I_1,\ldots,I_j}$ satisfying \eqref{eq_constraintofI}. Therefore, for each $j\in[\tau]$, the number of the tuples $\parenv{I_1,\ldots,I_j}$ satisfying \eqref{eq_constraintofI} equals the number of tuples $\parenv{a_1,\ldots,a_{j+1}}\in\mathbb{Z}^{j+1}$ satisfying \eqref{eq_constraintofa}, which is
$\binom{n-(k-1)j}{j}$. 

For every $j\in[\tau]$ and every tuple $\parenv{I_1,\ldots,I_j}$ that satisfies \eqref{eq_constraintofI}, form a sequence $\bmz\in\Sigma_q^n$ from $\bmx$ as follows: on each $I_h$, replace $\bmx_{I_h}$ with an arbitrary $\bmt^{(h)}\in\Sigma_q^k$ satisfying $t^{(h)}_1\ne x_{i_h}$. There are $(q-1)q^{k-1}$ choices for each replacement. By construction, sequences $\bmz$ and $\bmx$ differ at at least $j$ coordinates.

Distinct tuples $\parenv{j,I_1,\ldots,I_j,\bmt^{(1)},\ldots,\bmt^{(j)}}$ produce distinct $\bmz$. Indeed, for an output $\bmz$,
let $D(\bmx,\bmz)=\mathset{i\in[n]:x_i\ne z_i}$, which is not empty because $\bmx\ne\bmz$. Then $i_1=\min D(\bmx,\bmz)$ and thus, the interval $I_1$ is determined. After removing the elements of
$D(\bmx,\bmz)$ that lie in $I_1$, the smallest remaining element, if one exists, is $i_2$. This gives $I_2$.  Repeating this procedure recovers $j$ and all the intervals $I_1,\ldots,I_j$. Finally, we have $\parenv{\bmt^{(1)},\ldots,\bmt^{(j)}}=\parenv{\bmz_{I_1},\ldots,\bmz_{I_j}}$.

The preceding argument shows that $\abs{\cE_{b,k}(\bmx)\cap\Sigma_q^n\setminus\{\bmx\}}$ is at least the number of tuples $\parenv{j,I_1,\ldots,I_j,\bmt^{(1)},\ldots,\bmt^{(j)}}$, which is $\sum_{j=1}^{\tau}\binom{n-(k-1)j}{j}\parenv{(q-1)q^{k-1}}^j=V_{q,k,b}(n)-1$. This completes the proof of the claim.
\hfill$\square$

Finally, if $n\ge 2kb$, the term with $j=b$ gives
\begin{align*}
 V_{q,k,b}(n)&\ge
 \binom{n-(k-1)b}{b}\parenv{(q-1)q^{k-1}}^{b}\\
 &\ge\parenv{\frac{n-(k-1)b}{b}}^b\parenv{(q-1)q^{k-1}}^b\\
 &\ge\parenv{\frac{n}{2b}}^b\parenv{(q-1)q^{k-1}}^b.
\end{align*}
Now \eqref{eq_lowredfixedqandb} follows.
\end{IEEEproof}

\bigskip
Throughout the rest of this paper, parameters $q$ and $k$ are assumed to be fixed integers which are independent of the code-length $n$. Our objective is to construct $k$-substring edit correcting codes with small redundancy.

The following two definitions will make the exposition in this paper clearer and more precise.
\begin{definition}[edit region]\label{def_editregion}
    Let $\bmx$ and $\bmy$ be represented as in \eqref{eq_editbasic} and \eqref{eq_minimalends}. Set $i=\abs{\bmu}$. If $\abs{\bms}>0$, the \emph{edit region} in $\bmx$ is the interval $[i+1,i+\abs{\bms}]$. If $\abs{\bms}=0$, it must be that $\abs{\bmt}>0$. In this case, the edit region in $\bmx$ is the \emph{gap} $g=i+1$ between $x_i$ and $x_{i+1}$, with the conventions that $g=1$ precedes the first symbol $x_1$ and $g=\abs{\bmx}+1$ follows the last symbol $x_{\abs{\bmx}}$. Equivalently, if $\abs{\bms}>0$, the edit region is the coordinates occupied by $\bms$; if $\abs{\bms}=0$, the edit region is the position where $\bmt$ is inserted.
    
    The edit region in $\bmy$ is defined analogously using $\bmt$.
\end{definition}

By definition, the edit region in $\bmx$ is a sub-interval of $\sparenv{1,\abs{\bmx}}$ when $\bms$ is not empty, or a gap between coordinates when $\bms$ is empty. The same observation holds for $\bmy$. \Cref{fig_editregion} illustrates \Cref{def_editregion}.

\begin{figure}[!t]
\centering
\begin{tikzpicture}[x=0.8cm,y=0.8cm,>=Latex,font=\small]
  \begin{scope}
    \node[anchor=west] at (-0.5,1.75) {(a) $\bms\ne\epsilon$, $\bmt\ne\epsilon$};
    \node[left] at (0,0.9) {$\bmx$};
    \node[left] at (0,-1.05) {$\bmy$};
    \draw (0,0.6) rectangle (1.5,1.2);
    \draw[fill=red!12] (1.5,0.6) rectangle (3,1.2);
    \draw (3,0.6) rectangle (4.5,1.2);
    \node at (0.75,0.9) {$\bmu$};
    \node at (2.25,0.9) {$\bms$};
    \node at (3.75,0.9) {$\bmv$};
    \draw[decorate,decoration={brace,mirror,amplitude=3pt}] (1.5,0.48)--(3,0.48);
    \node[below] at (2.25,0.35) {$[i+1,i+|\bms|]$};
    \draw (0,-1.35) rectangle (1.5,-0.75);
    \draw[fill=red!12] (1.5,-1.35) rectangle (3.5,-0.75);
    \draw (3.5,-1.35) rectangle (5,-0.75);
    \node at (0.75,-1.05) {$\bmu$};
    \node at (2.5,-1.05) {$\bmt$};
    \node at (4.25,-1.05) {$\bmv$};
    \draw[decorate,decoration={brace,mirror,amplitude=3pt}] (1.5,-1.47)--(3.5,-1.47);
    \node[below] at (2.5,-1.6) {$[i+1,i+|\bmt|]$};
  \end{scope}
  \begin{scope}[xshift=5.15cm]
    \node[anchor=west] at (-0.5,1.75) {(b) $\bms=\epsilon$, $\bmt\ne\epsilon$};
    \node[left] at (0,0.9) {$\bmx$};
    \node[left] at (0,-1.05) {$\bmy$};
    \draw (0,0.6) rectangle (1.5,1.2);
    \draw (1.5,0.6) rectangle (3,1.2);
    \node at (0.75,0.9) {$\bmu$};
    \node at (2.25,0.9) {$\bmv$};
    \draw[very thick] (1.5,0.55)--(1.5,1.25);
    \draw[->] (1.5,0.05)--(1.5,0.51);
    \node[below] at (1.5,0.03) {gap $g=i+1$};
    \draw (0,-1.35) rectangle (1.5,-0.75);
    \draw[fill=red!12] (1.5,-1.35) rectangle (3.5,-0.75);
    \draw (3.5,-1.35) rectangle (5,-0.75);
    \node at (0.75,-1.05) {$\bmu$};
    \node at (2.5,-1.05) {$\bmt$};
    \node at (4.25,-1.05) {$\bmv$};
    \draw[decorate,decoration={brace,mirror,amplitude=3pt}] (1.5,-1.47)--(3.5,-1.47);
    \node[below] at (2.5,-1.6) {$[i+1,i+|\bmt|]$};
  \end{scope}
  \begin{scope}[xshift=10.3cm]
    \node[anchor=west] at (-0.5,1.75) {(c) $\bms\ne\epsilon$, $\bmt=\epsilon$};
    \node[left] at (0,0.9) {$\bmx$};
    \node[left] at (0,-1.05) {$\bmy$};
    \draw (0,0.6) rectangle (1.5,1.2);
    \draw[fill=red!12] (1.5,0.6) rectangle (3,1.2);
    \draw (3,0.6) rectangle (4.5,1.2);
    \node at (0.75,0.9) {$\bmu$};
    \node at (2.25,0.9) {$\bms$};
    \node at (3.75,0.9) {$\bmv$};
    \draw[decorate,decoration={brace,mirror,amplitude=3pt}] (1.5,0.48)--(3,0.48);
    \node[below] at (2.25,0.35) {$[i+1,i+|\bms|]$};
    \draw (0,-1.35) rectangle (1.5,-0.75);
    \draw (1.5,-1.35) rectangle (3,-0.75);
    \node at (0.75,-1.05) {$\bmu$};
    \node at (2.25,-1.05) {$\bmv$};
    \draw[very thick] (1.5,-1.4)--(1.5,-0.7);
    \draw[->] (1.5,-1.93)--(1.5,-1.44);
    \node[below] at (1.5,-1.95) {gap $g=i+1$};
  \end{scope}
\end{tikzpicture}
\caption{Edit regions for the specified decomposition $\bmx=\bmu\bms\bmv$, $\bmy=\bmu\bmt\bmv$, where $i=|\bmu|$: (a) replacement of a nonempty substring by a nonempty sequence; (b) pure insertion; (c) pure deletion. In (c), the interval occupied by $\bms$ in $\bmx$ becomes a gap in $\bmy$.}
\label{fig_editregion}
\end{figure}
\begin{definition}[covering an edit region]\label{def_covereditregion}
    For a coordinate interval $J=[a,b]\subseteq[1,\abs{\bmx}]$, we say that $J$ or $\bmx_J$ \emph{covers} the edit region in $\bmx$ if $[i+1,i+\abs{\bms}]\subseteq J$ when $\abs{\bms}>0$, and if $a\le i+1\le b+1$ when $\abs{\bms}=0$. Coverage in $\bmy$ is defined in the same way.
\end{definition}
Notice that the above definition allows a gap at either endpoint of a substring to be regarded as covered by that substring.

\begin{example}
Here we give some examples to illustrate \Cref{def_editregion,def_covereditregion}. Each edit region refers to the specified decomposition in \eqref{eq_editbasic}.
\begin{enumerate}[$(i)$]
\item Let $\bmu=01$, $\bms=10$, $\bmt=01$, and $\bmv=01$. Replacing $\bms$ by $\bmt$ changes $\bmx=011001$ to $\bmy=010101$. Both edit regions are $[3,4]$. The interval $[2,5]$ covers either region, whereas $[3,3]$ does not.
\item Let $\bmu=01$, $\bms=10$, $\bmt=\epsilon$, and $\bmv=01$. Deleting $\bms$ changes $\bmx=011001$ to $\bmy=0101$. The edit region in $\bmx$ is $[3,4]$, and the edit region in $\bmy$ is the gap $g=3$. In $\bmy$, each of $[1,2]$, $[3,4]$ and $[2,3]$ covers this gap. The interval $[1,1]$ does not cover it.
\item Reversing the edit in (ii) inserts $10$ into $\bmx=0101$ and gives $\bmy=011001$. The edit region in $\bmx$ is now the gap $g=3$, and that in $\bmy$ is $[3,4]$.
\item For $\bmx=01$, insertion before the first symbol has gap $g=1$, covered by $[1,1]$. Insertion after the last symbol has gap $g=3$, covered by $[2,2]$. Thus endpoint insertions are included in \Cref{def_covereditregion}.
\end{enumerate}
\end{example}

In this paper, we measure computational complexity in bit operations. A bit operation reads, writes, or applies a Boolean operation to a fixed number of bits. The following standard bounds for elementary integer arithmetic will be used in our syndrome-computation and decoding analyses. For an integer $a$, define its binary length by $\operatorname{len}(0)=1$ and $\operatorname{len}(a)=\floorenv{\log|a|}+1$ when $a\ne0$.
\begin{proposition}\cite[Theorem 3.3]{Shoup2008}\label{prop_bitarithmetic}
Let $a,b\in\mathbb Z$. Then the following computations can be performed deterministically.
\begin{enumerate}[$(i)$]
\item Addition and subtraction, $a\pm b$, require $O\parenv{\operatorname{len}(a)+\operatorname{len}(b)}$ bit operations.
\item Multiplication, $ab$, requires $O\parenv{\operatorname{len}(a)\operatorname{len}(b)}$ bit operations.
\item If $b>0$, the quotient $c=\floorenv{a/b}$ and the remainder $r=a-bc\in[0,b-1]$ can be computed in $O\parenv{\operatorname{len}(b)\operatorname{len}(c)}$ bit operations. In particular, this is $O\parenv{\operatorname{len}(a)\operatorname{len}(b)}$.
\end{enumerate}
\end{proposition}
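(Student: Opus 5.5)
The plan is to verify each bound directly from the standard schoolbook algorithms on binary representations, as in \cite[Theorem 3.3]{Shoup2008}. First I fix the representation: an integer $a$ is stored as a sign bit together with its binary digits $a_{\ell-1}\cdots a_0$, where $\ell=\operatorname{len}(a)$. Reading, copying and comparing such representations then cost $O(\ell)$ bit operations. Everything below reduces to counting bit operations in loops over these digits.

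For $(i)$, I would first reduce to nonnegative operands. Comparing $|a|$ and $|b|$ takes $O(\operatorname{len}(a)+\operatorname{len}(b))$ operations. Depending on the signs, $a\pm b$ becomes either an addition $|a|+|b|$ or a subtraction of the smaller magnitude from the larger, followed by attaching a sign. Addition with carry processes one digit position at a time. Each position costs a constant number of Boolean operations on the two digits and the carry. There are $\max\mathset{\operatorname{len}(a),\operatorname{len}(b)}+1$ positions, which gives the bound. Subtraction with borrow is handled identically.

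For $(ii)$, I would use shift-and-add. For each of the $\operatorname{len}(b)$ digits of $|b|$ that equals $1$, add a shifted copy of $|a|$ into an accumulator. The key point, and the only place needing care, is to avoid the naive bound in which each addition costs the full length of the accumulator. Since the shifted copy of $|a|$ occupies only $\operatorname{len}(a)$ positions, I would add it starting at the shift offset. The carry then propagates at most $O(\operatorname{len}(a))$ positions beyond the nonzero part, because the accumulator's low part is already fixed. Alternatively, I would bound each addition by $O(\operatorname{len}(a)+j)$ at step $j$ and use an amortized argument. The cleaner route is the schoolbook digit-product table: form the $\operatorname{len}(a)\operatorname{len}(b)$ partial bits and sum them column by column with carries. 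The total is $O(\operatorname{len}(a)\operatorname{len}(b))$. The sign is the XOR of the signs.

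For $(iii)$, I would use binary long division. The quotient $c$ has $\operatorname{len}(c)$ digits. For each quotient digit, compare and possibly subtract $b$ from a running remainder that always has length $O(\operatorname{len}(b))$, at cost $O(\operatorname{len}(b))$ per digit. The main obstacle is bookkeeping: the initial bits of $a$ must be processed so that the total stays $O(\operatorname{len}(b)\operatorname{len}(c))$, rather than picking up an extra $\operatorname{len}(a)$ term. This works because $\operatorname{len}(a)\le\operatorname{len}(b)+\operatorname{len}(c)+O(1)$. Negative $a$ is handled by dividing $|a|$ and then adjusting: if the remainder is nonzero, the quotient is decremented and the remainder becomes $b-r$, each step costing one extra addition. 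The final claim then follows from $\operatorname{len}(c)\le\operatorname{len}(a)$.
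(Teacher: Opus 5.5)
Your proposal is correct and follows the same route as the source the paper relies on: the paper gives no proof of its own beyond citing \cite[Theorem 3.3]{Shoup2008}, which is proved by these same schoolbook algorithms (in a general base $B$, where division needs a quotient-digit estimation step that your base-$2$ version avoids). Two justifications should be made precise. In (ii), the offset addition at step $j$ touches only positions $j,\ldots,j+\operatorname{len}(a)$ because just before that step the accumulator equals $|a|\,(|b|\bmod 2^j)<2^{\operatorname{len}(a)+j}$, not because its low part is fixed; note also that your $O(\operatorname{len}(a)+j)$-per-step alternative would leave an extra $\operatorname{len}(b)^2$ term. In (iii), the inequality $\operatorname{len}(a)\le\operatorname{len}(b)+\operatorname{len}(c)+1$ limits the $O(\operatorname{len}(b))$-cost iterations to $O(\operatorname{len}(c))$ only once the top $\operatorname{len}(b)-1$ bits of $|a|$ are loaded into the remainder as one block. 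This is legitimate because no subtraction can occur while the remainder has fewer bits than $b$.
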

Thus, for integers represented using at most $\lambda$ bits, addition, subtraction, and comparison cost $O(\lambda)$ bit operations, whereas multiplication and division with remainder cost $O(\lambda^2)$ bit operations.

\section{A General Framework for Locating One \texorpdfstring{$k$}{k}-Substring Edit Error}\label{sec_framework}
Let $\bmx\in\Sigma_q^n$ and  $\bmy$ be obtained from $\bmx$ by a nontrivial $k$-substring edit. Represent $\bmx$ and $\bmy$ as in \eqref{eq_editbasic} with the convention in \eqref{eq_minimalends}. This section establishes sufficient conditions for determining an interval $I\subseteq[n]$ with $|I|=O_{q,k}(\log n)$ that covers the edit region in $\bmx$. The main result is presented in \Cref{thm_framework}.

Let $\bmx_{[i_1,j_1]}$ and $\bmx_{[i_2,j_2]}$ be two substrings of $\bmx$. We say that they are \emph{overlapping} (or $\bmx_{[i_1,j_1]}$ overlaps with $\bmx_{[i_2,j_2]}$) if $\sparenv{i_1,j_1}\cap\sparenv{i_2,j_2}\ne\varnothing$ and \emph{non-overlapping} otherwise. 

A \emph{partition rule} $\cP$ is a rule under which we divide every nonempty sequence $\bma$ into several non-overlapping nonempty substrings whose concatenation is $\bma$. Each of these substrings is called a \emph{segment} of $\bma$ under the partition rule. Let $m_{\cP}(\bma)$ denote the number of segments of $\bma$ under partition rule $\cP$. Denote
\begin{equation*}
    \Part_{\cP}(\bma)=\parenv{\bma_1^{\cP},\bma_2^{\cP},\ldots,\bma_{m_{\cP}(\bma)}^{\cP}},
\end{equation*}
where $\bma_i^{\cP}$ is the $i$-th segment of $\bma$. By definition,  we have $\bma=\bma_1^{\cP}\bma_2^{\cP}\cdots\bma_{m_{\cP}(\bma)}^{\cP}$.

For a function $f:\Sigma_q^{*}\to\mathbb{Z}_{\ge0}$, define 
\begin{equation*}
    \bmF_{\cP}(\bma)=\parenv{f\parenv{\bma_1^{\cP}},f\parenv{\bma_2^{\cP}},\ldots,f\parenv{\bma_{m_{\cP}(\bma)}^{\cP}}}.
\end{equation*}
In other words, $\bmF_{\cP}(\bma)$ is the sequence over $\mathbb{Z}_{\ge0}$ obtained by applying $f$ to every segment of $\bma$.

Let $\bmx$ and $\bmy$ be given at the beginning of this section. We further make the following assumption throughout this section.
\begin{assumption}\label{assump1}
    There exist integers $p_1,p_2\ge0$ and $1\le l_1,l_2\le K$ such that
 \begin{equation*}
 \begin{aligned}
  \Part_{\cP}(\bmx)
   &=\parenv{\bme_1,\ldots,\bme_{p_1},
            \bma_1,\ldots,\bma_{l_1},
            \bmh_1,\ldots,\bmh_{p_2}},\\
  \Part_{\cP}(\bmy)
   &=\parenv{\bme_1,\ldots,\bme_{p_1},
            \bmb_1,\ldots,\bmb_{l_2},
            \bmh_1,\ldots,\bmh_{p_2}}.
 \end{aligned}
 \end{equation*}
 The edit region in $\bmx$ is covered by the substring
 $\bma_1\cdots\bma_{l_1}$, and the edit region in $\bmy$ is covered by the substring $\bmb_1\cdots\bmb_{l_2}$.
\end{assumption}

Under this assumption, denote
\begin{equation*}
    \begin{aligned}
        \bmu^\prime=\parenv{f\parenv{\bme_1},\ldots,f\parenv{\bme_{p_1}}},\;\bms^\prime=\parenv{f\parenv{\bma_1},\ldots,f\parenv{\bma_{l_1}}},\\
        \bmv^\prime=\parenv{f\parenv{\bmh_1},\ldots,f\parenv{\bmh_{p_2}}},\; \bmt^\prime=\parenv{f\parenv{\bmb_1},\ldots,f\parenv{\bmb_{l_2}}}.
    \end{aligned}
\end{equation*}
Then it holds that
\begin{equation}\label{eq_divideFxy}
    \bmF_{\cP}(\bmx)=\bmu^\prime\bms^\prime\bmv^\prime,\quad \bmF_{\cP}(\bmy)=\bmu^\prime\bmt^\prime\bmv^\prime.
\end{equation}
Clearly, $\bmF_{\cP}(\bmy)$ is obtained from $\bmF_{\cP}(\bmx)$ by a $K$-substring edit. To locate the $k$-substring edit error in $\bmx$, the idea is to locate the $K$-substring edit error in $\bmF_{\cP}(\bmx)$.

Notice that both $\bmF_{\cP}(\bmx)$ and $\bmF_{\cP}(\bmy)$ are sequences over the alphabet $\mathbb{Z}_{\ge0}$. We now study the induced edit independently of its underlying partition. In the rest of this section, assume that $\bmz\in\mathbb{Z}_{\ge0}^{*}$ and that $\bmw$ is obtained from $\bmz$ by a nontrivial $K$-substring edit for some $K\ge2$. Denote
\begin{equation}\label{eq_dEDelta}
    d=\abs{\bmw}-\abs{\bmz},\; E=\lwt{\bmz}-\lwt{\bmw},\; \Delta_{\rm VT}=\vt(\bmz)-\vt(\bmw),
\end{equation}
where the two functions $\lwt{\cdot}$ and $\vt(\cdot)$ are defined in \eqref{eq_L1VT}.

We first consider the case $\abs{\bmz}=\abs{\bmw}$. For $r\in\{0,1,2\}$, define the function $G_r:\mathbb{Z}_{\ge0}^{*}\to\mathbb{Z}_{\ge0}$ by
\begin{equation}\label{eq_functionG}
    G_{r}(\bma)=\sum_{\substack{i\in\sparenv{\abs{\bma}}\\i\equiv r\pmod{3}}}a_i.
\end{equation}
\begin{proposition}\label{prop_equallength}
    Let $d$, $E$ and $\Delta_{\rm VT}$ be defined in \eqref{eq_dEDelta}. Suppose that $d=0$, and that $\bmz$ and $\bmw$ differ in one coordinate or two consecutive coordinates. For $0\le r\le 2$, set $c_r=G_r(\bmz)-G_r(\bmw)$. If $E\ne0$, and all the values of $E$, $\Delta_{\rm VT}$, $c_0$ and $c_1$ are known, we can determine exactly the first index at which $\bmz$ and $\bmw$ differ.
\end{proposition}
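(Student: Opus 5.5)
The plan is to write the difference between $\bmz$ and $\bmw$ explicitly and read off the first differing index from the four known quantities. Let $i$ be the first index at which $\bmz$ and $\bmw$ differ. Put $\delta_1=z_i-w_i\ne0$. Put $\delta_2=z_{i+1}-w_{i+1}$ if $i+1\le\abs{\bmz}$, and $\delta_2=0$ otherwise. By hypothesis, all other coordinates agree. Then
\begin{equation*}
E=\delta_1+\delta_2,\qquad \Delta_{\rm VT}=i\delta_1+(i+1)\delta_2=iE+\delta_2 .
\end{equation*}
Since $E\ne0$ is known, it suffices to determine $\delta_2$; then $i=(\Delta_{\rm VT}-\delta_2)/E$.

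To get $\delta_2$, I will use the residue classes mod $3$. First, $c_2=E-c_0-c_1$ because $G_0+G_1+G_2=\lwt{\cdot}$, so the whole triple $(c_0,c_1,c_2)$ is known. The indices $i$ and $i+1$ lie in distinct residue classes, so there is a unique $r\equiv i\pmod 3$ with
\begin{equation*}
c_r=\delta_1,\qquad c_{r+1}=\delta_2,\qquad c_{r+2}=0,
\end{equation*}
where indices on the $c$'s are taken mod $3$. This gives two cases, distinguished by the number of nonzero entries of the triple.
\begin{enumerate}[(a)]
\item If exactly one $c_j$ is nonzero, then $\delta_2=0$. This is because $\delta_1\ne0$ forces $c_r\ne0$, and $\delta_2\ne0$ would give a second nonzero entry. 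Hence $\delta_2=0$ and $i=\Delta_{\rm VT}/E$.
\item If exactly two entries are nonzero, then $\delta_2\ne0$. The unique zero entry is $c_{r+2}$, which identifies $r$. Then $\delta_2=c_{r+1}$, namely the entry cyclically following the first nonzero one after the zero. Hence $i=(\Delta_{\rm VT}-c_{r+1})/E$.
\end{enumerate}
A triple with three nonzero entries cannot occur.

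The only delicate point is verifying that the cyclic pattern (zero, then $\delta_1$, then $\delta_2$) cannot be confused with another shift. That holds because exactly one entry is forced to vanish, and $\delta_1\ne0$ by the choice of $i$. The division by $E$ is exactly where the hypothesis $E\ne0$ is used. The boundary case $i=\abs{\bmz}$ is already absorbed by setting $\delta_2=0$.
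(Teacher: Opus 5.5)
Your proof is correct and follows the paper's argument. Both recover $c_2=E-c_0-c_1$, use the number of nonzero $c_r$ to tell one differing coordinate from two, use the unique vanishing $c_r$ to fix the first index modulo $3$ and read off $\delta_2$, and then solve $\Delta_{\rm VT}=iE+\delta_2$ for $i$. Setting $\delta_2=0$ in the single-coordinate case only packages the paper's two cases into one formula.
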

\begin{IEEEproof}
    It is clear that $c_0+c_1+c_2=E$. Thus, we can obtain $c_2$ when $c_0$, $c_1$ and $E$ are given. Since $\bmz$ and $\bmw$ differ in one or two consecutive coordinates, exactly one or two of $c_0$, $c_1$ and $c_2$ are nonzero.

    If only one of $c_0$, $c_1$ and $c_2$ is nonzero, there is a unique $j$ such that $z_j\ne w_j$. The definitions of $E$ and $\Delta_{\rm VT}$ give $E=z_j-w_j$ and $\Delta_{\rm VT}=j(z_j-w_j)$. Therefore, $j=\Delta_{\rm VT}/E$.

    Now suppose that exactly two of $c_0$, $c_1$ and $c_2$ are nonzero. Then there are two indices $j$ and $j+1$ at which $\bmz$ and $\bmw$ differ. Let $r$ be the unique element of $\{0,1,2\}$ satisfying $c_r=0$. It follows that $j\equiv r+1\pmod{3}$ and $z_{j+1}-w_{j+1}=c_{(r+2)\bmod 3}$. Moreover, 
    \begin{align*}
        \Delta_{\rm VT}&=j(z_j-w_j)+(j+1)\parenv{z_{j+1}-w_{j+1}}\\
        &=j\parenv{z_j-w_j+z_{j+1}-w_{j+1}}+z_{j+1}-w_{j+1}\\
        &=j\cdot E+z_{j+1}-w_{j+1}.
    \end{align*}
    Consequently, it holds that $j=\parenv{\Delta_{\rm VT}-c_{(r+2)\bmod 3}}/E$. This completes the proof.
\end{IEEEproof}

Next we consider the case $\abs{\bmz}\ne\abs{\bmw}$. Let $d$ and $E$ be given in \eqref{eq_dEDelta}. Since $\bmz\to_{K}\bmw$, there are sequences $\bmu^\prime$, $\bmv^\prime$, $\bms^\prime$, and $\bmt^\prime$ with $\abs{\bms^\prime},\abs{\bmt^\prime}\le K$, such that
\begin{equation}\label{eq_splitzw}
    \bmz=\bmu^\prime\bms^\prime\bmv^\prime,\; \bmw=\bmu^\prime\bmt^\prime\bmv^\prime.
\end{equation}
For $r\in\sparenv{0,\abs{\bmw}}$, define
\begin{equation}\label{eq_reta}
    r^{*}=\abs{\bmu^\prime},\quad\eta(r)=-d\sum_{i=r+1}^{\abs{\bmw}}w_i+r\cdot E.
\end{equation}

\Cref{lem_monotonicity,lem_Deltaetagap} appeared in the proof of \cite[Lemma 4]{YutingLi2025IT}. We present their proofs here for completeness.
\begin{lemma}\label{lem_monotonicity}
    If $d\ne0$ and $\abs{E}\le w_i$ for all $i\in\sparenv{\abs{\bmw}}$, the function $\eta(r)$ is monotone in $r$.
\end{lemma}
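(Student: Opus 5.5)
The plan is to compute the forward difference of $\eta$ directly and show that it has constant sign. For $r\in[0,\abs{\bmw}-1]$, the sum $\sum_{i=r+1}^{\abs{\bmw}}w_i$ loses exactly its first term $w_{r+1}$ when $r$ is replaced by $r+1$. Hence
\begin{equation*}
\eta(r+1)-\eta(r)=d\,w_{r+1}+E.
\end{equation*}
It therefore suffices to show that the quantity $d\,w_{r+1}+E$ has the same sign, in the weak sense, for every $r\in[0,\abs{\bmw}-1]$.

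Split into cases according to the sign of $d$, which is a nonzero integer by hypothesis. If $d\ge1$, then $d\,w_{r+1}+E\ge w_{r+1}+E\ge w_{r+1}-\abs{E}\ge0$, using $w_{r+1}\ge0$ and the assumption $\abs{E}\le w_{r+1}$. Thus $\eta$ is nondecreasing. If $d\le-1$, then $d\,w_{r+1}+E\le -w_{r+1}+E\le -w_{r+1}+\abs{E}\le0$, so $\eta$ is nonincreasing. In either case $\eta$ is monotone on $[0,\abs{\bmw}]$.

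There is no real obstacle here. The only points to check are the following.
\begin{itemize}
\item The telescoping identity includes the boundary case $r=\abs{\bmw}-1$, where $\eta(\abs{\bmw})=\abs{\bmw}E$ because the sum is empty.
\item The hypothesis $\abs{E}\le w_i$ is used for every index $i=r+1\in[\abs{\bmw}]$.
\item The argument uses $\abs{d}\ge1$ together with $w_{r+1}\ge 0$, which holds since $\bmw$ is a sequence over $\mathbb{Z}_{\ge0}$.
\end{itemize}

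If strict monotonicity is needed later, the same computation shows that the increments are bounded below in absolute value by $(\abs{d}-1)w_{r+1}+(w_{r+1}-\abs{E})$. This gives strictness whenever $\abs{E}<w_{r+1}$ or $\abs{d}\ge2$ with $w_{r+1}>0$.
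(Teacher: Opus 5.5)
Your proposal is correct and follows essentially the same argument as the paper: compute the one-step difference $d\,w_{r+1}+E$ (the paper writes it as $\eta(r)-\eta(r-1)=d\,w_r+E$) and use $\abs{d}\ge1$, $w_i\ge0$, and $\abs{E}\le w_i$ to fix its sign according to the sign of $d$. Your extra remark on strictness is also correct.
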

\begin{IEEEproof}
    By definition, for $r,r-1\in\sparenv{0,\abs{\bmw}}$, it holds that $\eta(r)-\eta(r-1)=d\cdot w_r+E$. Since $d$ is a non-zero integer and $\abs{E}\le w_i$ for all $i\in\sparenv{\abs{\bmw}}$, we have $\eta(r)-\eta(r-1)\ge 0$ if $d>0$ and $\eta(r)-\eta(r-1)\le 0$ if $d<0$.
\end{IEEEproof}

For a positive integer $A$, a sequence $\bma\in\mathbb Z_{\ge0}^*$ is called \emph{$A$-bounded} if $a_i\le A$ for every $i\in[\abs{\bma}]$.
\begin{lemma}\label{lem_Deltaetagap}
    Let $\Delta_{\rm VT}$ be defined in \eqref{eq_dEDelta}, and $r^{*}$ and $\eta(\cdot)$ be defined in \eqref{eq_reta}. If both $\bmz$ and $\bmw$ are $A$-bounded, then $\abs{\Delta_{\rm VT}-\eta(r^{*})}<2K^2A$.
\end{lemma}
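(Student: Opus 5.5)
The plan is to compute $\Delta_{\rm VT}-\eta(r^{*})$ exactly from the split \eqref{eq_splitzw}. The $\bmu^\prime$ and $\bmv^\prime$ contributions should cancel, leaving a quantity that depends only on $\bms^\prime$ and $\bmt^\prime$. That residual is then bounded crudely using $A$-boundedness and $\abs{\bms^\prime},\abs{\bmt^\prime}\le K$.

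Set $p=r^{*}=\abs{\bmu^\prime}$, $a=\abs{\bms^\prime}$ and $b=\abs{\bmt^\prime}$, so that $d=b-a$.

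\emph{Step 1: expand $\Delta_{\rm VT}$.} The entries of $\bmu^\prime$ occupy the same positions in $\bmz$ and $\bmw$, so they cancel. The $j$-th entry of $\bmv^\prime$ sits at position $p+a+j$ in $\bmz$ and at $p+b+j$ in $\bmw$. This gives
\begin{equation*}
\Delta_{\rm VT}=p\,E+\vt(\bms^\prime)-\vt(\bmt^\prime)-d\,\lwt{\bmv^\prime},
\end{equation*}
using $E=\lwt{\bms^\prime}-\lwt{\bmt^\prime}$.

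\emph{Step 2: expand $\eta(r^{*})$.} The positions $i\ge p+1$ of $\bmw$ are exactly those of $\bmt^\prime\bmv^\prime$. Hence
\begin{equation*}
\eta(r^{*})=-d\,\lwt{\bmt^\prime}-d\,\lwt{\bmv^\prime}+pE.
\end{equation*}

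\emph{Step 3: subtract.} The terms $pE$ and $d\,\lwt{\bmv^\prime}$ cancel, and
\begin{equation*}
\Delta_{\rm VT}-\eta(r^{*})=\vt(\bms^\prime)-\sum_{i=1}^{b}(i-d)\,t^\prime_i=\vt(\bms^\prime)-\sum_{i=1}^{b}(i+a-b)\,t^\prime_i .
\end{equation*}

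\emph{Step 4: bound the residual.} Every entry lies in $[0,A]$, so $0\le \vt(\bms^\prime)\le A\,a(a+1)/2\le AK(K+1)/2$.

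The coefficients $i+a-b$ lie in $[a-b+1,a]$, so each has absolute value at most $K$. The positive coefficients sum to at most $\sum_{i=1}^{b}\max\{0,i+a-b\}$; there are at most $\min\{a,b\}$ of them, each at most $a$, giving at most $K^2$. The negative coefficients likewise have total magnitude at most $K^2$.

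Therefore
\begin{equation*}
-AK^2\le \Delta_{\rm VT}-\eta(r^{*})\le \frac{AK(K+1)}{2}+AK^2 .
\end{equation*}
Since $K\ge2$ gives $(K+1)/2<K$, the right-hand side is strictly less than $2AK^2$. The lower bound $-AK^2>-2AK^2$ holds whenever $A\ge1$, which is given since $A$ is a positive integer. This yields $\abs{\Delta_{\rm VT}-\eta(r^{*})}<2K^2A$.

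The only delicate point is the bookkeeping of index shifts in Step 1, namely that the $\bmv^\prime$ entries shift by exactly $d$. The strictness of the final inequality relies on $K\ge2$, and I would double-check this when writing the proof.
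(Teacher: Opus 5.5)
Your proof is correct and follows essentially the same route as the paper: both derive the exact identity $\Delta_{\rm VT}-\eta(r^{*})=\vt(\bms^\prime)-\vt(\bmt^\prime)+d\,\lwt{\bmt^\prime}$, in which the $\bmu^\prime$ and $\bmv^\prime$ contributions cancel, and then bound it using $A$-boundedness, $\abs{\bms^\prime},\abs{\bmt^\prime}\le K$ and $K\ge2$. The only difference is the final estimate: you merge $-\vt(\bmt^\prime)+d\,\lwt{\bmt^\prime}$ into one signed weighted sum and bound the two signs separately, whereas the paper applies the triangle inequality to $\abs{\vt(\bms^\prime)-\vt(\bmt^\prime)}<K^2A$ and $\abs{d}\,\lwt{\bmt^\prime}\le K^2A$.
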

\begin{IEEEproof}
    It follows from \eqref{eq_splitzw} that $\bmz_{\sparenv{1,r^{*}}}=\bmw_{\sparenv{1,r^{*}}}=\bmu^\prime$, $\bmz_{\sparenv{r^{*}+\abs{\bms^\prime}+1,\abs{\bmz}}}=\bmw_{\sparenv{r^{*}+\abs{\bmt^\prime}+1,\abs{\bmw}}}=\bmv^\prime$, $\bmz_{\sparenv{r^{*}+1,r^{*}+\abs{\bms^\prime}}}=\bms^\prime$ and $\bmw_{\sparenv{r^{*}+1,r^{*}+\abs{\bmt^\prime}}}=\bmt^\prime$. This implies that $d=\abs{\bmw}-\abs{\bmz}=\abs{\bmt^\prime}-\abs{\bms^\prime}$ and $E=\lwt{\bmz}-\lwt{\bmw}=\lwt{\bms^\prime}-\lwt{\bmt^\prime}$. Therefore, we have
    \begin{align}
            \Delta_{\rm VT}&=\vt(\bmz)-\vt(\bmw)\notag\\
            &=\sum_{i=r^{*}+1}^{r^{*}+\abs{\bms^\prime}}iz_i+\sum_{i=r^{*}+\abs{\bms^\prime}+1}^{\abs{\bmz}}iz_i-\sum_{i=r^{*}+1}^{r^{*}+\abs{\bmt^\prime}}iw_i-\sum_{i=r^{*}+\abs{\bmt^\prime}+1}^{\abs{\bmw}}iw_i\notag\\
            &=\parenv{\sum_{i=r^{*}+1}^{r^{*}+\abs{\bms^\prime}}iz_i-\sum_{i=r^{*}+1}^{r^{*}+\abs{\bmt^\prime}}iw_i}+\parenv{\sum_{i=r^{*}+\abs{\bms^\prime}+1}^{\abs{\bmz}}iz_i-\sum_{i=r^{*}+\abs{\bmt^\prime}+1}^{\abs{\bmw}}iw_i}\notag\\
            &=\vt(\bms^\prime)-\vt(\bmt^\prime)+r^{*}\parenv{\lwt{\bms^\prime}-\lwt{\bmt^\prime}}-\parenv{\abs{\bmt^\prime}-\abs{\bms^\prime}}\sum_{i=r^{*}+\abs{\bmt^\prime}+1}^{\abs{\bmw}}w_i\label{eq_VT}\\
            &=-d\sum_{i=r^{*}+1}^{\abs{\bmw}}w_i+r^{*}\cdot E+\vt(\bms^\prime)-\vt(\bmt^\prime)+\parenv{\abs{\bmt^\prime}-\abs{\bms^\prime}}\sum_{i=r^{*}+1}^{r^{*}+\abs{\bmt^\prime}}w_i\notag\\
            &=\eta\parenv{r^{*}}+\vt(\bms^\prime)-\vt(\bmt^\prime)+\parenv{\abs{\bmt^\prime}-\abs{\bms^\prime}}\lwt{\bmt^\prime}.\notag
    \end{align}
    Since both $\bmz$ and $\bmw$ are $A$-bounded, $K\ge2$, and $\abs{\bms^\prime},\abs{\bmt^\prime}\le K$, it holds that $\abs{\vt(\bms^\prime)-\vt(\bmt^\prime)}<K^2A$, $\abs{\abs{\bmt^\prime}-\abs{\bms^\prime}}\le K$ and $\lwt{\bmt^\prime}\le KA$. Then it follows that $\abs{\Delta_{\rm VT}-\eta(r^{*})}\le\abs{\vt(\bms^\prime)-\vt(\bmt^\prime)}+\abs{\abs{\bmt^\prime}-\abs{\bms^\prime}}\lwt{\bmt^\prime}<2K^2A$.
\end{IEEEproof}

\vspace{5pt}
We next retain the unequal-length assumption $d\ne0$ and
associate a positive number $\ell_i$ with $w_i$ for every $i$.
\begin{proposition}\label{prop_unequallength}
    Suppose that $\bmz$ and $\bmw$ are $A$-bounded. Let $d$, $E$, $\Delta_{\rm VT}$ be defined by \eqref{eq_dEDelta}. Assume that $d\ne0$ and $\abs{E}\le w_i$ for every $i\in[\abs{\bmw}]$. Let $\ell_1,\ldots,\ell_{\abs{\bmw}}$ be positive numbers. Suppose that there are positive numbers $C^{(1)}$, $C^{(2)}$ and $C^{(3)}$ satisfying:
    \begin{enumerate}[$(i)$]
        \item $w_i\ge C^{(1)}\ell_i$ for every $i\in[\abs{\bmw}]$; 
        \item $\ell_i\ge C^{(2)}$ for every $i\in[\abs{\bmw}]$; 
        \item $\abs{E}\le C^{(3)}$;
        \item $C^{(3)}<C^{(1)}C^{(2)}$.
    \end{enumerate}
    Let $r^{*}$ and $\eta(\cdot)$ be defined as in \eqref{eq_reta}, $\cR=\mathset{r\in\sparenv{0,\abs{\bmw}}:\abs{\Delta_{\rm VT}-\eta(r)}<2K^2A}$, $r_1=\min\cR$ and $r_2=\max\cR$. Then $r^{*}\in\cR$, $\cR=[r_1,r_2]$, and
    \begin{equation}\label{eq_sumupper}
        \sum_{i=r_1+1}^{r_2}\ell_i<\frac{4K^2A}{C^{(1)}-\frac{C^{(3)}}{C^{(2)}}}.
    \end{equation}
\end{proposition}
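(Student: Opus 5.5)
First, $r^{*}\in\cR$ follows directly from \Cref{lem_Deltaetagap}, since $\bmz$ and $\bmw$ are $A$-bounded and $r^{*}=\abs{\bmu^\prime}\in[0,\abs{\bmw}]$. Next, by \Cref{lem_monotonicity} (applicable because $d\ne0$ and $\abs{E}\le w_i$ for all $i$), $\eta$ is monotone on $[0,\abs{\bmw}]$. The set $\cR$ is the preimage of the open interval $(\Delta_{\rm VT}-2K^2A,\Delta_{\rm VT}+2K^2A)$ under a monotone function on consecutive integers, hence it is a set of consecutive integers. It is nonempty because it contains $r^{*}$, so $\cR=[r_1,r_2]$.

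For the length bound, I will compare $\eta(r_2)$ and $\eta(r_1)$. Both lie within $2K^2A$ of $\Delta_{\rm VT}$, so $\abs{\eta(r_2)-\eta(r_1)}<4K^2A$. On the other hand, telescoping gives
\begin{equation*}
\eta(r_2)-\eta(r_1)=\sum_{i=r_1+1}^{r_2}\parenv{d\,w_i+E}.
\end{equation*}
Each summand has the sign of $d$, and since $\abs{d}\ge1$, its absolute value satisfies $\abs{d w_i+E}\ge \abs{d}w_i-\abs{E}\ge w_i-\abs{E}$. Here one should check that $\abs{d}w_i-\abs{E}\ge w_i-\abs{E}\ge0$, which holds by the hypothesis $\abs{E}\le w_i$. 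Because all summands have the same sign, the absolute value of the sum equals the sum of the absolute values. Therefore
\begin{equation*}
4K^2A>\sum_{i=r_1+1}^{r_2}\parenv{w_i-\abs{E}}.
\end{equation*}

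Now apply conditions (i)--(iii). By (i), $w_i\ge C^{(1)}\ell_i$. By (ii) and (iii), $\abs{E}\le C^{(3)}\le \frac{C^{(3)}}{C^{(2)}}\ell_i$, since $\ell_i/C^{(2)}\ge1$. Hence each summand is at least $\parenv{C^{(1)}-\frac{C^{(3)}}{C^{(2)}}}\ell_i$. The coefficient is positive by (iv), so dividing yields \eqref{eq_sumupper}. If $r_1=r_2$, the sum is empty and the bound holds trivially because the right-hand side is positive.

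The only delicate point is the sign bookkeeping: we need every increment $d w_i+E$ to have the same sign, so that no cancellation occurs in the telescoped sum. This is exactly what the hypothesis $\abs{E}\le w_i$ provides through \Cref{lem_monotonicity}. The rest is routine.
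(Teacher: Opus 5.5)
Your proof is correct and follows the paper's argument essentially step for step. You get $r^*\in\cR$ from \Cref{lem_Deltaetagap} and the interval structure $\cR=[r_1,r_2]$ from the monotonicity in \Cref{lem_monotonicity}; you then compare $\abs{\eta(r_2)-\eta(r_1)}<4K^2A$ with the telescoped sum of same-signed increments, each at least $\parenv{C^{(1)}-C^{(3)}/C^{(2)}}\ell_i$ in absolute value, and you additionally make the empty-sum case $r_1=r_2$ explicit.
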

\begin{IEEEproof}
    It follows from \Cref{lem_Deltaetagap} that $r^{*}\in\cR$. By \Cref{lem_monotonicity}, the function $\eta(r)$ is monotone in $r\in\sparenv{0,\abs{\bmw}}$. This implies $\cR=[r_1,r_2]$. Since $r_1,r_2\in\cR$, we have
    \begin{equation}\label{eq_diffupper}
        \abs{\eta(r_2)-\eta(r_1)}<4K^2A.
    \end{equation}
    By the definition of $\eta(\cdot)$, it holds that $\eta(r)-\eta(r-1)=d\cdot w_r+E$ whenever $r,r-1\in\sparenv{0,\abs{\bmw}}$, which gives
    \begin{equation}\label{eq_lowstep}
        \begin{aligned}
            \abs{\eta(r)-\eta(r-1)}&\ge \abs{d}w_r-\abs{E}\\
            &\overset{(a)}{\ge} w_r-\abs{E}\\
            &\overset{(b)}{\ge} C^{(1)}\ell_r-C^{(3)}\\
            &=\parenv{C^{(1)}-\frac{C^{(3)}}{\ell_r}}\ell_r\\
            &\overset{(c)}{\ge} \parenv{C^{(1)}-\frac{C^{(3)}}{C^{(2)}}}\ell_r,
        \end{aligned}
    \end{equation}
    where (a) uses the assumption $d\ne0$, (b) follows from conditions (i) and (iii), and (c) follows from condition (ii). Since $\eta(r)$ is monotone in $r$, we conclude from \eqref{eq_lowstep} that
    \begin{equation}\label{eq_difflower}
        \abs{\eta(r_2)-\eta(r_1)}=\abs{\sum_{r=r_1+1}^{r_2}\parenv{\eta(r)-\eta(r-1)}}=\sum_{r=r_1+1}^{r_2}\abs{\eta(r)-\eta(r-1)}\ge \parenv{C^{(1)}-\frac{C^{(3)}}{C^{(2)}}}\sum_{r=r_1+1}^{r_2}\ell_r.
    \end{equation}
    Now \eqref{eq_sumupper} follows from \eqref{eq_diffupper}, \eqref{eq_difflower} and condition (iv).
\end{IEEEproof}

Both \Cref{prop_equallength} and \Cref{prop_unequallength} use the value of $\Delta_{\rm VT}$. The next lemma gives the bound required later in \Cref{lem_dEDeltaVT}. It is claim (iii) in the proof of \cite[Lemma 4]{YutingLi2025IT}. We reproduce its proof in the present notation.
\begin{lemma}\label{lem_absVT}
    Let $\bmz\in\mathbb{Z}_{\ge0}^{*}$ and $\bmz\to_{K}\bmw$ for some $K\ge2$. Suppose that both $\bmz$ and $\bmw$ are $A$-bounded, and that $\abs{\lwt{\bmz}-\lwt{\bmw}}\le w_i$ for all $i\in[\abs{\bmw}]$. Then $\abs{\Delta_{\rm VT}}<K^2\parenv{\lwt{\bmz}+A}$.
\end{lemma}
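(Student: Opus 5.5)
We start from the identity derived in the proof of \Cref{lem_Deltaetagap}:
\begin{equation*}
\Delta_{\rm VT}=\eta(r^{*})+\vt(\bms^\prime)-\vt(\bmt^\prime)+\parenv{\abs{\bmt^\prime}-\abs{\bms^\prime}}\lwt{\bmt^\prime},
\end{equation*}
where $\eta(r^{*})=-d\sum_{i=r^{*}+1}^{\abs{\bmw}}w_i+r^{*}E$. That derivation uses only the decomposition \eqref{eq_splitzw}, so it holds here as well. The aim is to bound each of the three pieces by a multiple of $\lwt{\bmz}$ or $A$, with total coefficient below $K^2$.

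\textbf{The local terms.} Write $\abs{\bms^\prime}=a$ and $\abs{\bmt^\prime}=b$, with $a,b\le K$.
\begin{itemize}
\item Since $\vt(\bms^\prime),\vt(\bmt^\prime)\ge0$, we have $\abs{\vt(\bms^\prime)-\vt(\bmt^\prime)}\le\max\{\vt(\bms^\prime),\vt(\bmt^\prime)\}$.
\item $\vt(\bms^\prime)\le a\,\lwt{\bms^\prime}\le K\lwt{\bmz}$.
\item $\vt(\bmt^\prime)\le b\,\lwt{\bmt^\prime}\le K\cdot KA$, though this is too crude.
\item The term $\abs{b-a}\lwt{\bmt^\prime}$ is at most $K\lwt{\bmt^\prime}$.
\end{itemize}
It is better to control $\lwt{\bmt^\prime}$ through $\lwt{\bmz}$. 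From $E=\lwt{\bms^\prime}-\lwt{\bmt^\prime}$ we get $\lwt{\bmt^\prime}=\lwt{\bms^\prime}-E$. Since $\abs{E}\le w_i\le A$ (provided $\bmw$ is nonempty), this gives $\lwt{\bmt^\prime}\le\lwt{\bmz}+A$.

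\textbf{The $\eta(r^{*})$ term.} We have $\sum_{i>r^{*}}w_i\le\lwt{\bmw}=\lwt{\bmz}-E$. We also have $\abs{r^{*}E}\le r^{*}\min_i w_i\le\sum_{i\le r^{*}}w_i$, using the hypothesis $\abs{E}\le w_i$. This is the key use of that hypothesis: it absorbs the factor $r^{*}$, which could otherwise be of order $n$.

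Moreover, the two contributions have opposite nature. Indeed, $\eta(r^{*})$ equals a sum over $r\le r^{*}$ of steps $E$, minus $d$ times the tail. So
\begin{equation*}
\abs{\eta(r^{*})}\le \abs{d}\sum_{i>r^{*}}w_i+\sum_{i\le r^{*}}w_i\le \max\{\abs{d},1\}\,\lwt{\bmw}\le (K-1)\parenv{\lwt{\bmz}+A}
\end{equation*}
when $d\ne0$, using $\abs{d}\le K-1$ whenever the edit is nontrivial with both lengths at most $K$, or $\abs{d}\le K$ in general. When $d=0$ the tail term vanishes and only $\sum_{i\le r^{*}}w_i\le\lwt{\bmw}$ remains.

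\textbf{Assembling.} We sum the bounds:
\begin{equation*}
\abs{\Delta_{\rm VT}}\le \abs{d}\lwt{\bmw}+\lwt{\bmw}+K\max\{\lwt{\bms^\prime},\lwt{\bmt^\prime}\}+\abs{d}\lwt{\bmt^\prime}.
\end{equation*}
Every $L_1$ quantity here is at most $\lwt{\bmz}+A$, and the coefficients total roughly $\abs{d}+1+K+\abs{d}\le 3K$. For $K\ge3$ this is below $K^2$. For $K=2$ we need a slightly sharper bookkeeping, such as the case split $d=0$ versus $\abs{d}\ge1$, which excludes some terms.

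\textbf{Main obstacle.} The delicate step is getting a \emph{strict} inequality with constant exactly $K^2$ in the small-$K$ cases. Several terms overlap; for example, $\sum_{i>r^{*}}w_i$ and $\lwt{\bmt^\prime}$ both count $\bmt^\prime$. My first attempt would be to split $\lwt{\bmw}=\sum_{i\le r^{*}}w_i+\lwt{\bmt^\prime}+\lwt{\bmv^\prime}$ and bound the coefficient of each block separately. I would then check that each coefficient is at most about $K$, so that the total is less than $K^2(\lwt{\bmz}+A)$. The strictness would come from $\vt(\bms^\prime)<K\lwt{\bms^\prime}$ unless $\bms^\prime$ is concentrated at its last position, or from $A>0$ appearing with slack.
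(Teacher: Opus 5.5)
There is a genuine gap. Your assembled bound does not prove the lemma for $K=2$ or $K=3$. The case $K=3$ is exactly the one invoked later in \Cref{lem_dEDeltaVT}, where the strict inequality is needed to recover $\Delta_{\rm VT}$ from its residue modulo $M_{n,k}$. You acknowledge the problem at $K=2$.

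For $K=3$, your count $\abs{d}+1+K+\abs{d}\le 3K$ relies on the claim that $\abs{d}\le K-1$ for a nontrivial edit, and this claim is false. Deleting or inserting $K$ symbols gives $\abs{d}=K$. The same error invalidates your bound $\max\{\abs{d},1\}\lwt{\bmw}\le (K-1)\parenv{\lwt{\bmz}+A}$. With the correct range $\abs{d}\le K$, your count reaches $3K+1=10>9$. Even replacing $\abs{d}+1$ by $\max\{\abs{d},1\}$ gives only $\abs{\Delta_{\rm VT}}\le 9\parenv{\lwt{\bmz}+A}$, which is not strict. Your individual bounds do suffice for $K\ge4$.

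The loss comes from expanding $\Delta_{\rm VT}$ through $\eta(r^{*})$. The block $\bmt^\prime$ is counted once inside $-d\sum_{i>r^{*}}w_i$ and again in $\parenv{\abs{\bmt^\prime}-\abs{\bms^\prime}}\lwt{\bmt^\prime}$, with opposite signs. You then take absolute values of the two terms separately, so the cancellation is lost.

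The missing step, which your last paragraph gestures at but does not carry out, is to cancel these terms before bounding. Since $-d\sum_{i=r^{*}+1}^{\abs{\bmw}}w_i+d\,\lwt{\bmt^\prime}=-d\,\lwt{\bmv^\prime}$, you get exactly \eqref{eq_VT}:
\[
\Delta_{\rm VT}=\vt(\bms^\prime)-\vt(\bmt^\prime)+r^{*}E-d\,\lwt{\bmv^\prime}.
\]
Then bound each piece as follows.
\begin{itemize}
\item By $A$-boundedness, $\abs{\vt(\bms^\prime)-\vt(\bmt^\prime)}\le A\binom{K+1}{2}<K^2A$. This bound is not too crude: the $K^2A$ part of the target exists to absorb it, and it supplies the strictness because $A\ge1$ and $K\ge2$.
\item Your own observation gives $\abs{r^{*}E}\le\sum_{i\le r^{*}}w_i=\sum_{i\le r^{*}}z_i$, since $\bmz$ and $\bmw$ agree on $\bmu^\prime$.
\item Since $\bmv^\prime$ is a suffix of $\bmz$ lying after position $r^{*}$, $\abs{d}\,\lwt{\bmv^\prime}\le K\sum_{i>r^{*}}z_i$.
\end{itemize}
Summing gives $\abs{\Delta_{\rm VT}}<K^2A+K\lwt{\bmz}\le K^2\parenv{\lwt{\bmz}+A}$ for every $K\ge2$, with no case split. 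This is the paper's argument.
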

\begin{IEEEproof}
Represent $\bmz$ and $\bmw$ as in \eqref{eq_splitzw}. Since $\bmw_{\sparenv{r^{*}+\abs{\bmt^\prime}+1,\abs{\bmw}}}=\bmv^\prime$, we conclude from \eqref{eq_VT} that 
    \begin{align*}
    \abs{\Delta_{\rm VT}}&\le\abs{\vt(\bms^\prime)-\vt(\bmt^\prime)}+\abs{\bmu^\prime}\abs{\lwt{\bms^\prime}-\lwt{\bmt^\prime}}+\abs{\abs{\bmt^\prime}-\abs{\bms^\prime}}\lwt{\bmv^\prime}\\
    &<K^2A+\abs{\bmu^\prime}\abs{\lwt{\bmz}-\lwt{\bmw}}+\abs{\abs{\bmt^\prime}-\abs{\bms^\prime}}\lwt{\bmv^\prime}\\
    &\le K^2A+\sum_{i=1}^{\abs{\bmu^\prime}}\abs{\lwt{\bmz}-\lwt{\bmw}}+K\lwt{\bmv^\prime}\\
    &\le K^2A+\sum_{i=1}^{\abs{\bmu^\prime}}w_i+K\lwt{\bmv^\prime}\\
    &\overset{(a)}{=}K^2A+\sum_{i=1}^{\abs{\bmu^\prime}}z_i+K\lwt{\bmv^\prime}\\
    &\le K^2A+\sum_{i=1}^{\abs{\bmu^\prime}}z_i+K\sum_{i=\abs{\bmu^\prime}+1}^{\abs{\bmz}}z_i\\
    &\le K^2\parenv{\lwt{\bmz}+A},
    \end{align*}
    where (a) follows from the fact that $z_i=w_i$ for every $1\le i\le \abs{\bmu^\prime}$.
\end{IEEEproof}

Although it is permitted that $\abs{E}=w_i$ in \Cref{lem_monotonicity} and \Cref{lem_absVT}, \eqref{eq_lowstep} implies that we must have $\abs{E}<w_i$ for all $i$ under the assumption in \Cref{prop_unequallength}. The following definition collects the properties of the $K$-substring edit $\bmz\to_{K}\bmw$ required in \Cref{prop_equallength}, \Cref{prop_unequallength} and \Cref{lem_absVT}.
\begin{definition}\label{def_locatable}
Let $\bmz,\bmw\in\mathbb{Z}_{\ge0}^{*}$. Suppose that $\bmw$ is obtained from $\bmz$ by a nontrivial $K$-substring edit. We call this $K$-substring edit \emph{strongly locatable} if the following three conditions are satisfied:
\begin{enumerate}[$(i)$]
\item $\abs{\bmz}\ne\abs{\bmw}$ or $\lwt{\bmz}\ne\lwt{\bmw}$;
\item if $\abs{\bmz}=\abs{\bmw}$, then $\bmz$ and $\bmw$ differ in one coordinate or two consecutive coordinates;
\item $|\lwt{\bmz}-\lwt{\bmw}|< w_i$ for every $i\in\sparenv{\abs{\bmw}}$.
\end{enumerate}
\end{definition}

\begin{remark}
Li \emph{et al.} introduced the definition of a \emph{locatable} substring edit in \cite[Definition~3]{YutingLi2025IT}, which satisfies only conditions (i) and (iii) in \Cref{def_locatable}. We additionally include condition (ii) in the definition. This is the first reason why we call the edit \emph{strongly} locatable. The second reason is that while the locatability property helps to determine the edit $\bmx\to_{k}\bmy$ to within a substring of $\bmx$ of length $O_{q,k}\parenv{(\log n)^2}$ (see Lemma 4 and the proof of Theorem 19 in \cite{YutingLi2025IT}), the strong locatability property enables us to determine the edit $\bmx\to_{k}\bmy$ to within a substring of $\bmx$ of length $O_{q,k}(\log n)$ (see \Cref{thm_framework} and \Cref{thm_specificlocator} in this paper).
\end{remark}

\vspace{5pt}
With these preliminary results, we are now ready to return to our goal in this section, that is, finding an interval $I\subseteq[n]$ with $|I|=O_{q,k}(\log n)$ that covers the edit region in $\bmx$.

Let $\bmx$, $\bmy$, $\cP$ and $f$ be as before. Then $\bmF_{\cP}(\bmx)$ and $\bmF_{\cP}(\bmy)$ have the forms in \eqref{eq_divideFxy} for some $K\ge2$. To explain how \Cref{prop_equallength,prop_unequallength} are helpful, assume that the $K$-substring edit $\bmF_{\cP}(\bmx)\to_{K}\bmF_{\cP}(\bmy)$ is strongly locatable. Denote $\bmz=\bmF_{\cP}(\bmx)$ and $\bmw=\bmF_{\cP}(\bmy)$.

If all the assumptions in \Cref{prop_equallength} are satisfied, then we can determine the first index, say $j$, at which $\bmF_{\cP}(\bmx)$ and $\bmF_{\cP}(\bmy)$ differ. This implies $\bmx_j^{\cP}\ne\bmy_j^{\cP}$. Notice that it is possible that $\bmx_i^{\cP}\ne\bmy_i^{\cP}$ but $f\parenv{\bmx_i^{\cP}}=f\parenv{\bmy_i^{\cP}}$. Set
$a_0=\max\{1,j-K+1\}$ and $b_0=\min\{\abs{\bmw},j+K-1\}$. By \eqref{eq_divideFxy} and the fact $\abs{\bmt^\prime}\le K$, the substring $\bmt^\prime$ is contained in the substring $\bmF_{\cP}(\bmy)_{[a_0,b_0]}$. Then by \Cref{assump1} and the definition of $\bmt^\prime$, the edit region in $\bmy$ is covered by $J$, where $\bmy_J=\bmy_{a_0}^{\cP}\cdots\bmy_{b_0}^{\cP}$.

For every $1\le i\le m_{\cP}(\bmy)$, let $\ell_i=\abs{\bmy_i^{\cP}}$. If all the assumptions in \Cref{prop_unequallength} are satisfied and all the values of $d$, $E$ and $\Delta_{\rm VT}$ are given, then we can find an interval $[r_1,r_2]\subseteq[0,\abs{\bmw}]$ that contains $\abs{\bmu^\prime}$. Specifically, initialize $\eta(\abs{\bmw})=\abs{\bmw}E$ and sequentially compute $\eta(\abs{\bmw}-1),\eta(\abs{\bmw}-2),\ldots,\eta(0)$ using $\eta(r-1)=\eta(r)-dw_{r}-E$ when $r\ge1$. Record the first and last $r$ satisfying $\abs{\Delta_{\rm VT}-\eta(r)}<2K^2A$. These are $r_2$ and $r_1$, respectively.

Set $b_0=\min\{\abs{\bmw},r_2+K\}$. By \eqref{eq_divideFxy} and the fact $\abs{\bmt^\prime}\le K$, the substring $\bmt^\prime$ is contained in $\bmF_{\cP}(\bmy)_{[r_1+1,b_0]}$, which implies that the edit region in $\bmy$ is covered by $J$, where $\bmy_J=\bmy_{r_1+1}^{\cP}\cdots\bmy_{b_0}^{\cP}$. 

If each segment of $\bmy$ has length at most $L$, then $\abs{J}\le (2K-1)L$ in the case of \Cref{prop_equallength}, or
\begin{equation*}
    \abs{J}<\frac{4K^2A}{C^{(1)}-\frac{C^{(3)}}{C^{(2)}}}+\abs{\bmy_{r_2+1}^{\cP}\cdots\bmy_{b_0}^{\cP}}\le \frac{4K^2A}{C^{(1)}-\frac{C^{(3)}}{C^{(2)}}}+KL
\end{equation*}
in the case of \Cref{prop_unequallength}.
Assume $J=[a,b]$. By the definition of a $k$-substring edit, the edit region in $\bmx$ is covered by $[a-k,b+k]\cap[n]$, which is the interval $I$ we are looking for.

The above analysis yields the following theorem.
\begin{theorem}\label{thm_framework}
    Let $\bmx\in\Sigma_q^n$, where $n>k$, and let $\bmy$ be obtained from $\bmx$ by a nontrivial $k$-substring edit. Let $\cP$ be a partition rule and $f:\Sigma_q^{*}\to\mathbb{Z}_{\ge0}$ be a function. Set $\bmz=\bmF_{\cP}(\bmx)$ and $\bmw=\bmF_{\cP}(\bmy)$. Suppose that the following conditions hold.
    \begin{enumerate}[$(i)$]
        \item \Cref{assump1} holds.
        \item The two sequences $\bmz$ and $\bmw$ are $A$-bounded, where $A\ge1$ is an integer.
        \item The $K$-substring edit $\bmz\to_{K}\bmw$ is strongly locatable, where $K\ge2$ is an integer.
        \item Each segment of $\bmy$ under $\cP$ has length at most $L$.
        \item The values of $A$, $K$ and $L$ are known.
        \item There are known positive numbers $C^{(1)},C^{(2)},C^{(3)}$ such that if $d\ne 0$,  they satisfy, for $\ell_i=|\bmy_i^{\cP}|$ and every $i\in[\abs{\bmw}]$, $w_i\ge C^{(1)}\ell_i$, $\ell_i\ge C^{(2)}$, and $\abs{E}\le C^{(3)}<C^{(1)}C^{(2)}$, where $d$ and $E$ are defined in \eqref{eq_dEDelta}.
    \end{enumerate} 
    Let $G_0(\cdot)$, $G_1(\cdot)$ and $\Delta_{\rm VT}$ be defined as before. Suppose that $d$, $E$, $\Delta_{\rm VT}$, together with $G_0(\bmz)-G_0(\bmw)$ and $G_1(\bmz)-G_1(\bmw)$ when $d=0$, are given. Then one can determine from $\bmy$ an interval $I\subseteq[n]$ that covers the edit region in $\bmx$ and satisfies
    \begin{equation}\label{eq_geninterval}
    \abs{I}\le\max\mathset{(2K-1)L,\frac{4K^2A}{C^{(1)}-\frac{C^{(3)}}{C^{(2)}}}+KL}+2k.
    \end{equation}
    Given $\Part_{\cP}(\bmy)$ and $\bmw$, this interval can be found using $O_{q,k}\parenv{n\lambda^2}$ bit operations, where $\lambda=\ceilenv{\log\parenv{(n+k+1)^2K^2A}}$.
\end{theorem}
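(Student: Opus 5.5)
The proof collects the discussion preceding the statement into a decoding procedure, and then bounds its output length and its cost. Set $\bmz=\bmF_{\cP}(\bmx)$ and $\bmw=\bmF_{\cP}(\bmy)$. By condition (i), \Cref{assump1} holds, so \eqref{eq_divideFxy} gives $\bmz\to_K\bmw$ with the edited block $\bmt^\prime$ corresponding to segments covering the edit region in $\bmy$. The argument splits on the known value $d=\abs{\bmw}-\abs{\bmz}$.

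\emph{Case $d=0$.} Strong locatability gives two facts. Condition (i) of \Cref{def_locatable} yields $E\ne0$, and condition (ii) says $\bmz,\bmw$ differ in one coordinate or two consecutive coordinates. So \Cref{prop_equallength} applies with the given $E$, $\Delta_{\rm VT}$, $c_0$ and $c_1$, and returns the first differing index $j$.

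Because $\bmu^\prime$ is a common prefix, we have $j\ge\abs{\bmu^\prime}+1$. I need to check that $\bmt^\prime$ starts within $K-1$ of $j$. This is the one subtle point, since the minimality convention \eqref{eq_minimalends} is not assumed for the induced edit.

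The fix is to take $\bmt^\prime$ to be the minimal replaced block: $z$ and $w$ differ exactly in positions $j$ or $j,j+1$. Coverage in $\bmy$ is then given by the index window $[a_0,b_0]=[\max\{1,j-K+1\},\min\{\abs{\bmw},j+K-1\}]$. If one instead argues via the representation from \Cref{assump1}, one uses that $\abs{\bmu^\prime}<j\le\abs{\bmu^\prime}+\abs{\bmt^\prime}$, which places $\bmt^\prime$ inside $[j-K+1,j+K-1]$. Either way, $J$ is the union of segments $a_0,\dots,b_0$, so $\abs{J}\le(2K-1)L$ by condition (iv).

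\emph{Case $d\ne0$.} Condition (iii) of \Cref{def_locatable} gives $\abs{E}<w_i$ for all $i$. Together with conditions (ii) and (vi), all hypotheses of \Cref{prop_unequallength} hold, so $r^*=\abs{\bmu^\prime}$ lies in the interval $\cR=[r_1,r_2]$, and \eqref{eq_sumupper} bounds the total length $\sum_{i=r_1+1}^{r_2}\ell_i$.

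The set $\cR$ is computable from $\bmy$ and the given data. Start from $\eta(\abs{\bmw})=\abs{\bmw}E$ and apply the recursion $\eta(r-1)=\eta(r)-dw_r-E$. Since $\bmt^\prime$ occupies indices $r^*+1,\dots,r^*+\abs{\bmt^\prime}\subseteq[r_1+1,\min\{\abs{\bmw},r_2+K\}]$, the union $J$ of these segments covers the edit region in $\bmy$. Its length is at most the bound in \eqref{eq_sumupper} plus $KL$.

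\emph{Passing from $\bmy$ to $\bmx$.} Write $J=[a,b]$. The edit region of $\bmx$ starts at the same offset $\abs{\bmu}$ and has length at most $k$, while that of $\bmy$ has length at most $k$. Hence $I=[a-k,b+k]\cap[n]$ covers the edit region in $\bmx$, and this gives \eqref{eq_geninterval}. I expect the main care to be needed in the gap conventions of \Cref{def_covereditregion}, and in using $n>k$ to keep $I$ meaningful.

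\emph{Complexity.} All intermediate quantities are bounded in absolute value by about $(n+k+1)^2K^2A$. Indeed, $\abs{\bmw}\le n+k$, each $w_i\le A$, $\abs{d}\le k$, and $\abs{E}$ is small, so $\abs{\eta(r)}\le(n+k)(kA(n+k)+\abs{E})$. Thus each number fits in $\lambda$ bits.

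The recursion takes $O(n)$ steps, each consisting of a multiplication $dw_r$, additions and a comparison, so the cost is $O(\lambda^2)$ per step by \Cref{prop_bitarithmetic}. In the case $d=0$, one division costs $O(\lambda^2)$. Summing segment lengths to convert segment indices into coordinates takes $O(n\lambda)$. The total is therefore $O_{q,k}(n\lambda^2)$.
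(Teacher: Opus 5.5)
Your proposal is correct and follows the paper's own route. As the paper does, you split on the given $d$, use \Cref{prop_equallength} with the window $[\max\{1,j-K+1\},\min\{\abs{\bmw},j+K-1\}]$ or \Cref{prop_unequallength} with $[r_1+1,\min\{\abs{\bmw},r_2+K\}]$, take the union $J$ of those segments of $\bmy$, widen it by $k$ on each side, and charge $O(\lambda^2)$ bit operations to each of $O(n)$ steps.

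Two small repairs are needed.

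\textbf{The case $d=0$.} Only your \Cref{assump1} argument is valid here. The minimal block where $\bmz$ and $\bmw$ differ does not by itself certify coverage of the edit region, because an edited segment can keep its $f$-value. What places the whole block $\bmb_1\cdots\bmb_{l_2}$ inside the window is
\[
\abs{\bmu^\prime}<j\le\abs{\bmu^\prime}+\abs{\bmt^\prime}\quad\text{with}\quad \abs{\bmt^\prime}\le K.
\]
The upper bound on $j$ holds because $d=0$ forces $\abs{\bms^\prime}=\abs{\bmt^\prime}$.

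\textbf{The complexity count.} The correct bound is $\abs{d}\le K$, not $\abs{d}\le k$, since $K$ may exceed $k$. With this correction every integer still fits in $O(\lambda)$ bits.
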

\begin{IEEEproof}
    The preceding analysis shows how to find such an interval. We only need to count bit operations. Since $\bmz\to_K\bmw$, we have $|d|\le K$. Since every segment is nonempty, we obtain $\abs{\bmz}\le n$ and $\abs{\bmw}\le n+k$. Then the $A$-bounded property of $\bmz$ and $\bmw$ gives $\abs{E}\le KA$, $\abs{\Delta_{\rm VT}}\le A(n+k+1)^2$, and $\abs{G_r(\bmz)-G_r(\bmw)}\le A(n+k)$ for every $0\le r\le 2$. Each value of $\eta(r)$ has absolute value at most $2KA(n+k)$. We need to compare each $\eta(r)$ with $\Delta_{\rm VT}-2K^2A$ and $\Delta_{\rm VT}+2K^2A$. Thus all integers used below, including $2K^2A$, $\Delta_{\rm VT}-\eta(r)$, segment indices, and coordinate sums, have $O(\lambda)$ bits.

    If $d=0$, the proof of \Cref{prop_equallength} computes the index $j$ by a constant number of additions, subtractions, and comparisons, followed by one exact division by $E$. Hence it costs $O(\lambda^2)$ bit operations. Set $a_0=\max\mathset{1,j-K+1}$ and $b_0=\min\mathset{\abs{\bmw},j+K-1}$. Computing these two indices costs $O(\lambda)$ further bit operations.

    Now suppose that $d\ne0$. Compute the threshold $2K^2A$ and initialize $\eta\parenv{\abs{\bmw}}=\abs{\bmw}E$. These computations cost $O(\lambda^2)$ bit operations. For $r=\abs{\bmw},\abs{\bmw}-1,\ldots,1$, compute $\eta(r-1)=\eta(r)-dw_r-E$. At every index $r\in[0,\abs{\bmw}]$, test whether $\abs{\Delta_{\rm VT}-\eta(r)}<2K^2A$. For each $r$, the process uses a fixed number of multiplications, additions, subtractions, and comparisons. Therefore, scanning all $\abs{\bmw}+1$ indices costs $O\parenv{(n+k+1)\lambda^2}$ bit operations. Set $a_0=r_1+1$ and $b_0=\min\mathset{\abs{\bmw},r_2+K}$, which costs $O(\lambda)$ further bit operations.

    In either case, the selected substring of $\bmy$ is $\bmy_{J}=\bmy_{a_0}^{\cP}\cdots\bmy_{b_0}^{\cP}$. Determining $J$ and $I$ requires additional $O(n+k)$ bit operations. Therefore, the total number of bit operations is $O_{q,k}\parenv{n\lambda^2}$.
\end{IEEEproof}

Thus the task is to choose $\cP$ and $f$ so that all the hypotheses of \Cref{thm_framework} hold and the right-hand side of \eqref{eq_geninterval} is $O_{q,k}(\log n)$. This is done in the next section.

\section{Instantiation of the Framework}\label{sec_instantiation}
To apply the general framework in \Cref{thm_framework}, we need to design a suitable partition rule $\cP$ and a function $f$ that satisfy all the hypotheses. We must also determine the exact values of $K$, $A$, $L$, $C^{(1)}$, $C^{(2)}$, $C^{(3)}$, $d$, $E$, $\Delta_{\rm VT}$, $G_0\parenv{\bmF_{\cP}(\bmx)}-G_0\parenv{\bmF_{\cP}(\bmy)}$, and $G_1\parenv{\bmF_{\cP}(\bmx)}-G_1\parenv{\bmF_{\cP}(\bmy)}$.

This section accomplishes this task. The partition rule and the function $f$ are specified in \Cref{subsec_partitionandf}, where the values of $K$, $A$, $L$, $C^{(1)}$, $C^{(2)}$ and $C^{(3)}$ are also determined. In \Cref{sec_locatableverify}, we show that the induced edit $\bmF_{\cP}(\bmx)\to_{K}\bmF_{\cP}(\bmy)$ is strongly locatable if $\bmx\ne\bmy$. Finally, \Cref{sec_finallocator} shows how to determine the values of $d$, $E$, $\Delta_{\rm VT}$, $G_0\parenv{\bmF_{\cP}(\bmx)}-G_0\parenv{\bmF_{\cP}(\bmy)}$ and $G_1\parenv{\bmF_{\cP}(\bmx)}-G_1\parenv{\bmF_{\cP}(\bmy)}$, and presents a code which can locate a $k$-substring edit within a window of length $O_{q,k}(\log n)$.

\subsection{The Partition Rule \texorpdfstring{$\cP$}{P} and the Function \texorpdfstring{$f$}{f}}\label{subsec_partitionandf}
In \cite[Section III]{YutingLi2025IT}, Li \emph{et al.} introduced a partition rule $\cP$ and a function $f$ with the property that if $\bmx\to_{k}\bmy$ and $\bmx\ne\bmy$, then $\bmF_{\cP}(\bmy)$ is obtained from $\bmF_{\cP}(\bmx)$ by a \emph{locatable} $3$-substring edit. In this section, we show that the same $\cP$ and $f$, with suitable modifications, give a strongly locatable edit $\bmF_{\cP}(\bmx)\to_{K}\bmF_{\cP}(\bmy)$.

In the rest of this paper, denote $\bmp_{k}\triangleq 0^k1^k$ for every integer $k\ge1$. We call $\bmp_{k}$ the \emph{pattern}. For a nonempty sequence $\bma\in\Sigma_q^{*}$, if $\bma_{[i,i+2k-1]}=\bmp_{k}$ for some $1\le i\le \abs{\bma}-2k+1$, we say that $i$ is a \emph{pattern start} and that $\bma_{[i,i+2k-1]}$ is an \emph{occurrence} of the pattern. It is straightforward to see that two occurrences of $\bmp_{k}$ cannot overlap.

Denote by $m_{k}(\bma)$ the number of occurrences of $\bmp_{k}$ in $\bma$. The partition rule $\cP$ in \cite[Section III]{YutingLi2025IT} is defined as follows.
\begin{itemize}
    \item If $m_{k}(\bma)=0$, then $\Part_{\cP}(\bma)$ consists of the single segment $\bma$.
    \item If $m_{k}(\bma)\ge1$, for $1\le i\le m_{k}(\bma)$, let $\tau_i$ be the start of the $i$-th pattern in $\bma$. Set $\tau_0=1$ and $\tau_{m_{k}(\bma)+1}=\abs{\bma}+1$. 
    When $\tau_1>1$, there are $m_{k}(\bma)+1$ segments in $\bma$. For $1\le i\le m_{k}(\bma)+1$, the $i$-th segment is $\bma^{(k)}_{i}\triangleq\bma_{[\tau_{i-1},\tau_{i}-1]}$. When $\tau_1=1$, there are $m_{k}(\bma)$ segments in $\bma$. For $1\le i\le m_{k}(\bma)$, the $i$-th segment is $\bma^{(k)}_{i}\triangleq\bma_{[\tau_{i},\tau_{i+1}-1]}$.
\end{itemize}
For clarity, we use $\Part_{k}(\bma)$ for $\Part_{\cP}(\bma)$ throughout this section.

When $m_{k}(\bma)\ge1$ and $\tau_1>1$, it holds that $\abs{\bma_i^{(k)}}\ge 2k$ for every $2\le i\le m_{k}(\bma)+1$. But the first segment $\bma_1^{(k)}$ may have length smaller than $2k$. In this paper, we apply this rule to the virtually extended sequence
$$
\widehat{\bma}\triangleq\bmp_{k}\bma\bmp_{k}
$$
rather than to $\bma$ itself. It is important to notice that the two \emph{virtual} patterns at both ends of $\widehat{\bma}$ are used only to compute the partition and the function $f_k$ defined below. They are not transmitted through the channel. We do not assume that a transmitted $\bma$ begins with a pattern or ends with a pattern to save $4k\log q$ redundant bits.

\begin{remark}
Before proceeding, we explain why we consider the partition of $\widehat{\bma}$ instead of $\bma$. Clearly, the sequence $\widehat{\bma}$ is obtained by adding a copy of $\bmp_{k}$ to both ends of $\bma$. Both copies are needed for our construction.  The left copy starts at coordinate~$1$.  Consequently every segment of $\widehat{\bma}$ begins with a complete occurrence of $\bmp_k$ and hence has length at least
$2k$. Without this copy, the first segment could have length in $[1,2k-1]$, so the inequality $\ell_i\ge C^{(2)}=2k$ used in \Cref{prop_unequallength,thm_framework} would be false.  In addition, for a $k$-substring edit before the first $\bmp_{k}$ inside $\bma$, the unchanged pattern start $\alpha$ chosen in the proof of \Cref{lem_segmentreplacement}~(ii) would not exist.

The right copy begins at coordinate $|\widehat{\bma}|-2k+1$. It guarantees an unchanged pattern occurrence to the right of every edit in $\bma$. Without it, a $k$-substring edit after the last $\bmp_{k}$ inside $\bma$ could leave no such occurrence, and the pattern start $\beta$ in the proof of \Cref{lem_segmentreplacement}~(ii) would not exist.  The decompositions in \eqref{eq_segmentreplacement}, and hence condition~(i) of \Cref{thm_framework}, would then require separate boundary cases and would not follow from the stated argument.

Thus the left and right
copies respectively supply the two pattern starts that bound the edit from the left and from the right. Omitting either copy breaks a specific hypothesis used by the present proof.
\end{remark}

Now suppose $\bmx\in\Sigma_q^{n}$ and $\bmx\to_{k}\bmy$. Use the representation in \eqref{eq_editbasic}, with the convention \eqref{eq_minimalends} whenever the edit is nontrivial. Then
\begin{equation}\label{eq_extendededit}
 \widehat{\bmx}=\bmp_{k}\bmu\bms\bmv\bmp_{k},
 \;
 \widehat{\bmy}=\bmp_{k}\bmu\bmt\bmv\bmp_{k}.
\end{equation}
Thus the $k$-substring edit $\widehat{\bmx}\to_k\widehat{\bmy}$ replaces the same pair $\bms,\bmt$. Let $\rho=\abs{\bmp_{k}\bmu}=2k+\abs{\bmu}$ and $d=\abs{\bmt}-\abs{\bms}$. Coordinates outside the edit regions are related by the bijection $\phi:\sparenv{\rho}\cup\sparenv{\rho+\abs{\bms}+1,\abs{\widehat{\bmx}}}\to\sparenv{\rho}\cup\sparenv{\rho+\abs{\bmt}+1,\abs{\widehat{\bmy}}}$ defined by
\begin{equation}\label{eq_functionphi}
\phi(i)=
    \begin{cases}
        i,\mbox{ if }i\le\rho,\\
        i+d,\mbox{ if }i\ge\rho+\abs{\bms}+1.
    \end{cases}
\end{equation}
For a set $I\subseteq \sparenv{\rho}\cup\sparenv{\rho+\abs{\bms}+1,\abs{\widehat{\bmx}}}$, write $\phi(I)=\mathset{\phi(i):i\in I}$. Clearly, we have
\begin{equation}\label{eq_correspondxy}
    \widehat{\bmx}_{I}=\widehat{\bmy}_{\phi(I)}.
\end{equation}
The coordinate relation in \eqref{eq_correspondxy} is illustrated in \Cref{fig_extendededit}.

\begin{figure}[!t]
\centering
\begin{tikzpicture}[x=0.82cm,y=0.82cm,>=Latex,font=\small]
  \node[left] at (0,0.35) {$\widehat{\bmx}$};
  \node[left] at (0,-2.05) {$\widehat{\bmy}$};
  \draw (0,0) rectangle (4,0.7);
  \draw[fill=red!12] (4,0) rectangle (6,0.7);
  \draw (6,0) rectangle (11,0.7);
  \draw (1.3,0)--(1.3,0.7); \draw (9.7,0)--(9.7,0.7);
  \node at (0.65,0.35) {$\bmp_k$}; \node at (2.65,0.35) {$\bmu$};
  \node at (5,0.35) {$\bms$}; \node at (7.85,0.35) {$\bmv$};
  \node at (10.35,0.35) {$\bmp_k$};
  \draw (0,-2.4) rectangle (4,-1.7);
  \draw[fill=red!12] (4,-2.4) rectangle (7.2,-1.7);
  \draw (7.2,-2.4) rectangle (12.2,-1.7);
  \draw (1.3,-2.4)--(1.3,-1.7); \draw (10.9,-2.4)--(10.9,-1.7);
  \node at (0.65,-2.05) {$\bmp_k$}; \node at (2.65,-2.05) {$\bmu$};
  \node at (5.6,-2.05) {$\bmt$}; \node at (9.05,-2.05) {$\bmv$};
  \node at (11.55,-2.05) {$\bmp_k$};
  \draw[dashed] (4,0)--(4,-1.7);
  \draw[dashed] (6,0)--(7.2,-1.7);
  \draw[->,thick] (2.3,0)--node[left] {$\phi(i)=i$} (2.3,-1.7);
  \draw[->,thick] (9,0)--node[right] {$\phi(j)=j+d$} (10.2,-1.7);
  \fill[black] (2.3,0) circle (1.5pt); 
  \fill[black] (9,0) circle (1.5pt);
  \node[above] at (4,0.75) {$\rho$};
  \node[above] at (6,0.75) {$\rho+|\bms|$};
  \node[below] at (4,-2.45) {$\rho$};
  \node[below] at (7.2,-2.45) {$\rho+|\bmt|$};
  \node[align=center,font=\footnotesize] at (5.25,-0.85) {no map on\\edited symbols};
  \node at (6.1,-3.45) {$d=|\bmt|-|\bms|$, \quad $\widehat{x}_i=\widehat{y}_{\phi(i)}$ for every $i$ in the domain of $\phi$.};
\end{tikzpicture}
\caption{The coordinate correspondence $\phi$ in \eqref{eq_functionphi} and \eqref{eq_correspondxy} in the case $d>0$.}
\label{fig_extendededit}
\end{figure}

Coordinate $i$ of $\bmx$ corresponds to coordinate $i+2k$ of
$\widehat{\bmx}$. The same relation holds between $\bmy$ and $\widehat{\bmy}$. Therefore, locating errors in $\bmx$ is equivalent to locating errors in $\widehat{\bmx}$. 

Next, we introduce the function $f$ needed in \Cref{thm_framework}. For $\bmz=z_1\cdots z_r\in\Sigma_q^*$ and $a\in[k] $, let
\begin{equation}\label{eq_functiond}
 d_a(\bmz)=
 \sum_{\substack{1\le i\le r\\i\equiv a\pmod k}}z_i\pmod q.
\end{equation}
The vector $(d_1(\bmz),\ldots,d_k(\bmz))$ records the symbol sums in the $k$ residue classes modulo $k$ of the coordinate index. Since $0\le d_a(\bmz)<q$, we can store this vector as the integer
\begin{equation}\label{eq_functionD}
    D_q(\bmz)\triangleq\sum_{a=1}^k d_a(\bmz)q^{a-1}
 \in\{0,1,\ldots,q^k-1\}.
\end{equation}
Also let
\[
 N_k(\bmz)
 \triangleq\abs{\{1\le i\le\abs{\bmz}-k+1:\bmz_{[i,i+k-1]}=1^k\}}
\]
be the number of occurrences of $1^k$ in $\bmz$. For later use in this section and in \Cref{sec_substringeditcode}, set
\begin{align}
 Q=q^k, \quad D_{\max}=3(Q-1),\quad B=3Q-2,\label{eq_DandB}\\
 C_{k}=B(2k-1)+D_{\max}+1.\label{eq_TandC}
\end{align}
The choice of these parameters is justified in the proofs of \Cref{lem_segmentreplacement}~(iv) and \Cref{lem_condition1}.

With the above notations, we define the function $f_{k}:\Sigma_q^{*}\to\mathbb{Z}_{\ge0}$ by
\begin{equation}\label{eq_fmu}
 f_{k}(\bmz)=D_q(\bmz)+B N_k(\bmz)+C_{k}\abs{\bmz}.
\end{equation}

For the framework in \Cref{thm_framework}, take $f=f_k$. Applying this map to all segments of $\widehat{\bmx}$ gives the integer sequence
\begin{equation}\label{eq_Fmu}
 \bmF_{k}(\bmx)\triangleq
 \parenv{f_{k}\parenv{\widehat{\bmx}^{(k)}_1},\ldots,
       f_{k}\parenv{\widehat{\bmx}^{(k)}_{m_{k}(\widehat{\bmx})}}}.
\end{equation}
The argument of $\bmF_k$ is written as $\bmx$, because
$\widehat{\bmx}=\bmp_k\bmx\bmp_k$ is uniquely determined by $\bmx$.

In \Cref{lem_segmentreplacement} below, we show that the partition rule $\cP$ and the function $f_k$ described above satisfy conditions (i)-(ii) and (iv)-(vi) in \Cref{thm_framework}. This lemma also determines the six quantities $K$, $A$, $L$, $C^{(1)}$, $C^{(2)}$ and $C^{(3)}$. Before presenting this lemma and its proof, more notations and definitions are needed.

For a sequence $\bma$, define the set of pattern starts in it by $\cT_k(\bma)=\mathset{a\in[1,|\bma|-2k+1]:\bma_{[a,a+2k-1]}=\bmp_k}$.
Let $\widehat{\bmx}$ and $\widehat{\bmy}$ be given in \eqref{eq_extendededit}. Denote
\begin{align*}
 \cU\parenv{\widehat{\bmx}}&=\{a\in\cT_k(\widehat{\bmx}):
 a\le\rho-2k+1\ \text{or}\ a\ge\rho+\abs{\bms}+1\},\\
 \cU\parenv{\widehat{\bmy}}&=\{b\in\cT_k(\widehat{\bmy}):
 b\le\rho-2k+1\ \text{or}\ b\ge\rho+\abs{\bmt}+1\}.
\end{align*}
In other words, the set $\cU\parenv{\widehat{\bmx}}$ (resp. $\cU\parenv{\widehat{\bmy}}$) consists of the starts of patterns in $\widehat{\bmx}$ (resp. $\widehat{\bmy}$) that are entirely contained in $\bmp_{k}\bmu$ or $\bmv\bmp_{k}$. Then \eqref{eq_extendededit} implies that 
\begin{equation}\label{eq_elementUxUy}
    1\in\cU\parenv{\widehat{\bmx}}\cap[\rho-2k+1],\; n+2k+1\in\cU\parenv{\widehat{\bmx}}\cap[\rho+\abs{\bms}+1,\abs{\widehat{\bmx}}].
\end{equation}
Let $\phi$ be the mapping defined in \eqref{eq_functionphi}. Then \eqref{eq_correspondxy} gives
\begin{align}
    \cU\parenv{\widehat{\bmy}}\cap[\rho-2k+1]=\cU\parenv{\widehat{\bmx}}\cap[\rho-2k+1],\label{eq_Uleft}\\
    \cU\parenv{\widehat{\bmy}}\cap[\rho+\abs{\bmt}+1,\abs{\widehat{\bmy}}]=\phi\parenv{\cU\parenv{\widehat{\bmx}}\cap[\rho+\abs{\bms}+1,\abs{\widehat{\bmx}}]}.\label{eq_Uright}
\end{align}

\begin{definition}[destroyed and created pattern]\label{def_descrepattern}
A pattern in $\widehat{\bmx}$ with start $a\in\cD_{\rm es}\parenv{\widehat{\bmx}}\triangleq\cT_k(\widehat{\bmx})\setminus\cU\parenv{\widehat{\bmx}}$ is called a \emph{destroyed pattern}. In other words, the interval $[a,a+2k-1]$ intersects the edit region in $\widehat{\bmx}$ in the case $\abs{\bms}>0$, or crosses the edit region in $\widehat{\bmx}$ in the case $\bms=\epsilon$. 

A pattern in $\widehat{\bmy}$ with start $b\in\cC_{\rm re}\parenv{\widehat{\bmy}}\triangleq\cT_k(\widehat{\bmy})\setminus\cU\parenv{\widehat{\bmy}}$ is called a \emph{created pattern}. In other words, the interval $[b,b+2k-1]$ intersects the edit region in $\widehat{\bmy}$ in the case $\abs{\bmt}>0$, or crosses the edit region in $\widehat{\bmy}$ in the case $\bmt=\epsilon$.
\end{definition}

Thus, ``destroyed'' and ``created'' refer to a pattern affected by the specified edit, even if it has an occurrence both before and after the edit.
\begin{example}
Let $q=k=2$. Then $\bmp_k=0011$. Take $\bmu=00$, $\bms=0$, $\bmt=\epsilon$, and $\bmv=11$. This edit deletes the third symbol of $\bmx=00011$ and gives $\bmy=0011$. The extended sequences are $\widehat{\bmx}=0011\,00\,0\,11\,0011$ and $\widehat{\bmy}=0011\,00\,11\,0011$. Here $\rho=6$ and $d=-1$. By definition, we have
$$
\begin{aligned}
\cT_k(\widehat{\bmx})&=\{1,6,10\}, &\cU(\widehat{\bmx})&=\{1,10\}, &\cD_{\rm es}(\widehat{\bmx})=\{6\},\\
\cT_k(\widehat{\bmy})&=\{1,5,9\}, &\cU(\widehat{\bmy})&=\{1,9\}, &\cC_{\rm re}(\widehat{\bmy})=\{5\}.
\end{aligned}
$$
The destroyed pattern occupies $[6,9]$ and contains the deleted coordinate $7$. The created pattern occupies $[5,8]$ and crosses the edit region $g=7$ in $\widehat{\bmy}$.
\end{example}

To obtain a short interval from \Cref{thm_framework}, we need every segment of $\widehat{\bmy}$ to have length at most $L=O_{q,k}(\log n)$. We enforce this by restricting $\bmx$ to sequences satisfying a pattern-density condition. Set
\begin{equation}\label{eq_ellnk}
    \ell_{n,k}=
 2k q^{2k}\ceilenv{\log(2n)}.
\end{equation}
A sequence in $\Sigma_q^n$ is called $\parenv{\bmp_{k},\ell_{n,k}}$-\emph{dense} if every length-$\ell_{n,k}$ substring of it contains at least one copy of $\bmp_{k}$. For the construction in \Cref{sec_instantiation,sec_substringeditcode}, assume $n\ge\ell_{n,k}$. This holds for all sufficiently large $n$ because $q$ and $k$ are fixed.

To the best of our knowledge, pattern-density constraints were first introduced \cite{Lenz2020ISIT} for constructing binary burst-deletion correcting codes, and were subsequently used for constructing non-binary burst-deletion correcting codes \cite{Song2023IT,Shuche2024IT,Song2025Entropy}, localized-deletion correcting codes \cite{RawadBitar2021ISIT}, and $k$-substring edit correcting codes \cite{YuanyuanTang2025TCOMM,YutingLi2025IT}.  

Let $\cD_{q,k}(n)$ be the subset of $\Sigma_q^n$ consisting of all $\parenv{\bmp_{k},\ell_{n,k}}$-dense sequences. We have the following lemma.
\begin{lemma}\label{lem_pdense}
For fixed integers $q\ge2$ and $k\ge1$, let $n\ge\max\mathset{\ell_{n,k},q^{\frac{(6k+3)\log q-\log\e}{0.4\log q}}}$ be an integer.
\begin{enumerate}[$(i)$]
    \item(\cfcite{Song2025Entropy}{Lemma 6}) It holds that $\abs{\cD_{q,k}(n)}\ge q^{n-1}$.
    \item If $\bmx\in\cD_{q,k}(n)$, every segment of $\widehat{\bmx}$ has length at most $\ell_{n,k}$.
\end{enumerate}
\end{lemma}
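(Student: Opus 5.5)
Part (i) is quoted from \cite[Lemma 6]{Song2025Entropy}, so only (ii) needs an argument. For completeness, (i) follows the usual union-bound scheme. Split each length-$\ell_{n,k}$ window into $\ell_{n,k}/(2k)=q^{2k}\ceilenv{\log(2n)}$ disjoint aligned blocks of length $2k$. The number of $\bmx\in\Sigma_q^n$ in which a fixed window contains no copy of $\bmp_k$ is at most $q^n(1-q^{-2k})^{q^{2k}\ceilenv{\log(2n)}}\le q^n e^{-\log(2n)}$. Summing over at most $n$ window starts, the number of non-dense sequences is a small fraction of $q^n$. The stated lower bound on $n$ is what is needed to absorb the constant from $\ln$ versus $\log$ and get the fraction below $1-1/q$. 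I would simply cite this.

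For (ii), I would argue directly from the definition of the partition applied to $\widehat{\bmx}=\bmp_k\bmx\bmp_k$. Since $\widehat{\bmx}$ starts with $\bmp_k$, we have $\tau_1=1$. Every segment is therefore $\widehat{\bmx}_{[\tau_i,\tau_{i+1}-1]}$ for consecutive pattern starts $\tau_i<\tau_{i+1}$, where the last one ends at $\abs{\widehat{\bmx}}$. The right virtual copy starts at $n+2k+1$, and since occurrences cannot overlap, the last segment is exactly that copy and has length $2k\le\ell_{n,k}$. For any other segment, $\tau_{i+1}$ is an actual pattern start, so the segment length is $\tau_{i+1}-\tau_i$. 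It suffices to show that consecutive pattern starts in $\widehat{\bmx}$ differ by at most $\ell_{n,k}$.

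Take a pattern start $\tau$ in $\widehat{\bmx}$ and look at the next one. If $\tau\ge n+2k+1$, there is nothing to prove. Otherwise, consider the substring of $\widehat{\bmx}$ starting at $\tau+1$. If $\tau+1\le 2k+1$, meaning $\tau=1$, then the window $\bmx_{[1,\ell_{n,k}]}$ (valid since $n\ge\ell_{n,k}$) contains a copy of $\bmp_k$. In $\widehat{\bmx}$ that copy starts at some position at most $2k+\ell_{n,k}-2k+1=\ell_{n,k}+1$, so the gap is at most $\ell_{n,k}$. In the general case $\tau=2k+j$ with $j\ge1$, the window $\bmx_{[j+1,j+\ell_{n,k}]}$ lies inside $\bmx$ when $j+\ell_{n,k}\le n$. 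It contains a pattern start at $\widehat{\bmx}$-position at most $2k+j+\ell_{n,k}-2k+1=\tau+\ell_{n,k}-2k+1$.

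The main care is at the right boundary. If $j+\ell_{n,k}>n$, I would instead use the right virtual copy, which starts at $n+2k+1<\tau+\ell_{n,k}+1$, giving gap at most $\ell_{n,k}$. This mirrors the role of the right copy described in the preceding remark. In all cases the next pattern start is within distance $\ell_{n,k}$ (the index bookkeeping for $j$ versus $\tau$ is routine), so every segment has length at most $\ell_{n,k}$.
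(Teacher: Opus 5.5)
Your proposal is correct and follows the paper. Part (i) is cited from Song et al., and your argument for (ii) spells out the paper's ``follows immediately from the definition'': the left virtual copy forces $\tau_1=1$, density bounds the gap after each pattern start inside $\bmx$, and the right copy at $n+2k+1$ handles the tail while itself forming a final segment of length $2k$.

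One minor correction to your aside on (i): your union-bound sketch already works for every $n\ge\ell_{n,k}$, because the non-dense fraction is at most $n(2n)^{-\log\e}\le 1/\e<1-1/q$. So the large-$n$ hypothesis is not there to absorb the $\ln$ versus $\log$ constant; it comes from the efficient encoder in the cited work.
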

\begin{IEEEproof}
     Claim~(i) follows from \cite[Lemma 6]{Song2025Entropy}, where Song \emph{et al.} presented an efficient algorithm that encodes any sequence in $\Sigma_q^{n-1}$ into $\cD_{q,k}(n)$ whenever $n\ge q^{\frac{(6k+3)\log q-\log\e}{0.4\log q}}$, where $\e$ is the base of the natural logarithm. Claim~(ii) follows immediately from the definition.
\end{IEEEproof}

\begin{lemma}\label{lem_segmentreplacement}
Let $\bmx\in\Sigma_q^n$ and $\bmx\to_k\bmy$. Write $\widehat{\bmx}$ and $\widehat{\bmy}$ as in \eqref{eq_extendededit}.
\begin{enumerate}[$(i)$]
    \item It holds that $\bmF_{k}(\bmy)_i\ge C_{k}\abs{\widehat{\bmy}^{(k)}_i}$ and $\abs{\widehat{\bmy}^{(k)}_i}\ge 2k$ for every $1\le i\le m_{k}(\widehat{\bmy})$. Thus, one may take $C^{(1)}=C_{k}$ and $C^{(2)}=2k$ in \Cref{prop_unequallength}.
    \item There are sequences of segments $\bmP$, $\bmS$, $\bmA$ and $\bmA^\prime$ such that
    \begin{equation}\label{eq_segmentreplacement}
    \mathsf{Part}_k\parenv{\widehat{\bmx}}=\bmP\bmA\bmS,
    \;
    \mathsf{Part}_k\parenv{\widehat{\bmy}}=\bmP\bmA^\prime\bmS.
    \end{equation}
    Let $r=\abs{\cD_{\rm es}\parenv{\widehat{\bmx}}}$ and $r^\prime=\abs{\cC_{\rm re}\parenv{\widehat{\bmy}}}$. Then $\bmA$ contains $r+1$ segments, and $\bmA^\prime$ contains $r^\prime+1$ segments. Moreover,
    \begin{itemize}
        \item $0\le r,r^\prime\le2$;
        \item the edit region in $\widehat{\bmx}$ is covered by the concatenation of segments in $\bmA$;
        \item the edit region in $\widehat{\bmy}$ is covered by the concatenation of segments in $\bmA^\prime$.
    \end{itemize}
    \item The sequence $\bmF_{k}(\bmy)$ is obtained from $\bmF_{k}(\bmx)$ by one $3$-substring edit, and $\abs{m_k(\bmx)-m_k(\bmy)}\le2$.
    \item  We have
          \begin{equation}\label{eq_Fsumdiff}
             \abs{\lwt{\bmF_{k}(\bmx)}-\lwt{\bmF_{k}(\bmy)}}
             \le C_{k}(k+1)-1.
         \end{equation}
         Thus one may take $C^{(3)}=C_k(k+1)-1$ in
         \Cref{prop_unequallength}.
    \item If $\bmx\in\cD_{q,k}(n)$, every segment of $\widehat{\bmy}$ has length at most $3\ell_{n,k}+k$.
    \item If $\bmx\in\cD_{q,k}(n)$, both $\bmF_{k}(\bmx)$ and $\bmF_{k}(\bmy)$ are $A_{n,k}$ bounded, where
    \begin{equation}\label{eq_Amu}
        A_{n,k}\triangleq Q-1+B\parenv{3\ell_{n,k}+1}+C_{k}(3\ell_{n,k}+k).
    \end{equation}
\end{enumerate}
\end{lemma}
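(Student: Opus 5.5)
The plan is to prove the six claims in the order (i), (ii), (iii), (iv), (v), (vi). Claim (ii) does the real work, and the other claims are mostly bookkeeping once the decomposition \eqref{eq_segmentreplacement} is available.

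For (i), note that $\widehat{\bmy}=\bmp_k\bmu\bmt\bmv\bmp_k$ begins with an occurrence of $\bmp_k$ at coordinate $1$, so $\tau_1=1$. Hence every segment of $\widehat{\bmy}$ begins with a full occurrence of $\bmp_k$, which gives $|\widehat{\bmy}^{(k)}_i|\ge 2k$. Since $D_q$ and $N_k$ are nonnegative, $f_k(\bmz)\ge C_k|\bmz|$ follows directly from \eqref{eq_fmu}.

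For (ii), the first step bounds $r$ and $r'$. Occurrences of $\bmp_k$ cannot overlap and each has length $2k$. The edit region of $\widehat{\bmx}$ has length at most $k$ (or is a gap). So at most two occurrences can meet or cross it: a third would force two of them to lie within a window of length less than $4k$ while being disjoint and both touching the region. The same argument applies to $\widehat{\bmy}$. Next, let $\alpha=\max\bigl(\cU(\widehat{\bmx})\cap[\rho-2k+1]\bigr)$ and $\beta=\min\bigl(\cU(\widehat{\bmx})\cap[\rho+|\bms|+1,|\widehat{\bmx}|]\bigr)$; both exist by \eqref{eq_elementUxUy}. By \eqref{eq_Uleft} and \eqref{eq_Uright}, $\alpha$ and $\phi(\beta)=\beta+d$ are also pattern starts of $\widehat{\bmy}$ that bound the edit from the left and the right.

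Let $\bmP$ be the segments ending before $\alpha$, which are identical in both partitions because $\widehat{\bmx}$ and $\widehat{\bmy}$ agree on $[\rho]$ and have the same pattern starts there. Let $\bmS$ be the segments starting at $\beta$ (resp.\ $\beta+d$), which are identical via $\phi$. By the maximality of $\alpha$ and the minimality of $\beta$, every pattern start of $\widehat{\bmx}$ strictly between $\alpha$ and $\beta$ is destroyed. So $\bmA$ consists of the segment starting at $\alpha$ together with one segment per destroyed pattern, giving $r+1$ segments; symmetrically $\bmA'$ has $r'+1$ segments. Coverage holds because $\alpha+2k-1\le\rho$ and $\beta\ge\rho+|\bms|+1$. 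The care needed here is in checking the gap case ($\bms=\epsilon$ or $\bmt=\epsilon$) against \Cref{def_covereditregion}, and in checking that no unchanged pattern lies strictly between $\alpha$ and $\beta$.

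Claim (iii) is then immediate. $\bmF_k(\bmx)$ and $\bmF_k(\bmy)$ share the prefix $f_k(\bmP)$ and the suffix $f_k(\bmS)$, and the middle parts have length at most $3$. The difference in the number of segments is $r-r'\in[-2,2]$; I read the statement's $m_k$ as the number of segments of $\widehat{\cdot}$.

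For (iv), I would split $\lwt{\bmF_k(\bmx)}-\lwt{\bmF_k(\bmy)}$ into three parts.
\begin{itemize}
\item Length part: $C_k\bigl(|\widehat{\bmx}|-|\widehat{\bmy}|\bigr)$, of absolute value at most $C_kk$.
\item $D_q$ part: only the at most three segments in $\bmA$ and the at most three in $\bmA'$ contribute. Both $\sum_{\bmA}D_q$ and $\sum_{\bmA'}D_q$ lie in $[0,3(Q-1)]$, so their difference is at most $D_{\max}$ in absolute value.
\item $N_k$ part: an occurrence of $1^k$ cannot cross a segment boundary, since each segment starts with $0^k$. So the $N_k$ values over segments sum to $N_k(\widehat{\bmx})$. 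An edit of a region of length at most $k$ changes the count of $1^k$ occurrences by at most $2k-1$, namely the occurrences whose start lies within $k-1$ of the region.
\end{itemize}
Adding these gives $C_kk+D_{\max}+B(2k-1)=C_k(k+1)-1$ by \eqref{eq_TandC}.

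For (v), segments in $\bmP$ and $\bmS$ are segments of $\widehat{\bmx}$. Their length is at most $\ell_{n,k}$ by \Cref{lem_pdense}(ii), whose proof I would extend to the two boundary segments carrying the virtual patterns. Each segment of $\bmA'$ has length at most the total length of $\bmA$ plus $|d|$, which is at most $3\ell_{n,k}+k$.

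For (vi), bound $f_k(\bmz)\le Q-1+BN_k(\bmz)+C_k|\bmz|$ with $N_k(\bmz)\le|\bmz|-k+1$, and insert $|\bmz|\le 3\ell_{n,k}+k$. The same bound covers the segments of $\widehat{\bmx}$, which are even shorter.
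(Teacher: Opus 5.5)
Your proposal follows the paper's proof essentially step for step. For (ii) you use the same pivots $\alpha,\beta$ from \eqref{eq_elementUxUy} and the same decomposition $\bmP\bmA\bmS$ versus $\bmP\bmA^\prime\bmS$. For (iv) you use the same split into $\abs{\Delta_D}+B\abs{\Delta_N}+C_k\abs{\abs{\bmx}-\abs{\bmy}}$. For (v)--(vi) you use the same ``at most three segments of $\widehat{\bmx}$ plus at most $k$ extra symbols'' bound.

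The one thing to fix is your reinterpretation in (iii). In the statement, $m_k(\bmx)$ is the number of occurrences of $\bmp_k$ in $\bmx$ itself, not the segment count $m_k(\widehat{\bmx})$. So you still need $m_k(\widehat{\bmx})=m_k(\bmx)+2$ and $m_k(\widehat{\bmy})=m_k(\bmy)+2$. The paper asserts exactly this line. It holds because the two virtual copies are occurrences in $\widehat{\bmx}$, and any occurrence straddling a junction would overlap one of them, which is impossible. The distinction matters later, because the decoder recovers $d$ from $m_k(\bmx)\bmod 5$ and $m_k(\bmy)$.
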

\begin{IEEEproof}
Claim (i) follows from \eqref{eq_fmu}, because $D_q$ and $N_k$ are nonnegative and every segment of $\widehat{\bmy}$ begins with the length-$2k$ pattern. In the following, denote $\rho=2k+\abs{\bmu}$.

Next, we prove claim (ii). Let $a\in\cD_{\rm es}\parenv{\widehat{\bmx}}$. By the definition of $\cD_{\rm es}\parenv{\widehat{\bmx}}$, either
$[a,a+2k-1]\cap[\rho+1,\rho+\abs{\bms}]\ne\varnothing$ when $\abs{\bms}>0$, or $a\le\rho<a+2k-1$ when $\abs{\bms}=0$. Both cases imply $a\in[\rho-2k+2,\rho+\abs{\bms}]$. This interval has length
$2k+\abs{\bms}-1\le3k-1$, whereas distinct pattern starts differ by at least $2k$. Therefore, the interval $[\rho-2k+2,\rho+\abs{\bms}]$ contains at most two starts, which implies $\abs{\cD_{\rm es}\parenv{\widehat{\bmx}}}\le2$. Similarly, replacing
$(\widehat{\bmx},\bms,\cD_{\rm es}\parenv{\widehat{\bmx}})$ by
$(\widehat{\bmy},\bmt,\cC_{\rm re}\parenv{\widehat{\bmy}})$ gives $\abs{\cC_{\rm re}\parenv{\widehat{\bmy}}}\le2$.

Let $\alpha$ be the largest start
in $\cU\parenv{\widehat{\bmx}}$ satisfying $\alpha\le\rho-2k+1$, and let $\beta$ be the
smallest start in $\cU\parenv{\widehat{\bmx}}$ satisfying
$\beta\ge\rho+\abs{\bms}+1$. Their existence follows from \eqref{eq_elementUxUy}. Write $\cU\parenv{\widehat{\bmx}}\cap[\rho-2k+1]=\mathset{i_1,\ldots,i_L}$ and $\cU\parenv{\widehat{\bmx}}\cap[\rho+\abs{\bms}+1,\abs{\widehat{\bmx}}]=\mathset{j_1,\ldots,j_R}$, where $1=i_1<\cdots<i_L=\alpha$ and $\beta=j_1<\cdots<j_R=n+2k+1$. Then $\widehat{\bmx}_{\sparenv{i_1,i_2-1}},\ldots,\widehat{\bmx}_{\sparenv{i_{L-1},i_L-1}}$ are all the segments of $\widehat{\bmx}$ whose starts are to the left of $\alpha$, and $\widehat{\bmx}_{\sparenv{j_1,j_2-1}},\ldots,\widehat{\bmx}_{\sparenv{j_{R-1},j_R-1}},\widehat{\bmx}_{\sparenv{j_{R},\abs{\widehat{\bmx}}}}$ are all the segments of $\widehat{\bmx}$ whose starts are no smaller than $\beta$. Let $\bmP=\parenv{\widehat{\bmx}_{\sparenv{i_1,i_2-1}},\ldots,\widehat{\bmx}_{\sparenv{i_{L-1},i_L-1}}}$ and $\bmS=\parenv{\widehat{\bmx}_{\sparenv{j_1,j_2-1}},\ldots,\widehat{\bmx}_{\sparenv{j_{R-1},j_R-1}},\widehat{\bmx}_{\sparenv{j_{R},\abs{\widehat{\bmx}}}}}$. Notice that $\bmP$ might be empty.

By the definition of $\cD_{\rm es}\parenv{\widehat{\bmx}}$, the set $\mathset{\alpha}\cup\cD_{\rm es}\parenv{\widehat{\bmx}}$ consists of the pattern starts that determine all the segments of $\widehat{\bmx}$ which do not appear in $\bmP$ or $\bmS$. Let $\bmA$ be the sequence of these $r+1$ segments. Then we have $\Part_{k}\parenv{\widehat{\bmx}}=\bmP\bmA\bmS$.

By \eqref{eq_correspondxy}, $\alpha$ is also the largest start in $\cU\parenv{\widehat{\bmy}}$ satisfying $\alpha\le\rho-2k+1$, and $\phi(\beta)$ is the
smallest start in $\cU\parenv{\widehat{\bmy}}$ satisfying
$\phi(\beta)\ge\rho+\abs{\bmt}+1$. Then it follows from \eqref{eq_correspondxy} and \eqref{eq_Uleft} that $\widehat{\bmx}_{\sparenv{i_1,i_2-1}},\ldots,\widehat{\bmx}_{\sparenv{i_{L-1},i_L-1}}$ are all the segments of $\widehat{\bmy}$ whose starts are to the left of $\alpha$, and that $\widehat{\bmx}_{\sparenv{j_1,j_2-1}},\ldots,\widehat{\bmx}_{\sparenv{j_{R-1},j_R-1}},\widehat{\bmx}_{\sparenv{j_{R},\abs{\widehat{\bmx}}}}$ are all the segments of $\widehat{\bmy}$ whose starts are no smaller than $\phi(\beta)$. By the definition of $\cC_{\rm re}\parenv{\widehat{\bmy}}$, the set $\mathset{\alpha}\cup\cC_{\rm re}\parenv{\widehat{\bmy}}$ consists of the pattern starts that determine all the segments of $\widehat{\bmy}$ which do not appear in $\bmP$ and $\bmS$. Let $\bmA^\prime$ be the sequence of these $r^\prime+1$ segments. Thus we have $\Part_{k}\parenv{\widehat{\bmy}}=\bmP\bmA^\prime\bmS$.
This proves (ii).

Claim (ii) and \eqref{eq_Fmu} imply that $\bmF_{k}(\bmy)$ is obtained from $\bmF_{k}(\bmx)$ by one $3$-substring edit and $\abs{m_k\parenv{\widehat{\bmx}}-m_{k}\parenv{\widehat{\bmy}}}\le2$. Since $m_k\parenv{\widehat{\bmx}}=m_k\parenv{\bmx}+2$ and $m_k\parenv{\widehat{\bmy}}=m_k\parenv{\bmy}+2$, we have $\abs{m_k(\bmx)-m_k(\bmy)}\le2$.

To prove (iv), define
\begin{equation}\label{eq_DeltaDN}
 \Delta_D=\sum_{i=1}^{m_{k}(\widehat{\bmx})}D_q\parenv{\widehat{\bmx}^{(k)}_i}-
          \sum_{j=1}^{m_{k}(\widehat{\bmy})}D_q\parenv{\widehat{\bmy}^{(k)}_j},
 \;
 \Delta_N=\sum_{i=1}^{m_{k}(\widehat{\bmx})}N_{k}\parenv{\widehat{\bmx}^{(k)}_i}-
          \sum_{j=1}^{m_{k}(\widehat{\bmy})}N_{k}\parenv{\widehat{\bmy}^{(k)}_j}.
\end{equation}
Let $\cS$ and $\cS^\prime$ be the multisets of segments contained in $\bmA$ and $\bmA^\prime$, respectively. Then by claim (ii), we have $1\le\abs{\cS},\abs{\cS^\prime}\le 3$ and
\begin{equation}\label{eq_absDeltaD}
    \begin{aligned}
    \abs{\Delta_D}&=\abs{\sum_{\widehat{\bmx}^{(k)}_i\in\cS}D_q\parenv{\widehat{\bmx}^{(k)}_i}-
          \sum_{\widehat{\bmy}^{(k)}_j\in\cS^\prime}D_q\parenv{\widehat{\bmy}^{(k)}_j}}\\
          &\le\max\mathset{\sum_{\widehat{\bmx}^{(k)}_i\in\cS}D_q\parenv{\widehat{\bmx}^{(k)}_i},\sum_{\widehat{\bmy}^{(k)}_j\in\cS^\prime}D_q\parenv{\widehat{\bmy}^{(k)}_j}}\\
          &\le 3(Q-1)=D_{\max}.
    \end{aligned}
\end{equation}
Notice that no occurrence of $1^k$ overlaps with two segments, because every segment starts with $0^k$. Consequently,
\begin{equation}\label{eq_Nmuwhole}
    \sum_{i=1}^{m_{k}\parenv{\widehat{\bmx}}}N_k\parenv{\widehat{\bmx}^{(k)}_i}
 =N_k\parenv{\widehat{\bmx}},\; \sum_{j=1}^{m_{k}\parenv{\widehat{\bmy}}}N_{k}\parenv{\widehat{\bmy}^{(k)}_j}=N_{k}\parenv{\widehat{\bmy}}.
\end{equation}
Therefore,
 \begin{equation}\label{eq_DeltaNequiv}
     \Delta_N=N_{k}(\widehat{\bmx})-N_{k}(\widehat{\bmy}).
 \end{equation}
 Suppose that $\widehat{\bmx}_{[a,a+k-1]}=1^{k}$. This occurrence of $1^{k}$ can be destroyed by the $k$-substring edit only if $[a,a+k-1]$ intersects or crosses the edit region in $\widehat{\bmx}$, implying that $\rho-k+2\le a\le \rho+\abs{\bms}$. Therefore, there are at most $\abs{\bms}+k-1\le 2k-1$ copies of $1^{k}$ in $\widehat{\bmx}$ that can be destroyed. By the same argument, there are at most $2k-1$ copies of $1^{k}$ in $\widehat{\bmy}$ that can be created. Consequently,
 \begin{equation}\label{eq_absDeltaN}
     \abs{\Delta_N}=\abs{N_{k}(\widehat{\bmx})-N_{k}(\widehat{\bmy})}\le 2k-1.
 \end{equation}

Now it follows that
\begin{equation}
    \begin{aligned}
        &\abs{\lwt{\bmF_{k}(\bmx)}-\lwt{\bmF_{k}(\bmy)}}\\
        \le&\abs{\sum_{i}D_q\parenv{\widehat{\bmx}^{(k)}_i}-
          \sum_{j}D_q\parenv{\widehat{\bmy}^{(k)}_j}}+B\abs{\sum_{i}N_k\parenv{\widehat{\bmx}^{(k)}_i}-\sum_{j}N_k\parenv{\widehat{\bmy}^{(k)}_j}}+C_{k}\abs{\sum_{i}\abs{\widehat{\bmx}^{(k)}_i}-\sum_{j}\abs{\widehat{\bmy}^{(k)}_j}}\\
        =&\abs{\Delta_D}+B\abs{\Delta_N}+C_{k}\abs{\abs{\bmx}-\abs{\bmy}}\\
        \le& D_{\max}+B(2k-1)+C_{k}k=C_{k}(k+1)-1.
    \end{aligned}
\end{equation}
This completes the proof of (iv).

Since $C_{k}(k+1)-1<2k C_{k}$, the choice of $C^{(1)}$, $C^{(2)}$ and $C^{(3)}$ in claims (i) and (iv) is correct.

When $\bmx\in\cD_{q,k}(n)$, each segment of $\widehat{\bmx}$ has length at most $\ell_{n,k}$ by \Cref{lem_pdense}. A segment of $\widehat{\bmy}$ contains symbols from at most three segments of $\widehat{\bmx}$, together with at most $k$
symbols of $\bmt$. Its length is therefore at most
$3\ell_{n,k}+k$. This proves~(v).

Finally, claim (vi) follows from \eqref{eq_fmu}, \eqref{eq_Fmu} and (v).
\end{IEEEproof}

\subsection{Strong Locatability of the Induced \texorpdfstring{$3$}{3}-Substring Edit}\label{sec_locatableverify}
Throughout this subsection, let $\bmy$ be obtained from $\bmx$ by a \emph{nontrivial} $k$-substring edit, and use the representation in \eqref{eq_editbasic}--\eqref{eq_minimalends}. Then $\widehat{\bmx}$ and $\widehat{\bmy}$ have the forms in \eqref{eq_extendededit}. We will show that the induced $3$-substring edit $\bmF_{k}(\bmx)\to_{3}\bmF_{k}(\bmy)$ is strongly locatable. Conditions (i)--(iii) in \Cref{def_locatable} are established in \Cref{lem_condition1,lem_condition2,lem_condition3}, respectively.

Notice that $\abs{\bmF_{k}(\bmx)}=\abs{\bmF_{k}(\bmy)}$ (i.e., $m_k(\widehat{\bmx})=m_k(\widehat{\bmy})$) is equivalent to $m_{k}(\bmx)=m_{k}(\bmy)$. Recall the definition of $\cD_{\rm es}\parenv{\widehat{\bmx}}$ and $\cC_{\rm re}\parenv{\widehat{\bmy}}$ from \Cref{def_descrepattern}. In the following proofs, put $\rho=\abs{\bmp_{k}\bmu}=2k+\abs{\bmu}$.
\begin{lemma}\label{lem_patternchange}
    If $m_{k}(\bmx)=m_{k}(\bmy)$, then exactly one of the following two situations occurs.
    \begin{enumerate}[$(i)$]
        \item $\cD_{\rm es}\parenv{\widehat{\bmx}}=\cC_{\rm re}\parenv{\widehat{\bmy}}=\varnothing$. In this case, for some
        $i\in[m_k(\widehat{\bmx})]$, the edit region in $\widehat{\bmx}$ is covered by the
        $i$-th segment of $\widehat{\bmx}$ and the edit region in $\widehat{\bmy}$ is covered by
        the $i$-th segment of $\widehat{\bmy}$. Moreover,
        $\widehat{\bmx}^{(k)}_j=\widehat{\bmy}^{(k)}_j$ for every
        $j\ne i$.
        \item $\abs{\cD_{\rm es}\parenv{\widehat{\bmx}}}=\abs{\cC_{\rm re}\parenv{\widehat{\bmy}}}=1$.
    \end{enumerate}
\end{lemma}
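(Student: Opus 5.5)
The plan is to derive everything from \Cref{lem_segmentreplacement}~(ii) together with the minimality convention \eqref{eq_minimalends}.

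First I reduce to counting. By \Cref{lem_segmentreplacement}~(ii), $\Part_k(\widehat{\bmx})=\bmP\bmA\bmS$ and $\Part_k(\widehat{\bmy})=\bmP\bmA^\prime\bmS$, where $\bmA$ has $r+1$ segments and $\bmA^\prime$ has $r^\prime+1$ segments, with $r=\abs{\cD_{\rm es}(\widehat{\bmx})}$, $r^\prime=\abs{\cC_{\rm re}(\widehat{\bmy})}$ and $0\le r,r^\prime\le2$. Hence
\begin{equation*}
m_k(\widehat{\bmx})-m_k(\widehat{\bmy})=r-r^\prime .
\end{equation*}
Since $m_k(\widehat{\bmx})=m_k(\bmx)+2$ and $m_k(\widehat{\bmy})=m_k(\bmy)+2$, the hypothesis $m_k(\bmx)=m_k(\bmy)$ gives $r=r^\prime\in\{0,1,2\}$. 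The two situations in the statement are mutually exclusive, so it only remains to treat $r=0$ and to rule out $r=2$.

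\emph{The case $r=r^\prime=0$.} Here $\bmA$ and $\bmA^\prime$ each consist of one segment, and both occupy position $i=\abs{\bmP}+1$. \Cref{lem_segmentreplacement}~(ii) says the edit regions are covered by these segments. All other segments coincide because they lie in the common $\bmP$ and $\bmS$. This is exactly situation (i).

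\emph{Excluding $r=r^\prime=2$.} This is the main step. Suppose $\widehat{\bmx}$ has two destroyed starts $a<a^\prime$. As in the proof of \Cref{lem_segmentreplacement}~(ii), both lie in $[\rho-2k+2,\rho+\abs{\bms}]$, and $a^\prime\ge a+2k$. This forces $2k\le \abs{\bms}+2k-2$, so $\abs{\bms}\ge2$. It also gives $a\le a^\prime-2k\le\rho+\abs{\bms}-2k\le\rho-k$.

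Since the pattern at $a$ is destroyed, $a+2k-1\ge\rho+1$, so the coordinate $\rho+1$ (the symbol $s_1$) lies inside $[a,a+2k-1]$. Its offset is $\rho+1-a\ge k+1$, so it falls in the $1^k$ half of $\bmp_k$, and therefore $s_1=1$. The same argument applied to the two created starts in $\widehat{\bmy}$ gives $\abs{\bmt}\ge2$ and $t_1=1$. Both $\bms$ and $\bmt$ are then nonempty with $s_1=t_1$, which contradicts \eqref{eq_minimalends}. Hence $r=r^\prime=2$ cannot occur.

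The only remaining possibility is $r=r^\prime=1$, which is situation (ii). The one point to check carefully is the offset computation showing that the first edited symbol lies in the $1^k$ half of the earlier pattern. This uses only the facts that $\abs{\bms}\le k$ and that distinct pattern starts differ by at least $2k$.
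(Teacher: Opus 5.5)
Your proof is correct and follows the paper's argument. You reduce to $r=r^\prime$ via \Cref{lem_segmentreplacement}~(ii), read off situation (i) when $r=r^\prime=0$, and rule out $r=r^\prime=2$ by showing that the first edited symbol lies in the $1^k$ half of the earlier destroyed (resp.\ created) pattern, which contradicts \eqref{eq_minimalends}. The paper differs only cosmetically: it handles $\bms=\epsilon$ in a separate case and also derives $s_{\abs{\bms}}=t_{\abs{\bmt}}=0$, whereas your bound $\abs{\bms}\ge2$ absorbs the empty case and the clash $s_1=t_1=1$ alone suffices.
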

\begin{IEEEproof}
Set $r=\abs{\cD_{\rm es}\parenv{\widehat{\bmx}}}$ and $r^\prime=\abs{\cC_{\rm re}\parenv{\widehat{\bmy}}}$.
By \Cref{lem_segmentreplacement} (ii), we have $0\le r,r^\prime\le2$ and $m_k(\bmx)-m_k(\bmy)=m_{k}\parenv{\widehat{\bmx}}-m_{k}\parenv{\widehat{\bmy}}=r-r^\prime$. The assumption $m_k(\bmx)=m_k(\bmy)$ therefore gives $r=r^\prime$.

We exclude the case $r=r^\prime=2$. Suppose first that $r=2$. Write $\cD_{\rm es}\parenv{\widehat{\bmx}}=\{a_1,a_2\}$ with $a_1<a_2$. It holds that $a_2-a_1\ge 2k$ because both $a_1$ and $a_2$ are pattern starts in $\widehat{\bmx}$. By the definition of $\cD_{\rm es}\parenv{\widehat{\bmx}}$, both $[a_1,a_1+2k-1]$ and $[a_2,a_2+2k-1]$ intersect or cross the edit region in $\widehat{\bmx}$. If $\bms=\epsilon$, this implies
$a_j\le\rho<a_j+2k-1$ for $j=1,2$, contradicting the fact $a_2-a_1\ge2k$.

Hence $\abs{\bms}\ge1$. Put
$L=\rho+1$ and $R=\rho+\abs{\bms}$. Since $\abs{\bms}\le k$ and both $[a_1,a_1+2k-1]$ and $[a_2,a_2+2k-1]$ intersect $[L,R]$, we have $L\in[a_1+k,a_1+2k-1]$ and $R\in[a_2,a_2+k-1]$. Thus
$s_1=\widehat{\bmx}_L=1$ and $s_{\abs{\bms}}=\widehat{\bmx}_R=0$.

The same argument applied to $\widehat{\bmy}$, $\bmt$ and $\cC_{\rm re}\parenv{\widehat{\bmy}}$ shows that
$r^\prime=2$ implies $\abs{\bmt}\ge1$, $t_1=1$, and
$t_{\abs{\bmt}}=0$. This contradicts the assumption
$s_1\ne t_1$ and $s_{\abs{\bms}}\ne t_{\abs{\bmt}}$ in
\eqref{eq_minimalends}.  Hence $r=r^\prime\in\{0,1\}$.

If $r=r^\prime=1$, claim (ii) holds.  If $r=r^\prime=0$, it follows from \Cref{lem_segmentreplacement} (ii) that the edit region in $\widehat{\bmx}$ is covered by the single segment in $\bmA$, and the edit region in $\widehat{\bmy}$ is covered by the single segment in $\bmA^\prime$. \Cref{eq_segmentreplacement} implies that if the segment in $\bmA$ is the $i$-th segment of $\widehat{\bmx}$, then the segment in $\bmA^\prime$ is the $i$-th segment of $\widehat{\bmy}$. Finally, the assertion that $\widehat{\bmx}^{(k)}_j=\widehat{\bmy}^{(k)}_j$ for every $j\ne i$ also follows from \eqref{eq_segmentreplacement}.
\end{IEEEproof}

The next statement is the $q$-ary counterpart of
\cite[Lemma 10]{YutingLi2025IT}. We give a new proof because the argument there is stated for binary sequences.
\begin{lemma}\label{lem_patternshift}
Suppose $\abs{\bmx}=\abs{\bmy}$ and $m_{k}(\bmx)=m_{k}(\bmy)$. If situation (ii) in \Cref{lem_patternchange} holds, then $N_{k}\parenv{\widehat{\bmx}}\ne N_{k}\parenv{\widehat{\bmy}}$.
\end{lemma}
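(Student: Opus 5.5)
The plan is to compare the two extended sequences near the edit, using the fact that the edit is length-preserving. Since $\abs{\bmx}=\abs{\bmy}$, we have $\abs{\bms}=\abs{\bmt}=:\ell$, and $\ell\in[1,k]$ because the edit is nontrivial. Hence $d=0$, $\phi$ is the identity, and $\widehat{\bmx}$ and $\widehat{\bmy}$ agree everywhere outside the common edit region $[L,R]=[\rho+1,\rho+\ell]$. Moreover $\widehat{x}_L\ne\widehat{y}_L$ and $\widehat{x}_R\ne\widehat{y}_R$ by \eqref{eq_minimalends}. Write $\cD_{\rm es}\parenv{\widehat{\bmx}}=\{a\}$ and $\cC_{\rm re}\parenv{\widehat{\bmy}}=\{b\}$. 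Both windows $[a,a+2k-1]$ and $[b,b+2k-1]$ meet $[L,R]$.

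\textbf{Step 1: $a\ne b$.} The window has length $2k$ and $R-L\le k-1$, so a window meeting $[L,R]$ contains $L$ or $R$. If $a=b$, both sequences would equal $\bmp_k$ on that window, so they would agree at $L$ or at $R$. This contradicts \eqref{eq_minimalends}.

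\textbf{Step 2: position constraints.} Suppose $a<b$; the other case is symmetric, swapping the roles of $\widehat{\bmx}$ and $\widehat{\bmy}$. The aim is to pin down where $L$ and $R$ sit inside the two patterns.
\begin{itemize}
\item A coordinate $j$ outside $[L,R]$ lying in both windows would need $\widehat{x}_j=\widehat{y}_j$, with both values prescribed by $\bmp_k$ at offsets $j-a$ and $j-b$. This forces compatibility conditions.
\item Each window contains $L$ or $R$. At those endpoints the two sequences must take different values, which is automatic when one value is $0$ and the other $1$ but still has to be checked against the $q$-ary case.
\item $b$ is not a pattern start of $\widehat{\bmx}$, otherwise it would lie in $\cD_{\rm es}\parenv{\widehat{\bmx}}$. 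Likewise $a\notin\cT_k(\widehat{\bmy})$.
\end{itemize}
I expect these constraints to force the $1^k$-block $[a+k,a+2k-1]$ of the destroyed pattern to intersect $[L,R]$. I also expect them to force the $0^k$-block $[b,b+k-1]$ of the created pattern to intersect $[L,R]$, with $b\ge a+k$. In words, the edit overwrites part of a $1$-run of $\widehat{\bmx}$ and starts a new $0$-run in $\widehat{\bmy}$.

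\textbf{Step 3: count occurrences of $1^k$.} Any occurrence of $1^k$ that differs between the two sequences has its start in $[L-k+1,R]$. So it suffices to count occurrences of $1^k$ with starts in this window in each sequence and show the counts differ. This is exactly $\Delta_N\ne0$, using \eqref{eq_DeltaNequiv}.

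The configuration from Step 2 gives the following counts.
\begin{itemize}
\item In $\widehat{\bmx}$, the block $[a+k,a+2k-1]$ is an occurrence of $1^k$ meeting the edit region.
\item In $\widehat{\bmy}$, the coordinates $[b,b+k-1]$ are $0$, and $\widehat{y}_{b-1}$ and the neighbourhood are pinned by $\widehat{\bmx}$ outside $[L,R]$. This limits how many $1^k$ can start in the window.
\end{itemize}
I plan to show that the maximal $1$-run of $\widehat{\bmx}$ through $[a+k,a+2k-1]$ is strictly longer, measured inside the window, than any $1$-run of $\widehat{\bmy}$ there. Equivalently, the number of starts in the window is $\ge1$ in $\widehat{\bmx}$ and $0$ in $\widehat{\bmy}$. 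If that fails, I will instead show the counts have a fixed offset.

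\textbf{Main obstacle.} I expect the difficulty to be the case analysis in Steps 2--3. Symbols are $q$-ary, so $s$ and $t$ may contain symbols other than $0$ and $1$, and the edit region may straddle both blocks of a pattern. The route I would try first uses the minimality \eqref{eq_minimalends} to show that the first edited symbol lies in a $1$-block of one pattern and simultaneously in a $0$-block of the other. From there the $1^k$ counts on the window $[L-k+1,R]$ can be compared directly, and any residual subcases are handled by checking the endpoint values $\widehat{x}_L,\widehat{y}_L,\widehat{x}_R,\widehat{y}_R$.
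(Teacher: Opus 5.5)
Your setup, Step~1 ($a\ne b$), and the reduction in Step~3 to comparing occurrences of $1^k$ whose starts lie in $[L-k+1,R]$ are correct and match the paper. The gap is Step~2, which is the heart of the lemma. There you only state what you \emph{expect}, and two of those expectations are false.

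First, $b\ge a+k$ need not hold. For $k=2$, let $\widehat{\bmx}$ read $00111$ and $\widehat{\bmy}$ read $00011$ on $[a,a+4]$ (one substitution at $a+2$). Then $\cD_{\rm es}(\widehat{\bmx})=\{a\}$ and $\cC_{\rm re}(\widehat{\bmy})=\{a+1\}$, so $b=a+1$.

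Second, your proposed route fails: it is not true that the first edited symbol lies in a $1$-block of one pattern and simultaneously in a $0$-block of the other. For $k=2$, let $\widehat{\bmx}$ read $00111011$ and $\widehat{\bmy}$ read $00100011$ on $[a,a+7]$. Then $[L,R]=[a+3,a+4]$ and $b=a+4>L$, so $L$ is not in the created pattern at all.

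Knowing only that $[a+k,a+2k-1]$ and $[b,b+k-1]$ meet $[L,R]$ does not control the $1^k$-occurrences of $\widehat{\bmy}$ near the edit. So the run comparison in Step~3, and the fallback ``fixed offset'', remain unproved.

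The missing idea is to pin the two endpoints separately. For $a<b$, show $L\in[a+k,a+2k-1]$ and $R\in[b,b+k-1]$ by cases on $b-a$. Use your overlap observation in a sharper form: wherever the two patterns prescribe different symbols, the coordinate must lie in $[L,R]$; wherever they prescribe equal symbols, it cannot be $L$ or $R$.
\begin{enumerate}
\item If $b\ge a+2k$, the claim follows from $R-L<k$ and the fact that both windows meet $[L,R]$.
\item If $b<a+k$, the sequences differ on all of $[a+k,b+k-1]$ ($1$ versus $0$). They agree on $[b,a+k-1]$ (both $0$) and on $[b+k,a+2k-1]$ (both $1$). With $R-L<k$ this forces $L=a+k$ and $R=b+k-1$.
\item If $a+k\le b<a+2k$, they differ on $[b,a+2k-1]$, so $L\le b$ and $R\ge a+2k-1$, and $R-L<k$ finishes.
\end{enumerate}

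In every case $\widehat{x}_L=1\ne\widehat{y}_L$ and $\widehat{y}_R=0$. Since $R-L<k$, every length-$k$ window starting in $[L-k+1,R]$ contains $L$ or $R$. Hence $\widehat{\bmy}$ has no occurrence of $1^k$ starting there, whereas $\widehat{\bmx}$ has the one starting at $a+k$. Occurrences starting elsewhere coincide in the two sequences, so $N_k(\widehat{\bmx})>N_k(\widehat{\bmy})$.
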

\begin{IEEEproof}
Since $\abs{\bmx}=\abs{\bmy}$, we have $\abs{\bms}=\abs{\bmt}$. Since $\bmx\ne\bmy$, $1\le \abs{\bms}\le k$. Put $[L,R]=[\rho+1,\rho+\abs{\bms}]$. Equation~\eqref{eq_minimalends}
gives $\widehat{x}_L\ne\widehat{y}_L$ and
$\widehat{x}_R\ne\widehat{y}_R$.
Write $\cD_{\rm es}\parenv{\widehat{\bmx}}=\{a\}$ and $\cC_{\rm re}\parenv{\widehat{\bmy}}=\{b\}$. By definition, it holds that
$\widehat{\bmx}_{[a,a+2k-1]}=\widehat{\bmy}_{[b,b+2k-1]}=\bmp_k$, and both intervals $[a,a+2k-1]$ and $[b,b+2k-1]$ intersect the edit region $[L,R]$.

We first show that $a\ne b$. Suppose $a=b$. Then $\widehat{\bmx}_{[a,a+2k-1]}=\widehat{\bmy}_{[a,a+2k-1]}$. Since $R-L<2k$, the interval $[a,a+2k-1]$ cannot be contained in $[L,R]$. On the other hand, we have $[L,R]\cap[a,a+2k-1]\ne\varnothing$. Therefore, at least one of $L$ and $R$ belongs to $[a,a+2k-1]$, implying $\widehat{x}_L=\widehat{y}_L$ or $\widehat{x}_R=\widehat{y}_R$. This contradicts the fact $\widehat{x}_L\ne\widehat{y}_L$ and $\widehat{x}_R\ne\widehat{y}_R$. Thus $a\ne b$. 

Interchanging $\bmx$ and $\bmy$ interchanges $a$ and $b$. Thus, it is sufficient to assume $a<b$. This arrangement is illustrated in \Cref{fig_patternshift}. We next prove
\begin{equation}\label{eq_twoinclusion}
    L\in[a+k,a+2k-1],\; R\in[b,b+k-1].
\end{equation}
There are three possible relative positions of $a$ and $b$.

If $b\ge a+2k$, \eqref{eq_twoinclusion} follows immediately from the fact that $R-L<k$ and the interval $[L,R]$ intersects both $[a,a+2k-1]$ and $[b,b+2k-1]$.

If $b<a+k$, then $a+k\le b+k-1< a+2k-1$. This gives $\widehat{\bmx}_{[a+k,b+k-1]}=1^{b-a}$ and $\widehat{\bmy}_{[a+k,b+k-1]}=0^{b-a}$, which implies $L\le a+k$ and $R\ge b+k-1$. Since $R-L<k$, $\widehat{\bmx}_{[b,a+k-1]}=\widehat{\bmy}_{[b,a+k-1]}=0^{a+k-b}$, $\widehat{\bmx}_{[b+k,a+2k-1]}=\widehat{\bmy}_{[b+k,a+2k-1]}=1^{a+k-b}$, we conclude that $L=a+k$ and $R=b+k-1$.

Finally, suppose $a+k\le b<a+2k$. Then $a+2k-1\le b+k-1$. Therefore, we have $\widehat{\bmx}_{[b,a+2k-1]}=1^{a+2k-b}$ and $\widehat{\bmy}_{[b,a+2k-1]}=0^{a+2k-b}$, which implies $L\le b$ and $R\ge a+2k-1$. Combining these inequalities with $R-L<k$ yields $a+k\le L\le b$ and $a+2k-1\le R\le b+k-1$. This proves \eqref{eq_twoinclusion} in all three cases.

Combining $L\in[a+k,a+2k-1]$ with $\widehat{x}_L\ne\widehat{y}_L$, we get $\widehat{x}_L=1$ and $\widehat{y}_L\ne1$. Combining $R\in[b,b+k-1]$ with $\widehat{x}_R\ne\widehat{y}_R$, we get $\widehat{x}_R\ne0$ and $\widehat{y}_R=0$. In particular, $\widehat{\bmx}_{[a+k,a+2k-1]}=1^{k}$ but $\widehat{\bmy}_{[a+k,a+2k-1]}\ne 1^{k}$.

It remains to show that $\widehat{\bmy}$ has no occurrence of $1^k$ that is absent from $\widehat{\bmx}$. Consider a length-$k$ substring $\widehat{\bmy}_{[i,i+k-1]}$ of $\widehat{\bmy}$. If $L-k+1\le i\le R$, the interval $[i,i+k-1]$ contains $L$ or $R$. This implies that the substring $\widehat{\bmy}_{[i,i+k-1]}$ must contain a symbol that is not $1$ and thus, $\widehat{\bmy}_{[i,i+k-1]}\ne 1^{k}$. If $i\le L-k$ or $i>R$, we have $\widehat{\bmy}_{[i,i+k-1]}=\widehat{\bmx}_{[i,i+k-1]}$. Therefore, we can conclude that $N_{k}\parenv{\widehat{\bmx}}- N_{k}\parenv{\widehat{\bmy}}\ge1$. This completes the proof.
\end{IEEEproof}

Regarding the function $D_q(\cdot)$ defined in \eqref{eq_functionD}, we have the following lemma.
\begin{lemma}\label{lem_Ddetector}
Let $\bmz,\bmz^\prime\in\Sigma_q^r$ be two distinct sequences. Suppose that there is an interval $[L,R]\subseteq[r]$ with $R-L< k$ such that $z_i\ne z_i^\prime$ only if $i\in [L,R]$. Then $D_q(\bmz)\ne D_q(\bmz^\prime)$.
\end{lemma}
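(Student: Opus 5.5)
Since $D_q$ is the integer $\sum_{a=1}^k d_a(\cdot)q^{a-1}$ and every $d_a$ takes values in $\{0,\ldots,q-1\}$, the base-$q$ representation is unique. Thus $D_q(\bmz)=D_q(\bmz^\prime)$ holds if and only if $d_a(\bmz)=d_a(\bmz^\prime)$ for every $a\in[k]$. The plan is therefore to exhibit a single residue class $a$ with $d_a(\bmz)\ne d_a(\bmz^\prime)$.

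The key observation is that the window $[L,R]$ has at most $k$ coordinates, because $R-L<k$. Distinct coordinates in $[L,R]$ therefore lie in distinct residue classes modulo $k$. Since $\bmz\ne\bmz^\prime$ and all differences lie in $[L,R]$, we may pick some $j\in[L,R]$ with $z_j\ne z^\prime_j$ and set $a\equiv j\pmod k$, with $a\in[k]$. The only coordinate $i\in[r]$ with $i\equiv a\pmod k$ at which $\bmz$ and $\bmz^\prime$ can differ is $i=j$. Any other such $i$ in $[L,R]$ would differ from $j$ by a nonzero multiple of $k$, which is impossible in a window of length at most $k$; outside $[L,R]$ the two sequences agree.

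It follows that
\[
d_a(\bmz)-d_a(\bmz^\prime)\equiv z_j-z^\prime_j\pmod q .
\]
Because $z_j,z^\prime_j\in\Sigma_q$ are distinct, $0<\abs{z_j-z^\prime_j}<q$, so this difference is nonzero modulo $q$. Hence $d_a(\bmz)\ne d_a(\bmz^\prime)$ as elements of $\{0,\ldots,q-1\}$, and uniqueness of the base-$q$ expansion gives $D_q(\bmz)\ne D_q(\bmz^\prime)$.

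There is no real obstacle here. The only points needing care are that the sums in \eqref{eq_functiond} are taken modulo $q$ and then compared as representatives in $[0,q-1]$, and that the residue class of $j$ is represented in $[k]$, using $a=k$ for the class $0$.
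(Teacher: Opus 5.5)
Your proof is correct and follows the same route as the paper: at most one coordinate of $[L,R]$ lies in each residue class modulo $k$, so the residue class of a differing coordinate $j$ gives $d_a(\bmz)\ne d_a(\bmz^\prime)$, and uniqueness of the base-$q$ expansion in \eqref{eq_functionD} then gives $D_q(\bmz)\ne D_q(\bmz^\prime)$. You spell out the details the paper leaves implicit, namely that $0<\abs{z_j-z^\prime_j}<q$ and that the class $0$ is represented by $a=k$.
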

\begin{IEEEproof}
Because $R-L<k$, the interval $[L,R]$ contains at most one coordinate from each residue class modulo $k$. Then by \eqref{eq_functiond} and the assumption that $\bmz\ne\bmz^\prime$ and $\bmz_{[r]\setminus[L,R]}= \bmz^\prime_{[r]\setminus[L,R]}$, we have $\parenv{d_1(\bmz),\ldots,d_k(\bmz)}\ne \parenv{d_1(\bmz^\prime),\ldots,d_k(\bmz^\prime)}$. Now the conclusion follows from \eqref{eq_functionD}.
\end{IEEEproof}

With \Cref{lem_patternchange,lem_patternshift,lem_Ddetector} in hand, we are ready to prove that $\bmF_{k}(\bmx)$ and $\bmF_{k}(\bmy)$ satisfy condition (i) in \Cref{def_locatable}.
\begin{lemma}\label{lem_condition1}
    If $m_{k}(\bmx)=m_{k}(\bmy)$, then $\lwt{\bmF_{k}(\bmx)}\ne\lwt{\bmF_{k}(\bmy)}$.
\end{lemma}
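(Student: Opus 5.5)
Write $\lwt{\bmF_{k}(\bmx)}-\lwt{\bmF_{k}(\bmy)}=\Delta_D+B\Delta_N+C_k(\abs{\bmx}-\abs{\bmy})$, using \eqref{eq_fmu}, \eqref{eq_DeltaDN} and the fact that segment lengths sum to $\abs{\widehat{\bmx}}$ and $\abs{\widehat{\bmy}}$. We have $\abs{\Delta_D}\le D_{\max}$ by \eqref{eq_absDeltaD} and $\abs{\Delta_N}\le 2k-1$ by \eqref{eq_absDeltaN}. The plan is to treat the three terms as digits in a mixed-radix expansion, since $B=D_{\max}+1$ and $C_k=B(2k-1)+D_{\max}+1$. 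We then show that the first nonzero "digit" forces the total to be nonzero.

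\emph{Case $\abs{\bmx}\ne\abs{\bmy}$.} Here $\abs{C_k(\abs{\bmx}-\abs{\bmy})}\ge C_k$. Meanwhile $\abs{\Delta_D+B\Delta_N}\le D_{\max}+B(2k-1)=C_k-1$, so the sum is nonzero.

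\emph{Case $\abs{\bmx}=\abs{\bmy}$ and $\Delta_N\ne0$.} The total is $\Delta_D+B\Delta_N$, and $\abs{B\Delta_N}\ge B=3Q-2>3(Q-1)=D_{\max}\ge\abs{\Delta_D}$, so it is nonzero.

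\emph{Case $\abs{\bmx}=\abs{\bmy}$ and $\Delta_N=0$.} By \eqref{eq_DeltaNequiv}, $N_k(\widehat{\bmx})=N_k(\widehat{\bmy})$. Since $m_k(\bmx)=m_k(\bmy)$, \Cref{lem_patternchange} leaves only its situations (i) and (ii). \Cref{lem_patternshift} rules out situation (ii), because that situation would force $N_k(\widehat{\bmx})\ne N_k(\widehat{\bmy})$. Hence situation (i) holds: all segments agree except the $i$-th, which covers the edit region in both sequences. So $\Delta_D=D_q(\widehat{\bmx}^{(k)}_i)-D_q(\widehat{\bmy}^{(k)}_i)$, and it remains to prove this is nonzero.

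This last step is the main point to check. Since $\abs{\bms}=\abs{\bmt}$ and all other segments coincide, the two $i$-th segments have equal length and occupy the same coordinates of $\widehat{\bmx}$ and $\widehat{\bmy}$. They are distinct because the edit is nontrivial. They differ only at coordinates in $[\rho+1,\rho+\abs{\bms}]$, translated into local coordinates of the segment, and this interval has length at most $k$. \Cref{lem_Ddetector} then gives $\Delta_D\ne0$, so the total is nonzero, which finishes the proof. The point needing care is that the edit region lies entirely inside the $i$-th segment rather than straddling a boundary. This is guaranteed by situation (i) together with \Cref{def_covereditregion}, since $\abs{\bms}>0$ here.
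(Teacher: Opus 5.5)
Your proposal is correct and follows essentially the same route as the paper's proof. You use the same decomposition $\Delta_D+B\Delta_N+C_k(\abs{\bmx}-\abs{\bmy})$ and the same three cases: a length change dominated by $C_k$, a change in $N_k$ dominated by $B>D_{\max}$, and otherwise \Cref{lem_patternshift} forcing situation (i) of \Cref{lem_patternchange}, so that \Cref{lem_Ddetector} applies to the single changed segment. Your mixed-radix framing is just a repackaging of the inequalities $C_k>D_{\max}+B(2k-1)$ and $B>D_{\max}$ that the paper uses directly.
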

\begin{IEEEproof}
Let $E=\lwt{\bmF_{k}(\bmx)}-\lwt{\bmF_{k}(\bmy)}$. From \eqref{eq_fmu}, \eqref{eq_Fmu} and \eqref{eq_DeltaDN}, we obtain $E=\Delta_D+B\Delta_N+C_{k}\parenv{\abs{\widehat{\bmx}}-\abs{\widehat{\bmy}}}$. It is sufficient to show $\abs{E}>0$.

First suppose $\abs{\bmx}\ne\abs{\bmy}$. By \eqref{eq_extendededit}, we have $\abs{\abs{\widehat{\bmx}}-\abs{\widehat{\bmy}}}\ge1$. Then $\abs{E}\ge C_{k}\abs{\abs{\widehat{\bmx}}-\abs{\widehat{\bmy}}}-\parenv{\abs{\Delta_D}+B\abs{\Delta_N}}\ge C_{k}-\parenv{\abs{\Delta_D}+B\abs{\Delta_N}}\overset{(a)}{\ge} C_{k}-\parenv{D_{\max}+B(2k-1)}>0$, where (a) follows from \eqref{eq_absDeltaD} and \eqref{eq_absDeltaN}, and the last inequality follows from \eqref{eq_TandC}.

Now suppose $\abs{\bmx}=\abs{\bmy}$. If $N_{k}\parenv{\widehat{\bmx}}\ne N_{k}\parenv{\widehat{\bmy}}$, it holds that $\abs{E}\ge B\abs{\Delta_N}-\abs{\Delta_D}\overset{(b)}{=}B\abs{N_{k}\parenv{\widehat{\bmx}}- N_{k}\parenv{\widehat{\bmy}}}-\abs{\Delta_D}\ge B-D_{\max}>0$, where (b) follows from \eqref{eq_DeltaNequiv} and the last inequality follows from \eqref{eq_DandB}.

It remains to consider the case $\abs{\bmx}=\abs{\bmy}$ and $N_{k}\parenv{\widehat{\bmx}}= N_{k}\parenv{\widehat{\bmy}}$. \Cref{lem_patternshift} excludes situation~(ii) of \Cref{lem_patternchange}. Let $i\in\sparenv{m_{k}\parenv{\widehat{\bmx}}}$ be the index in \Cref{lem_patternchange}~(i). Since $\abs{\widehat{\bmx}}=\abs{\widehat{\bmy}}$ and $\widehat{\bmx}_j^{(k)}=\widehat{\bmy}_j^{(k)}$ for every $j\ne i$, we conclude $\abs{\widehat{\bmx}_i^{(k)}}=\abs{\widehat{\bmy}_i^{(k)}}$. Moreover,  by \Cref{lem_patternchange}~(i), $\widehat{\bmx}_i^{(k)}$ and $\widehat{\bmy}_i^{(k)}$ differ only in the edit regions, whose length is at most $k$. Thus, the hypotheses of \Cref{lem_Ddetector} hold and $\abs{E}=\abs{\Delta_D}=\abs{D_q\parenv{\widehat{\bmx}_i^{(k)}}-D_q\parenv{\widehat{\bmy}_i^{(k)}}}>0$. This completes the proof.
\end{IEEEproof}
\begin{figure}[!t]
\centering
\begin{tikzpicture}[x=0.55cm,y=0.7cm,>=Latex,font=\small]
  \node[left] at (0,0) {$\widehat{\bmx}$};
  \draw[thick] (0,0)--(19,0);
  \draw[fill=gray!25] (2,-0.28) rectangle (6,0.28);
  \draw[fill=gray!25] (6,-0.28) rectangle (10,0.28);
  \node at (4,0) {$0^k$}; \node at (8,0) {$1^k$};
  \draw[decorate,decoration={brace,amplitude=3pt}] (2,0.5)--(6,0.5);
  \draw[decorate,decoration={brace,amplitude=3pt}] (6,0.5)--(10,0.5);
  \node[above] at (4,0.64) {$k$ symbols};
  \node[above] at (8,0.64) {$k$ symbols};
  \node[below] at (2,-0.4) {$a$};
  \node[left] at (0,-2.8) {$\widehat{\bmy}$};
  \draw[thick] (0,-2.8)--(19,-2.8);
  \draw[fill=gray!25] (8,-3.08) rectangle (12,-2.52);
  \draw[fill=gray!25] (12,-3.08) rectangle (16,-2.52);
  \node at (10,-2.8) {$0^k$}; \node at (14,-2.8) {$1^k$};
  \draw[decorate,decoration={brace,mirror,amplitude=3pt}] (8,-3.5)--(12,-3.5);
  \draw[decorate,decoration={brace,mirror,amplitude=3pt}] (12,-3.5)--(16,-3.5);
  \node[below] at (10,-3.64) {$k$ symbols};
  \node[below] at (14,-3.64) {$k$ symbols};
  \node[below left] at (8,-3.12) {$b$};
  \draw[red,thick] (7.5,-0.38) rectangle (10.5,0.38);
  \draw[red,thick] (7.5,-3.18) rectangle (10.5,-2.42);
  \draw[red,dashed] (7.5,-0.4)--(7.5,-2.4);
  \draw[red,dashed] (10.5,-0.4)--(10.5,-2.4);
  \node[red,align=center] at (9,-1.4) {$[L,R]$\\$R-L< k$};
\end{tikzpicture}
\caption{The case $a<b$ in the proof of \Cref{lem_patternshift}. The edit region $[L,R]$ meets the $1^k$ half of the pattern starting at $a$ and the $0^k$ half of the pattern starting at $b$.}
\label{fig_patternshift}
\end{figure}

The next lemma states that $\bmF_{k}(\bmx)$ and $\bmF_{k}(\bmy)$ satisfy condition (ii) in \Cref{def_locatable}.
\begin{lemma}\label{lem_condition2}
    If $m_{k}(\bmx)=m_{k}(\bmy)$, then $\bmF_{k}(\bmx)$ and $\bmF_{k}(\bmy)$ differ in either one coordinate or two consecutive coordinates.
\end{lemma}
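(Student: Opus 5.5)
The plan is to split according to the two situations of \Cref{lem_patternchange}, which applies because $m_k(\bmx)=m_k(\bmy)$. In both, I will read off from the decomposition \eqref{eq_segmentreplacement} in \Cref{lem_segmentreplacement}~(ii) which entries of $\bmF_k(\bmx)$ and $\bmF_k(\bmy)$ can differ. Since $\Part_k(\widehat{\bmx})=\bmP\bmA\bmS$ and $\Part_k(\widehat{\bmy})=\bmP\bmA^\prime\bmS$, and $\bmA$ and $\bmA^\prime$ both contain $r+1=r^\prime+1$ segments, the two integer sequences have equal length. They agree on the positions coming from $\bmP$ and $\bmS$. Hence they can differ only in the $r+1$ consecutive coordinates corresponding to $\bmA$ and $\bmA^\prime$.

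In situation~(i), $r=r^\prime=0$. Then $\bmA$ and $\bmA^\prime$ are single segments occupying the same index $i$, and all other segments coincide. So $\bmF_k(\bmx)$ and $\bmF_k(\bmy)$ can differ only in coordinate $i$. They must actually differ there, and this needs nontriviality. By \Cref{lem_condition1}, $\lwt{\bmF_k(\bmx)}\ne\lwt{\bmF_k(\bmy)}$, so at least one coordinate differs, which must be coordinate $i$.

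In situation~(ii), $r=r^\prime=1$. Then $\bmA$ and $\bmA^\prime$ each consist of two segments at the same consecutive indices $i,i+1$, where $i$ is the index of the segment starting at $\alpha$. Therefore $\bmF_k(\bmx)$ and $\bmF_k(\bmy)$ can differ only in coordinates $i$ and $i+1$. Again \Cref{lem_condition1} guarantees that they differ somewhere. So they differ in exactly one of these coordinates or in both, that is, in one coordinate or in two consecutive coordinates.

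The only delicate point is confirming that the index sets of $\bmA$ and $\bmA^\prime$ coincide. This holds because $\bmP$ is common to both partitions, so $\bmA$ and $\bmA^\prime$ both begin at index $|\bmP|+1$, and both have length $r+1$. The rest is bookkeeping. I do not expect a real obstacle beyond invoking \Cref{lem_condition1} to exclude the case of no difference.
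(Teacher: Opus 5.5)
Your proposal is correct and matches the paper's proof essentially step for step. Like the paper, you split into the two situations of \Cref{lem_patternchange} and use the common decomposition of \Cref{lem_segmentreplacement}~(ii) to confine all differences to the one or two aligned coordinates coming from $\bmA$ and $\bmA^\prime$. You then invoke \Cref{lem_condition1}, which holds under the standing nontriviality assumption, to exclude the case where $\bmF_k(\bmx)=\bmF_k(\bmy)$.
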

\begin{IEEEproof}
First, it follows from \Cref{lem_condition1} that $\bmF_{k}(\bmx)\ne\bmF_{k}(\bmy)$.

If situation (i) in \Cref{lem_patternchange} holds, then by \eqref{eq_Fmu}, there is some $i\in\sparenv{m_{k}\parenv{\widehat{\bmx}}}$ such that $\bmF_{k}\parenv{\bmx}_j=\bmF_{k}\parenv{\bmy}_j$ for all $j\ne i$. Since $\bmF_{k}(\bmx)\ne\bmF_{k}(\bmy)$, we conclude that $\bmF_{k}\parenv{\bmx}_{i}\ne\bmF_{k}\parenv{\bmy}_{i}$.

Now suppose that situation (ii) in \Cref{lem_patternchange} holds. By \Cref{lem_segmentreplacement}~(ii), we can write $\bmA=\parenv{\widehat{\bmx}^{(k)}_i,\widehat{\bmx}^{(k)}_{i+1}}$ and $\bmA^\prime=\parenv{\widehat{\bmy}^{(k)}_i,\widehat{\bmy}^{(k)}_{i+1}}$ for some $1\le i<m_{k}\parenv{\widehat{\bmx}}$. \Cref{eq_Fmu,eq_segmentreplacement} give $\bmF_{k}\parenv{\bmx}_j=\bmF_{k}\parenv{\bmy}_j$ for every $j\notin\{i,i+1\}$. Since $\bmF_{k}(\bmx)\ne\bmF_{k}(\bmy)$, at least one of the two equalities $\bmF_{k}\parenv{\bmx}_{i}=\bmF_{k}\parenv{\bmy}_{i}$ and $\bmF_{k}\parenv{\bmx}_{i+1}=\bmF_{k}\parenv{\bmy}_{i+1}$ must fail. Now the proof is completed.
\end{IEEEproof}

Finally, we verify condition (iii) in \Cref{def_locatable}.
\begin{lemma}\label{lem_condition3}
  It holds that $\abs{\lwt{\bmF_{k}(\bmx)}-\lwt{\bmF_{k}(\bmy)}}<\bmF_{k}(\bmy)_i$ for all $1\le i\le m_{k}\parenv{\widehat{\bmy}}$.
\end{lemma}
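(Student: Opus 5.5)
This is a direct comparison of two bounds that are already in hand. By \Cref{lem_segmentreplacement}~(iv), the left-hand side satisfies
\[
\abs{\lwt{\bmF_{k}(\bmx)}-\lwt{\bmF_{k}(\bmy)}}\le C_k(k+1)-1 .
\]
By \Cref{lem_segmentreplacement}~(i), every entry of $\bmF_k(\bmy)$ satisfies $\bmF_{k}(\bmy)_i\ge C_k\abs{\widehat{\bmy}^{(k)}_i}\ge 2kC_k$. The claim will follow once we show $C_k(k+1)-1<2kC_k$, which is the inequality already noted right after the proof of claim (iv).

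The steps, in order, are as follows.

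First, recall why each segment of $\widehat{\bmy}$ has length at least $2k$. Every segment of $\widehat{\bmy}$ starts at a pattern start: the virtual left copy of $\bmp_k$ guarantees $\tau_1=1$, so the partition begins with a full occurrence of $\bmp_k$. Since $D_q\ge0$ and $N_k\ge0$, \eqref{eq_fmu} then gives $f_k(\widehat{\bmy}^{(k)}_i)\ge C_k\cdot 2k$.

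Second, invoke \eqref{eq_Fsumdiff}, which bounds the $L_1$-weight difference by $D_{\max}+B(2k-1)+C_kk=C_k(k+1)-1$ via \eqref{eq_TandC}. This step uses no assumption on density or on $m_k$, so it holds for any nontrivial edit.

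Third, compare the two bounds. We have $2kC_k-(C_k(k+1)-1)=(k-1)C_k+1\ge1>0$ for every $k\ge1$, so the strict inequality holds.

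The only point needing care is the case $k=1$, where the margin is exactly $1$. There the comparison reads $2C_1-1<2C_1$, which is still strict, so nothing breaks. I expect no real obstacle beyond confirming that claim (i) applies to all segments of $\widehat{\bmy}$, including the first and last. This holds because both virtual patterns are present, so the first segment starts at coordinate $1$ with a pattern and the last segment starts at the right virtual pattern.
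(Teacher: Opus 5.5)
Your proposal is correct and matches the paper's proof: you combine the lower bound $\bmF_k(\bmy)_i\ge C_k\abs{\widehat{\bmy}^{(k)}_i}\ge 2kC_k$ (from \eqref{eq_fmu}, since every segment of $\widehat{\bmy}$ begins with an occurrence of $\bmp_k$) with the upper bound $C_k(k+1)-1$ from \eqref{eq_Fsumdiff}, and then use $C_k(k+1)-1<2kC_k$. Your explicit margin $(k-1)C_k+1\ge 1$ is a clean way to check that last step, and it covers the case $k=1$.
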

\begin{IEEEproof}
    It follows from \eqref{eq_fmu} and \eqref{eq_Fmu} that $\bmF_{k}(\bmy)_i=f_{k}\parenv{\widehat{\bmy}^{(k)}_i}\ge C_{k}\abs{\widehat{\bmy}^{(k)}_i}\ge 2k C_{k}$ for every $i$. On the other hand, \eqref{eq_Fsumdiff} gives $\abs{\lwt{\bmF_k(\bmx)}-\lwt{\bmF_k(\bmy)}}\le C_k(k+1)-1<2kC_k\le \bmF_{k}(\bmy)_i$, completing the proof.
\end{IEEEproof}

Combining \Cref{lem_condition1,lem_condition2,lem_condition3} yields the following main result of this subsection.
\begin{corollary}\label{cor_locatableverify}
Let $\bmx\in\Sigma_q^n$ and $\bmy$ be obtained from $\bmx$ by one nontrivial $k$-substring edit. Then $\bmF_{k}(\bmy)$ is obtained from $\bmF_{k}(\bmx)$ by a strongly locatable $3$-substring edit.
\end{corollary}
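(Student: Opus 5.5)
The plan is to assemble \Cref{cor_locatableverify} directly from the pieces already proved. First, \Cref{lem_segmentreplacement}~(iii) tells us that $\bmF_k(\bmy)$ is obtained from $\bmF_k(\bmx)$ by one $3$-substring edit. The decomposition is the one in \eqref{eq_segmentreplacement}, namely $\bmF_k(\bmx)=\bmu^\prime\bms^\prime\bmv^\prime$ with $\bms^\prime=f_k(\bmA)$ and $\bmt^\prime=f_k(\bmA^\prime)$, each of length at most $3$. It remains to check that this edit is nontrivial and that conditions (i)--(iii) of \Cref{def_locatable} hold with $\bmz=\bmF_k(\bmx)$, $\bmw=\bmF_k(\bmy)$ and $K=3$.

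For condition (i), split according to whether $m_k(\bmx)=m_k(\bmy)$.
\begin{itemize}
\item If $m_k(\bmx)\ne m_k(\bmy)$, then $m_k(\widehat{\bmx})\ne m_k(\widehat{\bmy})$, so $\abs{\bmF_k(\bmx)}\ne\abs{\bmF_k(\bmy)}$.
\item If $m_k(\bmx)=m_k(\bmy)$, then \Cref{lem_condition1} gives $\lwt{\bmF_k(\bmx)}\ne\lwt{\bmF_k(\bmy)}$.
\end{itemize}
In both cases $\bmF_k(\bmx)\ne\bmF_k(\bmy)$, so the induced edit is nontrivial.

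For condition (ii), suppose $\abs{\bmF_k(\bmx)}=\abs{\bmF_k(\bmy)}$. This is equivalent to $m_k(\bmx)=m_k(\bmy)$, and \Cref{lem_condition2} then shows the two integer sequences differ in one coordinate or in two consecutive coordinates. For condition (iii), \Cref{lem_condition3} gives $\abs{\lwt{\bmF_k(\bmx)}-\lwt{\bmF_k(\bmy)}}<\bmF_k(\bmy)_i$ for every $i$, which is exactly the requirement.

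The only point needing care is that \Cref{def_locatable} is stated for a $K$-substring edit with $K\ge2$, and the representation must be a genuine nontrivial $3$-substring edit. This is settled by \Cref{lem_segmentreplacement}~(ii)--(iii) together with the nontriviality argument above. No step is a real obstacle, since the corollary is a direct combination of the three lemmas.
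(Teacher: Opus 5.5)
Your proposal is correct and matches the paper's proof. The paper likewise gets the $3$-substring edit from \Cref{lem_segmentreplacement}~(iii) and checks conditions (i)--(iii) of \Cref{def_locatable} via \Cref{lem_condition1,lem_condition2,lem_condition3}, with the case of unequal segment counts in condition (i) being immediate, as in your case split.
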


\subsection{The Localization Syndrome}\label{sec_finallocator}
Throughout this subsection, it is assumed that $\bmx\in\Sigma_q^n$ and $\bmy$ is obtained from $\bmx$ by a (not necessarily nontrivial) $k$-substring edit. Write $\bmx$ and $\bmy$ as in \eqref{eq_editbasic}. Then $\widehat{\bmx}$ and $\widehat{\bmy}$ have the forms in \eqref{eq_extendededit}. Define
\begin{equation}\label{eq_fivediff}
    \begin{array}{c}
         d:=\abs{\bmF_{k}\parenv{\bmy}}-\abs{\bmF_{k}\parenv{\bmx}}=m_{k}(\bmy)-m_{k}(\bmx),\\
         E:=\lwt{\bmF_{k}\parenv{\bmx}}-\lwt{\bmF_{k}\parenv{\bmy}},\\
    \Delta_{\rm VT}:=\vt\parenv{\bmF_{k}\parenv{\bmx}}-\vt\parenv{\bmF_{k}\parenv{\bmy}},\\
    \Delta_{G_r}:=G_r\parenv{\bmF_{k}(\bmx)}-G_r\parenv{\bmF_{k}(\bmy)}, \; r\in\{0,1\},
    \end{array}
\end{equation}
where the functions $G_0$ and $G_1$ are defined in \eqref{eq_functionG}.

With the results in \Cref{lem_segmentreplacement} and \Cref{cor_locatableverify}, to apply \Cref{thm_framework}, it remains to determine the values of $d$, $E$, $\Delta_{\rm VT}$, $\Delta_{G_0}$ and $\Delta_{G_1}$. The information needed to determine these values is specified in \Cref{lem_dEDeltaVT,lem_Gvalue} below. It is important to notice that the values of $m_{k}(\bmy)$, $\lwt{\bmF_{k}\parenv{\bmy}}$, $\vt\parenv{\bmF_{k}\parenv{\bmy}}$, $G_0\parenv{\bmF_{k}(\bmy)}$, $G_1\parenv{\bmF_{k}(\bmy)}$ can be immediately computed once $\bmy$ is given.

Recall $Q$, $B$, $C_{k}$ from \eqref{eq_DandB}-\eqref{eq_TandC}, and $A_{n,k}$ from \eqref{eq_Amu}. Set
\begin{equation}\label{eq_MRmu}
    \begin{aligned}
        R_{k}=2C_{k}(k+1)-1,\; M_{n,k}\triangleq 18(S_{n,k}+A_{n,k}),
    \end{aligned}
\end{equation}
where $S_{n,k}=(Q-1)\parenv{\floorenv{n/(2k)}+2}
+B(n+2)+C_{k}(n+4k)$.

Recall that $\cD_{q,k}(n)$ is the set of all $\parenv{\bmp_{k},\ell_{n,k}}$-dense sequences in $\Sigma_q^n$.
\begin{lemma}\label{lem_dEDeltaVT}
    Suppose $\bmx\in\cD_{q,k}(n)$.  Given $\bmy$ and the three residues
   $$
       m_k(\bmx)\bmod{5},\quad
       \lwt{\bmF_k(\bmx)}\bmod{R_k},\quad
       \vt(\bmF_k(\bmx))\bmod{M_{n,k}},
   $$
one can determine $d$, $E$, and $\Delta_{\rm VT}$ uniquely.
\end{lemma}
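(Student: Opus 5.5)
The approach is the standard one for modular syndromes: each of the three unknown differences is pinned to an interval of length smaller than its modulus, so the stored residue together with the value computed from $\bmy$ determines it exactly. Since $\bmy$ is given, the decoder computes $m_k(\bmy)$, $\lwt{\bmF_k(\bmy)}$ and $\vt(\bmF_k(\bmy))$ directly. Then $d\equiv m_k(\bmy)-m_k(\bmx)\pmod 5$, $E\equiv \lwt{\bmF_k(\bmx)}-\lwt{\bmF_k(\bmy)}\pmod{R_k}$ and $\Delta_{\rm VT}\equiv \vt(\bmF_k(\bmx))-\vt(\bmF_k(\bmy))\pmod{M_{n,k}}$ are known modulo $5$, $R_k$ and $M_{n,k}$, respectively. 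It therefore suffices to show that $d$, $E$ and $\Delta_{\rm VT}$ each lie in an interval of integers of length smaller than the corresponding modulus.

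For $d$, \Cref{lem_segmentreplacement}~(iii) gives $|d|\le 2$, so $d\in[-2,2]$, which has five values. For $E$, \eqref{eq_Fsumdiff} gives $|E|\le C_k(k+1)-1$. The interval $[-(C_k(k+1)-1),\,C_k(k+1)-1]$ contains exactly $2C_k(k+1)-1=R_k$ integers. Both bounds hold for any (possibly trivial) edit, since a trivial edit gives $d=E=\Delta_{\rm VT}=0$.

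The main work, and the step I expect to be the obstacle, is bounding $|\Delta_{\rm VT}|$ below $M_{n,k}/2$. If the edit is nontrivial, \Cref{cor_locatableverify} shows that $\bmF_k(\bmx)\to_3\bmF_k(\bmy)$ is strongly locatable, so condition (iii) of \Cref{def_locatable} holds. \Cref{lem_segmentreplacement}~(vi) gives $A_{n,k}$-boundedness of both sequences. Hence \Cref{lem_absVT} applies with $K=3$ and yields $|\Delta_{\rm VT}|<9\bigl(\lwt{\bmF_k(\bmx)}+A_{n,k}\bigr)$.

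It remains to show $\lwt{\bmF_k(\bmx)}\le S_{n,k}$. Summing \eqref{eq_fmu} over the segments of $\widehat{\bmx}$, the length terms contribute $C_k|\widehat{\bmx}|=C_k(n+4k)$. The $N_k$ terms contribute $B\,N_k(\widehat{\bmx})$ by \eqref{eq_Nmuwhole}, and $N_k(\widehat{\bmx})\le |\widehat{\bmx}|-k+1\le n+2$ (if $k\ge 1$, since $n+4k-k+1\le n+2$ fails in general, I would instead use $n+3k+1$ and check it is absorbed; more plausibly one bounds occurrences of $1^k$ coarsely, and I would verify the authors' $n+2$). The $D_q$ terms contribute at most $(Q-1)$ times the number of segments, which is $m_k(\bmx)+2\le \lfloor n/(2k)\rfloor+2$ because occurrences of $\bmp_k$ do not overlap. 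This gives $\lwt{\bmF_k(\bmx)}\le S_{n,k}$, so $|\Delta_{\rm VT}|<9(S_{n,k}+A_{n,k})=M_{n,k}/2$. Consequently $\Delta_{\rm VT}$ is the unique representative of its residue class in $(-M_{n,k}/2,M_{n,k}/2)$, which completes the argument.
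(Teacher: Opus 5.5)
Your argument follows the paper's proof step for step: the same three interval-versus-modulus checks, the same citations (\Cref{lem_segmentreplacement}~(iii) and~(vi), \eqref{eq_Fsumdiff}, \Cref{lem_absVT} with $K=3$), and the same reduction to $\lwt{\bmF_k(\bmx)}\le S_{n,k}$. Your segment count $m_k(\bmx)+2\le\floorenv{n/(2k)}+2$ is also fine. The one step that is not actually done is the one you flagged, $N_k(\widehat{\bmx})\le n+2$, and your suggested fallback does not work as stated.

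Using $N_k(\widehat{\bmx})\le\abs{\widehat{\bmx}}-k+1=n+3k+1$ only gives $\lwt{\bmF_k(\bmx)}\le S_{n,k}+B(3k-1)$. \Cref{lem_absVT} as stated then yields only $\abs{\Delta_{\rm VT}}<M_{n,k}/2+9B(3k-1)$. Since $M_{n,k}=18(S_{n,k}+A_{n,k})$ is built from the exact term $B(n+2)$, this bound is too weak to pin down $\Delta_{\rm VT}$ from its residue modulo $M_{n,k}$. You could rescue it by reopening the proof of \Cref{lem_absVT}, whose penultimate line gives the sharper $\abs{\Delta_{\rm VT}}<9A_{n,k}+3\lwt{\bmF_k(\bmx)}$. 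But saying ``check it is absorbed'' is not an argument.

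The missing count takes one line. In $\widehat{\bmx}=0^k1^k\bmx0^k1^k$, no occurrence of $1^k$ can contain a symbol of either $0^k$ block. So every occurrence is either the final block $1^k$ (exactly one occurrence) or lies inside $1^k\bmx$. That substring has length $n+k$, hence only $n+1$ windows of length $k$. Thus $N_k(\widehat{\bmx})\le n+2$, which is the paper's $N_k(\widehat{\bmx})=1+N_k(\bmp_k\bmx)\le n+2$. With this, your bound $\lwt{\bmF_k(\bmx)}\le S_{n,k}$ and the rest of your proof go through.
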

\begin{IEEEproof}
By \Cref{lem_segmentreplacement}~(iii), $\abs{d}\le 2$.  Now the congruence $m_{k}(\bmx)\pmod{5}$ determines $d$ uniquely because $m_{k}(\bmy)$ is known. \Cref{eq_Fsumdiff} gives
$$
\abs{E}\le C_k(k+1)-1=\frac{R_k-1}{2}.
$$
Thus knowing $\lwt{\bmF_{k}\parenv{\bmx}}\pmod{R_{k}}$ is sufficient to obtain the value of $E$.

By \Cref{lem_segmentreplacement}~(iii), the sequence $\bmF_{k}(\bmy)$ is obtained from $\bmF_{k}(\bmx)$ by a $3$-substring edit. Moreover, \Cref{lem_segmentreplacement}~(vi) says that both $\bmF_{k}(\bmx)$ and $\bmF_{k}(\bmy)$ are $A_{n,k}$ bounded. Then it follows from \Cref{lem_absVT} that $\abs{\Delta_{\rm VT}}<9\parenv{\lwt{\bmF_{k}(\bmx)}+A_{n,k}}$. This implies that knowing $\vt\parenv{\bmF_{k}\parenv{\bmx}}\pmod{18\parenv{\lwt{\bmF_{k}(\bmx)}+A_{n,k}}}$ is sufficient to obtain the value of $\Delta_{\rm VT}$.

It remains to prove $\lwt{\bmF_{k}(\bmx)}\le S_{n,k}$. Since every segment has length at least $2k$ and $\abs{\widehat{\bmx}}=\abs{\bmx}+4k=n+4k$, we have $m_{k}\parenv{\widehat{\bmx}}\le\floorenv{(n+4k)/(2k)}$. Moreover, it holds that $N_{k}\parenv{\widehat{\bmx}}=1+N_{k}\parenv{\bmp_{k}\bmx}\le n+2$. Then it follows from \eqref{eq_fmu}, \eqref{eq_Fmu} and \eqref{eq_Nmuwhole} that
\begin{align*}
    \lwt{\bmF_{k}(\bmx)}&=\sum_{i=1}^{m_{k}\parenv{\widehat{\bmx}}}D_q\parenv{\widehat{\bmx}^{(k)}_i}+B\sum_{i=1}^{m_{k}\parenv{\widehat{\bmx}}}N_{k}\parenv{\widehat{\bmx}^{(k)}_i}+C_{k}\abs{\widehat{\bmx}}\\
    &\le m_{k}\parenv{\widehat{\bmx}}(Q-1)+B\sum_{i=1}^{m_{k}\parenv{\widehat{\bmx}}}N_{k}\parenv{\widehat{\bmx}^{(k)}_i}+C_{k}(n+4k)\\
    &=m_{k}\parenv{\widehat{\bmx}}(Q-1)+BN_{k}\parenv{\widehat{\bmx}}+C_{k}(n+4k)\\
    &\le\parenv{\floorenv{\frac{n}{2k}}+2}(Q-1)+B(n+2)+C_{k}(n+4k)\\
    &=S_{n,k}.
\end{align*}
Now the proof is completed.
\end{IEEEproof}

\begin{lemma}\label{lem_Gvalue}
If $m_k(\bmx)=m_k(\bmy)$, then $\abs{\bmF_k(\bmx)_i-\bmF_k(\bmy)_i}\le \Lambda_{q,k}$ for every
$i\in[m_k(\widehat{\bmx})]$, where
\begin{equation}\label{eq_Lambdaqk}
 \Lambda_{q,k}=(Q-1)+B(4k-2)+C_k(3k-2).
\end{equation}
Consequently, for $r\in\{0,1\}$, the sequence $\bmy$ and the residue $G_r(\bmF_k(\bmx))\bmod(2\Lambda_{q,k}+1)$ determine
$\Delta_{G_r}$ uniquely.
\end{lemma}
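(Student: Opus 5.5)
The plan is to use the dichotomy of \Cref{lem_patternchange}. In each of its two situations I bound separately the three contributions to $f_k$ in \eqref{eq_fmu}, namely the $D_q$ term, the $B N_k$ term and the $C_k\abs{\cdot}$ term, on each segment index where $\widehat{\bmx}$ and $\widehat{\bmy}$ may differ. Since $m_k(\bmx)=m_k(\bmy)$, \Cref{lem_patternchange} applies. On every affected index the $D_q$ term contributes at most $Q-1$, because $D_q$ takes values in $\{0,\ldots,Q-1\}$. So the work is in bounding the change in $N_k$ by $4k-2$ and the change in length by $3k-2$.

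\textbf{Situation (i).} Only the $i$-th segments differ, and they contain the respective edit regions. Their lengths differ by $\abs{\abs{\bmt}-\abs{\bms}}\le k\le 3k-2$. For $N_k$, occurrences of $1^k$ inside the segment that avoid the edit region correspond bijectively via $\phi$. Hence at most $\abs{\bms}+k-1\le 2k-1$ occurrences are destroyed and at most $2k-1$ are created, as in the proof of \eqref{eq_absDeltaN}. This gives a change of at most $4k-2$. Summing gives at most $\Lambda_{q,k}$.

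\textbf{Situation (ii).} Write $\cD_{\rm es}(\widehat{\bmx})=\{a\}$ and $\cC_{\rm re}(\widehat{\bmy})=\{b\}$. By \Cref{lem_segmentreplacement}~(ii), the affected segments are the $i$-th and $(i+1)$-th. They are $\widehat{\bmx}_{[\alpha,a-1]},\widehat{\bmx}_{[a,\beta-1]}$ and $\widehat{\bmy}_{[\alpha,b-1]},\widehat{\bmy}_{[b,\beta+d-1]}$, where $d=\abs{\bmt}-\abs{\bms}$ here. From the proof of \Cref{lem_segmentreplacement}~(ii), $a\in[\rho-2k+2,\rho+\abs{\bms}]$ and $b\in[\rho-2k+2,\rho+\abs{\bmt}]$. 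The length differences are $\abs{a-b}$ and $\abs{b-a-d}$. I will check directly from these ranges that both are at most $\max\{\abs{\bms},\abs{\bmt}\}+2k-2\le 3k-2$. For example, $b-a-d\le \abs{\bmt}+2k-2-\abs{\bmt}+\abs{\bms}$ and $b-a-d\ge -(2k-2)-\abs{\bms}-d$. This step, getting the sharp $3k-2$ for the second segment, is where I expect the only real care to be needed.

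For $N_k$ in this situation, I use that every segment begins with $0^k$, so no occurrence of $1^k$ straddles a segment boundary. Occurrences avoiding the edit region correspond under $\phi$ and stay in the corresponding segment. As before, each segment therefore loses at most $2k-1$ and gains at most $2k-1$ occurrences. Adding the three contributions gives $\abs{\bmF_k(\bmx)_j-\bmF_k(\bmy)_j}\le\Lambda_{q,k}$ for every $j$; for unaffected indices the difference is $0$.

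\textbf{Consequence for $\Delta_{G_r}$.} Since $\abs{\bmF_k(\bmx)}=\abs{\bmF_k(\bmy)}$, $\Delta_{G_r}$ is the sum of $\bmF_k(\bmx)_j-\bmF_k(\bmy)_j$ over $j\equiv r\pmod 3$. By \Cref{lem_condition2} (or directly by the analysis above), only one index or two consecutive indices contribute. Consecutive indices lie in different residue classes modulo $3$, so at most one term survives and $\abs{\Delta_{G_r}}\le\Lambda_{q,k}$. Finally, $G_r(\bmF_k(\bmy))$ is computable from $\bmy$. Knowing $G_r(\bmF_k(\bmx))\bmod(2\Lambda_{q,k}+1)$ therefore fixes $\Delta_{G_r}$ modulo $2\Lambda_{q,k}+1$, and the interval $[-\Lambda_{q,k},\Lambda_{q,k}]$ contains exactly one representative of that class.
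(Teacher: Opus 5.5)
Your proposal is correct and follows the paper's proof in all essentials: the dichotomy of \Cref{lem_patternchange}, termwise bounds on the $D_q$, $N_k$ and length contributions derived from $a\in[\rho-2k+2,\rho+\abs{\bms}]$ and $b\in[\rho-2k+2,\rho+\abs{\bmt}]$, and the residue argument. Two details differ from the paper. Your remark that two consecutive indices lie in different classes modulo $3$ makes explicit a step the paper leaves to ``the preceding argument.'' In situation (ii) you bound the $N_k$ change by counting destroyed plus created occurrences, whereas the paper strips the common prefix and suffix via $\lambda=\min\{a,b,\rho+1\}$; your variant is equally valid.

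Two small points to add. First, dispose of the case $\bmx=\bmy$ separately, because \Cref{lem_patternchange} presupposes a nontrivial edit in the form \eqref{eq_minimalends}. Second, when $a\ne b$, the claim that an untouched $1^k$ stays in the corresponding segment needs the ranges of $a,b$ together with the leading $0^k$ of each segment, not just the absence of straddling. For example, an occurrence ending by $\rho$ that lay in the segment starting at $b$ would end at or after $b+2k-1\ge\rho+1$, which is impossible.
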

\begin{IEEEproof}
If $\bmx=\bmy$, every claimed difference is zero. Assume henceforth that the edit is nontrivial. Recall the definition of $\cD_{\rm es}\parenv{\widehat{\bmx}}$ and $\cC_{\rm re}\parenv{\widehat{\bmy}}$ from \Cref{def_descrepattern}. Denote $r=\abs{\cD_{\rm es}\parenv{\widehat{\bmx}}}$ and $r^\prime=\abs{\cC_{\rm re}\parenv{\widehat{\bmy}}}$. The equality $m_k(\bmx)=m_k(\bmy)$ and
\Cref{lem_patternchange} leave two cases: $r=r^\prime=0$ or $r=r^\prime=1$.

Suppose $r=r^\prime=0$. In this case, there is some
$j\in[m_k(\widehat{\bmx})]$, such that
$\widehat{\bmx}^{(k)}_i=\widehat{\bmy}^{(k)}_i$ if and only if $i\ne j$, implying $\abs{\bmF_k(\bmx)_i-\bmF_k(\bmy)_i}=0$ for every $i\ne j$. Let $\widehat{\bmx}_{[i_1,i_2]}$ be an occurrence of $1^{k}$ in $\widehat{\bmx}^{(k)}_j$. This $1^{k}$ can be destroyed by the $k$-substring edit only if $[i_1,i_2]$ intersects (in the case $\abs{\bms}>0$) or crosses (in the case $\bms=\epsilon$) the edit region in $\widehat{\bmx}$. Therefore, there are at most $\abs{\bms}+k-1$ occurrences of $1^{k}$ in $\widehat{\bmx}$ that can be destroyed. Similarly, there are at most $\abs{\bmt}+k-1$ occurrences of $1^{k}$ in $\widehat{\bmy}$ that can be newly created. Then it follows that $\abs{N_{k}\parenv{\widehat{\bmx}^{(k)}_j}-N_{k}\parenv{\widehat{\bmy}^{(k)}_j}}\le\max\mathset{\abs{\bms},\abs{\bmt}}+k-1\le 2k-1$. Now combining \eqref{eq_fmu} and \eqref{eq_Fmu} gives
\begin{align*}
    \abs{\bmF_k(\bmx)_j-\bmF_k(\bmy)_j}&\le\abs{D_q\parenv{\widehat{\bmx}^{(k)}_j}-D_q\parenv{\widehat{\bmy}^{(k)}_j}}+B\abs{N_{k}\parenv{\widehat{\bmx}^{(k)}_j}-N_{k}\parenv{\widehat{\bmy}^{(k)}_j}}+C_{k}\abs{\abs{\widehat{\bmx}^{(k)}_j}-\abs{\widehat{\bmy}^{(k)}_j}}\\
    &\le Q-1+B(2k-1)+C_{k}k<\Lambda_{q,k}.
\end{align*}

Now suppose $r=r^\prime=1$. Write $\cD_{\rm es}(\widehat{\bmx})=\{a\}$ and
$\cC_{\rm re}(\widehat{\bmy})=\{b\}$. Let $\alpha$ and $\beta$ be defined as in the proof of \Cref{lem_segmentreplacement}~(ii). Set $d=\abs{\bmt}-\abs{\bms}$. The proof of \Cref{lem_segmentreplacement}~(ii) shows that $\bms$ is contained in the substring $\widehat{\bmx}_{\sparenv{\alpha,a-1}}\widehat{\bmx}_{\sparenv{a,\beta-1}}$, and $\bmt$ is contained in the substring $\widehat{\bmy}_{\sparenv{\alpha,b-1}}\widehat{\bmy}_{\sparenv{b,\beta+d-1}}$. Suppose that $\widehat{\bmx}_{\sparenv{\alpha,a-1}}$ is the $j$-th segment of $\widehat{\bmx}$.  \Cref{lem_segmentreplacement}~(ii) claims that $\widehat{\bmy}_{\sparenv{\alpha,b-1}}$ is the $j$-th segment of $\widehat{\bmy}$. Then it holds that $\widehat{\bmx}^{(k)}_i=\widehat{\bmy}^{(k)}_i$ whenever $i\notin\{j,j+1\}$, yielding $\abs{\bmF_k(\bmx)_i-\bmF_k(\bmy)_i}=0$ for every $i\notin\{j,j+1\}$.

To bound $\abs{\bmF_k(\bmx)_j-\bmF_k(\bmy)_j}$ and $\abs{\bmF_k(\bmx)_{j+1}-\bmF_k(\bmy)_{j+1}}$, we need to bound the four quantities: $\abs{\abs{\widehat{\bmx}_{\sparenv{\alpha,a-1}}}-\abs{\widehat{\bmy}_{\sparenv{\alpha,b-1}}}}$, $\abs{\abs{\widehat{\bmx}_{\sparenv{a,\beta-1}}}-\abs{\widehat{\bmy}_{\sparenv{b,\beta+d-1}}}}$, $\abs{N_{k}\parenv{\widehat{\bmx}_{\sparenv{\alpha,a-1}}}-N_{k}\parenv{\widehat{\bmy}_{\sparenv{\alpha,b-1}}}}$ and $\abs{N_{k}\parenv{\widehat{\bmx}_{\sparenv{a,\beta-1}}}-N_{k}\parenv{\widehat{\bmy}_{\sparenv{b,\beta+d-1}}}}$. 

We first bound $\abs{\abs{\widehat{\bmx}_{\sparenv{\alpha,a-1}}}-\abs{\widehat{\bmy}_{\sparenv{\alpha,b-1}}}}$ and $\abs{\abs{\widehat{\bmx}_{\sparenv{a,\beta-1}}}-\abs{\widehat{\bmy}_{\sparenv{b,\beta+d-1}}}}$.
To that end, let $\rho=\abs{\bmp_{k}\bmu}$ (see \eqref{eq_extendededit}). The definition of $\cD_{\rm es}(\widehat{\bmx})$ and $\cC_{\rm re}(\widehat{\bmy})$ gives $\rho-2k+2\le a\le\rho+\abs{\bms}$ and $\rho-2k+2\le b\le\rho+\abs{\bmt}$, respectively. In other words,
\begin{equation}\label{eq_abinterval}
 a-\abs{\bms}, b-\abs{\bmt}\in\sparenv{\rho-3k+2,\rho}.
\end{equation}
It is easy to see that
\begin{equation}\label{eq_seglengthdiff1}
    \abs{\abs{\widehat{\bmx}_{\sparenv{\alpha,a-1}}}-\abs{\widehat{\bmy}_{\sparenv{\alpha,b-1}}}}=\abs{a-b}\le 3k-2.
\end{equation}
For the second quantity, notice that
\begin{equation}\label{eq_seglengthdiff2}
\begin{aligned}
    \abs{\abs{\widehat{\bmx}_{\sparenv{a,\beta-1}}}-\abs{\widehat{\bmy}_{\sparenv{b,\beta+d-1}}}}=\abs{(\beta+d-b)-(\beta-a)}=\abs{a+d-b}=\abs{a-\abs{\bms}-(b-\abs{\bmt})}\le 3k-2,
\end{aligned}
\end{equation}
where the last inequality follows from \eqref{eq_abinterval}.

Next, we bound $\abs{N_{k}\parenv{\widehat{\bmx}_{\sparenv{\alpha,a-1}}}-N_{k}\parenv{\widehat{\bmy}_{\sparenv{\alpha,b-1}}}}$ and $\abs{N_{k}\parenv{\widehat{\bmx}_{\sparenv{a,\beta-1}}}-N_{k}\parenv{\widehat{\bmy}_{\sparenv{b,\beta+d-1}}}}$. Set $\lambda=\min\{a,b,\rho+1\}$. By \eqref{eq_extendededit}, the two segments $\widehat{\bmx}_{\sparenv{\alpha,a-1}}$ and $\widehat{\bmy}_{\sparenv{\alpha,b-1}}$
have the common prefix $\widehat{\bmx}_{[\alpha,\lambda-1]}=\widehat{\bmy}_{[\alpha,\lambda-1]}$. After this prefix is removed, their respective lengths are $a-\lambda$ and $b-\lambda$. By the choice of $\lambda$, we have
\begin{align*}
 0\le a-\lambda
 &\le\max\{|a-b|,a-(\rho+1),0\}\le 3k-2,\\
 0\le b-\lambda
 &\le\max\{|a-b|,b-(\rho+1),0\}\le 3k-2,
\end{align*}
where the last inequalities follow from \eqref{eq_abinterval} and the fact $\abs{\bms},\abs{\bmt}\le k$. An occurrence of $1^{k}$ in $\widehat{\bmx}_{[\alpha,a-1]}$ can be destroyed by the $k$-substring edit only if it overlaps with $\widehat{\bmx}_{[\lambda,a-1]}$, and an occurrence of $1^{k}$ in $\widehat{\bmy}_{[\alpha,b-1]}$ is created by the $k$-substring edit only if it overlaps with $\widehat{\bmy}_{[\lambda,b-1]}$. Thus, it holds that
\begin{equation}\label{eq_Nmudiff1}
    \abs{N_{k}\parenv{\widehat{\bmx}_{\sparenv{\alpha,a-1}}}-N_{k}\parenv{\widehat{\bmy}_{\sparenv{\alpha,b-1}}}}\le\max\mathset{a-\lambda,b-\lambda}+k-1<4k-2.
\end{equation}

The two segments $\widehat{\bmx}_{[a,\beta-1]}$ and $\widehat{\bmy}_{[b,\beta+d-1]}$ have the common suffix $\widehat{\bmx}_{[\rho+\abs{\bms}+1,\beta-1]}=\widehat{\bmy}_{[\rho+\abs{\bmt}+1,\beta+d-1]}$. After this suffix is removed, their respective lengths are $\rho+\abs{\bms}-a+1$ and $\rho+\abs{\bmt}-b+1$. It follows from \eqref{eq_abinterval} that $\rho+\abs{\bms}-a+1,\;\rho+\abs{\bmt}-b+1\le 3k-1$. Then by the same argument in the previous paragraph, we have
\begin{equation}\label{eq_Nmudiff2}
    \abs{N_{k}\parenv{\widehat{\bmx}_{\sparenv{a,\beta-1}}}-N_{k}\parenv{\widehat{\bmy}_{\sparenv{b,\beta+d-1}}}}\le 4k-2.
\end{equation}

Now combining \eqref{eq_fmu}, \eqref{eq_Fmu}, \eqref{eq_seglengthdiff1} and \eqref{eq_Nmudiff1} gives
\begin{align*}
    \abs{\bmF_k(\bmx)_j-\bmF_k(\bmy)_j}&\le\abs{D_q\parenv{\widehat{\bmx}^{(k)}_j}-D_q\parenv{\widehat{\bmy}^{(k)}_j}}+B\abs{N_{k}\parenv{\widehat{\bmx}^{(k)}_j}-N_{k}\parenv{\widehat{\bmy}^{(k)}_j}}+C_{k}\abs{\abs{\widehat{\bmx}^{(k)}_j}-\abs{\widehat{\bmy}^{(k)}_j}}\\
    &< Q-1+B(4k-2)+C_{k}(3k-2)=\Lambda_{q,k}.
\end{align*}
Similarly, the bound $\abs{\bmF_k(\bmx)_{j+1}-\bmF_k(\bmy)_{j+1}}\le\Lambda_{q,k}$ follows from \eqref{eq_fmu}, \eqref{eq_Fmu}, \eqref{eq_seglengthdiff2} and \eqref{eq_Nmudiff2}.

The preceding argument shows that $\abs{G_r\parenv{\bmF_{k}(\bmx)}-G_r\parenv{\bmF_{k}(\bmy)}}\le\Lambda_{q,k}$ for $r\in\{0,1\}$. Then given $\bmy$ and the residue $G_r\parenv{\bmF_{k}(\bmx)}\bmod{2\Lambda_{q,k}+1}$, one can compute the residue $\parenv{G_r\parenv{\bmF_{k}(\bmx)}-G_r\parenv{\bmF_{k}(\bmy)}}\bmod{2\Lambda_{q,k}+1}$ and then obtain the exact value of $G_r\parenv{\bmF_{k}(\bmx)}-G_r\parenv{\bmF_{k}(\bmy)}$. This completes the proof.
\end{IEEEproof}

\bigskip
Recall $R_k$, $M_{n,k}$ from \eqref{eq_MRmu} and $\Lambda_{q,k}$ from \eqref{eq_Lambdaqk}. Define the syndrome
\begin{equation}\label{eq_LocSyn}
\begin{aligned}
 \mathsf{Loc}_{\rm edit}(\bmx)=\big(&m_k(\bmx)\bmod 5,
 \vt(\bmF_k(\bmx))\bmod{M_{n,{k}}},
 \lwt{\bmF_k(\bmx)}\bmod{R_{k}},\\
 &G_0(\bmF_k(\bmx))\bmod{2\Lambda_{q,k}+1},
 G_1(\bmF_k(\bmx))\bmod{2\Lambda_{q,k}+1}\big).
\end{aligned}
\end{equation}

The main result of this subsection is presented below.
\begin{theorem}\label{thm_specificlocator}
Let $n\ge\ell_{n,k}$ be a sufficiently large integer, where $\ell_{n,k}$ is given in \eqref{eq_ellnk}.
Suppose $\bmx\in\cD_{q,k}(n)$ and $\bmx\to_{k}\bmy$. From $\bmy$ and $\mathsf{Loc}_{\rm edit}(\bmx)$, one can first determine whether $\bmy=\bmx$. If $\bmy\ne\bmx$, one can compute an interval in $[n]$ covering the edit region in $\bmx$. The interval has length at most
\begin{equation}\label{eq_specificinterval}
\begin{aligned}
    W_{n,k}&=\max\mathset{15\ell_{n,k}+5k,
 \left\lceil\frac{36A_{n,k}}{C_k-\frac{C_k(k+1)-1}{2k}}\right\rceil
 +3(3\ell_{n,k}+k)}+2k\\
 &=O_{q,k}\parenv{\log n},
\end{aligned}
\end{equation}
where $C_{k}$ and $A_{n,k}$ are defined in \eqref{eq_TandC} and \eqref{eq_Amu}, respectively. Computing the interval $I$ from $\bmy$ and $\mathsf{Loc}_{\rm edit}(\bmx)$ requires $O_{q,k}\parenv{n(\log n)^2}$ bit operations.
\end{theorem}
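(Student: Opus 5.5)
The plan is to instantiate \Cref{thm_framework} with the partition rule $\cP$ applied to $\widehat{\bmx},\widehat{\bmy}$ and with $f=f_k$. We take $K=3$, $A=A_{n,k}$, $L=3\ell_{n,k}+k$, $C^{(1)}=C_k$, $C^{(2)}=2k$ and $C^{(3)}=C_k(k+1)-1$. Since the localization is carried out in $\widehat{\bmx}$ (length $n+4k$), we finally shift by $2k$ and intersect with $[n]$ to return to coordinates of $\bmx$.

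\textbf{Step 1: decode the differences and detect $\bmy=\bmx$.} From $\bmy$ we compute $\Part_k(\widehat{\bmy})$, $\bmw=\bmF_k(\bmy)$, $m_k(\bmy)$, $\lwt{\bmw}$, $\vt(\bmw)$, $G_0(\bmw)$ and $G_1(\bmw)$. Since $\bmx\in\cD_{q,k}(n)$, \Cref{lem_dEDeltaVT} recovers $d$, $E$ and $\Delta_{\rm VT}$ from the first three residues of $\mathsf{Loc}_{\rm edit}(\bmx)$. This lemma is stated for a not necessarily nontrivial edit, so it applies before we know whether $\bmy=\bmx$.

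We declare $\bmy=\bmx$ iff $\abs{\bmy}=n$, $d=0$ and $E=0$.
\begin{itemize}
\item If $\bmy=\bmx$, all three quantities vanish.
\item If $\bmy\ne\bmx$ and $\abs{\bmy}=n$, then either $d\ne0$, or $d=0$, which means $m_k(\bmx)=m_k(\bmy)$, and \Cref{lem_condition1} gives $E\ne0$.
\end{itemize}
So this test is correct. When $d=0$, \Cref{lem_Gvalue} also recovers $\Delta_{G_0}$ and $\Delta_{G_1}$ from the last two residues.

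\textbf{Step 2: verify the hypotheses of \Cref{thm_framework} for a nontrivial edit.}
\begin{itemize}
\item Condition (i), \Cref{assump1} with $K=3$, follows from \Cref{lem_segmentreplacement}~(ii), since $\bmA$ and $\bmA'$ have at most $3$ segments and cover the edit regions.
\item Condition (ii) is \Cref{lem_segmentreplacement}~(vi).
\item Condition (iii) is \Cref{cor_locatableverify}.
\item Condition (iv) is \Cref{lem_segmentreplacement}~(v).
\item Condition (v) holds because $A_{n,k}$, $K$ and $L$ depend only on $n,q,k$.
\item Condition (vi) is \Cref{lem_segmentreplacement}~(i) and~(iv), together with the observation $C_k(k+1)-1<2kC_k$.
\end{itemize}
\Cref{thm_framework} then yields an interval covering the edit region in $\widehat{\bmx}$. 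After subtracting $2k$ and intersecting with $[n]$, we obtain an interval covering the edit region in $\bmx$.

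\textbf{Step 3: compare the resulting length bound with \eqref{eq_specificinterval}.} With $K=3$ and $L=3\ell_{n,k}+k$ we get:
\begin{itemize}
\item $(2K-1)L=15\ell_{n,k}+5k$;
\item $\frac{4K^2A}{C^{(1)}-C^{(3)}/C^{(2)}}=\frac{36A_{n,k}}{C_k-\frac{C_k(k+1)-1}{2k}}$;
\item $KL=3(3\ell_{n,k}+k)$.
\end{itemize}
The ceiling in \eqref{eq_specificinterval} only enlarges the bound, and the additive $+2k$ matches. Since $\ell_{n,k}=O_{q,k}(\log n)$ and $A_{n,k}=O_{q,k}(\log n)$, while the denominator is a positive constant depending only on $q,k$, we get $W_{n,k}=O_{q,k}(\log n)$.

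\textbf{Step 4: complexity.} Computing $\Part_k(\widehat{\bmy})$ and $\bmw$ is a linear scan of cost $O_{q,k}(n\log n)$, since segment values have $O(\log n)$ bits. The sums $\vt(\bmw)$, $\lwt{\bmw}$ and $G_r(\bmw)$, together with their reductions modulo $M_{n,k}=O_{q,k}(n)$, $R_k$ and $2\Lambda_{q,k}+1$, cost $O_{q,k}(n(\log n)^2)$. \Cref{thm_framework} with $\lambda=O(\log n)$ then costs $O_{q,k}(n(\log n)^2)$.

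\textbf{Main obstacle.} I expect the delicate point to be Step 1. The decoder must know that the edit is nontrivial before invoking \Cref{thm_framework}, and it must correctly handle $d=0$ while $E$ and the $G$-differences are only available modulo their moduli. The argument through \Cref{lem_condition1} resolves this, provided that $d=0$ is correctly identified with $m_k(\bmx)=m_k(\bmy)$.
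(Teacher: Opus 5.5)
Your proposal is correct and follows the paper's proof essentially step for step. You recover $d$, $E$, $\Delta_{\rm VT}$ (and $\Delta_{G_0},\Delta_{G_1}$ when $d=0$) via \Cref{lem_dEDeltaVT,lem_Gvalue}, detect $\bmy=\bmx$ through \Cref{lem_condition1}, and then invoke \Cref{thm_framework} with the same constants $K=3$, $A=A_{n,k}$, $L=3\ell_{n,k}+k$, $C^{(1)}=C_k$, $C^{(2)}=2k$, $C^{(3)}=C_k(k+1)-1$.

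There are two cosmetic differences. Your extra check $\abs{\bmy}=n$ is redundant, since $d=E=0$ alone already characterizes $\bmy=\bmx$. Also, the paper maps back to $[n]$ with the clamped endpoint map $\pi(t)=\min\{n,\max\{1,t-2k\}\}$. This avoids having to verify that subtracting $2k$ and intersecting with $[n]$ never produces an empty interval, which is true but not argued in your proposal.
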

\begin{IEEEproof}
Let $d,E,\Delta_{\rm VT},\Delta_{G_0},\Delta_{G_1}$ be the differences in \eqref{eq_fivediff}. Write $(s_m,s_{\rm VT},s_{\rm wt},s_0,s_1)
 =\mathsf{Loc}_{\rm edit}(\bmx)$. We divide the process of deciding whether $\bmy=\bmx$ and finding the claimed interval into the four steps below. The complexity of each step is analyzed.
\begin{enumerate}[$(i)$]
\item \textbf{Compute $m_k(\bmy)$, $\bmF_k(\bmy)$, $d$, and $E$.}

Form $\widehat{\bmy}=\bmp_k\bmy\bmp_k$, whose length is at most $n+5k$. Scan $\widehat{\bmy}$ to find all occurrences of $\bmp_k=0^k1^k$ and record the pattern starts. These starts, together with the endpoints $1$ and $\abs{\widehat{\bmy}}+1$, determine the segment boundaries needed to compute $\bmF_k(\bmy)$. During the same scan, initialize a counter at zero
and increase it by one whenever an occurrence of
$\bmp_k$ is entirely contained in $\bmy$.
At the end of the scan, this counter equals
$m_k(\bmy)$. 

At each position, testing for an occurrence of $\bmp_k$ compares at most $2k$ symbols and hence costs $O_{q,k}(1)$ bit operations. Recording a pattern start or updating an occurrence counter of $\bmp_k$ costs $O_k(\log n)$ bit operations because every pattern start and every value of the counter is an integer in $[0,n+5k]$. Since there are $O_k(n)$ positions, this scan requires $O_{q,k}(n\log n)$ bit operations.

Let $\bmw=\bmF_k(\bmy)$ and $m=\abs{\bmw}$.
For each segment $\widehat{\bmy}^{(k)}_j$ used in
\eqref{eq_Fmu}, write
$\widehat{\bmy}^{(k)}_j=v_1\cdots v_{\ell_j}$.
We compute the quantities defining $w_j$ as follows.
First, obtain the length $\ell_j$ from the recorded
segment boundaries by one subtraction. This requires $O_k(\log n)$ bit operations, since these boundaries have $O_k(\log n)$ bits.

Next, initialize a count at zero. Scan $\widehat{\bmy}^{(k)}_j$ to find occurrences of $1^k$. Whenever an occurrence is found, increase the count
by one. At the end of the scan, the count equals
$N_k\parenv{\widehat{\bmy}^{(k)}_j}$. At each position, testing for an occurrence of $1^k$ compares at most $k$ symbols and costs $O_{q,k}(1)$ bit operations.
The count is at most $\ell_j\le n+5k$, so increasing
it by one costs $O_k(\log n)$ bit operations.
Thus computing $N_k\parenv{\widehat{\bmy}^{(k)}_j}$
requires $O_{q,k}(\ell_j\log n)$ bit operations.

To compute the residues in \eqref{eq_functiond},
initialize $k$ integers $e_1,\ldots,e_k$ at zero.
Read $v_1,\ldots,v_{\ell_j}$ in order. When reading $v_i$, replace $e_a$ by $(e_a+v_i)\bmod q$, where
$a=1+((i-1)\bmod k)$. The index $a$ cycles through $1,\ldots,k$. At the end of this scan, $e_a=d_a\parenv{\widehat{\bmy}^{(k)}_j}$ for every $a\in[k]$. Each $e_a$ remains in $[0,q-1]$. Since $q$ and $k$ are fixed, each update costs $O_{q,k}(1)$ bit operations. Consequently, all $k$ residues are obtained using $O_{q,k}(\ell_j)$ bit operations.

There are $m=O_k(n)$ segments used in \eqref{eq_Fmu},
and their total length is at most $n+5k$.
Consequently, computing all the segment lengths,
occurrence counts of $1^k$, and residues requires
$O_{q,k}\parenv{\parenv{m+\sum_{j=1}^{m}\ell_j}\log n} =O_{q,k}(n\log n)$ bit operations.

For each $j$, the computed residues determine $D_q\parenv{\widehat{\bmy}^{(k)}_j}=\sum_{a=1}^{k}e_aq^{a-1}$.
Since $q$ and $k$ are fixed, this calculation costs $O_{q,k}(1)$ bit operations. We then compute $w_j=D_q\parenv{\widehat{\bmy}^{(k)}_j}+B N_k\parenv{\widehat{\bmy}^{(k)}_j}+C_k\ell_j$.
The integers $N_k\parenv{\widehat{\bmy}^{(k)}_j}$ and
$\ell_j$ have $O_k(\log n)$ bits.
Then computing $w_j$ from these quantities
costs $O_{q,k}(\log n)$ bit operations because $B$ and $C_k$ are fixed. Therefore, computing all $m$ entries of $\bmw$ costs $O_{q,k}(n\log n)$ bit operations.

Since $D_q(\widehat{\bmy}^{(k)}_j)\le Q-1$ and
$N_k(\widehat{\bmy}^{(k)}_j)\le\ell_j$, we have
$0\le\lwt{\bmw}\le m(Q-1)+(B+C_k)\sum_{j=1}^{m}\ell_j\le m(Q-1)+(B+C_k)(n+5k)=O_{q,k}(n)$. Hence every partial sum of $w_1,\ldots,w_m$ has $O_{q,k}(\log n)$ bits.
Computing $\lwt{\bmw}$ therefore requires $O_{q,k}(m\log n)=O_{q,k}(n\log n)$ further bit operations.

It remains to recover $d$ and $E$. Compute the least nonnegative residues $u_d=(m_k(\bmy)-s_m)\bmod 5$ and $u_E=(s_{\rm wt}-\lwt{\bmw})\bmod R_k$. By \Cref{lem_segmentreplacement}~(iii), \eqref{eq_Fsumdiff}, and \eqref{eq_fivediff}, the integer $d$ is the unique integer in $[-2,2]$ satisfying $d\equiv u_d\pmod 5$, and $E$ is the unique integer in $[-(R_k-1)/2,(R_k-1)/2]$ satisfying $E\equiv u_E\pmod{R_k}$. Therefore, set $d=u_d$ if $u_d\le2$, and $d=u_d-5$ otherwise. Similarly, set $E=u_E$ if $u_E\le(R_k-1)/2$, and $E=u_E-R_k$ otherwise.

The recovery of $d$ and $E$ uses a fixed number of subtractions, divisions with remainder, and comparisons involving $O_{q,k}(\log n)$-bit integers. It therefore costs $O_{q,k}((\log n)^2)$ bit operations. Together with the preceding scan for pattern starts and the computation of $\bmw$ and $\lwt{\bmw}$, step~(i) requires $O_{q,k}(n\log n)$ bit operations in total.

\item \textbf{Determine whether $\bmy=\bmx$.}

If $\bmy=\bmx$, then $d=E=0$. Conversely, if $\bmy\ne\bmx$, then \Cref{cor_locatableverify} and condition~(i) of \Cref{def_locatable} imply $(d,E)\ne(0,0)$. Thus $\bmy=\bmx$ if and only if $d=E=0$. Since $\abs{d}\le2$ and $\abs{E}\le C_k(k+1)-1$, this test costs $O_{q,k}(1)$ bit operations. If $d=E=0$, the procedure stops. For the remaining steps, assume $(d,E)\ne(0,0)$.

\item \textbf{Recover $\Delta_{\rm VT}$ and, when $d=0$, $\Delta_{G_0}$ and $\Delta_{G_1}$.}

We first compute $\vt(\bmw)=\sum_{j=1}^{m}jw_j$. Step~(i) gives $m=O_k(n)$ and $\lwt{\bmw}=O_{q,k}(n)$. Therefore, $0\le\vt(\bmw)\le m\lwt{\bmw}=O_{q,k}(n^2)$.
Thus every partial sum of $w_1,2w_2,\ldots,mw_m$ has $O_{q,k}(\log n)$ bits. Each multiplication $jw_j$ costs $O_{q,k}((\log n)^2)$ bit operations, and each addition costs $O_{q,k}(\log n)$ bit operations. There are $m=O_k(n)$ terms, resulting in a total of $O_{q,k}(n(\log n)^2)$ bit operations.

Next, use the given residue
$s_{\rm VT}=\vt(\bmF_k(\bmx))\bmod M_{n,k}$
to compute
$u_{\rm VT}=(s_{\rm VT}-\vt(\bmw))\bmod M_{n,k}$,
where $0\le u_{\rm VT}<M_{n,k}$.
By \eqref{eq_fivediff}, we have
$\Delta_{\rm VT}\equiv u_{\rm VT}\pmod{M_{n,k}}$.
The proof of \Cref{lem_dEDeltaVT} gives
$\abs{\Delta_{\rm VT}}
<9(S_{n,k}+A_{n,k})=M_{n,k}/2$.
Since $M_{n,k}$ is even and $\Delta_{\rm VT}$ is
an integer, it follows that
$\Delta_{\rm VT}\in
[-M_{n,k}/2+1,M_{n,k}/2-1]$.
Thus $u_{\rm VT}\ne M_{n,k}/2$, and
\[
 \Delta_{\rm VT}=
 \begin{cases}
 u_{\rm VT},&0\le u_{\rm VT}<M_{n,k}/2,\\
 u_{\rm VT}-M_{n,k},&M_{n,k}/2<u_{\rm VT}<M_{n,k}.
 \end{cases}
\]

Since $0\le s_{\rm VT}<M_{n,k}=O_{q,k}(n)$
and $\vt(\bmw)=O_{q,k}(n^2)$,
all integers used in this computation have
$O_{q,k}(\log n)$ bits.
Computing $u_{\rm VT}$ requires one subtraction
and one reduction modulo $M_{n,k}$, costing
$O_{q,k}((\log n)^2)$ bit operations.
Selecting the corresponding value of
$\Delta_{\rm VT}$ requires one comparison and
at most one further subtraction, costing
$O_{q,k}(\log n)$ bit operations.
Therefore, recovering $\Delta_{\rm VT}$ from
$s_{\rm VT}$ and the computed value $\vt(\bmw)$
requires $O_{q,k}((\log n)^2)$ bit operations.

If $d=0$, compute $G_0(\bmw)$ and $G_1(\bmw)$ by adding $w_j$ to the sum indexed by $j\bmod3$ whenever that residue is $0$ or $1$. Each sum is at most $\lwt{\bmw}=O_{q,k}(n)$, implying that these additions require $O_{q,k}(n\log n)$ bit operations. Since $d=0$ implies $m_k(\bmx)=m_k(\bmy)$, \Cref{lem_Gvalue} applies. For $r\in\{0,1\}$, compute $ u_r=(s_r-G_r(\bmw))\bmod(2\Lambda_{q,k}+1)$.
The bound $\abs{\Delta_{G_r}}\le\Lambda_{q,k}$ gives $\Delta_{G_r}=u_r$ when $u_r\le\Lambda_{q,k}$, and $\Delta_{G_r}=u_r-(2\Lambda_{q,k}+1)$ otherwise. Since $q$ and $k$ are fixed, the integer $2\Lambda_{q,k}+1$ is independent of $n$. Each integer $s_r-G_r(\bmw)$ has
$O_{q,k}(\log n)$ bits. Reducing such an integer modulo $2\Lambda_{q,k}+1$ requires $O_{q,k}(\log n)$ bit operations. The subsequent comparisons and subtractions involve only integers with $O_{q,k}(1)$ bits and therefore cost $O_{q,k}(1)$ bit operations. Hence recovering both $\Delta_{G_0}$ and $\Delta_{G_1}$ requires $O_{q,k}(\log n)$ bit operations. If $d\ne0$, these differences are not needed in \Cref{thm_framework}. In either case, step~(iii) requires $O_{q,k}(n(\log n)^2)$ bit operations.

\item \textbf{Compute the claimed interval.}

Apply \Cref{thm_framework} to $\widehat{\bmx}$ and $\widehat{\bmy}$ with the partition and segment function from \Cref{subsec_partitionandf}, and with
$$
 \begin{gathered}
 K=3,\quad A=A_{n,k},\quad L=3\ell_{n,k}+k,\\
 C^{(1)}=C_k,\quad C^{(2)}=2k,\quad
 C^{(3)}=C_k(k+1)-1.
 \end{gathered}
$$
The common segment decompositions in \Cref{lem_segmentreplacement}~(ii) give \Cref{assump1}. Parts~(v) and~(vi) of that lemma give the segment-length bound and the $A$-bounded property, respectively. Strong locatability follows from \Cref{cor_locatableverify}, since step~(ii) has established $\bmy\ne\bmx$. Finally, parts~(i) and~(iv) of \Cref{lem_segmentreplacement} give $w_j\ge C_k\ell_j$, $\ell_j\ge2k$, and $\abs{E}\le C_k(k+1)-1$, while
\[
 C^{(1)}-\frac{C^{(3)}}{C^{(2)}}
 =\frac{(k-1)C_k+1}{2k}>0.
\]
Thus the hypotheses of \Cref{thm_framework} are satisfied. The partition of $\widehat{\bmy}$ and $\bmF_k(\bmy)$ were computed in step~(i). The required differences $d$, $E$, $\Delta_{\rm VT}$, $\Delta_{G_0}$ and $\Delta_{G_1}$ were obtained in steps~(i) and~(iii).

The original sequence in this application has length $\abs{\widehat{\bmx}}=n+4k$. Therefore, the parameter $\lambda$ in \Cref{thm_framework} is
\[
 \lambda=\ceilenv{\log\parenv{9(n+5k+1)^2A}}
 =O_{q,k}(\log n),
\]
where $A=A_{n,k}=O_{q,k}(\log n)$ follows from \eqref{eq_ellnk} and \eqref{eq_Amu}. \Cref{thm_framework} consequently computes an interval $\widehat I=[a,b]\subseteq[n+4k]$ covering the edit region in $\widehat{\bmx}$ using $O_{q,k}((n+4k)\lambda^2)=O_{q,k}(n(\log n)^2)$ bit operations. Substituting the above parameters into \eqref{eq_geninterval}, and using $5L=15\ell_{n,k}+5k$, gives $\abs{\widehat I}\le W_{n,k}$.

To obtain an interval in the coordinates of $\bmx$, define $I=[\pi(a),\pi(b)]$ where $\pi(t)=\min\mathset{n,\max\mathset{1,t-2k}}$. It is easy to verify that $I$ covers the edit region in $\bmx$. Moreover, $\pi$ is non-decreasing and satisfies $\pi(b)-\pi(a)\le b-a$. Therefore, $I\subseteq[n]$ and $\abs{I}\le\abs{\widehat I}\le W_{n,k}$. Computing these endpoints requires $O_k(\log n)$ further bit operations. Since $\ell_{n,k}$ and $A_{n,k}$ are $O_{q,k}(\log n)$ and $C_k-(C_k(k+1)-1)/(2k)$ is a positive constant, \eqref{eq_specificinterval} also gives $W_{n,k}=O_{q,k}(\log n)$.
\end{enumerate}

Steps~(i) and~(ii) cost $O_{q,k}(n\log n)$ and $O_{q,k}(1)$ bit operations, respectively. Each of steps~(iii) and~(iv) costs $O_{q,k}(n(\log n)^2)$ bit operations. Their sum is $O_{q,k}(n(\log n)^2)$. Now the proof is completed.
\end{IEEEproof}

\section{Completing the Code Construction}\label{sec_substringeditcode}
In the previous section, the edit region in $\bmx$ has been determined to within a short block. To complete the construction of our $k$-substring edit correcting code, it remains to design a code $\cC^\prime$ that can correct a $k$-substring edit error which occurs in a known short block.

In \cite[Section III]{YutingLi2025IT}, Li \emph{et al.} used a code with redundancy $4k\log m+o(\log m)$ as a component of $\cC^\prime$ (see \cite[Lemmas 15 and 16]{YutingLi2025IT}), where $m$ is the length of that code. In \Cref{sec_generalcode} below, we apply the distributed graph coloring framework in \cite{LiGabrysFarnoud2026IT} to construct a code with redundancy $4\log m+O_{q,k}\parenv{\log\log m}$ to serve as the component. Our code $\cC^\prime$ is presented in \Cref{sec_blockpartitions}. The final $k$-substring edit correcting code is given in \Cref{sec_finalsubstingcode}.

\subsection{The Component Code for Local Recovery}\label{sec_generalcode}
Let $m\ge\max\mathset{k,2}$ be an integer. For $\bmb\in\Sigma_q^m$, recall that $\cE_{1,k}(\bmb)$ is the set of sequences obtained from $\bmb$ by one $k$-substring edit. Define
\begin{equation*}
    \cB_{k}(\bmb)\triangleq\mathset{\bmc\in\Sigma_q^m:\cE_{1,k}(\bmb)\cap\cE_{1,k}(\bmc)\ne\varnothing}.
\end{equation*}
Clearly, $\bmb\in\cB_{k}(\bmb)$.
Let $S_{q,k}\triangleq\sum_{j=0}^{k}q^j$, $U_m\triangleq S_{q,k}((k+1)(m+1)-\binom{k+1}{2})$, and $D_m\triangleq U_mU_{m+k}$.

\begin{lemma}\label{lem_confusionball}
For every $\bmb\in\Sigma_q^m$, it holds that $\abs{\cB_{k}(\bmb)}\le D_m$. All members of $\cB_k(\bmb)$ can be generated explicitly in $O_{q,k}(m^3\log m)$ bit operations.
\end{lemma}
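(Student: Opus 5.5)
The plan is to bound $\abs{\cE_{1,k}(\bmb)}$ for a sequence of length $m$ by $U_m$, and then apply this bound twice. Every $\bmc\in\cB_k(\bmb)$ satisfies $\cE_{1,k}(\bmc)\cap\cE_{1,k}(\bmb)\ne\varnothing$. So there is some $\bmz\in\cE_{1,k}(\bmb)$ such that $\bmz\to_k\bmc$. The edit relation is symmetric: swapping $\bms$ and $\bmt$ reverses an edit. Hence $\bmc\in\cE_{1,k}(\bmz)$. The intermediate $\bmz$ has length at most $m+k$, so if $U$ is nondecreasing in the length we get
\[
\abs{\cB_k(\bmb)}\le\sum_{\bmz\in\cE_{1,k}(\bmb)}\abs{\cE_{1,k}(\bmz)}\le U_m\,U_{m+k}=D_m.
\]

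To bound $\abs{\cE_{1,k}(\bma)}$ for $\abs{\bma}=n$, I parametrize an edit by the start position $i=\abs{\bmu}\in[0,n]$, the deleted length $\abs{\bms}=j\in[0,\min\{k,n-i\}]$, and the inserted string $\bmt$. The number of choices of $\bmt$ is $\sum_{l=0}^k q^l=S_{q,k}$, which overcounts harmlessly because it includes the trivial edit. The number of valid pairs $(i,j)$ is
\[
\sum_{j=0}^{k}(n-j+1)=(k+1)(n+1)-\binom{k+1}{2},
\]
assuming $n\ge k$. If $n<k$, the count is even smaller, since the terms with $j>n$ are simply absent. In any case it is at most this expression. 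This gives $\abs{\cE_{1,k}(\bma)}\le S_{q,k}\bigl((k+1)(n+1)-\binom{k+1}{2}\bigr)=U_n$, and $U_n$ is increasing in $n$. The step that needs care is that intermediate sequences $\bmz$ have length between $m-k$ and $m+k$, so monotonicity of $U$ is what is used, and $U_{m+k}$ dominates.

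The generation procedure follows the bound directly. Enumerate all $(i,j,\bmt)$ to list $\cE_{1,k}(\bmb)$; this gives $O_{q,k}(m)$ sequences, each written in $O(m)$ symbol operations. For each such $\bmz$, enumerate $\cE_{1,k}(\bmz)$ in the same way and keep only the outputs of length $m$. This produces a list of $O_{q,k}(m^2)$ candidates, each of length $m$, at total cost $O_{q,k}(m^3)$. The list may contain duplicates, and removing them is where the extra $\log m$ factor comes from. One option is to sort the candidates lexicographically, using $O(m^2\log m)$ comparisons of length-$m$ strings, for $O_{q,k}(m^3\log m)$ bit operations in total. Another option is to carry index arithmetic on $O(\log m)$-bit integers through each of the $O_{q,k}(m^3)$ symbol copies. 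Either way the stated complexity is met, so the main obstacle is only the bookkeeping of boundary cases in the count of $(i,j)$ pairs and in the duplicate removal.
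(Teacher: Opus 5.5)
Your proposal follows the paper's proof. Reversibility of edits puts every $\bmc\in\cB_k(\bmb)$ inside some $\cE_{1,k}(\bmz)$ with $\bmz\in\cE_{1,k}(\bmb)$. Counting by (start, deleted length, inserted string) gives the bounds $U_m$ and $U_{m+k}$. The $\log m$ factor comes from merge-sorting the $O_{q,k}(m^2)$ length-$m$ candidates.

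One intermediate claim is wrong and needs a one-line fix. For $n\le k-2$, the expression $(k+1)(n+1)-\binom{k+1}{2}$ is \emph{smaller} than the true number of pairs $(i,j)$, which is $\binom{n+2}{2}$. The summands $n-j+1$ with $j>n$ are nonpositive, so dropping them increases the sum rather than decreasing it. For example, $n=0$, $k=2$ gives a true count of $1$ against a formula value of $0$. Such short intermediates $\bmz$ do occur, since the lemma only assumes $m\ge\max\{k,2\}$ and $\abs{\bmz}$ can be as small as $m-k$. The fix is to argue with the true count $\sum_{j=0}^{\min\{k,n\}}(n-j+1)$ instead of the formula. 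This count is increasing in $n$ and coincides with $U_n/S_{q,k}$ once $n\ge k-1$. Because $\abs{\bmz}\le m+k$ and $m+k\ge k$, you still get $\abs{\cE_{1,k}(\bmz)}\le U_{m+k}$, which is all the bound needs. The paper relies on this monotonicity implicitly.

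Your alternative complexity route (index arithmetic on every symbol copy) does not remove duplicates. Rely on the sorting route, as the paper does, if you want the duplicate-free list.
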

\begin{IEEEproof}
Since every $k$-substring edit is reversible, a sequence
$\bmc\in\Sigma_q^m$ belongs to $\cB_k(\bmb)$ if and only if there exists $\bmc_1\in\cE_{1,k}(\bmb)$ such that
$\bmc\in\cE_{1,k}(\bmc_1)$. Thus, we can generate $\cB_k(\bmb)$ by the following three steps: first applying every possible $k$-substring edit to $\bmb$, then applying every possible $k$-substring edit to each resulting sequence $\bmc_1$, and finally retaining only the resulting sequences of length $m$ and removing repeated sequences.

Every $\bmc_1\in\cE_{1,k}(\bmb)$ is obtained by first deleting a substring $\bms$ of length at most $k$, and then inserting a sequence $\bmt\in\Sigma_q^{*}$ of length at most $k$ in the same position. There are at most $S_{q,k}$ choices for $\bmt$ and $k+1$ choices for the length of $\bms$. For each $\abs{\bms}$, there are $m+1-\abs{\bms}$ choices for the positions of $\bms$. Therefore, we have $\abs{\cE_{1,k}(\bmb)}\le S_{q,k}\sum_{\abs{\bms}=0}^{k}(m-\abs{\bms}+1)=U_m=O_{q,k}(m)$. Since the length of $\bmc_1$ is at most $m+k$, there are $\abs{\cE_{1,k}(\bmc_1)}\le U_{m+k}=O_{q,k}(m)$ possibilities for $\bmc$ once $\bmc_1$ is fixed. Therefore, it holds that $\abs{\cB_{k}(\bmb)}\le U_mU_{m+k}=D_m=O_{q,k}(m^2)$.

Next, we analyze the bit operations needed to generate $\cB_k(\bmb)$. Constructing each $\bmc_1$ requires copying its unchanged prefix and suffix and writing the inserted symbols, which takes $O_{q,k}(m)$ bit operations. Similarly, constructing a candidate $\bmc$ from a given $\bmc_1$ takes $O_{q,k}(m+k)=O_{q,k}(m)$ bit operations. Thus, generating all the candidates $\bmc$ costs $D_m\parenv{O(m)+O(m)}=O_{q,k}(m^3)$ bit operations. 

To remove repetitions, we use merge sort to arrange the
retained length-$m$ sequences in lexicographic order. This requires at most $O(D_m\log D_m)=O_{q,k}(m^2\log m)$ string comparisons. Each comparison takes $O(m)$ bit operations because the strings have length $m$. Hence the sorting procedure takes $O_{q,k}(m^3\log m)$ bit operations. We then scan the sorted list and retain one copy of each distinct sequence. Notice that equal sequences are consecutive in the sorted list, implying that removing repeated sequences costs additional $O_{q,k}(m^3)$ bit operations. Therefore, all members of $\cB_k(\bmb)$ can be generated without repetitions in $O_{q,k}(m^3\log m)$ bit operations.
\end{IEEEproof}

Put $N=\max\{q-1,(D_m-1)(m-1)\}$. \Cref{lem_confusionball} gives
$D_m=O_{q,k}(m^2)$, implying $N=O_{q,k}(m^3)$. Let $P_m$ be the smallest prime greater than $N$. By Bertrand's postulate \cite[Section 5.2]{Shoup2008}, there is a prime strictly between $N$ and $2N$ for every integer $N>1$. Consequently, we have $N<P_m<2N$, which gives $P_m=O_{q,k}(m^3)$. The next lemma helps to find $P_m$.

\begin{lemma}(\cfcite{Agrawal2004Annals}{Theorems 4.1 and 5.1})\label{lem_primetest}
    There is an algorithm which can decide, for any given integer $u\ge2$, whether $u$ is a prime in $(\log u)^{O(1)}$ bit operations.
\end{lemma}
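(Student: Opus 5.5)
This is the deterministic polynomial-time primality test of Agrawal, Kayal, and Saxena, so I would not reprove it from scratch. I would quote \cite[Theorem~4.1]{Agrawal2004Annals} for correctness and \cite[Theorem~5.1]{Agrawal2004Annals} for running time. For the reader's convenience I would recall the structure of the algorithm, following \cite{Agrawal2004Annals}, so that it is clear the bound is in bit operations on an input of $\operatorname{len}(u)=O(\log u)$ bits.

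The algorithm runs in four steps.
\begin{enumerate}[(1)]
\item It tests whether $u=a^b$ for some integers $a\ge2$ and $b\ge2$. Each of the $O(\log u)$ exponents $b$ can be handled by a binary search for $a$, using the integer arithmetic of \Cref{prop_bitarithmetic}.
\item It finds the smallest $r$ such that the multiplicative order of $u$ modulo $r$ exceeds $(\log u)^2$. A counting argument in \cite{Agrawal2004Annals} gives $r=O((\log u)^5)$, so this search costs $(\log u)^{O(1)}$ operations.
\item It checks $\gcd(a,u)$ for $a\le r$, plus the trivial case $u\le r$.
\item It verifies the polynomial congruences $(X+a)^u\equiv X^u+a$ in the ring $\mathbb{Z}_u[X]/(X^r-1)$ for all $1\le a\le\lfloor\sqrt{\phi(r)}\log u\rfloor$.
\end{enumerate}

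For the complexity of step (4), each congruence is checked by repeated squaring: $O(\log u)$ multiplications of polynomials of degree less than $r$ with coefficients of $O(\log u)$ bits. Even with schoolbook multiplication, each product costs $(r\log u)^{O(1)}$. The number of values of $a$ is also polynomial in $r\log u$. With $r=O((\log u)^5)$, the total is $(\log u)^{O(1)}$ bit operations. Combining the four steps gives the claimed bound.

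The step I would rely on \cite{Agrawal2004Annals} for, rather than reproduce, is correctness (\cite[Theorem~4.1]{Agrawal2004Annals}). That is the genuinely hard part: a composite $u$ that is not a perfect power cannot pass all the congruences. The proof there is a counting argument on a group generated by the linear polynomials $X+a$ in an extension of $\mathbb{F}_p$, for a prime $p\mid u$.

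For our application we only need that the algorithm exists and that its cost is polynomial in $\log u$. It will be run on the candidates $u\in(N,2N)$, where $u=O_{q,k}(m^3)$, so each test costs $(\log m)^{O(1)}$. This lets $P_m$ be found in $O_{q,k}(m^3(\log m)^{O(1)})$ bit operations.
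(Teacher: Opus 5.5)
Your proposal is correct and matches the paper, which gives no independent proof but simply cites \cite[Theorems~4.1 and~5.1]{Agrawal2004Annals} for correctness and running time, as you do. The one thing the paper adds is a pointer to the erratum \cite{Agrawal2019Annals}, which repairs the proof of Lemma~4.3 of \cite{Agrawal2004Annals}. That lemma is exactly the bound $r=O((\log u)^5)$ your step~(2) relies on, so you should cite the erratum alongside it.
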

See also \cite{Agrawal2019Annals} for the corrected
proof of Lemma~4.3 in \cite{Agrawal2004Annals}. 

To find $P_m$, we test the integers $N+1,N+2,\ldots$ in order and stop at the first prime. Since $P_m<2N$, fewer than $N$ integers are tested, and each test requires $(\log N)^{O(1)}$ bit operations by \Cref{lem_primetest}. Therefore, $P_m$ can be found in $O(N(\log N)^{O(1)})=O_{q,k}(m^3(\log m)^{O(1)})$ bit operations.

For a prime $p$, let $\mathbb{F}_p$ denote the finite field with $p$ elements. We identify elements in $\mathbb{F}_p$ with their representatives in $[0,p-1]$. For $\alpha\in\mathbb{F}_{P_m}$ and
$\bmb=b_1\cdots b_m\in\Sigma_q^m$, define
$$
 h_{\alpha}(\bmb)=\sum_{i=1}^{m}b_i\alpha^{i-1},
$$
which is an element in $\mathbb{F}_{P_m}$. The next lemma specializes the polynomial cover-free family in \cite[Lemma~13]{LiGabrysFarnoud2026IT} and the recoloring procedure in \cite[Lemma~15]{LiGabrysFarnoud2026IT} to the confusion graph whose neighbor set at $\bmb$ is $\cB_k(\bmb)\setminus\{\bmb\}$.

\begin{lemma}\label{lem_gm}
For every $\bmb\in\Sigma_q^m$, there exists
$\alpha\in\mathbb{F}_{P_m}$ such that
\begin{equation}\label{eq_hbnehc}
 h_{\alpha}(\bmb)\ne h_{\alpha}(\bmc)
\end{equation}
for all $\bmc\in\cB_{k}(\bmb)\setminus\{\bmb\}$. Let $\alpha(\bmb)$ be the smallest $\alpha$ in $\{0,1,\ldots,P_m-1\}$ satisfying this property. Define
$$
 g_m(\bmb)=\alpha(\bmb) P_m+
 h_{\alpha(\bmb)}(\bmb)\in[0,P_m^2-1].
$$ 
Then $g_m(\bmc)\ne g_m(\bmb)$ for every
$\bmc\in\cB_k(\bmb)\setminus\{\bmb\}$. The value $g_m(\bmb)$ is computable in $O_{q,k}(m^6(\log m)^{O(1)})$ bit operations.
\end{lemma}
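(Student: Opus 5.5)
Existence of $\alpha$ will come from a polynomial root count. For a fixed $\bmc\in\cB_k(\bmb)\setminus\{\bmb\}$, consider the polynomial $\Phi_{\bmc}(X)=\sum_{i=1}^m (b_i-c_i)X^{i-1}$ over $\mathbb{F}_{P_m}$. Since $P_m>N\ge q-1$, every symbol of $\Sigma_q$ is a distinct element of $\mathbb{F}_{P_m}$. Hence $b_i-c_i\not\equiv0\pmod{P_m}$ whenever $b_i\ne c_i$, and $\Phi_{\bmc}$ is a nonzero polynomial of degree at most $m-1$. It therefore has at most $m-1$ roots in $\mathbb{F}_{P_m}$, and $h_\alpha(\bmb)=h_\alpha(\bmc)$ holds exactly when $\alpha$ is such a root. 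By \Cref{lem_confusionball}, $\abs{\cB_k(\bmb)\setminus\{\bmb\}}\le D_m-1$. So the set of bad $\alpha$ has size at most $(D_m-1)(m-1)\le N<P_m$, and some $\alpha\in\mathbb{F}_{P_m}$ is good. This shows that $\alpha(\bmb)$ is well defined. Also $g_m(\bmb)\le (P_m-1)P_m+P_m-1=P_m^2-1$.

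For the separation claim, take $\bmc\in\cB_k(\bmb)\setminus\{\bmb\}$. The relation is symmetric: $\bmb\in\cB_k(\bmc)\setminus\{\bmc\}$, since the definition of $\cB_k$ uses a symmetric intersection condition. Because $h_\alpha$ takes values in $[0,P_m-1]$, the representation $g=\alpha P_m+h$ is unique, so $g_m(\bmb)=g_m(\bmc)$ would force $\alpha(\bmb)=\alpha(\bmc)=:\alpha$ and $h_\alpha(\bmb)=h_\alpha(\bmc)$. But $\alpha=\alpha(\bmb)$ separates $\bmb$ from every element of $\cB_k(\bmb)\setminus\{\bmb\}$, including $\bmc$, which is a contradiction. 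This step needs only one of the two equalities $\alpha(\bmb)=\alpha(\bmc)$, so the symmetry remark is not even required.

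For complexity, the plan is as follows. First find $P_m$ in $O_{q,k}(m^3(\log m)^{O(1)})$ bit operations, as described before the lemma. Then generate $\cB_k(\bmb)$ in $O_{q,k}(m^3\log m)$ bit operations via \Cref{lem_confusionball}. Next, for $\alpha=0,1,\ldots$ in turn, compute $h_\alpha(\bmb)$ by Horner's rule, using $m$ multiplications and additions mod $P_m$ on $O(\log m)$-bit integers, that is, $O_{q,k}(m(\log m)^2)$ bit operations. Compute $h_\alpha(\bmc)$ for each of the $O_{q,k}(m^2)$ members $\bmc$ in the same way. This costs $O_{q,k}(m^3(\log m)^2)$ per $\alpha$. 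Stop at the first $\alpha$ with no collision. Since $\alpha(\bmb)<P_m=O_{q,k}(m^3)$, the total is $O_{q,k}(m^6(\log m)^2)$. The final step forms $g_m(\bmb)$ with one multiplication and one addition, which costs $O((\log m)^2)$. Summing these gives the stated $O_{q,k}(m^6(\log m)^{O(1)})$ bound.

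The main obstacle is only bookkeeping. The counting step must use $P_m>(D_m-1)(m-1)$ together with injectivity of $\Sigma_q\to\mathbb{F}_{P_m}$, and the latter is guaranteed by $P_m>q-1$. The bound on the number of $\alpha$ values tried is what dominates the running time.
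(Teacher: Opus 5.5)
Your proposal is correct and follows essentially the same argument as the paper. It uses the nonzero polynomial $\sum_i (b_i-c_i)X^{i-1}$ together with $P_m>q-1$ and $P_m>(D_m-1)(m-1)$ for existence, the uniqueness of the representation $\alpha P_m+h$ for separation, and a Horner-rule search over $\alpha$ costing $O(P_mD_m m)$ field operations (plus the prime search and generation of the confusion ball) for the complexity bound.
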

\begin{IEEEproof}
For $\bmc=c_1\cdots c_m\in\cB_{k}(\bmb)\setminus\{\bmb\}$, consider the polynomial
$$
 P_{\bmb,\bmc}(X)=\sum_{i=1}^{m}(b_i-c_i)X^{i-1}
 \in\mathbb{F}_{P_m}[X].
$$
Since $\bmb\ne\bmc$, there is at least one $i\in[m]$ such that $b_i\ne c_i$. It follows from the fact $P_m>q-1$ and $0<\abs{b_i-c_i}\le q-1$ that $P_{\bmb,\bmc}(X)$ is a nonzero polynomial over $\mathbb{F}_{P_m}$. The degree of $P_{\bmb,\bmc}(X)$ is at most $m-1$, implying that it has at most $m-1$ roots. Then by \Cref{lem_confusionball} and the fact $P_m>(D_m-1)(m-1)$, there is at least one element $\alpha\in\mathbb{F}_{P_m}$ such that $P_{\bmb,\bmc}(\alpha)\ne0$ for every $\bmc\in\cB_{k}(\bmb)\setminus\{\bmb\}$.

Suppose that there is some $\bmc\in\cB_{k}(\bmb)\setminus\{\bmb\}$ satisfying $g_m(\bmc)=g_m(\bmb)$. Then it must hold that $\alpha(\bmc)=\alpha(\bmb)$ and $h_{\alpha(\bmb)}(\bmc)=h_{\alpha(\bmb)}(\bmb)$, since $0\le\alpha(\bmc),\alpha(\bmb),h_{\alpha(\bmb)}(\bmc),h_{\alpha(\bmb)}(\bmb)<P_m$. This contradicts \eqref{eq_hbnehc} and therefore, $g_m(\bmc)\ne g_m(\bmb)$ for all $\bmc\in\cB_{k}(\bmb)\setminus\{\bmb\}$.

Next, we analyze the complexity of computing $g_m(\bmb)$.
First, generate $\cB_k(\bmb)$ and test
$\alpha=0,1,\ldots,P_m-1$ in increasing order.
For each $\alpha$, compute $h_{\alpha}(\bmb)$ and
$h_{\alpha}(\bmc)$ for all $\bmc\in\cB_k(\bmb)\setminus\{\bmb\}$. Stop at the first $\alpha$ for which $h_{\alpha}(\bmc)\ne h_{\alpha}(\bmb)$ for all $\bmc\in\cB_k(\bmb)\setminus\{\bmb\}$.

For given $\alpha$ and $\bmc\in\cB_k(\bmb)\setminus\{\bmb\}$, the value $h_{\alpha}(\bmc)$ can be
computed by Horner's rule (see \cite[Section 5.1]{Higham2002}), which uses $m$ multiplications and $m$ additions in $\mathbb{F}_{P_m}$. Since there are at most $P_m$ choices of $\alpha$ and at most $D_m$ sequences to evaluate for each $\alpha$, the total number of field additions and multiplications is $O(P_mD_mm)=O_{q,k}(m^6)$.

Put $\ell=\lceil\log_2 P_m\rceil$. Recall that elements in $\mathbb{F}_{P_m}$ are represented by integers in $[0,P_m-1]$. Addition modulo $P_m$ takes $O(\ell)$ bit operations, and multiplication followed by reduction modulo $P_m$ takes $O(\ell^2)$ bit operations.
Comparing two field elements takes $O(\ell)$ bit operations. Thus, all evaluations and comparisons require
$O(P_mD_mm\ell^2)=O_{q,k}(m^6(\log m)^2)$ bit operations.

Generating $\cB_k(\bmb)$ costs $O_{q,k}(m^3\log m)$ bit operations by \Cref{lem_confusionball}. The prime search costs $O_{q,k}(m^3(\log m)^{O(1)})$ bit operations, as established above. Once $\alpha(\bmb)$ has been found, $h_{\alpha(\bmb)}(\bmb)$ is already available. Then computing $g_m(\bmb)$ requires only one integer multiplication and one addition, which take $O(\ell^2)$ bit operations. Combining these bounds gives the stated complexity $O_{q,k}(m^6(\log m)^{O(1)})$.
\end{IEEEproof}

\Cref{lem_gm} says that one can efficiently recover $\bmb$ from the syndrome $g_m(\bmb)$ and any $\bmb^\prime\in\cE_{1,k}(\bmb)$. This gives a length-$m$ $k$-substring edit correcting code with redundancy $\log\parenv{P_m^2}=6\log m+O_{q,k}(1)$. We next compress the syndrome $g_m(\bmb)$ by choosing a small prime modulus, which gives a length-$m$ $k$-substring edit correcting code with redundancy $4\log m+4\log\log m+O_{q,k}(1)$.

Let $p_j$ denote the $j$-th prime number. We have the following result.
\begin{lemma}\cite[eq.~(3.13)]{Rosser1962}\label{lem_jthprime}
    It holds that $p_j<j(\ln j+\ln\ln j)$ for every integer $j\ge6$.
\end{lemma}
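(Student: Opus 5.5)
The plan is to reduce the bound on $p_j$ to an explicit lower bound on the prime-counting function $\pi(x)$, and not to attempt anything new: this is a classical explicit estimate of Rosser and Schoenfeld, and the paper only uses it as a black box to bound the size of a small prime modulus. Since $p_j<x$ holds as soon as $\pi(x)\ge j$ for some real $x>0$, it suffices to show that
\[
 \pi\bigl(j(\ln j+\ln\ln j)\bigr)\ge j\qquad\text{for every } j\ge6 .
\]

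The main input will be an explicit Chebyshev-type lower bound. I intend to use the Rosser--Schoenfeld estimate $\pi(x)>\frac{x}{\ln x}\bigl(1+\frac{1}{2\ln x}\bigr)$, valid for $x\ge59$, which is itself derived from explicit bounds on $\theta(x)=\sum_{p\le x}\ln p$ via partial summation. Put $x=j(\ln j+\ln\ln j)$ and $L=\ln j$. Then
\[
 \ln x=L+\ln(L+\ln L)=L+\ln L+\ln\Bigl(1+\tfrac{\ln L}{L}\Bigr)\le L+\ln L+\tfrac{\ln L}{L}.
\]
Hence
\[
 \frac{x}{\ln x}\ge j\cdot\frac{L+\ln L}{L+\ln L+\frac{\ln L}{L}}=j\Bigl(1-\frac{\ln L}{L(L+\ln L)+\ln L}\Bigr).
\]
Multiplying by $1+\frac{1}{2\ln x}$, it remains to check that
\[
 \frac{1}{2\ln x}\ \ge\ \frac{\ln L}{L^2+L\ln L}\,(1+o(1)).
\]
This holds for large $j$, because the left side is of order $1/(2L)$ while the right side is of order $\ln L/L^2$.

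The main obstacle is making this last comparison fully explicit, so that it holds for all $j$ beyond a concrete threshold $j_0$ with $x\ge59$; here $j_0$ will be moderately large, somewhere in the range from $20$ to a few hundred. I would first try to verify that the function
\[
 \phi(L)=\frac{1}{2(L+\ln L+\ln L/L)}-\frac{\ln L}{L^2+L\ln L}
\]
is positive for $L\ge\ln j_0$. To do so I would show that $\phi(L)>0$ is equivalent to a polynomial-type inequality in $L$ and $\ln L$ that is monotone on that range.

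For $6\le j<j_0$ I would verify the inequality directly from a table of primes. For example, $p_6=13<6(\ln6+\ln\ln6)\approx14.2$, and the margin grows with $j$. If the $\pi(x)$ bound above proves too weak near the threshold, I would fall back on the sharper $\theta$-based estimates of Rosser--Schoenfeld, or simply extend the finite check.
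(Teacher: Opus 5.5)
Your proposal is correct, but it takes a different route from the paper. The paper gives no argument of its own: it quotes (3.13) of Rosser--Schoenfeld as a black box. You instead rederive (3.13) from their lower bound (3.5), $\pi(x)>\frac{x}{\ln x}\bigl(1+\frac{1}{2\ln x}\bigr)$ for $x\ge59$, evaluated at $x=j(\ln j+\ln\ln j)$, together with a finite check. Citing is shorter and is all the paper needs, since the lemma only serves to bound $p_{T_m}$, i.e.\ to get $A_m=O_{q,k}(m^2(\log m)^2)$, and any bound $p_j=O(j\log j)$ would do there. Your route is self-contained modulo (3.5) and shows that the threshold $6$ is sharp, since $p_5=11>5(\ln5+\ln\ln5)\approx10.43$.

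Your plan can be tightened in three places.

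\textbf{The comparison is exact and holds everywhere.} With $\epsilon=\frac{\ln L}{L^2+L\ln L+\ln L}$, the requirement $(1-\epsilon)\bigl(1+\frac{1}{2\ln x}\bigr)\ge1$ is exactly $\frac{1}{2\ln x}\ge\frac{\epsilon}{1-\epsilon}=\frac{\ln L}{L^2+L\ln L}$, so no $1+o(1)$ factor is needed. Moreover, $\phi(L)>0$ is equivalent to $(L-2\ln L)(L+\ln L)>2(\ln L)^2/L$. This holds for every $L>1$, because $L-2\ln L\ge2-2\ln2$ and $L/\ln L\ge e$.

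\textbf{The threshold is $j_0=16$.} Since $\phi>0$ throughout, the only constraint is the range $x\ge59$ of (3.5), which means $j\ge16$. The table check therefore reduces to $6\le j\le15$. The tightest cases are $j=6$ and $j=7$, with $13<14.25$ and $17<18.28$. The margin is not monotone in $j$ (for example $29<31.37$ at $j=10$), so each case has to be checked individually.

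\textbf{Strictness.} The bound $\pi(x)\ge j$ alone only gives $p_j\le x$. Strictness comes from the strict inequality in (3.5), which yields $\pi(x)>j$, hence $\pi(x)\ge j+1$ and $p_j<p_{j+1}\le x$.
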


Set
\begin{equation*}
 T_m=1+(D_m-1)\floorenv{\log(P_m^2-1)}.
\end{equation*}
Every integer in $\sparenv{P_m^2-1}$ has at most
$\floorenv{\log(P_m^2-1)}$ distinct prime divisors, because the product of $t$ distinct primes is at least $2^t$. Consequently, among $p_1,\ldots,p_{T_m}$, there is a prime that divides none of the integers in the set
\begin{equation*}
    \mathset{\abs{g_m(\bmb)-g_m(\bmc)}:\bmc\in\cB_{k}(\bmb)\setminus\{\bmb\}}.
\end{equation*}
Let $a(\bmb)$ be the least such prime, and put
\begin{equation}\label{eq_aandr}
 r(\bmb)=g_m(\bmb)\bmod a(\bmb)
 \text{ and } A_m=p_{T_m}+1.
\end{equation}
Then $0\le r(\bmb)<a(\bmb)<A_m$. Since
$\log P_m=O_{q,k}(\log m)$ and $D_m=O_{q,k}(m^2)$, we have $T_m=O_{q,k}(m^2\log m)$. Applying \Cref{lem_jthprime} to $p_{T_m}$ gives $A_m=O_{q,k}\parenv{m^2(\log m)^2}$. Define the syndrome
\begin{equation*}
 \mathsf{H}_m(\bmb)=\parenv{a(\bmb),r(\bmb)}.
\end{equation*}

\begin{lemma}\label{lem_localrecovery}
Let $\bmb\in\Sigma_q^m$. 
\begin{enumerate}[$(i)$]
    \item The syndrome $\mathsf{H}_m(\bmb)$ can be computed in $O_{q,k}\parenv{m^8(\log m)^{O(1)}}$ bit operations.
    \item Given $\bmw\in\cE_{1,k}(\bmb)$ and $\mathsf{H}_m(\bmb)$, the sequence $\bmb$ can be uniquely recovered in $O_{q,k}\parenv{m^7(\log m)^{O(1)}}$ bit operations.
\end{enumerate}
\end{lemma}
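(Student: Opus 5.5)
For part (i), I would compute $\mathsf{H}_m(\bmb)$ directly from its definition. First, \Cref{lem_gm} gives $g_m(\bmb)$ in $O_{q,k}(m^6(\log m)^{O(1)})$ bit operations. Next, I would generate $\cB_k(\bmb)$ via \Cref{lem_confusionball} and compute $g_m(\bmc)$ for every $\bmc\in\cB_k(\bmb)\setminus\{\bmb\}$. Each such evaluation again costs $O_{q,k}(m^6(\log m)^{O(1)})$ by \Cref{lem_gm}, and there are at most $D_m=O_{q,k}(m^2)$ of them. This step dominates and gives the $m^8$ bound.

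To find $a(\bmb)$, I would form the differences $\abs{g_m(\bmb)-g_m(\bmc)}$, which are positive integers below $P_m^2$ with $O_{q,k}(\log m)$ bits. I would then enumerate the primes $p_1,p_2,\ldots,p_{T_m}$ by testing consecutive integers up to $p_{T_m}<A_m=O_{q,k}(m^2(\log m)^2)$ with \Cref{lem_primetest}, or with a simple sieve. For each prime, I would check divisibility against all $D_m$ differences. This costs $O(A_m(\log m)^{O(1)}+T_mD_m(\log m)^2)=O_{q,k}(m^4(\log m)^{O(1)})$ bit operations. The counting argument before \eqref{eq_aandr} guarantees that the search succeeds within $T_m$ primes. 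A final reduction then gives $r(\bmb)$.

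For part (ii), the key uniqueness claim is this. Let $\bmw\in\cE_{1,k}(\bmb)$. Every candidate $\bmc\in\Sigma_q^m$ with $\bmw\in\cE_{1,k}(\bmc)$ lies in $\cB_k(\bmb)$. If $\bmc\ne\bmb$, then $a(\bmb)$ does not divide $g_m(\bmb)-g_m(\bmc)$, so $g_m(\bmc)\bmod a(\bmb)\ne r(\bmb)$. Hence $\bmb$ is the unique candidate matching the syndrome.

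The decoder would work as follows. Enumerate $\cE_{1,k}(\bmw)\cap\Sigma_q^m$, which has $O_{q,k}(m)$ elements since $\abs{\bmw}\le m+k$, at cost $O_{q,k}(m^2)$, using the reversibility argument from \Cref{lem_confusionball}. For each candidate $\bmc$, compute $g_m(\bmc)$ at cost $O_{q,k}(m^6(\log m)^{O(1)})$ and compare $g_m(\bmc)\bmod a(\bmb)$ with $r(\bmb)$. Output the unique match. The total cost is $O_{q,k}(m\cdot m^6(\log m)^{O(1)})=O_{q,k}(m^7(\log m)^{O(1)})$.

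The main obstacle is bookkeeping rather than ideas. The decoder needs the prime $P_m$, so the prime-search cost must be included, and we must confirm that $\bmb$ itself appears among the enumerated candidates. It does, because $k$-substring edits are reversible, so $\bmb\in\cE_{1,k}(\bmw)$.
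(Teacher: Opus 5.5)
Your proposal is correct and follows essentially the same route as the paper. You evaluate $g_m$ on all of $\cB_k(\bmb)$, which is the dominant $O_{q,k}(m^8(\log m)^{O(1)})$ term, and then scan $p_1,\ldots,p_{T_m}$ for $a(\bmb)$. For recovery, you decode by brute force over the $O_{q,k}(m)$ length-$m$ members of $\cE_{1,k}(\bmw)$, using the fact that each of them lies in $\cB_k(\bmb)$ and that $a(\bmb)$ divides none of the differences $g_m(\bmb)-g_m(\bmc)$.
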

\begin{IEEEproof}
(i) To estimate the complexity of computing $\mathsf{H}_m(\bmb)$, we estimate the complexity of each step in the computation of $\mathsf{H}_m(\bmb)$. The syndrome $\mathsf{H}_m(\bmb)$ can be computed by the following steps:
\begin{itemize}
    \item \textbf{Step 1}: Generate $\cB_k(\bmb)$ using \Cref{lem_confusionball}. This step takes $O_{q,k}(m^3\log m)$ bit operations.
    \item \textbf{Step 2}: Compute $g_m(\bmc)$ for all $\bmc\in\cB_{k}(\bmb)$, including $\bmc=\bmb$. Since $\abs{\cB_{k}(\bmb)}\le D_m=O_{q,k}(m^2)$, it follows from \Cref{lem_gm} that this step takes $O_{q,k}\parenv{m^8(\log m)^{O(1)}}$ bit operations. Moreover, each $g_m(\bmc)$ has $O_{q,k}(\log m)$ bits.
    \item \textbf{Step 3}: For each $\bmc\in\cB_k(\bmb)\setminus\{\bmb\}$, put $\Delta_{\bmc}=\abs{g_m(\bmb)-g_m(\bmc)}$. Calculating all the differences $\Delta_{\bmc}$ takes $O_{q,k}(m^2\log m)$ bit operations.
    \item \textbf{Step 4}: Generate the first $T_m$ primes. To that end, test the integers $2,3,\ldots$ in order and retain those that are prime, stopping when $T_m$ primes have been obtained. By \Cref{lem_jthprime} and the bound $T_m=O_{q,k}(m^2\log m)$, we have $p_{T_m}=O_{q,k}(m^2(\log m)^2)$. At most $p_{T_m}-1$ integers are tested.  Each of these integers has $O_{q,k}(\log m)$ bits. By \Cref{lem_primetest}, each test uses $(\log m)^{O(1)}$ bit operations. Consequently, generating the first $T_m$ primes costs $O_{q,k}\parenv{m^2(\log m)^{O(1)}}$ bit operations.
    \item \textbf{Step 5}: Determine the prime $a(\bmb)$. For $1\le j\le T_m$, compute $\Delta_{\bmc}\bmod p_j$ for every $\bmc\in\cB_k(\bmb)\setminus\{\bmb\}$. The first prime for which all these remainders are nonzero is $a(\bmb)$. The analysis preceding this lemma guarantees that such a prime occurs among $p_1,\ldots,p_{T_m}$. There are at most $T_m(D_m-1)=O_{q,k}(m^4\log m)$ remainder computations. For every $\bmc$ and every $j\in[T_m]$, both $\Delta_{\bmc}$ and $p_j$ have $O_{q,k}(\log m)$ bits. Each remainder can be computed in $O_{q,k}((\log m)^2)$ bit operations. Thus, it takes $O_{q,k}(m^4(\log m)^3)$ bit operations to determine the prime $a(\bmb)$ in this step.
    \item \textbf{Step 6}: Finally, compute $r(\bmb)=g_m(\bmb)\bmod a(\bmb)$ and output $\mathsf{H}_m(\bmb)=(a(\bmb),r(\bmb))$. This last remainder computation takes $O_{q,k}((\log m)^2)$ bit operations.
\end{itemize}
Therefore, the computation of $\mathsf{H}_m(\bmb)$ costs
\begin{align*}
    &O_{q,k}(m^3\log m)+O_{q,k}\parenv{m^8(\log m)^{O(1)}}+O_{q,k}(m^2\log m)\\
    &+O_{q,k}\parenv{m^2(\log m)^{O(1)}}+O_{q,k}(m^4(\log m)^3)+O_{q,k}((\log m)^2)\\
    =&O_{q,k}\parenv{m^8(\log m)^{O(1)}}
\end{align*}
bit operations.

\vspace{5pt}
(ii) We next describe the recovery procedure.
Let $(a,r)=\mathsf{H}_m(\bmb)$ and define $\cL(\bmw)
=\cE_{1,k}(\bmw)\cap\Sigma_q^m$. By an argument similar to the proof of \Cref{lem_confusionball}, we can show that $\abs{\cL(\bmw)}=O_{q,k}(m)$, and that $\cL(\bmw)$ can be generated without repetitions in $O_{q,k}(m^2\log m)$ bit operations.

For each $\bmc\in\cL(\bmw)$, compute $g_m(\bmc)$
and test whether $g_m(\bmc)\bmod a=r$. The sequence $\bmb$ belongs to $\cL(\bmw)$ and passes this test. Moreover, every $\bmc\in\cL(\bmw)$ belongs to
$\cB_k(\bmb)$. If a sequence $\bmc\ne\bmb$ also passed the test, then $a$ would divide $g_m(\bmb)-g_m(\bmc)$, contrary to the definition of $a=a(\bmb)$. Therefore, exactly one sequence passes the test, and this sequence is $\bmb$.

By \Cref{lem_gm}, one evaluation of $g_m$ costs $O_{q,k}\parenv{m^6(\log m)^{O(1)}}$ bit operations.
There are $O_{q,k}(m)$ candidates. Evaluating $g_m$ on all of them therefore costs
$O_{q,k}\parenv{m^7(\log m)^{O(1)}}$ bit operations.
Since $g_m(\bmc)<P_m^2$ and $a\le p_{T_m}$,
each remainder computation and comparison with $r$
takes $O_{q,k}((\log m)^2)$ bit operations.
All these tests together take
$O_{q,k}(m(\log m)^2)$ bit operations.
Therefore, the recovery procedure costs 
$$O_{q,k}(m^2\log m)+O_{q,k}\parenv{m^7(\log m)^{O(1)}}+O_{q,k}(m(\log m)^2)=O_{q,k}\parenv{m^7(\log m)^{O(1)}}$$
bit operations.
\end{IEEEproof}

\subsection{The Code for Local Recovery}\label{sec_blockpartitions}
Let $W\ge k$ be a positive integer, and put
$m=2W$ and $s=W=m/2$. For each $\sigma\in\{0,s\}$, let $r_{\sigma}$ be the unique integer in $[m]$ for which
$\sigma+n+r_{\sigma}$ is divisible by $m$. For $\bmx\in\Sigma_q^n$, define
\begin{equation*}
 \bmx^{(\sigma)}=0^{\sigma}\bmx0^{r_{\sigma}},
 \qquad
 b_\sigma=\frac{\sigma+n+r_{\sigma}}{m}.
\end{equation*}
For $j\in[b_\sigma]$, the $j$-th length-$m$ block of $\bmx^{(\sigma)}$ is
\begin{equation*}
 \bmx_j^{(\sigma)}
 \triangleq\bmx^{(\sigma)}_{[(j-1)m+1,jm]}.
\end{equation*}

The following lemma explains the use of two shifted block partitions.
\begin{lemma}\label{lem_windowcontained}
Let $[i_1,i_2]\subseteq[n]$ with $i_2-i_1<W$. There is $\sigma\in\{0,s\}$ such that the shifted interval $[\sigma+i_1,\sigma+i_2]$ lies in one length-$m$ block of $\bmx^{(\sigma)}$. In other words, there is some $h\in[b_\sigma]$ such that $\sparenv{\sigma+i_1,\sigma+i_2]\subseteq[(h-1)m+1,hm}$.
\end{lemma}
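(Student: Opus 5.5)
The plan is to count how many block boundaries fall inside the interval, and to use the half-block shift to dodge them. In the padded sequence $\bmx^{(\sigma)}$, coordinate $i$ of $\bmx$ sits at position $\sigma+i$. The blocks of $\bmx^{(\sigma)}$ are the intervals $[(h-1)m+1,hm]$ for $h\in[b_\sigma]$, and they partition $[1,b_\sigma m]$. Since $\sigma+n+r_\sigma=b_\sigma m$, every shifted position $\sigma+i$ with $i\in[n]$ lies in $[1,b_\sigma m]$. So the shifted interval $J_\sigma=[\sigma+i_1,\sigma+i_2]$ is a subset of $[1,b_\sigma m]$. It lies in a single block exactly when no block boundary separates two of its consecutive positions. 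Equivalently, there must be no integer $t$ with $\sigma+i_1\le tm<\sigma+i_2$.

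I will argue by contradiction. Suppose the claim fails for both $\sigma=0$ and $\sigma=s=W$. Then there are integers $t_0,t_1$ with
\begin{equation*}
 i_1\le t_0m<i_2,\qquad i_1\le t_1m-W<i_2,
\end{equation*}
where the second chain is the condition for $\sigma=W$ after subtracting $W$. So both $t_0m=2Wt_0$ and $t_1m-W=W(2t_1-1)$ lie in the half-open interval $[i_1,i_2)$. The first is an even multiple of $W$ and the second an odd multiple, so they are distinct multiples of $W$. Their difference therefore has absolute value at least $W$. But both points lie in $[i_1,i_2)$, so their difference is at most $i_2-i_1-1<W$. This is a contradiction.

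The degenerate case $i_1=i_2$ needs no separate treatment: then $[i_1,i_2)$ is empty, so no boundary can separate positions and $\sigma=0$ already works. The one step needing care is the precise boundary condition, namely that $J_\sigma$ straddles two blocks iff some $tm$ satisfies $\sigma+i_1\le tm<\sigma+i_2$. I will check this directly from the block definition: if no such $t$ exists, then $J_\sigma\subseteq[(h-1)m+1,hm]$ with $h=\lceil(\sigma+i_1)/m\rceil$. Nothing else is substantive. The strict inequality $i_2-i_1<W$ is exactly what makes the parity argument go through.
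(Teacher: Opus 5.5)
Your proof is correct and takes the same route as the paper: both compare the block indices $\lfloor(i_1+\sigma-1)/m\rfloor$ and $\lfloor(i_2+\sigma-1)/m\rfloor$ for $\sigma\in\{0,s\}$ and pick a shift where they coincide. The paper only asserts that such a $\sigma$ must exist, and your argument supplies the missing justification: the boundaries for the two shifts are the even and the odd multiples of $W$, respectively, and two such multiples cannot both lie in $[i_1,i_2)$, which contains fewer than $W$ integers.
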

\begin{IEEEproof}
For each $\sigma\in\{0,s\}$, compute
$\lfloor(i_1+\sigma-1)/m\rfloor$ and
$\lfloor(i_2+\sigma-1)/m\rfloor$. Choose a value of $\sigma$ for which these two integers are equal (such a $\sigma$ must exist), and let $h$ be their common value plus one. Then the index $h$ satisfies our requirement.
\end{IEEEproof}

Recall $a(\cdot)$, $r(\cdot)$ and $A_m$ from \eqref{eq_aandr}. For $\sigma\in\{0,s\}$, define
\begin{equation}\label{eq_fandg}
\begin{aligned}
 f_{\sigma,m}(\bmx)=
 \sum_{j=1}^{b_\sigma}a(\bmx_j^{(\sigma)})
 \pmod{A_m},\\
 g_{\sigma,m}(\bmx)=
 \sum_{j=1}^{b_\sigma}r(\bmx_j^{(\sigma)})
 \pmod{A_m}.
\end{aligned}
\end{equation}

\begin{lemma}\label{lem_completion}
Let $\bmx\in\Sigma_q^n$ and $\bmy$ be obtained from $\bmx$ by a $k$-substring edit. Suppose that an interval
$I=[i_1,i_2]\subseteq[n]$ of length at most $W$ is known and that $I$ covers the edit region in $\bmx$. Then $\bmx$ is uniquely determined by $\bmy$ and
\begin{equation*}
 \parenv{f_{0,m}(\bmx),g_{0,m}(\bmx), f_{s,m}(\bmx),g_{s,m}(\bmx)}.
\end{equation*}
The recovery takes
$O_{q,k}(nm^7(\log m)^{O(1)})$ bit operations. In particular, if
$W=(\log n)^{O_{q,k}(1)}$, this is at most 
$n(\log n)^{O_{q,k}(1)}$ for all sufficiently large $n$.
\end{lemma}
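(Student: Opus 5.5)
The plan is to select a shift $\sigma$ so that the whole edit sits inside one block. Then every other block can be read off $\bmy$, the known modular sums give $\mathsf{H}_m$ of the affected block, and \Cref{lem_localrecovery}~(ii) recovers that block.

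First, since $\abs{I}\le W$ we have $i_2-i_1<W$. \Cref{lem_windowcontained} then gives $\sigma\in\{0,s\}$ and $h\in[b_\sigma]$ with $[\sigma+i_1,\sigma+i_2]\subseteq[(h-1)m+1,hm]$. This choice uses only $I$, not $\bmx$.

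Next, the edit region of $\bmx$ is covered by $I$. One technical point is an insertion at a gap $g=i_2+1$, which \Cref{def_covereditregion} permits. In that case the gap sits at the right boundary of block $h$. It should be treated as an edit of block $h$, that is, as appending $\bmt$ to $\bmx_h^{(\sigma)}$; the symmetric convention applies at the left end. With this convention, $\bmx^{(\sigma)}=\bmP\,\bmx_h^{(\sigma)}\,\bmS$, where $\bmP$ has length $(h-1)m$ and consists of the blocks $j<h$, and $\bmS$ consists of the blocks $j>h$. We also have $0^\sigma\bmy0^{r_\sigma}=\bmP\,\bmb'\,\bmS$ for some $\bmb'\in\cE_{1,k}(\bmx_h^{(\sigma)})$.

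The decoder reads $\bmP$ as the first $(h-1)m$ symbols of $0^\sigma\bmy0^{r_\sigma}$. It reads $\bmS$ as the last $(b_\sigma-h)m$ symbols, and $\bmb'$ as the middle portion, whose length is $m+d$ with $d=\abs{\bmy}-n$. Here $r_\sigma$ is computed from $n$, which the decoder knows, and not from $\abs{\bmy}$. So every block $\bmx_j^{(\sigma)}$ with $j\ne h$ is known. The decoder computes $a(\bmx_j^{(\sigma)})$ and $r(\bmx_j^{(\sigma)})$ for these blocks and subtracts the totals from $f_{\sigma,m}(\bmx)$ and $g_{\sigma,m}(\bmx)$ modulo $A_m$. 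This yields $a(\bmx_h^{(\sigma)})\bmod A_m$ and $r(\bmx_h^{(\sigma)})\bmod A_m$. Both values lie in $[0,A_m-1]$ by \eqref{eq_aandr}, so they are recovered exactly, and hence $\mathsf{H}_m(\bmx_h^{(\sigma)})$ is known. \Cref{lem_localrecovery}~(ii), applied to $\bmb'$, then recovers $\bmx_h^{(\sigma)}$. Reassembling the blocks and stripping the padding gives $\bmx$, which is therefore uniquely determined.

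For complexity, the decoder computes $\mathsf{H}_m$ on $b_\sigma-1=O(n/m)$ blocks, each at cost $O_{q,k}(m^8(\log m)^{O(1)})$ by \Cref{lem_localrecovery}~(i). That totals $O_{q,k}(nm^7(\log m)^{O(1)})$. The modular sums involve numbers of $O_{q,k}(\log m+\log n)$ bits, and the single local recovery costs $O_{q,k}(m^7(\log m)^{O(1)})$. With $W=(\log n)^{O_{q,k}(1)}$ we have $m=2W$, so the bound becomes $n(\log n)^{O_{q,k}(1)}$.

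The main obstacle is the boundary bookkeeping: making sure that blocks other than $h$ are read off $\bmy$ with the correct offset on each side, including endpoint insertions and the case where the edit changes the length. Handling this cleanly requires fixing the right portion by its distance from the end of the padded $\bmy$, with $r_\sigma$ computed from $n$.
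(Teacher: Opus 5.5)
Your proposal is correct and follows the paper's proof essentially step for step. You pick $(\sigma,h)$ by \Cref{lem_windowcontained}, read every block $j\ne h$ off $0^{\sigma}\bmy0^{r_\sigma}$ (shifting those after block $h$ by $d=\abs{\bmy}-n$), recover $a(\bmx_h^{(\sigma)})$ and $r(\bmx_h^{(\sigma)})$ by modular subtraction, and finish with \Cref{lem_localrecovery}~(ii), giving the same $O(n/m)\cdot O_{q,k}(m^8(\log m)^{O(1)})$ count; your explicit treatment of an insertion at a block boundary is what the paper obtains implicitly from the endpoint-gap convention in \Cref{def_covereditregion}.

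One small refinement: reduce the partial sums modulo $A_m$ after each addition, as the paper does, so the summation costs $O_{q,k}(b_\sigma\log m)$ rather than involving $O(\log n)$-bit integers. This keeps the general bound $O_{q,k}(nm^7(\log m)^{O(1)})$ valid even when $m$ is small relative to $\log n$.
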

\begin{IEEEproof}
Since the interval $I$ is known, 
we can choose $\sigma\in\{0,s\}$ and $h\in[b_\sigma]$ as in \Cref{lem_windowcontained}. Let $\bmy^{(\sigma)}=0^\sigma\bmy0^{r_{\sigma}}$. The sequence $\bmy^{(\sigma)}$ is obtained from $\bmx^{(\sigma)}$ by a $k$-substring edit. Moreover, the edit region in $\bmx^{(\sigma)}$ is covered by the block $\bmx_h^{(\sigma)}$. Put $d=\abs{\bmy}-\abs{\bmx}=\abs{\bmy}-n$.

For $j<h$, the $j$-th block lies before the edit and therefore, $\bmx_j^{(\sigma)}=\bmy^{(\sigma)}_{[(j-1)m+1,jm]}$. For $j>h$, the $j$-th block lies after the edit and is shifted by $d$, implying $\bmx_j^{(\sigma)}=\bmy^{(\sigma)}_{[(j-1)m+1+d,jm+d]}$. Thus the receiver obtains every block except $\bmx_h^{(\sigma)}$ from $\bmy^{(\sigma)}$ and can compute $a$ and $r$ for all those blocks. It then follows from \eqref{eq_fandg} that
\begin{equation}\label{eq_amrm}
 \begin{aligned}
 a\parenv{\bmx_h^{(\sigma)}}
 &\equiv f_{\sigma,m}(\bmx)
 -\sum_{j\ne h}a\parenv{\bmx_j^{(\sigma)}}
 \pmod{A_m},\\
 r\parenv{\bmx_h^{(\sigma)}}
 &\equiv g_{\sigma,m}(\bmx)
 -\sum_{j\ne h}r\parenv{\bmx_j^{(\sigma)}}
 \pmod{A_m}.
 \end{aligned}
\end{equation}
Since $a\parenv{\bmx_h^{(\sigma)}},r\parenv{\bmx_h^{(\sigma)}}\in[0,A_m-1]$, the exact values of $a\parenv{\bmx_h^{(\sigma)}}$ and $r\parenv{\bmx_h^{(\sigma)}}$ are uniquely determined by \eqref{eq_amrm}.

Let $\ell=(h-1)m+1$. Since the edit region in $\bmx^{(\sigma)}$ is covered by $[\ell,\ell+m-1]$, the sequence $\bmy^{(\sigma)}_{[\ell,\ell+m-1+d]}$ is obtained from $\bmx_h^{(\sigma)}$ by the same $k$-substring edit. By \Cref{lem_localrecovery}, sequence $\bmy^{(\sigma)}_{[\ell,\ell+m-1+d]}$ and the recovered pair $\parenv{a\parenv{\bmx_h^{(\sigma)}},r\parenv{\bmx_h^{(\sigma)}}}$ determine
$\bmx_h^{(\sigma)}$ uniquely. Inserting that block among the already known blocks and deleting the $\sigma$ left zeros and the $r_{\sigma}$ right zeros recovers $\bmx$.

We now analyze the bit complexity of the recovery procedure. By the proof of \Cref{lem_windowcontained}, determining $\sigma$ and $h$ requires only a constant number of additions and integer divisions on $O(\log(n+m))$-bit integers, and hence takes $O((\log(n+m))^2)$ bit operations.

We need to compute the syndrome $\mathsf{H}_m$ for the $b_\sigma-1$ known blocks. Since $1\le r_\sigma\le m$ and
$\sigma\in\{0,m/2\}$, we have $b_\sigma-1=\floorenv{(n+\sigma)/m}\le \frac{2n}{m}$.
By \Cref{lem_localrecovery}, computing the syndrome
$\mathsf{H}_m\parenv{\bmx_j^{(\sigma)}}=\parenv{a\parenv{\bmx_j^{(\sigma)}},r\parenv{\bmx_j^{(\sigma)}}}$
for one known block takes $O_{q,k}\parenv{m^8(\log m)^{O(1)}}$ bit operations. There are at most $2n/m$ such blocks. Therefore, computing all these syndromes takes $O_{q,k}\parenv{nm^7(\log m)^{O(1)}}$ bit operations.

To compute the two sums in \eqref{eq_amrm}, we reduce
each partial sum modulo $A_m$ after every addition.
Thus, every partial sum remains in $[0,A_m-1]$.
Since $A_m=O_{q,k}(m^2(\log m)^2)$, each addition
involves $O_{q,k}(\log m)$-bit integers. Reduction requires at most one subtraction of $A_m$. Consequently, computing both sums and performing the two final modular subtractions in \eqref{eq_amrm} takes $O_{q,k}(b_\sigma\log m)$ bit operations.

With the recovered syndrome $\parenv{a\parenv{\bmx_h^{(\sigma)}},r\parenv{\bmx_h^{(\sigma)}}}$, \Cref{lem_localrecovery}
recovers the missing block $\bmx_h^{(\sigma)}$ in $O_{q,k}\parenv{m^7(\log m)^{O(1)}}$ bit operations. The padded sequences have length $O(n+m)$. Reading the known blocks, extracting the received sequence corresponding to the missing block, and assembling the recovered sequence can therefore be performed by sequential copying in $O_{q,k}(n+m)$ bit operations. If the parameters $P_m$, $T_m$, and $A_m$ have not already been computed, their computation takes $O_{q,k}\parenv{m^3(\log m)^{O(1)}}$ bit operations, as established previously. Combining these costs gives the total recovery complexity $O_{q,k}\parenv{nm^7(\log m)^{O(1)}}$.

Finally, if $W=(\log n)^{O_{q,k}(1)}$, then
$m=2W=(\log n)^{O_{q,k}(1)}$. The factor $m^7(\log m)^{O(1)}$ is consequently bounded by $(\log n)^{O_{q,k}(1)}$ for all sufficiently large $n$. This gives the stated bound $n(\log n)^{O_{q,k}(1)}$.
\end{IEEEproof}

\subsection{The Overall Code Construction}\label{sec_finalsubstingcode}
Recall the definition of $W_{n,k}$ from \eqref{eq_specificinterval}. In \Cref{sec_blockpartitions}, set
\begin{equation*}
 W=W_{n,k},\; m=2W_{n,k},\; s=W_{n,k}.
\end{equation*}
Recall the definition of $\mathsf{Loc}_{\rm edit}(\bmx)$ from \eqref{eq_LocSyn}. Define
\begin{equation*}
\mathsf{Syn}_{\rm edit}(\bmx)=\parenv{\mathsf{Loc}_{\rm edit}(\bmx),f_{0,m}(\bmx),g_{0,m}(\bmx),
f_{s,m}(\bmx),g_{s,m}(\bmx)}.
\end{equation*}
Recall that $\cD_{q,k}(n)$ is the set of all $\parenv{\bmp_k,\ell_{n,k}}$-dense sequences in $\Sigma_q^n$. For each possible value $\theta$ of this function, define
\begin{equation*}
 \cC_{\rm edit}(n;\theta)\triangleq
 \{\bmx\in\cD_{q,k}(n):
 \mathsf{Syn}_{\rm edit}(\bmx)=\theta\}.
\end{equation*}

\begin{theorem}\label{thm_codefine}
Fix $q\ge2$ and $k\ge1$. For all sufficiently large $n$, every nonempty set $\cC_{\rm edit}(n;\theta)$ is a $k$-substring edit correcting code. The decoding uses at most $n(\log n)^{O_{q,k}(1)}$ bit operations. Moreover, there is some $\theta$ such that
$$
 \red(\cC_{\rm edit}(n;\theta))
 \le \log n+8\log\log n+8\log\log\log n+O_{q,k}(1).
$$
\end{theorem}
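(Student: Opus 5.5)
The proof splits into three parts: correctness, decoding complexity, and redundancy.

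\textbf{Correctness.} Fix $\theta$ and let $\bmx,\bmx'\in\cC_{\rm edit}(n;\theta)$ with a common $\bmy\in\cE_{1,k}(\bmx)\cap\cE_{1,k}(\bmx')$. It suffices to give a decoder that outputs $\bmx$ from $\bmy$ and $\theta$ alone, because the same decoder applied to $\bmx'$ then also outputs $\bmx'$. Since $\bmx\in\cD_{q,k}(n)$ and $n$ is large, \Cref{thm_specificlocator} applies to $\bmy$ and $\mathsf{Loc}_{\rm edit}(\bmx)$, which is the first component of $\theta$.
\begin{itemize}
\item If the test in step (ii) of that theorem reports $d=E=0$, then $\bmy=\bmx$, and we output $\bmy$.
\item Otherwise we obtain an interval $I\subseteq[n]$ covering the edit region in $\bmx$ with $\abs{I}\le W_{n,k}=W$. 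Since $W\ge k$ holds trivially, we invoke \Cref{lem_completion} with the remaining four components of $\theta$, which recovers $\bmx$ uniquely.
\end{itemize}
This also handles a trivial edit, since then $\bmy=\bmx$.

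\textbf{Complexity.} Localization costs $O_{q,k}(n(\log n)^2)$ by \Cref{thm_specificlocator}. Since $W_{n,k}=O_{q,k}(\log n)$, completion costs $n(\log n)^{O_{q,k}(1)}$ by \Cref{lem_completion}. The total is $n(\log n)^{O_{q,k}(1)}$.

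\textbf{Redundancy.} We use pigeonhole. By \Cref{lem_pdense}(i), $\abs{\cD_{q,k}(n)}\ge q^{n-1}$. The syndrome $\mathsf{Syn}_{\rm edit}$ takes at most
\[
\Theta=5\,M_{n,k}\,R_k\,(2\Lambda_{q,k}+1)^2A_m^4
\]
values, so some $\theta$ gives $\abs{\cC_{\rm edit}(n;\theta)}\ge q^{n-1}/\Theta$. Hence
\[
\red\le\log q+\log\Theta.
\]
We bound each factor of $\Theta$:
\begin{itemize}
\item $R_k$ and $\Lambda_{q,k}$ are constants depending only on $q$ and $k$.
\item $M_{n,k}=18(S_{n,k}+A_{n,k})=O_{q,k}(n)$, since $S_{n,k}=O_{q,k}(n)$ and $A_{n,k}=O_{q,k}(\log n)$. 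This factor contributes $\log n+O_{q,k}(1)$.
\item $A_m=p_{T_m}+1$, and this is where the constants must be tracked carefully.
\end{itemize}

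\textbf{Main obstacle: the bound on $A_m$.} The coefficient $8$ on $\log\log n$ requires $\log A_m\le 2\log m+2\log\log m+O_{q,k}(1)$ exactly, not merely $O(\log m)$. The chain is as follows.
\begin{itemize}
\item $D_m=U_mU_{m+k}=O_{q,k}(m^2)$.
\item $P_m<2N=O_{q,k}(m^3)$, so $\floorenv{\log(P_m^2-1)}=O_{q,k}(\log m)$.
\item Therefore $T_m\le c_{q,k}\,m^2\log m$ for a constant $c_{q,k}$.
\item By \Cref{lem_jthprime}, $p_{T_m}<T_m(\ln T_m+\ln\ln T_m)=O_{q,k}(m^2(\log m)^2)$.
\end{itemize}
This gives $4\log A_m\le 8\log m+8\log\log m+O_{q,k}(1)$. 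Since $m=2W_{n,k}=\Theta_{q,k}(\log n)$, we have $\log m=\log\log n+O_{q,k}(1)$ and $\log\log m=\log\log\log n+o(1)$.

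Summing the contributions gives $\log n+8\log\log n+8\log\log\log n+O_{q,k}(1)$. "Sufficiently large $n$" must also guarantee the hypotheses $n\ge\max\{\ell_{n,k},\,q^{(\cdot)}\}$ of \Cref{lem_pdense}, $T_m\ge 6$, and $m\ge\max\{k,2\}$.
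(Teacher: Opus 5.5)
Your proposal is correct and follows the paper's proof essentially step for step. You decode via \Cref{thm_specificlocator} followed by \Cref{lem_completion}, sum the two complexity bounds, and apply pigeonhole over the at most $5M_{n,k}R_k(2\Lambda_{q,k}+1)^2A_m^4$ syndrome values, using $A_m=O_{q,k}(m^2(\log m)^2)$ and $m=\Theta_{q,k}(\log n)$. Your explicit chain $D_m\to P_m\to T_m\to p_{T_m}$ and the side conditions ($T_m\ge6$, $m\ge\max\{k,2\}$, $W\ge k$) spell out details the paper leaves implicit.
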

\begin{IEEEproof}
Let $\bmx\in\cC_{\rm edit}(n;\theta)$, and let $\bmy$ be obtained from $\bmx$ by one $k$-substring edit. The syndrome $\mathsf{Loc}_{\rm edit}(\bmx)$ is part of the fixed value $\theta$. By \Cref{thm_specificlocator}, the receiver first determines whether $\bmy=\bmx$.  If equality holds, it outputs $\bmy$. Otherwise, \Cref{thm_specificlocator} gives an interval of length at most $W_{n,k}$ containing the edit region in $\bmx$.
The remaining four entries of $\theta$ are exactly the four values required in \Cref{lem_completion}, which then recovers $\bmx$ uniquely. Therefore, the code $\cC_{\rm edit}(n;\theta)$ is a $k$-substring edit correcting code.

Next, we estimate the decoding complexity. The receiver uses \Cref{thm_specificlocator} to test for no error and, otherwise, to locate the edit. This costs $O_{q,k}(n(\log n)^2)$ bit operations. For a nontrivial edit, \Cref{lem_completion} recovers the sequence using $O_{q,k}\parenv{nm^7(\log m)^{O(1)}}$ additional bit operations. That bound includes computing the known block syndromes, recovering the missing syndrome by modular subtraction, and recovering and inserting the missing block. For no error, copying the received sequence costs $O_q(n)$ bit operations. Hence the same bound covers both cases. Substituting $m=\Theta_{q,k}(\log n)$ gives the bound $n(\log n)^{O_{q,k}(1)}$.

To bound the redundancy, we need to bound the number of possible values of $\theta$. By \eqref{eq_LocSyn}, the first component $\mathsf{Loc}_{\rm edit}(\bmx)$ has at most $5M_{n,k}R_{k}\parenv{2\Lambda_{q,k}+1}^2$ values. Each of the remaining four components is a residue modulo $A_{m}$. Hence, the number of possible values of $\theta$ is at most $5M_{n,k}R_{k}\parenv{2\Lambda_{q,k}+1}^2A_m^4$.
By \Cref{lem_pdense}, $\abs{\cD_{q,k}(n)}\ge q^{n-1}$. Therefore, the pigeonhole principle ensures that there is a choice of $\theta$ such that
$$
 \abs{\cC_{\rm edit}(n;\theta)}\ge\frac{q^{n-1}}{5M_{n,k}R_{k}\parenv{2\Lambda_{q,k}+1}^2A_m^4},
$$
or equivalently,
$$
 \begin{aligned}
 \red\parenv{\cC_{\rm edit}(n;\theta)}
 &\le \log q+\log\parenv{5M_{n,k}R_{k}\parenv{2\Lambda_{q,k}+1}^2}
 +4\log A_{m}\\
 &\le \log n+8\log m
 +8\log\log m+O_{q,k}(1).
 \end{aligned}
$$
Indeed, $M_{n,k}=\Theta_{q,k}(n)$, whereas $R_k$ and
$2\Lambda_{q,k}+1$ are constants for fixed $q,k$, and
$A_m=O_{q,k}(m^2(\log m)^2)$. By \eqref{eq_ellnk}, we have $\ell_{n,k}=\Theta_{q,k}(\log n)$. It follows from \eqref{eq_Amu} and \eqref{eq_specificinterval} that $A_{n,k}=\Theta_{q,k}(\log n)$ and $W_{n,k}=\Theta_{q,k}(\log n)$. Hence
$m=\Theta_{q,k}(\log n)$, and therefore
$$
 \log m=\log\log n+O_{q,k}(1),
 \;
 \log\log m=\log\log\log n+O_{q,k}(1),
$$
which proves the redundancy bound.
\end{IEEEproof}

\section{Codes Correcting One Unequal-Length \texorpdfstring{$k$}{k}-Substring Edit}\label{sec_unequal}
In \Cref{thm_codefine}, we construct a $k$-substring edit correcting code with redundancy
$\log n+O_{q,k}\parenv{\log\log n}$.
For fixed $q$ and $k$, the lower bound in
\eqref{eq_lowredfixedqandb} is $\log n-O_{q,k}(1)$.
Thus the gap between these bounds is
$O_{q,k}\parenv{\log\log n}$. 

A $k$-substring edit may preserve the sequence length. In this section, we consider only edits that change it. We seek a code that allows us to recover any codeword $\bmx$ from any $\bmy\in\cE_{1,k}(\bmx)$ satisfying
$\abs{\bmy}\ne\abs{\bmx}$. We call such a code an unequal-length $k$-substring edit correcting code, as defined below.
\begin{definition}\label{def_unequalball}
For $\bmx\in\Sigma_q^n$, define
\begin{equation}\label{uneq_eq_model}
 \cE_{1,k}^{\ne}(\bmx)\triangleq\mathset{\bmz\in\cE_{1,k}(\bmx):\abs{\bmz}\ne n}.
\end{equation}
A nonempty set $\cC\subseteq\Sigma_q^n$ is an \emph{unequal-length $k$-substring edit correcting code} if $\cE_{1,k}^{\ne}(\bmx)\cap\cE_{1,k}^{\ne}(\bmy)=\varnothing$ for any distinct $\bmx,\bmy\in\cC$.
\end{definition}

In this section, we will construct a length-$n$ unequal-length $k$-substring edit correcting code with redundancy $\log n+O_{q,k}(1)$. Recall \Cref{def_substringedit} and the representation
in \eqref{eq_editbasic}. Recall that a $(\le t)$-burst-deletion or a $(t,T)$-localized deletion is a special $k$-substring edit by setting $k=t$ and $k=T$, respectively. Our construction also gives a $(\le k)$-burst-deletion correcting code and a $(t,k)$-localized deletion correcting code.

The main result of this section is presented in \Cref{thm_uneqfinalcode}. We first outline the main idea and explain the role of the unequal-length condition in \Cref{sec_outlineuneq}. The results needed for the construction are then proved in \Cref{sec_problemreduction,sec_markerrecovery,sec_labelrecovery}.

\subsection{Outline of the Idea}\label{sec_outlineuneq}
The construction is motivated by the following observation: a periodic sequence can be recovered from its length and a prefix containing one full period.

For a sequence $\bmu\in\Sigma_q^m$ and an integer $d\in[m]$, we say that $d$ is a period of $\bmu$ if
$u_i=u_{i+d}$ for every $i\in[m-d]$. Thus, a sequence with period $d$ is determined by its length and its first $d$ symbols. For example, let $\bmu=(01)^r$ with $r\ge2$. Deleting any of the $r$ displayed copies of $01$ gives $(01)^{r-1}$. Although the deleted copy cannot be identified, the original sequence is determined once its length and the prefix $01$ are known. This example shows that a long sequence need not require much recovery information if it has a fixed small period.

To use this observation for an arbitrary sequence $\bmx\in\Sigma_q^n$, we first identify where periodicity fails. Set $\ell_{\ast}=\lcm(1,\ldots,k)$ and call a coordinate $i$ \emph{special} if $x_i\ne x_{i+\ell_{\ast}}$, where we set $x_i=0$ for $i\notin[n]$. A special coordinate is a \emph{marker} if its length-$(\ell_{\ast}+1)$ substring is lexicographically smaller than those at all other special coordinates within distance $k$. As will be shown in \Cref{lem_markerproperty}, since $\ell_{\ast}$ is divisible by every integer in $[k]$, two length-$(\ell_{\ast}+1)$ substrings starting at special coordinates within distance $k$ cannot be equal. Consequently, any two markers are at least $k+1$ apart. Moreover, every special coordinate is within a constant distance of a marker, where the constant depends only on $q$ and $k$.

We partition $0\bmx$ into segments at these markers. The properties above ensure that after removing a prefix and a suffix of suitable constant lengths from a long segment, the remaining middle of that segment has period $\ell_{\ast}$. The known prefix can therefore be extended periodically to recover the middle. Short segments are covered entirely by their recorded prefixes and suffixes. Consequently, every segment is determined by its length and its recorded prefix and suffix, which form its \emph{label}. See \Cref{lem_label} for details. Recovering $\bmx$ therefore reduces to recovering the marker positions, which determine all segment lengths, and the segment labels.

We record the marker positions of $\bmx$ in the binary sequence $\bm{\chi}_k(\bmx)$: for $i\in[n]$, $\chi_k(\bmx)_i=1$ if and only if $i$ is a marker of $\bmx$. Then recovering the marker positions is equivalent to recovering $\bm{\chi}_k(\bmx)$. Let $\bmy$ be obtained from $\bmx$ by a $k$-substring edit, and set $h=n-\abs{\bmy}$. Since each marker decision depends only on a window of constant length, the $k$-substring edit can change marker decisions only near the edit region. Outside this neighborhood, the sequence $\bm{\chi}_k(\bmx)$ has an unchanged prefix and a suffix shifted by $h$. More precisely, \Cref{lem_inducedThburst} shows that $\bm{\chi}_k(\bmy)$ is obtained from $\bm{\chi}_k(\bmx)$ by replacing $T_{\ast}$ consecutive bits by $T_{\ast}-h$ bits, where
$T_{\ast}=\ell_{\ast}+3k$. We call this replacement a \emph{$(T_{\ast},T_{\ast}-h)$-burst error}. An important observation is that $\bm{\chi}_k(\bmy)$ is \emph{$k$-sparse}: any two ones are separated by at least $k$ zeros, since any two markers of $\bmy$ are at least $k+1$ positions apart.

Therefore, to recover $\bm{\chi}_k(\bmx)$, we need to correct a $(T_{\ast},T_{\ast}-h)$-burst error for fixed $T_{\ast}$ and every $h\in[-k,k]\setminus\{0\}$ as long as the received sequence is $k$-sparse. In \Cref{prop_markerrecovery}, we construct a syndrome representable by $\log n+O_k(1)$ bits that allows this recovery. This is the first point in the construction where we use the unequal-length condition $h\ne0$. See the proof for details.

Once the marker sequence $\bm{\chi}_k(\bmx)$ is recovered, all segment boundaries and lengths are known. It remains to recover the segment labels. We do this using a constant number of sums of labels. The key property is that any two possible original sequences with the recovered marker sequence and the same received sequence $\bmy$ can differ only among at most $2U_{\ast}$ consecutive segments, where $U_{\ast}=T_{\ast}+k$. This property also relies on the condition $h\ne0$. See \Cref{lem_affectedsegments} and its proof.
 
All segment labels are sequences over $\Sigma_q$ of the same length, which depends only on $q$ and $k$. To protect the labels, we assign them in order to $2U_{\ast}$ groups, returning to the first group after the last group. For each group, we store the sum of its labels, component-wise modulo $q$. Any $2U_{\ast}$ consecutive segments belong to different groups. Hence, when comparing two possible original sequences, each group contains at most one label that differs. Equality of the group sums therefore forces all labels to agree (see \Cref{prop_labelrecovery}). The labels and the known segment lengths then determine $\bmx$.

There are only a constant number of groups, and each label has constant length for fixed $q,k$. These label checks therefore use only $O_{q,k}(1)$ bits. Together with the marker syndrome, they give a complete syndrome representable by $\log n+O_{q,k}(1)$ bits, yielding the claimed redundancy.

\subsection{Reduction of the Problem}\label{sec_problemreduction}
In this section, we will frequently use the following parameters:
\begin{equation}\label{eq_uneqpara1}
 \ell_{\ast}\triangleq\lcm(1,\ldots,k),\quad
 D_{\ast}\triangleq q^{\ell_{\ast}}(q-1),\quad B_{\ast}\triangleq k(D_{\ast}-1)+\ell_{\ast}+1.
\end{equation}
Here, $\lcm(1,\ldots,k)$ denotes the least common multiple of $1,\ldots,k$. The choice of these parameters is justified in the proofs of \Cref{lem_markerproperty,lem_label}.

There is a natural order on symbols in $\Sigma_q$: $0<1<\cdots<q-1$. This order induces the lexicographic order on sequences of equal length over $\Sigma_q$. For any $\bmx\in\Sigma_q^{n}$, set $x_i=0$ for every $i\notin[n]$. Symbols with indices outside $[n]$ are defined only for computing the markers defined below.
\begin{definition}\label{def_marker}
Let $\bmx\in\Sigma_q^n$. A coordinate $i\in\mathbb Z$ is called \emph{special} (with respect to $\bmx$) if $x_i\ne x_{i+\ell_{\ast}}$. A special coordinate $i$ is called a \emph{marker} of $\bmx$ if $\bmx_{[i,i+\ell_{\ast}]}<\bmx_{[j,j+\ell_{\ast}]}$ for every special coordinate $j$ satisfying $0<\abs{j-i}\le k$. In particular, a special coordinate is a marker if there is no other special coordinate within distance $k$ from it.
\end{definition}

\begin{definition}
Let $\bmx\in\Sigma_q^n$. We associate with it a binary sequence $\bm{\chi}_k(\bmx)=\chi_1\chi_2\cdots\chi_n$, which is defined as follows:
$$
\chi_i=
\begin{cases}
    1,\mbox{ if }i\text{ is a marker},\\
    0,\mbox{ else}.
\end{cases}
$$
We call $\bm{\chi}_k(\bmx)$ the \emph{marker sequence} of $\bmx$.
\end{definition}

Since $x_i=0$ for every $i\notin [n]$, all special coordinates belong to $[1-\ell_{\ast},n]$. In particular, comparisons near the left end must include the special coordinates in $[1-\ell_{\ast},0]$, although these coordinates are not recorded in $\bm{\chi}_k(\bmx)$. We present some examples to illustrate the above two definitions.
\begin{example}
Let $q=2$, $k=1$. Then $\ell_{\ast}=1$.
\begin{itemize}
    \item For $\bmx=001100$, the special coordinates are $2$ and $4$, since $x_2\ne x_3$ and $x_4\ne x_5$. The distance between the two special coordinates is $2>k$, and so each is a marker. Hence $\bm{\chi}_1(\bmx)=010100$.
    \item For $\bmx=100$, the special coordinates are $0$ and $1$. Since $01<10$, coordinate $0$ is a marker and coordinate $1$ is not. Only markers in $[3]$ are considered in the definition of the marker sequence, so $\bm{\chi}_1(\bmx)=000$. This example shows that a coordinate outside $[n]$ can affect whether a coordinate inside $[n]$ is a marker.
\end{itemize}
\end{example}

\begin{example}\label{uneq_ex_markers}
Let $q=3$, $k=2$, and $\bmx=012012221$. Then $\ell_{\ast}=2$. The special coordinates and their corresponding sequences are given in the following table.
$$
\begin{array}{c|c|c|c|c|c|c|c|c|c}
\hline
 i&0&1&2&3&4&5&7&8&9\\
 \hline
 \bmx_{[i,i+2]}&001&012&120&201&012&122&221&210&100\\
 \hline
\end{array}
$$
At each of $0,4,9$, the sequence $\bmx_{[i,i+2]}$ is smaller than the sequence at every other special coordinate within distance two. Each of the remaining special coordinates has a smaller such neighbor. Thus the markers are $0,4,9$, and $\bm{\chi}_2(\bmx)=000100001$. Notice that $\bmx_{[1,3]}=\bmx_{[4,6]}$, but the distance between these two special coordinates is greater than $k$.
\end{example}

\begin{definition}
    A binary sequence is called \emph{$k$-sparse} if there are at least $k$ symbols $0$ between any two symbols $1$.
\end{definition}
 
The following lemma lists the three properties of markers that are critical in our construction.
\begin{lemma}\label{lem_markerproperty}
For every $\bmx\in\Sigma_q^n$, the following statements hold.
\begin{enumerate}[$(i)$]
\item If $i$ is special, then $\bmx_{[i,i+\ell_{\ast}]}\ne\bmx_{[j,j+\ell_{\ast}]}$ for any $j\in\mathbb{Z}$ with $0<\abs{i-j}\le k$. Moreover, $\bm{\chi}_k(\bmx)$ is $k$-sparse.
\item For every special coordinate $i$, there is a marker $j\in\mathbb Z$ such that $\abs{i-j}\le k(D_{\ast}-1)$.
\item Whether $i$ is a marker is determined by $x_{i-k},\ldots,x_{i+\ell_{\ast}+k}$.
\end{enumerate}
\end{lemma}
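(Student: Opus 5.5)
Part (i) is where the periodicity argument lives, so I will do it first. Suppose $i$ is special and $\bmx_{[i,i+\ell_{\ast}]}=\bmx_{[j,j+\ell_{\ast}]}$ for some $j=i+\delta$ with $0<\abs{\delta}\le k$. Then the two windows overlap, and on their union $\bmx$ has period $\abs{\delta}$. Concretely, if $\delta>0$, then $x_t=x_{t+\delta}$ for every $t\in[i,i+\ell_{\ast}]$, and the union $[i,i+\delta+\ell_{\ast}]$ has period $\delta$. Since $\delta\mid\ell_{\ast}$, iterating this periodicity inside the union gives $x_i=x_{i+\ell_{\ast}}$, which contradicts the assumption that $i$ is special. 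The case $\delta<0$ is symmetric: the union is $[j,i+\ell_{\ast}]$, which contains both $i$ and $i+\ell_{\ast}$, and the same argument applies. For $k$-sparsity, suppose $i<j$ are markers with $j-i\le k$. Both are special, and each must be lexicographically strictly smaller than the other. The first part of (i) shows the two windows are distinct, so strict comparison is meaningful, and this gives a contradiction. Hence distinct markers are at least $k+1$ apart.

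For (ii) I will use a descent chain. Starting from a special coordinate $i_0=i$, if $i_0$ is not a marker, then by definition there is a special $i_1$ with $0<\abs{i_1-i_0}\le k$ and $\bmx_{[i_1,i_1+\ell_{\ast}]}<\bmx_{[i_0,i_0+\ell_{\ast}]}$. Iterating produces special coordinates whose windows strictly decrease, so no window value repeats. Every window of a special coordinate has length $\ell_{\ast}+1$ and satisfies $x_t\ne x_{t+\ell_{\ast}}$ at its endpoints. The number of such windows is $q^{\ell_{\ast}}(q-1)=D_{\ast}$: the first $\ell_{\ast}$ symbols are free and the last must differ from the first. Hence the chain has at most $D_{\ast}$ elements and therefore at most $D_{\ast}-1$ steps. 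It terminates because all special coordinates lie in the finite set $[1-\ell_{\ast},n]$ and the windows strictly decrease. The last element of the chain is a marker, and since each step moves at most $k$, its distance from $i$ is at most $k(D_{\ast}-1)$.

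For (iii), whether $i$ is a marker depends on three things: whether $i$ is special, which uses $x_i$ and $x_{i+\ell_{\ast}}$; whether each $j$ with $\abs{j-i}\le k$ is special, which uses $x_j$ and $x_{j+\ell_{\ast}}$; and the windows $\bmx_{[j,j+\ell_{\ast}]}$ for those $j$. All of these indices lie in $[i-k,i+\ell_{\ast}+k]$.

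The main obstacle is the periodicity step in (i). There I have to verify carefully that the shift-by-$\delta$ relation, combined with $\delta\mid\ell_{\ast}$, really forces $x_i=x_{i+\ell_{\ast}}$ while staying within the union of the two windows. I also have to handle the sign of $\delta$ without assuming the indices lie in $[n]$, which is fine because of the zero-extension convention.
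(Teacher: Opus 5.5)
Your proof is correct and follows essentially the same route as the paper: periodicity forced by $\delta\mid\ell_{\ast}$ for (i), a strictly descending chain through at most $D_{\ast}$ distinct windows for (ii), and locality of all comparisons for (iii). One small point: in (ii), the definition of a non-marker only supplies a special neighbor $i_1$ with $\bmx_{[i_1,i_1+\ell_{\ast}]}\le\bmx_{[i_0,i_0+\ell_{\ast}]}$, and it is the first part of (i) that makes this inequality strict (this is where the paper invokes (i)); conversely, the sparsity argument does not need (i) at all, since two windows cannot each be strictly smaller than the other.
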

\begin{IEEEproof}
For (i), suppose
$\bmx_{[i,i+\ell_{\ast}]}=\bmx_{[i+a,i+a+\ell_{\ast}]}$ for some integer $a$ with $\abs{a}\in[k]$, where $i$ is special. Then $x_{i+r}=x_{i+a+r}$ for every $r\in[0,\ell_{\ast}]$. By \eqref{eq_uneqpara1}, we have $a\mid\ell_{\ast}$. If $a>0$, then taking $r=0,a,\ldots,\ell_{\ast}-a$ gives $x_i=x_{i+a}=\cdots=x_{i+\ell_{\ast}}$. If $a<0$, then taking $r=\ell_{\ast},\ell_{\ast}+a,\ldots,0$ gives $x_{i+\ell_{\ast}}=x_{i+\ell_{\ast}+a}=\cdots=x_i$. Both cases contradict the assumption that $i$ is special. Thus the length-$(\ell_{\ast}+1)$ sequence corresponding to a special coordinate must be different from any length-$(\ell_{\ast}+1)$ sequence within distance at most $k$ from it.

If two markers were at distance at most $k$, then,
by \Cref{def_marker}, each of their corresponding
sequences would have to be strictly smaller than the
other, which is impossible. Therefore, any two markers
have distance at least $k+1$, implying that
$\bm{\chi}_k(\bmx)$ is $k$-sparse.

For (ii), the number of length-$(\ell_{\ast}+1)$ sequences with different first and last symbols is $q\cdot q^{\ell_{\ast}-1}(q-1)=D_{\ast}$. If a special coordinate is not a marker, then, by (i) and \Cref{def_marker}, there is another special coordinate within distance $k$ whose corresponding sequence is strictly smaller. Starting from $i$, repeatedly choose such a coordinate. The corresponding sequences strictly decrease and belong to a set of size $D_{\ast}$. Hence this procedure stops at a marker after at most $D_{\ast}-1$ choices. The total distance from $i$ is at most $k(D_{\ast}-1)$. This proves (ii).

For (iii), all sequences compared with $\bmx_{[i,i+\ell_{\ast}]}$ start in $[i-k,i+k]$ and end in $[i-k+\ell_{\ast},i+k+\ell_{\ast}]$. Thus the symbols in $[i-k,i+\ell_{\ast}+k]$ determine the special coordinates being compared and all the comparisons.
\end{IEEEproof}

Let $\bmx\in\Sigma_q^n$, and $s_1<\cdots<s_w$ be the marker coordinates of $\bmx$ in $[n]$. Set $s_0=0$ and $s_{w+1}=n+1$. For $j\in[0,w]$, define the segment $\bmu_j^x\triangleq \bmx_{[s_j,s_{j+1}-1]}$, using $x_0=0$. These segments are nonempty and satisfy $\bmu_0^x\cdots\bmu_w^x=0\bmx$. Notice that if $w=0$, then there is only one segment: $0\bmx$.

For a sequence $\bmu\in\Sigma_q^{m}$ where $m\ge1$, let $d=\min\{m,B_{\ast}\}$ and define
\begin{equation*}
 \Lab(\bmu)\triangleq
 \bmu_{[1,d]}0^{B_{\ast}-d}
 \bmu_{[m-d+1,m]}0^{B_{\ast}-d},
\end{equation*}
which is a sequence in $\Sigma_q^{2B_{\ast}}$. We call $\Lab(\bmu)$ the \emph{label} of $\bmu$. Clearly, the label records the longest prefix and suffix of length at most $B_{\ast}$, with each padded on the right to length $B_{\ast}$.

\begin{example}
For the sequence in \Cref{uneq_ex_markers}, the markers in $[9]$ are $4$ and $9$, so the segments are $\bmu_0^x=0012$, $\bmu_1^x=01222$, and $\bmu_2^x=1$. Since $B_{\ast}=37$, the label of $\bmu_0^x$ is $\Lab\parenv{\bmu_0^x}=0012\,0^{33}0012\,0^{33}$.
\end{example}

In the above example, it is easy to see that the segment can be read immediately from its label. More generally, we can establish the following lemma.
\begin{lemma}\label{lem_label}
Every segment $\bmu_j^x$ is uniquely determined by its length $\abs{\bmu_j^x}$ and its label $\Lab\parenv{\bmu_j^x}$.
\end{lemma}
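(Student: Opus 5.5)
Write $\bmu=\bmu_j^x=\bmx_{[s_j,s_{j+1}-1]}$ and $m=\abs{\bmu}$. The plan is to split into two cases according to whether $m\le 2B_{\ast}$ or $m>2B_{\ast}$. In the long case, the middle of $\bmu$ has period $\ell_{\ast}$, so the prefix stored in the label can be extended periodically.

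\textbf{Short case.} Suppose $m\le 2B_{\ast}$. If $m\le B_{\ast}$, then $d=m$, and the first block of the label is $\bmu$ followed by zeros. Since $m$ is known, $\bmu$ is read off directly. If $B_{\ast}<m\le 2B_{\ast}$, then $d=B_{\ast}$. The label stores $\bmu_{[1,B_{\ast}]}$ and $\bmu_{[m-B_{\ast}+1,m]}$. These two intervals cover $[1,m]$, so $\bmu$ is again determined.

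\textbf{Long case.} Suppose $m>2B_{\ast}$. The key claim is that no coordinate $i\in[s_j+k(D_{\ast}-1)+1,\ s_{j+1}-1-k(D_{\ast}-1)-\ell_{\ast}-1]$ (roughly) is special.

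To see this, suppose $i$ is special with $s_j<i<s_{j+1}$. By \Cref{lem_markerproperty}~(ii), there is a marker $i'$ with $\abs{i-i'}\le k(D_{\ast}-1)$. No marker lies strictly between $s_j$ and $s_{j+1}$ inside $[n]$, so $i'$ must be either $\le s_j$ or $\ge s_{j+1}$. Hence every special coordinate lies within $k(D_{\ast}-1)$ of one end of the segment.

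There are boundary effects to check. At the left end, $s_0=0$, and markers in $[1-\ell_{\ast},0]$ are not recorded. Those markers still lie at or before $s_0$, so the argument is unaffected. At the right end, $s_{w+1}=n+1$, and markers beyond $n$ lie at or after $s_{w+1}$; this case is also fine.

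Consequently, on the middle range we have $x_i=x_{i+\ell_{\ast}}$ for all $i$ with $s_j+k(D_{\ast}-1)<i<s_{j+1}-k(D_{\ast}-1)$. In segment coordinates, this gives $u_t=u_{t+\ell_{\ast}}$ for $t\in[k(D_{\ast}-1)+2,\ m-k(D_{\ast}-1)-\ell_{\ast}]$ (up to $\pm1$ adjustments). The $\ell_{\ast}$ appearing in $B_{\ast}$ pays for the shift $i\mapsto i+\ell_{\ast}$ at the right end.

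\textbf{Reconstruction.} The prefix $\bmu_{[1,B_{\ast}]}$ contains a full period window $\bmu_{[k(D_{\ast}-1)+2,\ k(D_{\ast}-1)+\ell_{\ast}+1]}$, since $B_{\ast}=k(D_{\ast}-1)+\ell_{\ast}+1$. Extending periodically to the right determines $u_t$ for every $t$ up to $m-k(D_{\ast}-1)$. The suffix $\bmu_{[m-B_{\ast}+1,m]}$ supplies the remaining last coordinates. Since $m$ is known, the positions of both pieces are known, and $\bmu$ is determined.

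\textbf{Main obstacle.} The delicate step is the exact index bookkeeping. I need the non-special range to reach the prefix window and the extended periodic region to overlap the suffix, with precisely the slack $B_{\ast}$. I would first verify the boundary inequality $s_{j+1}-1-k(D_{\ast}-1)-\ell_{\ast}\ge$ (start of suffix)$-\ell_{\ast}$, handling the off-by-one in whether $i$ equal to $s_j$ itself is special. That coordinate lies in the prefix anyway, so it causes no harm.
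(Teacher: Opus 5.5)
Your proposal is correct and follows essentially the same route as the paper: the short case $m\le 2B_{\ast}$ is read off directly from the label, and in the long case Lemma~\ref{lem_markerproperty}~(ii), combined with the absence of markers in $[s_j+1,s_{j+1}-1]$, forces period $\ell_{\ast}$ in the middle of the segment, so the prefix extends periodically and the suffix covers the remaining coordinates. The paper states this as $x_i=x_{i-\ell_{\ast}}$ for $i\in[s_j+B_{\ast},s_{j+1}-B_{\ast}-1]$; with $B_{\ast}=k(D_{\ast}-1)+\ell_{\ast}+1$ your hedged index bookkeeping works out exactly. In your forward form, $u_t=u_{t+\ell_{\ast}}$ holds for $t\in[k(D_{\ast}-1)+2,\,m-k(D_{\ast}-1)]$, and the suffix starts at $m-k(D_{\ast}-1)-\ell_{\ast}$, so the two pieces overlap.
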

\begin{IEEEproof}
Let $m=\abs{\bmu_j^x}$ and $d=\min\mathset{m,B_{\ast}}$. If $m\le 2B_{\ast}$, then $m-d\le d$. Now the conclusion is clear from the definition of $\Lab\parenv{\bmu_j^x}$. Next, assume $m>2B_{\ast}$. Then $d=B_{\ast}$.

Recall that $\bmu_j^{x}=x_{s_j}\cdots x_{s_{j+1}-1}$. We can obtain $\bmx_{\sparenv{s_j,s_j+B_{\ast}-1}}$ and $\bmx_{\sparenv{s_{j+1}-B_{\ast},s_{j+1}-1}}$ immediately from $\Lab\parenv{\bmu_j^x}$. It remains to determine the unknown part $\bmx_{\sparenv{s_j+B_{\ast},s_{j+1}-B_{\ast}-1}}$. We claim that
\begin{equation}\label{eq_periodell}
    x_{i}=x_{i-\ell_{\ast}}
\end{equation}
for every $i\in\sparenv{s_j+B_{\ast},s_{j+1}-B_{\ast}-1}$. Suppose on the contrary that there is an $i\in\sparenv{s_j+B_{\ast},s_{j+1}-B_{\ast}-1}$ such that $x_{i}\ne x_{i-\ell_{\ast}}$. Then the index $i-\ell_{\ast}$ is special with respect to $\bmx$. According to \Cref{lem_markerproperty}~(ii), there is a marker in the interval $\sparenv{i-\ell_{\ast}-k(D_{\ast}-1),i-\ell_{\ast}+k(D_{\ast}-1)}$. Since $i\in\sparenv{s_j+B_{\ast},s_{j+1}-B_{\ast}-1}$ and $B_{\ast}=k(D_{\ast}-1)+\ell_{\ast}+1$, we have $\sparenv{i-\ell_{\ast}-k(D_{\ast}-1),i-\ell_{\ast}+k(D_{\ast}-1)}\subseteq\sparenv{s_j+1,s_{j+1}-1}$. However, by the definition of $s_0,s_1,\ldots,s_{w+1}$, there is no marker in $[s_{j}+1,s_{j+1}-1]$ for any $j\in[0,w]$. This proves \eqref{eq_periodell}.

Now the substring $\bmx_{\sparenv{s_j+B_{\ast},s_{j+1}-B_{\ast}-1}}$ can be recovered from $\bmx_{\sparenv{s_j,s_j+B_{\ast}-1}}$ by \eqref{eq_periodell}, since $B_{\ast}\ge\ell_{\ast}$. This completes the proof.
\end{IEEEproof}

The case $\abs{\bmu_j^x}>2B_{\ast}$ is illustrated in \Cref{fig_uxj}. 
\begin{figure}[!t]
\centering
\begin{tikzpicture}[x=1cm,y=1cm,font=\small]
\fill[green!12] (0,0) rectangle (2.8,.8);
\fill[green!12] (8.8,0) rectangle (11.6,.8);
\fill[red!12] (2.8,0) rectangle (8.8,.8);
\draw (0,0) rectangle (11.6,.8);
\draw (2.8,0)--(2.8,.8) (8.8,0)--(8.8,.8);
\node at (1.4,.4) {recorded prefix};
\node at (5.8,.4) {$x_i=x_{i-\ell_{\ast}}$};
\node at (10.2,.4) {recorded suffix};
\node[below] at (0,-.1) {$s_j$};
\node[below] at (2.8,-.1) {$s_j+B_{\ast}-1$};
\node[below] at (8.8,-.1) {$s_{j+1}-B_{\ast}$};
\node[below] at (11.6,-.1) {$s_{j+1}-1$};
\draw[decorate,decoration={brace,amplitude=4pt}] (0,1)--(2.8,1);
\node[above] at (1.4,1.15) {$B_{\ast}$ symbols};
\draw[decorate,decoration={brace,amplitude=4pt}] (8.8,1)--(11.6,1);
\node[above] at (10.2,1.15) {$B_{\ast}$ symbols};
\end{tikzpicture}
\caption{Recovery of a segment $\bmu_j^x$ of length greater than $2B_{\ast}$. The two green parts are recorded in the label. The symbols in the red part are recovered in increasing order using $x_i=x_{i-\ell_{\ast}}$.}
\label{fig_uxj}
\end{figure}

Suppose that $\bmx\in\Sigma_q^n$ and $\bmx\to_k\bmy$. Recovering $\bmx$ from $\bmy$ is equivalent to recovering the segments $\bmu_0^x,\bmu_1^x,\ldots,\bmu_w^x$ from $\bmy$. By \Cref{lem_label}, it is sufficient to recover $\abs{\bmu_j^x}$ and $\Lab\parenv{\bmu_j^x}$ for every $j\in[0,w]$. Since the lengths of these segments $\bmu_0^x,\bmu_1^x,\ldots,\bmu_w^x$ are determined by the markers $s_1,\ldots,s_w$ of $\bmx$ in $[n]$, and these markers are in turn determined by $\bm{\chi}_k(\bmx)$, recovering $\bmx$ reduces to recovering the marker sequence $\bm{\chi}_k(\bmx)$ and the labels $\Lab\parenv{\bmu_0^x},\Lab\parenv{\bmu_1^x},\ldots,\Lab\parenv{\bmu_w^x}$ from $\bmy$.

In the next subsection, we will design a code that corrects the errors in $\bm{\chi}_k(\bmx)$. The recovery of the labels $\Lab\parenv{\bmu_0^x}$, $\Lab\parenv{\bmu_1^x},\ldots,\Lab\parenv{\bmu_w^x}$ is done in \Cref{sec_labelrecovery}.

\subsection{Recovery of the Marker Sequence}\label{sec_markerrecovery}
In order to give a code that corrects the errors in $\bm{\chi}_k(\bmx)$, we need to understand how a $k$-substring edit changes the marker sequence. The following definition is necessary.
\begin{definition}\label{def_tsburst}
    Let $r_1,r_2\ge0$ be integers with $r_1+r_2>0$. Let $\bmalpha\in\Sigma^n$ where $n\ge r_1$, and $p\in[1,n-r_1+1]$. A sequence $\bmbeta\in\Sigma^{n-r_1+r_2}$ is obtained from $\bmalpha$ by an \emph{$(r_1,r_2)$-burst error} at position $p$, if $\beta_j=\alpha_j$ for every $j<p$ and $\beta_j=\alpha_{j+r_1-r_2}$ for every $j\ge p+r_2$. Thus the error replaces $r_1$ consecutive symbols by a sequence of length $r_2$ at the same position.
\end{definition}

In the rest of this section, we will use the following parameters:
\begin{equation}\label{eq_uneqpara2}
\begin{gathered}
  T_{\ast}\triangleq\ell_{\ast}+3k,\; U_{\ast}\triangleq T_{\ast}+k,\; E_{\ast}\triangleq 2U_{\ast}(U_{\ast}-1)+k(k-1),\\
 M_{n,\ast}\triangleq (U_{\ast}+k)n+E_{\ast},\; K_{\ast}\triangleq(k+1)E_{\ast}+k(k-1)+U_{\ast}.
\end{gathered}
\end{equation}
The choice of these parameters will be justified in the proofs of \Cref{lem_inducedThburst}, \Cref{lem_markervt} and \Cref{prop_markerrecovery}.

\begin{lemma}\label{lem_inducedThburst}
Let $n\ge T_{\ast}$, $\bmx\in\Sigma_q^n$ and $\bmx\to_k\bmy$. Represent $\bmx$ and $\bmy$ as in \eqref{eq_editbasic} with the convention in \eqref{eq_minimalends}. Denote $i=\abs{\bmu}+1$, $s=\abs{\bms}$ and $t=\abs{\bmt}$. Let $h=s-t$. Then $\bm{\chi}_k(\bmy)$ is obtained from $\bm{\chi}_k(\bmx)$ by a $(T_{\ast},T_{\ast}-h)$-burst error at some position $p\in[1,n-T_{\ast}+1]$. The position $p$ can be chosen so that $i\in[p,p+T_{\ast}-s]$ or equivalently, the interval $\sparenv{p,p+T_{\ast}-1}$ covers the edit region in $\bmx$.
\end{lemma}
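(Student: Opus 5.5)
We prove the statement using the locality of marker decisions, Lemma~\ref{lem_markerproperty}~(iii). Extend $\bmx$ and $\bmy$ by zeros outside their index ranges, as in the definition of markers, and recall $h=s-t$. The symbols agree in two regions:
\begin{itemize}
\item $x_j=y_j$ for all $j\le i-1$ (including $j\le0$);
\item $x_j=y_{j-h}$ for all $j\ge i+s$ (including $j>n$, where both are zero, since $\abs{\bmy}=n-h$).
\end{itemize}

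By Lemma~\ref{lem_markerproperty}~(iii), whether $j$ is a marker of $\bmx$ depends only on $x_{j-k},\ldots,x_{j+\ell_{\ast}+k}$. So if $j+\ell_{\ast}+k\le i-1$, the window lies entirely in the common prefix, and $j$ is a marker of $\bmx$ iff it is a marker of $\bmy$. Thus $\chi_k(\bmx)_j=\chi_k(\bmy)_j$ for $j\le i-\ell_{\ast}-k-1$. Similarly, if $j-k\ge i+s$, the window of $\bmx$ at $j$ equals the window of $\bmy$ at $j-h$, so $\chi_k(\bmx)_j=\chi_k(\bmy)_{j-h}$ for $j\ge i+s+k$. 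The same shift-invariance applies to out-of-range special coordinates, because the zero padding agrees, so comparisons near the ends are handled uniformly.

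The unchanged prefix therefore has length at least $i-\ell_{\ast}-k-1$, and the shifted suffix starts at $i+s+k$. The affected interval $[i-\ell_{\ast}-k,\,i+s+k-1]$ has length $\ell_{\ast}+2k+s\le\ell_{\ast}+3k=T_{\ast}$. This gives a window of $T_{\ast}$ coordinates in $\bm{\chi}_k(\bmx)$ outside of which the prefix is identical and the suffix is shifted by $h$. By Definition~\ref{def_tsburst}, that is a $(T_{\ast},T_{\ast}-h)$-burst.

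The main difficulty is choosing $p$ so that three conditions hold simultaneously:
\begin{itemize}
\item $p\in[1,n-T_{\ast}+1]$, so the window $[p,p+T_{\ast}-1]$ lies in $[n]$;
\item the window contains $[\max\{1,i-\ell_{\ast}-k\},\min\{n,i+s+k-1\}]$;
\item $i\in[p,p+T_{\ast}-s]$.
\end{itemize}
Enlarging a burst window is always legitimate: if $\beta_j=\alpha_j$ for $j<p'$ and $\beta_j=\alpha_{j+h}$ for $j\ge p'+T'-h$, the same relations hold for any window $[p,p+T_{\ast}-1]$ with $p\le p'$ and $p+T_{\ast}-1\ge p'+T'-1$. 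So it suffices to cover the affected interval.

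Take $p=\min\{\max\{1,i-\ell_{\ast}-k\},\,n-T_{\ast}+1\}$, which is valid since $n\ge T_{\ast}$. We check the three conditions.
\begin{itemize}
\item \emph{Unclamped case.} Here $p=i-\ell_{\ast}-k$, so $p\le i$ and $p+T_{\ast}-s=i+2k-s\ge i$.
\item \emph{Left clamp} ($p=1$). The window is $[1,T_{\ast}]$. We have $i\ge1=p$. Since $i\le\ell_{\ast}+k+1$, we get $i+s+k-1\le T_{\ast}$, so the window covers the affected interval, and $i\le T_{\ast}-s+1$, which is exactly $i\le p+T_{\ast}-s$ with $p=1$.
\item \emph{Right clamp} ($p=n-T_{\ast}+1$). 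The window ends at $n$. Because $i+s-1\le n$, we have $i\le n-s+1=p+T_{\ast}-s$. Also $p\le i-\ell_{\ast}-k\le i$, so the left part of the affected interval is covered.
\end{itemize}
Finally, the condition $i\in[p,p+T_{\ast}-s]$ is equivalent to $[i,i+s-1]\subseteq[p,p+T_{\ast}-1]$. For $s=0$ it says the gap $i$ lies in $[p,p+T_{\ast}]$. This is exactly the statement that $[p,p+T_{\ast}-1]$ covers the edit region in the sense of Definition~\ref{def_covereditregion}.
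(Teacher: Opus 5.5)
Your proof is correct and essentially the paper's own argument: the same locality consequence of Lemma~\ref{lem_markerproperty}~(iii), the same affected interval $[\max\{1,i-\ell_{\ast}-k\},\min\{n,i+s+k-1\}]$ of length at most $T_{\ast}$, and the same choice $p=\min\{\max\{1,i-\ell_{\ast}-k\},n-T_{\ast}+1\}$; only the final check of $i\in[p,p+T_{\ast}-s]$ is organized differently, by which clamp is active rather than by whether $i+s\le n$. One cosmetic slip: in the right-clamp case the inequality you need is $p\le\max\{1,i-\ell_{\ast}-k\}\le i$, because $p\le i-\ell_{\ast}-k$ fails when $n=T_{\ast}$ and $i<\ell_{\ast}+k+1$ (that situation is already handled by your left-clamp case).
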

\begin{IEEEproof}
By assumption, it holds that $y_j=x_j$ when $j<i$ and $y_j=x_{j+h}$ when $j\ge i+t$. Then by \Cref{lem_markerproperty}~(iii), we have
\begin{equation}\label{eq_chixy}
\chi_k(\bmy)_j=
\begin{cases}
   \bm{\chi}_k(\bmx)_j,\mbox{ if }j< i-\ell_{\ast}-k,\\
   \bm{\chi}_k(\bmx)_{j+h},\mbox{ if }j\ge i+t+k.
\end{cases}
\end{equation}
Let $u=\max\{1,i-\ell_{\ast}-k\}$ and $v=\min\{n,i+s+k-1\}$. Then $[u,v]=\sparenv{i-\ell_{\ast}-k,i+s+k-1}\cap[n]$ and $v-u+1\le\ell_{\ast}+2k+s\le T_{\ast}$. \Cref{eq_chixy} implies that all errors in $\bm{\chi}_k(\bmx)$ occurred in the interval $[u,v]$. Notice that $\abs{\bm{\chi}_k(\bmy)}=\abs{\bm{\chi}_k(\bmx)}-h$. Then by \Cref{def_tsburst}, to prove this lemma, it suffices to find an index $p\in[1,n-T_{\ast}+1]$, such that $[u,v]\subseteq\sparenv{p,p+T_{\ast}-1}$ and $i\in[p,p+T_{\ast}-s]$.

Set $p=\min\{u,n-T_{\ast}+1\}$. If $p=u$, then $v\le u+T_{\ast}-1=p+T_{\ast}-1$. If $p=n-T_{\ast}+1$, then $v\le n=p+T_{\ast}-1$. In either case, we have $[u,v]\subseteq[p,p+T_{\ast}-1]$.

Now it remains to show $i\in[p,p+T_{\ast}-s]$. First, notice that $u=\max\mathset{1,i-\ell_{\ast}-k}$, $p\le u$ and $i,\ell_{\ast},k\ge1$. This gives $p\le i$. By the definition of a $k$-substring edit, we have $1\le i+s\le n+1$. If $i+s\le n$, then $i+s\le v<p+T_{\ast}$, since $k\ge1$. If $i+s=n+1$, then $v=n$ and $p+T_{\ast}=n+1\ge i+s$. Now the proof is completed.
\end{IEEEproof}

By \Cref{lem_inducedThburst}, the recovery of $\bm{\chi}_k(\bmx)$ requires correction of a $(T_{\ast},T_{\ast}-h)$-burst error for every $h\in[-k,k]\setminus\{0\}$. For every fixed $r_1$ and $r_2$ with $r_1\ne r_2$, Sun \emph{et al.} constructed length-$n$ codes that can correct a $(r_1,r_2)$-burst-error \cite[Theorem 9]{Yubo2025tsburstIT}. The redundancy of this code is $\log n+O_{r_1,r_2}(1)$. Intersecting $2k$ of these codes (i.e., set $r_1=T_{\ast}$ and let $r_2$ range over $[T_{\ast}-k,T_{\ast}+k]\setminus\{T_{\ast}\}$) gives a trivial code satisfying our requirement. However, the redundancy of this trivial code is as large as $2k\log n+O_{k}(1)$.

In \Cref{prop_markerrecovery} below, we instead construct a syndrome $\mathsf{H}_n^{\ne}$ representable by $\log n+O_k(1)$ bits, which enables the recovery of $\bm{\chi}_k(\bmx)$ from $\bm{\chi}_k(\bmy)$ for every $1\le\abs{h}\le k$. This construction heavily relies on the $k$-sparse property of $\bm{\chi}_k(\bmy)$, which was proved in \Cref{lem_markerproperty}.

The next two lemmas are useful in the proof of \Cref{prop_markerrecovery}. Recall the functions $\vt(\cdot)$ and $\lwt{\cdot}$ defined in \eqref{eq_L1VT}.
\begin{lemma}\label{lem_markervt}
Suppose that $\bmbeta$ is obtained from $\bmalpha\in\Sigma_2^n$ by a $(T_{\ast},T_{\ast}-h)$-burst-error at position $p$, where $1\le\abs{h}\le k$. Put $\Delta_{L_1}=\lwt{\bmalpha}-\lwt{\bmbeta}$. Then
\begin{equation}\label{eq_markervt}
 \vt(\bmalpha)-\vt(\bmbeta)
 =p\Delta_{L_1}+h\sum_{i=p+1}^{n-h}\beta_i+\delta_p,
\end{equation}
where
\begin{equation*}
    \delta_p=
 \sum_{i=1}^{T_{\ast}-1}i\alpha_{p+i}
 -\sum_{i=1}^{T_{\ast}-h-1}(i+h)\beta_{p+i}.
\end{equation*}
Clearly, it holds that $\abs{\delta_p}\le E_{\ast}/2$.
\end{lemma}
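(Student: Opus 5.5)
The plan is to compute $\vt(\bmalpha)-\vt(\bmbeta)$ directly. I would split the coordinates into three ranges (the common prefix, the burst window, and the shifted tail) and then regroup the window and tail contributions into the claimed form. Write $T=T_{\ast}$ for brevity. By \Cref{def_tsburst}, $\beta_j=\alpha_j$ for $j<p$ and $\beta_j=\alpha_{j+h}$ for $j\ge p+T-h$. The window in $\bmalpha$ is $[p,p+T-1]$, and the window in $\bmbeta$ is $[p,p+T-h-1]$. Since $p\le n-T+1$, both windows lie inside the respective sequences.

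First, the prefix terms $j<p$ cancel exactly. Second, in the tail, reindex the $\bmalpha$-sum via $j\mapsto j-h$:
\[
\sum_{j=p+T}^{n} j\alpha_j-\sum_{j=p+T-h}^{n-h} j\beta_j=\sum_{j=p+T-h}^{n-h}\bigl((j+h)-j\bigr)\beta_j=h\sum_{j=p+T-h}^{n-h}\beta_j .
\]
Third, write the window indices as $p+i$. The window contribution becomes
\[
p\Bigl(\sum_{i=0}^{T-1}\alpha_{p+i}-\sum_{i=0}^{T-h-1}\beta_{p+i}\Bigr)+\sum_{i=1}^{T-1}i\alpha_{p+i}-\sum_{i=1}^{T-h-1}i\beta_{p+i}.
\]
The prefix and tail $L_1$-weights of $\bmalpha$ and $\bmbeta$ agree, because the tails are shifted copies of each other. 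Hence the bracketed difference equals $\Delta_{L_1}$, and the first term is $p\Delta_{L_1}$.

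Next, I would rewrite $-\sum_{i=1}^{T-h-1}i\beta_{p+i}$ as $-\sum_{i=1}^{T-h-1}(i+h)\beta_{p+i}+h\sum_{i=p+1}^{p+T-h-1}\beta_i$. The extra sum merges with the tail sum $h\sum_{j=p+T-h}^{n-h}\beta_j$, giving $h\sum_{i=p+1}^{n-h}\beta_i$. What remains of the window terms is exactly $\delta_p$, which proves \eqref{eq_markervt}. The only points needing care are the index bookkeeping and the signs when $h<0$; for $h<0$ the window of $\bmbeta$ is the longer one, and the reindexing above still holds verbatim.

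For the bound on $\delta_p$, I would use the triangle inequality together with $\alpha_i,\beta_i\in\{0,1\}$:
\[
\abs{\delta_p}\le\sum_{i=1}^{T-1}i+\sum_{i=1}^{T-h-1}\abs{i+h}.
\]
The first sum is at most $U_{\ast}(U_{\ast}-1)/2$. In the second sum, $T-h-1\le T+k-1=U_{\ast}-1$. Moreover $\abs{i+h}\le i$ when $h<0$ and $i\ge -h$, the terms with $i<-h$ contribute at most $\sum_{i=1}^{k-1}i=k(k-1)/2$, and $\abs{i+h}\le T-1$ when $h>0$. So the second sum is at most $U_{\ast}(U_{\ast}-1)/2+k(k-1)/2$. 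Adding the two bounds gives $\abs{\delta_p}\le U_{\ast}(U_{\ast}-1)+k(k-1)/2=E_{\ast}/2$. The main obstacle is only this careful sign and range bookkeeping for negative $h$; the parameter $E_{\ast}$ in \eqref{eq_uneqpara2} appears to be chosen with exactly this slack.
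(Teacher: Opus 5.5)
Your proof is correct and follows the paper's route: the same prefix/window/tail split, the same reindexing of the tail, and the same absorption of $h\sum_{i=p+1}^{p+T_\ast-h-1}\beta_i$ into the tail sum. The bound on $\delta_p$ is again by the triangle inequality; the paper avoids your sign case split by using $\abs{i+h}\le i+k$ uniformly over $i\le T_\ast+k-1=U_\ast-1$, which together with $\binom{T_\ast}{2}$ gives exactly $E_\ast/2$.

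One step needs a small fix. For $h>0$, the observation $\abs{i+h}\le T_\ast-1$ alone only yields $(T_\ast-h-1)(T_\ast-1)$, and this can exceed $U_\ast(U_\ast-1)/2$. For example, with $k=3$ and $h=1$ it gives $182>153$. You should instead note that $i+h$ runs over the distinct integers $h+1,\dots,T_\ast-1$, so that sum is at most $\binom{T_\ast}{2}\le U_\ast(U_\ast-1)/2$.
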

\begin{IEEEproof}
By assumption, we have $\beta_i=\alpha_i$ for every $i<p$, and $\beta_i=\alpha_{i+h}$ for every $i\ge p+T_{\ast}-h$. Then it follows from the definition of the function $\vt$ that
\begin{equation*}
\begin{aligned}
\vt(\bmalpha)-\vt(\bmbeta)
 &=\sum_{i=p}^{p+T_{\ast}-1}i\alpha_i+\sum_{i=p+T_{\ast}}^{n}i\alpha_i-\sum_{i=p}^{p+T_{\ast}-h-1}i\beta_i-\sum_{i=p+T_{\ast}-h}^{n-h}i\beta_i\\
 &=p\parenv{\sum_{i=p}^{p+T_{\ast}-1}\alpha_i-\sum_{i=p}^{p+T_{\ast}-h-1}\beta_i}+\sum_{i=1}^{T_{\ast}-1}i\alpha_{p+i}-\sum_{i=1}^{T_{\ast}-h-1}i\beta_{p+i}+h\sum_{i=p+T_{\ast}-h}^{n-h}\beta_i\\
 &=p\Delta_{L_1}+\sum_{i=1}^{T_{\ast}-1}i\alpha_{p+i}-\sum_{i=1}^{T_{\ast}-h-1}(i+h)\beta_{p+i}+h\sum_{i=p+1}^{n-h}\beta_i\\
 &=p\Delta_{L_1}+h\sum_{i=p+1}^{n-h}\beta_i+\delta_p.
\end{aligned}
\end{equation*}

Finally, since
\begin{gather*}
   \abs{\sum_{i=1}^{T_{\ast}-1}i\alpha_{p+i}}\le\sum_{i=1}^{T_{\ast}-1}i=\binom{T_{\ast}}{2},\\
   \abs{\sum_{i=1}^{T_{\ast}-h-1}(i+h)\beta_{p+i}}\le\sum_{i=1}^{T_{\ast}+k-1}(i+k)=\frac{(T_{\ast}+3k)(T_{\ast}+k-1)}{2},
\end{gather*}
we conclude that $\abs{\delta_p}\le U_{\ast}(U_{\ast}-1)+\binom{k}{2}=E_{\ast}/2$.
\end{IEEEproof}

\begin{lemma}(\cfcite{Yubo2025tsburstIT}{Lemma 4})\label{lem_indexdistance}
Let $\bma,\bmb\in\mathbb{Z}_{\ge0}^n$, and let $K,Q$ be positive integers. Suppose that $\abs{a_i-b_i}<Q$ for every $i\in[n]$, and
$\sum_{i\in[n]:\,i\equiv r\pmod K}a_i\equiv\sum_{i\in[n]:\,i\equiv r\pmod K}b_i\pmod Q$
for every $r\in[0,K-1]$. If $\bma$ and $\bmb$ agree outside an interval containing at most $K$ coordinates, then $\bma=\bmb$.
\end{lemma}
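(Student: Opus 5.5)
The plan is to look at the difference sequence. Put $c_i=a_i-b_i\in\mathbb Z$. Then $\abs{c_i}<Q$ for every $i$, and $c_i=0$ for every $i$ outside some interval $J=[j,j+K'-1]\cap[n]$ with $K'\le K$. The goal is $c_i=0$ for all $i\in J$. Everything should follow from the fact that an interval of at most $K$ consecutive integers meets each residue class modulo $K$ at most once.

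First, fix $r\in[0,K-1]$. By that fact, the set $\{i\in[n]: i\equiv r\pmod K\}\cap J$ contains at most one index. Call it $i_r$ when it exists.

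Second, use that $c_i=0$ off $J$. This gives
\[
\sum_{i\in[n],\,i\equiv r\ (\mathrm{mod}\ K)}a_i-\sum_{i\in[n],\,i\equiv r\ (\mathrm{mod}\ K)}b_i = c_{i_r},
\]
or $0$ if $i_r$ does not exist. The hypothesis on the class sums then says $c_{i_r}\equiv 0\pmod Q$.

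Third, combine this with $\abs{c_{i_r}}<Q$. The only multiple of $Q$ in $(-Q,Q)$ is $0$, so $c_{i_r}=0$. Every index of $J$ is $i_r$ for its own residue $r$, so $c$ vanishes on $J$, hence everywhere, and $\bma=\bmb$.

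There is no real obstacle here. The only step needing care is the first: stating precisely that at most $K$ consecutive indices hit each residue class modulo $K$ at most once. That lets each class sum isolate a single coordinate. Nonnegativity of the entries is not even needed.
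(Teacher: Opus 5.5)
Your proof is correct. The paper cites this lemma from Sun \emph{et al.}\ without reproducing a proof. Your argument is the standard one: an interval of at most $K$ consecutive indices meets each residue class modulo $K$ at most once, so each class-sum congruence isolates a single difference $c_{i_r}$, and $\abs{c_{i_r}}<Q$ then forces $c_{i_r}=0$.
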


Let $\bmalpha\in\Sigma_2^n$. If $w=\lwt{\bmalpha}\ge1$, let $L_1^{\alpha}<\cdots<L_w^{\alpha}$ be the positions of symbols $1$ in $\bmalpha$. For each $r\in[0,K_{\ast}-1]$, define
\begin{equation*}
 P_r(\bmalpha)\triangleq
 \parenv{\sum_{\substack{i\in[n]\\i\equiv r\pmod{K_{\ast}}}}\alpha_i}\bmod2,\quad
 R_r(\bmalpha)\triangleq
 \parenv{\sum_{\substack{j\in[w]\\j\equiv r\pmod{K_{\ast}}}}L_j^{\alpha}}\bmod 2U_{\ast}.
\end{equation*}
If $w=0$, define $R_r(\bmalpha)=0$ for any $\bmalpha$ and any $r$. The first sum uses the coordinate $i$ in the binary sequence. The second sum uses the index $j$ of a one in the ordered sequence of ones. Define the syndrome
\begin{equation}\label{eq_synHuneq}
 \mathsf{H}_{n}^{\ne}(\bmalpha)\triangleq
 \parenv{\vt(\bmalpha)\bmod M_{n,\ast},\
 \lwt{\bmalpha}\bmod 2U_{\ast},\
 (P_r(\bmalpha))_{r=0}^{K_{\ast}-1},\
 (R_r(\bmalpha))_{r=0}^{K_{\ast}-1}}.
\end{equation}

\begin{proposition}\label{prop_markerrecovery}
Let $n\ge T_{\ast}$ and $1\le\abs{h}\le k$. Suppose that a $k$-sparse sequence $\bmbeta\in\Sigma_2^{n-h}$ is obtained from $\bmalpha,\bmgamma\in\Sigma_2^n$ by $(T_{\ast},T_{\ast}-h)$-burst errors. If $\mathsf{H}_{n}^{\ne}(\bmalpha)=\mathsf{H}_{n}^{\ne}(\bmgamma)$, then $\bmalpha=\bmgamma$.

As a corollary, if $\bmy\in\cE_{1,k}^{\ne}(\bmx)$ for some $\bmx\in\Sigma_q^n$, then given $\mathsf{H}_{n}^{\ne}\parenv{\bm{\chi}_k(\bmx)}$, one can uniquely recover $\bm{\chi}_k(\bmx)$ from $\bm{\chi}_k(\bmy)$.
\end{proposition}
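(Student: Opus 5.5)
The plan is to fix the two burst positions $p_1$ (for $\bmalpha\to\bmbeta$) and $p_2$ (for $\bmgamma\to\bmbeta$) and use the syndrome components in turn: first pin down the weight change, then use the VT component to bound how far apart the two explanations can be, and finally use $P_r$ or $R_r$ to kill the remaining ambiguity. For the weight step, set $\Delta=\lwt{\bmalpha}-\lwt{\bmbeta}$. Since $\bmalpha$ and $\bmbeta$ agree outside a window of $T_{\ast}$ (resp.\ $T_{\ast}-h$) coordinates, we have $\abs{\Delta}<U_{\ast}$. The same holds for $\bmgamma$, so $\lwt{\bmalpha}\equiv\lwt{\bmgamma}\pmod{2U_{\ast}}$ forces $\lwt{\bmalpha}=\lwt{\bmgamma}$. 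Hence both bursts have the same $\Delta$.

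Next I would apply \Cref{lem_markervt} to both bursts. Write $\psi(p)=p\Delta+h\sum_{i=p+1}^{n-h}\beta_i$, so that $\vt(\bmalpha)-\vt(\bmbeta)=\psi(p_1)+\delta_{p_1}$ and similarly for $\bmgamma$. All these quantities are bounded by $(U_{\ast}+k)n+E_{\ast}/2$ in absolute value, so the congruence modulo $M_{n,\ast}$ becomes an equality. This yields $\abs{\psi(p_1)-\psi(p_2)}\le E_{\ast}$. The increments are $\psi(p)-\psi(p-1)=\Delta-h\beta_p$, and $k$-sparsity of $\bmbeta$ says that any $k+1$ consecutive $\beta_p$ contain at most one $1$. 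I then split into two cases.

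In the case $\Delta\ne0$, the sum of increments over any $k+1$ consecutive positions has absolute value at least $(k+1)\abs{\Delta}-\abs{h}\ge1$, with the sign of $\Delta$. Any partial block contributes at most $k(\abs{\Delta}+\abs{h})$. These two facts give $\abs{p_1-p_2}\le(k+1)E_{\ast}+O(k^2)$. Consequently $\bmalpha$ and $\bmgamma$ agree outside an interval of at most $(k+1)E_{\ast}+k(k-1)+U_{\ast}=K_{\ast}$ coordinates, which is exactly the choice of $K_{\ast}$ in \eqref{eq_uneqpara2}. Equal parities $P_r$ together with \Cref{lem_indexdistance} (with $Q=2$ and $K=K_{\ast}$) then give $\bmalpha=\bmgamma$.

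In the case $\Delta=0$, $\psi$ changes by $-h$ at each $1$ of $\bmbeta$ and is otherwise constant. Since $h\ne0$, which is where the unequal-length assumption enters, the number of ones of $\bmbeta$ strictly between $p_1$ and $p_2$ is at most $E_{\ast}/\abs{h}\le E_{\ast}$. The positions themselves may be far apart. However, $\bmalpha$ and $\bmgamma$ have the same number $w$ of ones. Their one-position vectors $(L_j^{\alpha})$ and $(L_j^{\gamma})$ coincide except on a run of consecutive indices $j$: those ones lying between the bursts, which are shifted by $h$ in one sequence relative to the other, together with those inside the burst windows. This run has at most $E_{\ast}+O(U_{\ast})\le K_{\ast}$ indices, and on it each difference is less than $2U_{\ast}$. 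The equality of $R_r$ modulo $2U_{\ast}$ and \Cref{lem_indexdistance} then give $L^{\alpha}=L^{\gamma}$, so $\bmalpha=\bmgamma$. I expect this case to be the main obstacle. The delicate point is making the index alignment precise: the ones before $\min\{p_1,p_2\}$ share indices, and the differing run really is contiguous and within the $K_{\ast}$ bound.

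For the corollary, \Cref{lem_inducedThburst} shows that $\bm{\chi}_k(\bmy)$ arises from $\bm{\chi}_k(\bmx)$ by a $(T_{\ast},T_{\ast}-h)$-burst with $h=n-\abs{\bmy}\ne0$, and \Cref{lem_markerproperty}~(i) shows that $\bm{\chi}_k(\bmy)$ is $k$-sparse. The decoder can therefore enumerate candidate bursts and output the unique candidate matching $\mathsf{H}_n^{\ne}$.
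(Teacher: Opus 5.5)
Your outline matches the paper's proof. You get equal weights from the residue modulo $2U_\ast$ and an exact VT equation from \Cref{lem_markervt}. The case $\Delta\ne0$ bounds the burst distance by $k$-sparsity and finishes with the parities $P_r$ and \Cref{lem_indexdistance}. The case $\Delta=0$ uses $h\ne0$ to get $S\le E_\ast$, then applies \Cref{lem_indexdistance} with $Q=2U_\ast$ to the one-positions through $R_r$. Your treatment of the $\Delta=0$ case and of the corollary is sound. However, two quantitative steps do not go through as written. Because $M_{n,\ast}$ and $K_\ast$ are calibrated exactly, both need the paper's sharper estimates.

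First, the VT step. You bound each of $\vt(\bmalpha)-\vt(\bmbeta)$ and $\vt(\bmgamma)-\vt(\bmbeta)$ by $(U_\ast+k)n+E_\ast/2$. That only yields $\abs{\vt(\bmalpha)-\vt(\bmgamma)}\le 2(U_\ast+k)n+E_\ast$, which exceeds $M_{n,\ast}=(U_\ast+k)n+E_\ast$. So the congruence modulo $M_{n,\ast}$ does not yet force equality. Bound the difference itself instead. Take $p_1\le p_2$, $L=p_2-p_1$ and $S=\sum_{i=p_1+1}^{p_2}\beta_i$. Then $\vt(\bmgamma)-\vt(\bmalpha)=L\Delta-hS+\delta_{p_2}-\delta_{p_1}$. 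Since $S\le L<n$, $\abs{\Delta}\le U_\ast$ and $\abs{h}\le k$, this gives $\abs{\vt(\bmgamma)-\vt(\bmalpha)}<(U_\ast+k)n+E_\ast$.

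Second, the case $\Delta\ne0$. Here $\abs{I}=L+T_\ast$ and $K_\ast=(k+1)E_\ast+k^2+T_\ast$, so you need exactly $L\le(k+1)E_\ast+k^2$; an unspecified $O(k^2)$ is not enough. Your partial-block estimate $k(\abs{\Delta}+\abs{h})$ does not deliver this. For $\Delta=1$ and $h=k$, each full block gains only $1$, and your two facts give only $L\le(k+1)(E_\ast+k^2+k)+k$. The paper instead combines $L\abs{\Delta}\le\abs{h}S+E_\ast\le kS+E_\ast$ with the global sparsity count $S\le(L+k)/(k+1)$. For $\abs{\Delta}\ge1$ this gives $(k+1)L\le kL+k^2+(k+1)E_\ast$, i.e.\ $L\le(k+1)E_\ast+k^2$. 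Your block argument can also be repaired: a partial block of $j\ge1$ positions contributes at least $j-k$ in the direction of $\Delta$, which leads to the same bound.
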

\begin{IEEEproof}
Suppose that the $(T_{\ast},T_{\ast}-h)$-burst errors in $\bmalpha$ and $\bmgamma$ occurred at positions $p$ and $r$, respectively. Without loss of generality, we may assume $1\le p\le r\le n-T_{\ast}+1$. Let $I=[p,r+T_{\ast}-1]$. Then it holds that
\begin{equation}\label{eq_albega}
    \begin{gathered}
        \bmalpha_{[n]\setminus I}=\bmgamma_{[n]\setminus I},\\
        \bmbeta_{[1,p-1]}=\bmalpha_{[1,p-1]},\\
        \beta_i=\alpha_{i+h},\text{ for }i\ge r+T_{\ast}-h.
    \end{gathered}
\end{equation}

The assumption $\lwt{\bmalpha}\equiv\lwt{\bmgamma}\pmod{2U_{\ast}}$ gives $\lwt{\bmalpha}-\lwt{\bmbeta}\equiv\lwt{\bmgamma}-\lwt{\bmbeta}\pmod{2U_{\ast}}$. Since $\bmbeta$ is obtained from both $\bmalpha$ and $\bmgamma$ by $(T_{\ast},T_{\ast}-h)$-burst-errors, we have $\lwt{\bmalpha}-\lwt{\bmbeta},\lwt{\bmgamma}-\lwt{\bmbeta}\in\sparenv{-(T_{\ast}-h),T_{\ast}}\subseteq\sparenv{-U_{\ast},U_{\ast}-1}$. Then it follows that $\lwt{\bmalpha}-\lwt{\bmbeta}=\lwt{\bmgamma}-\lwt{\bmbeta}$. Denote $\Delta_{L_1}=\lwt{\bmalpha}-\lwt{\bmbeta}$.

Let $L=r-p$ and $S=\sum_{i=p+1}^{r}\beta_i$. Applying \eqref{eq_markervt} to the two pairs $\parenv{\bmalpha,\bmbeta}$ and $\parenv{\bmgamma,\bmbeta}$ yields
\begin{equation}\label{eq_alphagamma}
    \vt(\bmgamma)-\vt(\bmalpha)=L\Delta_{L_1}-hS+\delta_r-\delta_p.
\end{equation}
Since $S\le L\le n-T_{\ast}<n$, $\abs{\Delta_{L_1}}\le U_{\ast}$ and $\abs{h}\le k<U_{\ast}$, we have
\begin{align*}
    \abs{\vt(\bmgamma)-\vt(\bmalpha)}&\le L\abs{\Delta_{L_1}}+\abs{h}S+\abs{\delta_r}+\abs{\delta_p}\\
    &<(U_{\ast}+k)n+E_{\ast}=M_{n,\ast}.
\end{align*}
Then the condition $\vt(\bmalpha)\equiv \vt(\bmgamma)\pmod{M_{n,\ast}}$ implies $\vt(\bmalpha)=\vt(\bmgamma)$. Combining this with \eqref{eq_alphagamma} gives
\begin{equation}\label{eq_LDelta}
 L\Delta_{L_1}=hS+\delta_p-\delta_r.
\end{equation}
Since $\bmbeta$ is $k$-sparse, an interval of length $L\ge1$ contains at most $S\le(L+k)/(k+1)$ ones. For $L=0$, we have $S=0$. Thus the bound $S\le(L+k)/(k+1)$ always holds.

Suppose first that $\Delta_{L_1}\ne0$. It follows from \eqref{eq_LDelta} that $L\le L\abs{\Delta_{L_1}}\le kS+E_{\ast}\le k(L+k)/(k+1)+E_{\ast}$.
Hence $L\le k^2+(k+1)E_{\ast}$ and
$\abs{I}=L+T_{\ast}\le k^2+(k+1)E_{\ast}+U_{\ast}-k=K_{\ast}$.
Recall from \eqref{eq_albega} that $\alpha_i\ne\gamma_i$ only if $i\in I$. Therefore, the condition $P_r(\bmalpha)=P_r(\bmgamma)$ for every $r\in[0,K_{\ast}-1]$ gives $\bmalpha=\bmgamma$ by \Cref{lem_indexdistance}, with $K=K_{\ast}$ and $Q=2$.

Now suppose $\Delta_{L_1}=0$. Then $\lwt{\bmalpha}=\lwt{\bmbeta}=\lwt{\bmgamma}$.  Denote $w=\lwt{\bmalpha}$. If $w=0$, then $\bmalpha=\bmgamma=0^n$. Otherwise, let $L_j^{\alpha},L_j^{\beta},L_j^{\gamma}$ be the positions of the $j$-th one in $\bmalpha,\bmbeta,\bmgamma$, respectively. Let $w_1=\lwt{\bmalpha_{[1,p-1]}}$ and $w_2=\lwt{\bmalpha_{I}}$. It follows from \eqref{eq_albega} that $\lwt{\bmbeta_{[1,p-1]}}=\lwt{\bmgamma_{[1,p-1]}}=w_1$, $\lwt{\bmbeta_{[p,r+T_{\ast}-h-1]}}=\lwt{\bmgamma_{I}}=w_2$ and $\lwt{\bmbeta_{[r+T_{\ast}-h,n-h]}}=\lwt{\bmalpha_{[r+T_{\ast},n]}}=\lwt{\bmgamma_{[r+T_{\ast},n]}}=w-w_1-w_2$. Consequently,
\begin{equation*}
\begin{gathered}
     L_j^{\beta}=L_j^{\alpha}=L_j^{\gamma},\text{ for }j\le w_1,\\
     L_j^{\beta}=L_{j}^{\alpha}-h=L_{j}^{\gamma}-h,\text{ for }j>w_1+w_2.
\end{gathered}
\end{equation*}
This gives $\abs{L_j^{\alpha}-L_j^{\gamma}}=0$ for $j\in [w_1]\cup[w_1+w_2+1,w]$. For every $j\in[w_1+1,w_1+w_2]$, since $\abs{L_j^{\alpha}-L_j^{\beta}},\abs{L_j^{\gamma}-L_j^{\beta}}<\max\mathset{T_{\ast},T_{\ast}-h}\le U_{\ast}$, it holds that $\abs{L_j^{\alpha}-L_j^{\gamma}}<2U_{\ast}$.

If $L_j^{\alpha}= L_j^{\gamma}$ for every $j\in[w]$, the conclusion follows. Otherwise, let $j_1$ and $j_2$ be the smallest and largest indices $j$ such that $L_j^{\alpha}\ne L_j^{\gamma}$, respectively. To apply \Cref{lem_indexdistance}, it remains to bound $j_2-j_1$. First, notice that $j_1,j_2\in[w_1+1,w_1+w_2]$ and thus, $j_2-j_1<w_2$. The assumption $\Delta_{L_1}= 0$ and \eqref{eq_LDelta} give $\abs{h}S\le E_{\ast}$. Since $h\ne0$, it follows that $S\le E_{\ast}$. Therefore,
$$
 w_2=\beta_p+\sum_{i=p+1}^{r}\beta_i+
 \sum_{i=r+1}^{r+T_{\ast}-h-1}\beta_i
 \le S+T_{\ast}-h\le E_{\ast}+U_{\ast}<K_{\ast}.
$$
Now the condition $R_r(\bmalpha)=R_r(\bmgamma)$ for every $r\in[0,K_{\ast}-1]$, together with \Cref{lem_indexdistance} with $Q=2U_{\ast}$ and $K=K_{\ast}$, gives $L_j^{\alpha}=L_j^{\gamma}$ for every $j\in[w]$, which implies $\bmalpha=\bmgamma$. Now the proof is completed.
\end{IEEEproof}

The following remarks clarify the hypotheses of \Cref{prop_markerrecovery}.
\begin{remark}
    \begin{itemize}
        \item The hypotheses in \Cref{prop_markerrecovery} require the common received sequence $\bmbeta$ to be $k$-sparse. The two original binary sequences need not be $k$-sparse.
        \item The condition $h\ne0$ is used to deduce $S\le E_{\ast}$ when $\Delta_{L_1}=0$. If $h=\Delta_{L_1}=0$, \eqref{eq_LDelta} gives no bound on $S$.
    \end{itemize}
\end{remark}

\subsection{Recovery of the Labels}\label{sec_labelrecovery}
By \Cref{prop_markerrecovery}, the marker sequence
$\bm{\chi}_k(\bmx)$ is uniquely determined by
$\bm{\chi}_k(\bmy)$ and the syndrome in \eqref{eq_synHuneq}. It remains to recover the labels of the segments defined in \Cref{sec_problemreduction}. We first bound the number of consecutive segment indices at which two sequences with the same marker sequence and a common received sequence can differ.

\begin{lemma}\label{lem_affectedsegments}
Let $\bmx,\bmz\in\Sigma_q^n$ where $n\ge T_{\ast}$. Suppose $\bm{\chi}_k(\bmx)=\bm{\chi}_k(\bmz)$ and $\cE_{1,k}^{\ne}(\bmx)\cap\cE_{1,k}^{\ne}(\bmz)\ne\emptyset$. Let $w=\lwt{\bm{\chi}_k(\bmx)}=\lwt{\bm{\chi}_k(\bmz)}$. Denote the respective segments of $\bmx$ and $\bmz$ by $\bmu_0^x,\ldots,\bmu_w^x$ and $\bmu_0^z,\ldots,\bmu_w^z$. Then there exist integers $0\le a\le b\le w$ such that $b-a+1\le2U_{\ast}$ and $\bmu_j^x=\bmu_j^z$ for every $j\in[0,w]\setminus[a,b]$. In other words, the two sequences $\bmx$ and $\bmz$ differ in at most $2U_{\ast}$ segments.
\end{lemma}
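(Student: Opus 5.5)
The plan is to compare $\bmx$ and $\bmz$ directly through a common received sequence. We show that they differ only on a coordinate span bounded by the two edit locations. Inside that span, markers can occur only near the two edit windows: in the long middle stretch, $\bmz$ is a shift of $\bmx$ by $h\ne0$, and together with the equality of marker sequences this forbids markers there. Segments avoiding the differing coordinates are then equal, and the remaining ones are few.

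\textbf{Setup.} Fix $\bmy\in\cE_{1,k}^{\ne}(\bmx)\cap\cE_{1,k}^{\ne}(\bmz)$ and put $h=n-\abs{\bmy}\ne0$, so $1\le\abs{h}\le k$. Write the edit $\bmx\to_k\bmy$ as in \eqref{eq_editbasic} with $i=\abs{\bmu}+1$, $s=\abs{\bms}$, $t=\abs{\bmt}$, and the edit $\bmz\to_k\bmy$ with $i',s',t'$. Then $s-t=s'-t'=h$. By symmetry assume $i\le i'$. From the definition of a substring edit,
\begin{itemize}
\item $x_j=y_j$ for $j<i$, and $x_j=y_{j-h}$ for $j\ge i+s$;
\item $z_j=y_j$ for $j<i'$, and $z_j=y_{j-h}$ for $j\ge i'+s'$.
\end{itemize}
Hence $x_j=z_j$ for $j<i$ and for $j\ge\max\{i+s,i'+s'\}$. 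On the middle range $[i+s,i'-1]$ we have $z_j=y_j$ and $y_j=x_{j+h}$, so $z_j=x_{j+h}$ there. With the convention $x_j=z_j=0$ outside $[n]$, the same relations hold for all $j\in\mathbb Z$ on the corresponding ranges.

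\textbf{Step 1: no markers deep in the middle.} By \Cref{lem_markerproperty}~(iii), whether $j$ is a marker depends only on the window $[j-k,j+\ell_{\ast}+k]$. Let $J$ be the set of $j$ whose window, and whose $h$-shifted window, both lie in $[i+s,i'-1]$. For $j\in J$, the marker status of $j$ in $\bmz$ equals the marker status of $j+h$ in $\bmx$. Since $\bm{\chi}_k(\bmx)=\bm{\chi}_k(\bmz)$, we get $\chi_j=\chi_{j+h}$ for $j\in J$, where $\chi=\bm{\chi}_k(\bmx)$.

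Suppose $\chi_j=1$ for some $j\in J$. Then $\chi_{j+h}=1$ as well, giving two markers at distance $\abs{h}\in[1,k]$. This contradicts the $k$-sparseness in \Cref{lem_markerproperty}~(i). Hence $\chi_j=0$ on $J$. This is the only place where $h\ne0$ is used.

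\textbf{Step 2: few markers in the whole span.} Every $j$ in the span $[i,\max\{i+s,i'+s'\}-1]$ that is not in $J$ lies within distance about $\ell_{\ast}+2k$ of one of the two edit windows $[i,i+s-1]$ and $[i',i'+s'-1]$. So the markers of $\bmx$ in the span are confined to two intervals, each of length at most about $\ell_{\ast}+3k+k=U_{\ast}$. By $k$-sparseness, each interval contains at most $\lceil U_{\ast}/(k+1)\rceil$ markers. The case where the two windows are close, so that $[i+s,i'-1]$ is short or empty, is covered by the same bound, since then the whole span has length $O(U_{\ast})$.

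\textbf{Step 3: counting segments.} All coordinates where $\bmx$ and $\bmz$ differ lie in the span. Because the marker sequences coincide, both sequences have the same segment boundaries $s_0,\ldots,s_{w+1}$. Therefore every segment index $j$ whose coordinate interval misses the span satisfies $\bmu_j^x=\bmu_j^z$.

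The segments meeting the span form a consecutive block $[a,b]$. Their number is at most one more than the number of markers in the span, which by Step 2 is at most $2\lceil U_{\ast}/(k+1)\rceil+1\le 2U_{\ast}$.

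\textbf{Main obstacle.} The delicate part is the boundary bookkeeping in Steps 1 and 2. One must check that the set $J$ of fully shifted-interior positions leaves only $O(U_{\ast})$ exceptional positions near each edit. This needs care with the window $[j-k,j+\ell_{\ast}+k]$ and the extra shift by $h$, including coordinates near $0$ and $n+1$ where the convention $x_j=0$ applies. The constants in $T_{\ast}$ and $U_{\ast}$ appear designed to absorb exactly this slack, in the same way as in \Cref{lem_inducedThburst}.
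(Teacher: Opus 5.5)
Your proposal is correct and is essentially the paper's argument. The paper derives the same middle-stretch identity on the marker sequence, $\alpha_j=\beta_j=\alpha_{j+h}$ for $j\in[p+T_{\ast}-h,r-1]$, by applying \Cref{lem_inducedThburst} to both edits; you obtain $\chi_j=\chi_{j+h}$ directly from \Cref{lem_markerproperty}~(iii). Both then use $k$-sparseness and count the markers left in two short end pieces.

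One slip needs fixing: $y_j=x_{j+h}$ holds for $j\ge i+t$, not for $j\ge i+s$, so the middle range must be $[i+t,i'-1]$. Your range is wrong when $h<0$, but the correction moves the left exceptional piece by at most $k$ and does not affect the count.
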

\begin{IEEEproof}
Let $\bmy\in\cE_{1,k}^{\ne}(\bmx)\cap\cE_{1,k}^{\ne}(\bmz)$ and $h=n-\abs{\bmy}$. Then $1\le\abs{h}\le k$.
Suppose that the $k$-substring edits in $\bmx$ and $\bmz$ start at positions $i_x$ and $i_z$ and remove $s_x$ and $s_z$ symbols, respectively, where $s_x,s_z\in[0,k]$.

Denote $\bmalpha=\bm{\chi}_k(\bmx)=\bm{\chi}_k(\bmz)$ and
$\bmbeta=\bm{\chi}_k(\bmy)$. Since $\bmy\in\cE_{1,k}^{\ne}(\bmx)$, by \Cref{lem_inducedThburst}, there exists $p\in[1,n-T_{\ast}+1]$ such that $\bmbeta$ is obtained from $\bmalpha$ by a $(T_{\ast},T_{\ast}-h)$-burst error at position $p$, where the index $p$ satisfies $i_x\in\sparenv{p,p+T_{\ast}-s_x}$. By the same argument, there exists $r\in[1,n-T_{\ast}+1]$ such that
$\bmbeta$ is obtained from $\bmalpha$ by
a $(T_{\ast},T_{\ast}-h)$-burst error at position $r$, and the index $r$ satisfies $i_z\in\sparenv{r,r+T_{\ast}-s_z}$.

Without loss of generality, assume $p\le r$. Then we have
$$
 x_j=z_j=
 \begin{cases}
  y_j,     & j\in[1,p-1],\\
  y_{j-h}, & j\in[r+T_{\ast},n].
 \end{cases}
$$
Consequently, $x_j=z_j$ for every $j\in[n]\setminus I$,
where $I\triangleq[p,r+T_{\ast}-1]$.

We next prove that $\sum_{j\in I}\alpha_j<2U_{\ast}$.
Put $L=r-p$. If $L\le U_{\ast}$, then, since $T_{\ast}=U_{\ast}-k$ and $k\ge1$, $\abs{I}=L+T_{\ast}\le2U_{\ast}-k<2U_{\ast}$. The desired bound follows because $\bmalpha$ is binary.

Now assume $L>U_{\ast}$. The $(T_{\ast},T_{\ast}-h)$-burst error at position $r$ gives $\beta_j=\alpha_j$ for $j\in[1,r-1]$, whereas the $(T_{\ast},T_{\ast}-h)$-burst
error at position $p$ gives $\beta_j=\alpha_{j+h}$ for
$j\in[p+T_{\ast}-h,n-h]$. Hence, for every $j\in[p+T_{\ast}-h,r-1]$, it holds that $\alpha_j=\beta_j=\alpha_{j+h}$. All these coordinates are valid: the bounds on $p,r$ and $T_{\ast}>\abs{h}$ imply that $j\in[n]\cap[n-h]$ and $j+h\in[n]$.
If $\alpha_j=1$, then $\alpha_{j+h}=1$. This contradicts the $k$-sparsity of $\bmalpha$ established in \Cref{lem_markerproperty}~(i), since $1\le\abs{(j+h)-j}=\abs{h}\le k$. Therefore, $\alpha_j=0$ throughout
$[p+T_{\ast}-h,r-1]$. Then it follows that
$$
 \begin{aligned}
 \sum_{j\in I}\alpha_j
 &=\sum_{j=p}^{p+T_{\ast}-h-1}\alpha_j
   +\sum_{j=r}^{r+T_{\ast}-1}\alpha_j\\
 &\le (T_{\ast}-h)+T_{\ast}
 <2U_{\ast}.
 \end{aligned}
$$
The three subintervals of $I$ used here are illustrated in
\Cref{uneq_fig_interval}.

Let $s_1<\cdots<s_w$ be the common markers in $[n]$, and put $s_0=0$ and $s_{w+1}=n+1$. Let $a,b\in[0,w]$ be the unique indices satisfying $s_a\le p<s_{a+1}$ and
$s_b\le r+T_{\ast}-1<s_{b+1}$. Then $a\le b$. The set of markers in $[p+1,r+T_{\ast}-1]$ is $\mathset{s_{a+1},\ldots,s_b}$. By definition of $\bmalpha$, we have $\alpha_j=1$ if and only if $j\in\mathset{s_1,\ldots,s_w}$. Therefore,
$$
 b-a+1
 =1+\sum_{j=p+1}^{r+T_{\ast}-1}\alpha_j
 \le1+\sum_{j\in I}\alpha_j
 \le2U_{\ast}.
$$
It follows from the choice of $a$ and $b$ that $I$ is a subset of $[s_a,s_{b+1}-1]$. Since $x_i=z_i$ for every $i\in[n]\setminus I$, we conclude that $\bmu_j^x=\bmu_j^z$ for every $j\in[0,w]\setminus[a,b]$.
This completes the proof.
\end{IEEEproof}
\begin{figure}[!t]
\centering
\begin{tikzpicture}[x=1cm,y=1cm,font=\small]
\fill[red!12] (0,0) rectangle (3,.8);
\fill[red!12] (8.7,0) rectangle (11.7,.8);
\draw (0,0) rectangle (11.7,.8);
\draw (3,0)--(3,.8) (8.7,0)--(8.7,.8);
\node at (1.5,.4) {$T_{\ast}-h\le U_{\ast}$ symbols};
\node at (5.85,.4) {all symbols are zero};
\node at (10.2,.4) {$T_{\ast}\le U_{\ast}$ symbols};
\node[below] at (0,-.1) {$p$};\node[below] at (3,-.1) {$p+T_{\ast}-h-1$};
\node[below] at (8.7,-.1) {$r$};\node[below] at (11.7,-.1) {$r+T_{\ast}-1$};
\node[above] at (5.85,1) {$\alpha_{j+h}=\beta_j=\alpha_j,\qquad 1\le\abs{h}\le k$};
\end{tikzpicture}
\caption{The common marker sequence on $[p,r+T_{\ast}-1]$ when $r-p> U_{\ast}$. Markers in this interval can occur only in the two red parts. The white part consists of zeros and can be arbitrarily long.}
\label{uneq_fig_interval}
\end{figure}

For $r\in[0,2U_{\ast}-1]$, define
\begin{equation}\label{eq_bmGr}
\bmG_{r}^{\ne}(\bmx)\triangleq\parenv{\sum_{\substack{j\in[0,w]\\ j\equiv r\pmod{2U_{\ast}}}}\Lab\parenv{\bmu_j^x}}\bmod q,
\end{equation}
where $\bmu_0^x,\ldots,\bmu_w^x$ are the segments of $0\bmx$. Each component is added over the integers and reduced to $[0,q-1]$.

\begin{proposition}\label{prop_labelrecovery}
Let $n\ge T_{\ast}$ and $\bmx,\bmz\in\Sigma_q^n$ satisfy
$\bm{\chi}_k(\bmx)=\bm{\chi}_k(\bmz)$. Denote their respective segments by $\bmu_0^x,\ldots,\bmu_w^x$ and
$\bmu_0^z,\ldots,\bmu_w^z$. Suppose that $\cE_{1,k}^{\ne}(\bmx)\cap\cE_{1,k}^{\ne}(\bmz)\ne\emptyset$ and that $\bmG_{r}^{\ne}(\bmx)=\bmG_{r}^{\ne}(\bmz)$ for every $r\in[0,2U_{\ast}-1]$. Then $\Lab(\bmu_j^x)=\Lab(\bmu_j^z)$ for every $j\in[0,w]$. 

Consequently, given $\bmy\in\cE_{1,k}^{\ne}(\bmx)$, the marker sequence $\bm{\chi}_k(\bmx)$, and
$(\bmG_{r}^{\ne}(\bmx))_{r=0}^{2U_{\ast}-1}$, all segment labels of $\bmx$ are uniquely determined.
\end{proposition}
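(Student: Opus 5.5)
The plan is to combine \Cref{lem_affectedsegments} with the residue-class structure of the group sums in \eqref{eq_bmGr}. Since $\bm{\chi}_k(\bmx)=\bm{\chi}_k(\bmz)$ and $\cE_{1,k}^{\ne}(\bmx)\cap\cE_{1,k}^{\ne}(\bmz)\ne\emptyset$, \Cref{lem_affectedsegments} applies directly. It yields indices $0\le a\le b\le w$ with $b-a+1\le 2U_{\ast}$ such that $\bmu_j^x=\bmu_j^z$ for every $j\notin[a,b]$. In particular, $\Lab(\bmu_j^x)=\Lab(\bmu_j^z)$ for those $j$. Because the marker sequences coincide, both $\bmx$ and $\bmz$ have the same number $w+1$ of segments, so the same index range is used in both sums in \eqref{eq_bmGr}.

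Next I would fix $j_0\in[a,b]$ and set $r\equiv j_0\pmod{2U_{\ast}}$. Since $[a,b]$ has at most $2U_{\ast}$ consecutive integers, $j_0$ is the only index in $[a,b]$ congruent to $r$ modulo $2U_{\ast}$. Subtract $\bmG_r^{\ne}(\bmz)$ from $\bmG_r^{\ne}(\bmx)$ componentwise modulo $q$. Every term with $j\ne j_0$ in that residue class lies outside $[a,b]$, so those terms cancel. This leaves $\Lab(\bmu_{j_0}^x)\equiv\Lab(\bmu_{j_0}^z)\pmod q$ componentwise. Both labels lie in $\Sigma_q^{2B_{\ast}}$, whose entries are in $[0,q-1]$, so the congruence forces equality. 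This proves the first assertion.

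For the consequence, I would argue by uniqueness. Suppose $\bmz\in\Sigma_q^n$ is any candidate with $\bmy\in\cE_{1,k}^{\ne}(\bmz)$, $\bm{\chi}_k(\bmz)=\bm{\chi}_k(\bmx)$, and the same group sums. By the first part, $\bmz$ has the same labels as $\bmx$. So the labels are the common value over all consistent candidates, and hence are determined by $\bmy$, $\bm{\chi}_k(\bmx)$ and $(\bmG_r^{\ne}(\bmx))_r$. The key steps are therefore routine: the substantive input, bounding the number of affected segments, is already supplied by \Cref{lem_affectedsegments}. The one point needing care is that "uniquely determined" is meant only in this information-theoretic sense, as stated, and not as an algorithmic recovery.
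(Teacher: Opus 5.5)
Your proposal is correct and follows essentially the same route as the paper: it invokes \Cref{lem_affectedsegments} to confine the differing segments to at most $2U_{\ast}$ consecutive indices, then uses the fact that each residue class modulo $2U_{\ast}$ meets that window at most once, so the group-sum equalities force the labels to agree componentwise modulo $q$ and hence exactly. Your explicit uniqueness argument for the ``consequently'' part spells out a step the paper leaves implicit.
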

\begin{IEEEproof}
By \Cref{lem_affectedsegments}, there exist integers
$0\le a\le b\le w$ such that $b-a+1\le 2U_{\ast}$ and
$ \bmu_j^x=\bmu_j^z$ for every $j\in[0,w]\setminus[a,b]$.
In particular, we have $\Lab\parenv{\bmu_j^x}=\Lab\parenv{\bmu_j^z}$ for every $j\in[0,w]\setminus[a,b]$.

For every $r\in[0,2U_{\ast}-1]$, the equality
$\bmG_{r}^{\ne}(\bmx)=\bmG_{r}^{\ne}(\bmz)$ and
\eqref{eq_bmGr} imply
$$
 \sum_{\substack{j\in[0,w]\\
                  j\equiv r\pmod{2U_{\ast}}}}
 \parenv{\Lab(\bmu_j^x)-\Lab(\bmu_j^z)}
 \equiv\bm{0}\pmod{q}.
$$
All vector congruences in this proof are componentwise.
The summands with indices outside $[a,b]$ are zero.
Therefore,
$$
 \sum_{\substack{j\in[a,b]\\
                  j\equiv r\pmod{2U_{\ast}}}}
 \parenv{\Lab(\bmu_j^x)-\Lab(\bmu_j^z)}
 \equiv\bm{0}\pmod{q}
$$
for every $r\in[0,2U_{\ast}-1]$. This implies $\Lab\parenv{\bmu_j^x}\equiv\Lab\parenv{\bmu_j^z}\pmod{q}$ for every $j\in[a,b]$, since $\abs{[a,b]}\le 2U_{\ast}$. Recall that $\Lab\parenv{\bmu_j^x},\Lab\parenv{\bmu_j^z}\in\Sigma_q^{2B_{\ast}}$. Now we conclude that $\Lab\parenv{\bmu_j^x}=\Lab\parenv{\bmu_j^z}$ for every $j$. Now the proof is completed.
\end{IEEEproof}

\subsection{The Code Construction}\label{sec_uneqcode}
Define the syndrome
\begin{equation}\label{eq_fullsyndrome}
 \mathsf{Syn}_{n}^{\ne}(\bmx)\triangleq
 \parenv{\mathsf{H}_{n}^{\ne}(\bm{\chi}_k(\bmx)),
 (\bmG_{r}^{\ne}(\bmx))_{r=0}^{2U_{\ast}-1}}.
\end{equation}
By \eqref{eq_synHuneq} and \eqref{eq_bmGr}, the values of this syndrome belong to
$$
\cV_{\mathsf{Syn}}\triangleq[0,M_{n,\ast}-1]\times[0,2U_{\ast}-1]\times\Sigma_2^{K_{\ast}}\times[0,2U_{\ast}-1]^{K_{\ast}}\times(\Sigma_q^{2B_{\ast}})^{2U_{\ast}}.
$$
This set has $2U_{\ast}(4U_{\ast})^{K_{\ast}}q^{4B_{\ast}U_{\ast}}M_{n,\ast}$ elements.

\begin{proposition}\label{prop_complexitySynne}
    For $n\ge1$ and any $\bmx\in\Sigma_q^n$, the syndrome $\mathsf{Syn}_{n}^{\ne}(\bmx)$ can be computed in $O_{q,k}\parenv{n\log n}$ bit operations.
\end{proposition}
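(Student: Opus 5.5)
The plan is to compute $\mathsf{Syn}_{n}^{\ne}(\bmx)$ in three linear-scan passes, each using $O_{q,k}(n)$ arithmetic operations on integers with $O_{q,k}(\log n)$ bits. Since $q$ and $k$ are fixed, $\ell_{\ast}$, $D_{\ast}$, $B_{\ast}$, $T_{\ast}$, $U_{\ast}$, $E_{\ast}$ and $K_{\ast}$ are constants. The only quantity that grows with $n$ is $M_{n,\ast}=O_{q,k}(n)$, so by \Cref{prop_bitarithmetic} each addition or comparison costs $O_{q,k}(\log n)$ bit operations.

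\textbf{Step 1: compute the marker sequence $\bm{\chi}_k(\bmx)$.} Extend $\bmx$ by zeros on the index range $[1-\ell_{\ast}-k,n+\ell_{\ast}+k]$, a length of $n+O_k(1)$. For each $i$ in $[1-\ell_{\ast}-k,n+k]$, test whether $x_i\ne x_{i+\ell_{\ast}}$ to decide whether $i$ is special; this costs $O_{q,k}(1)$ bit operations per index. By \Cref{lem_markerproperty}~(iii), whether $i\in[n]$ is a marker depends only on $x_{i-k},\ldots,x_{i+\ell_{\ast}+k}$. The test compares the length-$(\ell_{\ast}+1)$ window at $i$ with at most $2k$ neighbouring windows at special coordinates. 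This is a constant-size computation, so producing all of $\bm{\chi}_k(\bmx)$ costs $O_{q,k}(n)$ bit operations. The only point needing care is to include the special coordinates in $[1-\ell_{\ast},0]$ created by the zero padding.

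\textbf{Step 2: compute $\mathsf{H}_n^{\ne}(\bm{\chi}_k(\bmx))$.} Scan $\bmalpha=\bm{\chi}_k(\bmx)$ once while maintaining the following running values:
\begin{itemize}
\item $\vt(\bmalpha)$ modulo $M_{n,\ast}$: add $i$ whenever $\alpha_i=1$, reducing by at most one subtraction of $M_{n,\ast}$;
\item $\lwt{\bmalpha}$ modulo $2U_{\ast}$;
\item the $K_{\ast}$ parities $P_r$, using a counter for $i\bmod K_{\ast}$ rather than divisions;
\item a counter $j$ of the ones seen so far, together with $j\bmod K_{\ast}$, so that each position $L_j^{\alpha}=i$ can be added to the appropriate $R_r$ modulo $2U_{\ast}$.
\end{itemize}
Each step involves $O_{q,k}(\log n)$-bit integers. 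Reducing $i$ modulo the constant $2U_{\ast}$ costs $O_{q,k}(\log n)$ by \Cref{prop_bitarithmetic}~(iii), since the quotient has $O(\log n)$ bits and the divisor is constant. The total for this pass is therefore $O_{q,k}(n\log n)$.

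\textbf{Step 3: compute the label sums $\bmG_r^{\ne}(\bmx)$.} From $\bm{\chi}_k(\bmx)$, read off the segment boundaries $s_0=0,s_1,\ldots,s_w,s_{w+1}=n+1$, where each boundary costs $O(\log n)$ bits to record. Maintain the current segment index modulo $2U_{\ast}$ with a counter. For each segment of length $m_j$, form $\Lab(\bmu_j^x)$ by copying at most $2B_{\ast}$ symbols and padding with zeros, after computing $d=\min\{m_j,B_{\ast}\}$ with one comparison. Add this label componentwise modulo $q$ into the accumulator $\bmG_{j\bmod 2U_{\ast}}$. Each segment costs $O_{q,k}(1)$ symbol operations plus $O(\log n)$ bits of index bookkeeping, and there are $w+1\le n+1$ segments, so this pass costs $O_{q,k}(n\log n)$.

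Summing the three passes gives $O_{q,k}(n\log n)$. I expect no genuine obstacle; the main care point is the bookkeeping in Step 1 at the boundaries, where out-of-range indices are treated as zeros. I also need to check that no step secretly uses multiplications of $\Theta(\log n)$-bit numbers a linear number of times; that would give $n(\log n)^2$, and the incremental updates above avoid it.
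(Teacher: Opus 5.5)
Your proposal is correct and follows essentially the same three-pass structure as the paper's proof: a constant-window computation of $\bm{\chi}_k(\bmx)$ via \Cref{lem_markerproperty}~(iii), one left-to-right scan of $\bm{\chi}_k(\bmx)$ for $\mathsf{H}_n^{\ne}$ that keeps the VT residue below $M_{n,\ast}$ with at most one subtraction per update, and a per-segment accumulation of labels into the $2U_{\ast}$ group sums. The differences are only in implementation---you reduce positions modulo $2U_{\ast}$ by constant-divisor division and copy labels from recorded boundaries, whereas the paper uses running counters and a streaming prefix/suffix buffer---and in both versions the total is dominated by the $O_{q,k}(n\log n)$ cost of the VT residue.
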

The proof of this proposition is deferred to Appendix~\ref{appendix_pfSynne}.

For $\theta\in\cV_{\mathsf{Syn}}$, let
\begin{equation*}
    \cC^{\ne}(n;\theta)\triangleq\mathset{\bmx\in\Sigma_q^n:\mathsf{Syn}_{n}^{\ne}(\bmx)=\theta}.
\end{equation*}

The main result of this section is given below.
\begin{theorem}\label{thm_uneqfinalcode}
For $n\ge T_{\ast}$ and $\theta\in\cV_{\mathsf{Syn}}$, every nonempty $\cC^{\ne}(n;\theta)$ is an unequal-length $k$-substring edit correcting code. For any $\bmx\in\cC^{\ne}(n;\theta)$ and any $\bmy\in\cE_{1,k}^{\ne}(\bmx)$, we can recover $\bmx$ from $\bmy$ in $O_{q,k}\parenv{n^2\log n}$ bit operations by brute force. Moreover, there exists some $\theta$ such that $\red\parenv{\cC^{\ne}(n;\theta)}\le\log n+O_{q,k}(1)$.
\end{theorem}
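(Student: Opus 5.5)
The proof has three parts: correctness, decoding, and redundancy. Each part assembles earlier results.

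\textbf{Correctness.} Suppose $\bmx,\bmz\in\cC^{\ne}(n;\theta)$ and some $\bmy$ lies in $\cE_{1,k}^{\ne}(\bmx)\cap\cE_{1,k}^{\ne}(\bmz)$. Set $h=n-\abs{\bmy}$, so $1\le\abs{h}\le k$.

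\begin{enumerate}[$(i)$]
\item By \Cref{lem_inducedThburst}, the sequence $\bm{\chi}_k(\bmy)$ is obtained from both $\bm{\chi}_k(\bmx)$ and $\bm{\chi}_k(\bmz)$ by a $(T_{\ast},T_{\ast}-h)$-burst error.
\item By \Cref{lem_markerproperty}~(i), $\bm{\chi}_k(\bmy)$ is $k$-sparse.
\item The two codewords share $\mathsf{H}_n^{\ne}(\bm{\chi}_k(\cdot))$, since it is part of $\theta$. Hence \Cref{prop_markerrecovery} gives $\bm{\chi}_k(\bmx)=\bm{\chi}_k(\bmz)$.
\item The common marker sequence fixes the positions $s_1,\ldots,s_w$, and therefore all segment lengths, identically for both sequences.
\item \Cref{prop_labelrecovery}, applied with the shared $\bmG_r^{\ne}$ values, gives $\Lab(\bmu_j^x)=\Lab(\bmu_j^z)$ for every $j$.
\item \Cref{lem_label} then gives $\bmu_j^x=\bmu_j^z$ for all $j$, so $0\bmx=0\bmz$ and $\bmx=\bmz$.
\end{enumerate}

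\textbf{Decoding.} Given $\bmy$, the decoder enumerates all candidates $\bmc\in\cE_{1,k}(\bmy)\cap\Sigma_q^n$. Every $k$-substring edit is reversible, so the true $\bmx$ is among these candidates. There are $O_{q,k}(n)$ of them, by the counting in \Cref{lem_confusionball}. For each candidate, the decoder computes $\mathsf{Syn}_n^{\ne}(\bmc)$ in $O_{q,k}(n\log n)$ bit operations (\Cref{prop_complexitySynne}) and compares it with $\theta$. Any candidate that passes the test is a codeword with $\bmy\in\cE_{1,k}^{\ne}(\bmc)$, so by the correctness part it equals $\bmx$. Building each candidate costs $O(n)$, so the total cost is $O_{q,k}(n\cdot n\log n)=O_{q,k}(n^2\log n)$.

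\textbf{Redundancy.} The codes $\cC^{\ne}(n;\theta)$ for $\theta\in\cV_{\mathsf{Syn}}$ partition $\Sigma_q^n$. By pigeonhole, some $\theta$ gives
\[
\abs{\cC^{\ne}(n;\theta)}\ge \frac{q^n}{\abs{\cV_{\mathsf{Syn}}}}.
\]
We have $\abs{\cV_{\mathsf{Syn}}}=2U_{\ast}(4U_{\ast})^{K_{\ast}}q^{4B_{\ast}U_{\ast}}M_{n,\ast}$. All factors other than $M_{n,\ast}$ depend only on $q$ and $k$. Also $M_{n,\ast}=(U_{\ast}+k)n+E_{\ast}=O_k(n)$. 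Therefore
\[
\red\le\log n+O_{q,k}(1).
\]

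The only step needing care is the enumeration in the decoding part. One must check that the candidate set has size $O_{q,k}(n)$ and that duplicates do not matter. Duplicates are harmless, since uniqueness of the passing candidate holds regardless.
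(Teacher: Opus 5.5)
Your proposal is correct and follows the paper's proof essentially step for step. Correctness goes through the same chain: the induced burst on the marker sequence plus $k$-sparsity, then marker recovery, then label recovery, then reconstruction of each segment from its length and label. Decoding is the same brute-force search over the $O_{q,k}(n)$ length-$n$ preimages of $\bmy$, checking each candidate's syndrome against $\theta$, and redundancy is the same pigeonhole over $\cV_{\mathsf{Syn}}$. The only cosmetic difference is that you describe the candidate list as $\cE_{1,k}(\bmy)\cap\Sigma_q^n$ and justify it by reversibility, whereas the paper parametrizes the candidates explicitly by the triple $(t,p,\bmv)$.
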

\begin{IEEEproof}
Suppose $\bmx,\bmz\in\cC^{\ne}(n;\theta)$ and
$\bmy\in\cE_{1,k}^{\ne}(\bmx)\cap\cE_{1,k}^{\ne}(\bmz)$. We will prove $\bmx=\bmz$. 

Put $h=n-\abs{\bmy}$. Then $1\le\abs{h}\le k$. By \Cref{lem_markerproperty,lem_inducedThburst}, the sequence $\bm{\chi}_k(\bmy)$ is $k$-sparse and is obtained from both $\bm{\chi}_k(\bmx)$ and $\bm{\chi}_k(\bmz)$ by $(T_{\ast},T_{\ast}-h)$-burst errors. \Cref{eq_fullsyndrome} gives $\mathsf{H}_{n}^{\ne}(\bm{\chi}_k(\bmx))=\mathsf{H}_{n}^{\ne}(\bm{\chi}_k(\bmz))$. By \Cref{prop_markerrecovery}, we have $\bm{\chi}_k(\bmx)=\bm{\chi}_k(\bmz)$. Then it follows from \Cref{prop_labelrecovery} that $\Lab(\bmu_j^x)=\Lab(\bmu_j^z)$ for every $j\in\sparenv{0,\lwt{\bm{\chi}_k(\bmx)}}$, since $\bmG_{r}^{\ne}(\bmx)=\bmG_{r}^{\ne}(\bmz)$ for every $r\in[0,2U_{\ast}-1]$ by \eqref{eq_fullsyndrome}. Notice that for each $j\in\sparenv{0,\lwt{\bm{\chi}_k(\bmx)}}$, the lengths of $\bmu_j^x$ and $\bmu_j^z$ are determined by $\bm{\chi}_k(\bmx)$. Now the equality $\bmx=\bmz$ follows from \Cref{lem_label}.

We next give the decoding procedure and bound its complexity. Write $m=\abs{\bmy}=n-h$ and define
$\cT_h=\{t\in[0,k]:0\le t+h\le k\}$. For every $t\in\cT_h$, every $p\in[1,m-t+1]$, and every
$\bmv\in\Sigma_q^{t+h}$, form the candidate $\bmz=\bmy_{[1,p-1]}\bmv\bmy_{[p+t,m]}$. Each candidate has length $m-t+(t+h)=n$. Every length-$n$ sequence that can give $\bmy$ by a $k$-substring edit occurs among these candidates. In particular, the transmitted codeword $\bmx$ occurs in this list.

Since $m\le n+k$, the number of candidates is at most
\begin{equation*}
    \sum_{t\in\cT_h}(m-t+1)q^{t+h}
    \le (m+1)\sum_{s=0}^{k}q^s
    =O_{q,k}(n).
\end{equation*}
Writing one candidate costs
$O_{q,k}(n)$ bit operations. By
\Cref{prop_complexitySynne}, computing the syndrome $\mathsf{Syn}_n^{\ne}$ of each candidate
costs $O_{q,k}\parenv{n\log n}$ bit operations.
The syndrome has $O_{q,k}(\log n)$ bits, so
comparing it with $\theta$ costs at most that many
bit operations. The decoder returns the first candidate whose syndrome equals $\theta$. Such a candidate exists, and the correction property proved above ensures that it is $\bmx$. The total decoding cost is therefore $O_{q,k}\parenv{n^2\log n}$ bit operations.

Finally, we give the redundancy bound. By the pigeonhole principle, there exists $\theta\in\cV_{\mathsf{Syn}}$ such that
$$
\abs{\cC^{\ne}(n;\theta)}\ge\frac{q^n}{2U_{\ast}(4U_{\ast})^{K_{\ast}}q^{4B_{\ast}U_{\ast}}M_{n,\ast}},
$$
or equivalently,
\begin{align*}
    \red\parenv{\cC^{\ne}(n;\theta)}&\le\log\parenv{M_{n,\ast}}+\log\parenv{2U_{\ast}(4U_{\ast})^{K_{\ast}}q^{4B_{\ast}U_{\ast}}}\\
    &=\log n+O_{q,k}(1).
\end{align*}
This completes the proof.
\end{IEEEproof}

The constant term $O_{q,k}(1)$ in $\red\parenv{\cC^{\ne}(n;\theta)}$ could be very large, as shown in the following example.
\begin{example}
For $k=2$, the parameters in \eqref{eq_uneqpara1} and \eqref{eq_uneqpara2} are $\ell_{\ast}=2$, $B_{\ast}=2q^2(q-1)+1$, $T_{\ast}=8$, $U_{\ast}=10$, $E_{\ast}=182$, $K_{\ast}=558$ and $M_{n,\ast}=12n+182$.
\Cref{thm_uneqfinalcode} gives a code with redundancy at most
$\log(12n+182)+\log20+558\log40+40B_{\ast}\log q$.
\end{example}

As mentioned in \Cref{sec_relatedmodel}, a $(\le k)$-burst-deletion and a $(t,k)$-localized deletion are particular unequal-length substring edits. We therefore obtain the following consequence from \Cref{thm_uneqfinalcode}.
\begin{corollary}\label{uneq_cor_deletion}
For fixed $q\ge2$ and $k\ge t\ge1$, \Cref{thm_uneqfinalcode} gives a $(\le k)$-burst-deletion correcting code with redundancy at most $\log n+O_{q,k}(1)$ and a $(t,k)$-localized deletion correcting code with redundancy at most $\log n+O_{q,t,k}(1)$.
\end{corollary}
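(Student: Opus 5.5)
The plan is to show that both error models are subsumed by the unequal-length $k$-substring edit model of \Cref{def_unequalball}, and then invoke \Cref{thm_uneqfinalcode}. Fix $q\ge2$ and $k\ge t\ge1$. Let $\theta$ be the syndrome value given by \Cref{thm_uneqfinalcode}, so that $\red(\cC^{\ne}(n;\theta))\le\log n+O_{q,k}(1)$. I will show that $\cC^{\ne}(n;\theta)$ is itself both a $(\le k)$-burst-deletion correcting code and a $(t,k)$-localized deletion correcting code. Since $t\le k$, the constant $O_{q,k}(1)$ can be written as $O_{q,t,k}(1)$.

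First, consider a burst of $d$ deletions with $1\le d\le k$. It turns $\bmx=\bmu\bms\bmv$, where $\abs{\bms}=d$, into $\bmy=\bmu\bmv$. This is a $k$-substring edit with $\bmt=\epsilon$, and $\abs{\bmy}=n-d\ne n$. Hence the set of sequences obtainable from $\bmx$ by at most one such burst, excluding the trivial case $d=0$, is contained in $\cE_{1,k}^{\ne}(\bmx)$. Disjointness of these sets for distinct codewords then follows directly from \Cref{thm_uneqfinalcode}.

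Second, consider a $(t,k)$-localized deletion. It removes $d\in[1,t]$ symbols from a window of $k$ consecutive coordinates, say $\bmx_{[i,i+k-1]}$; near the end of $\bmx$ the window is truncated to $[i,n]$. Set $\bms=\bmx_{[i,i+k-1]}$ and let $\bmt$ be the subsequence of $\bms$ that survives. Then $\bmy=\bmu\bmt\bmv$ with $\abs{\bms}\le k$, $\abs{\bmt}=\abs{\bms}-d\le k$, and $\abs{\bms}+\abs{\bmt}>0$, so this is a $k$-substring edit with $\abs{\bmy}=n-d\ne n$.

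The representation above may not satisfy the minimality convention \eqref{eq_minimalends}. This causes no difficulty: \Cref{def_substringedit} only requires existence of some representation, and $\cE_{1,k}$ is defined through that existence. So $\bmy\in\cE_{1,k}^{\ne}(\bmx)$, and the same disjointness argument applies.

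I expect no real obstacle here. The only point needing care is this minimal-representation convention, and it does not affect membership in $\cE_{1,k}^{\ne}$. For the same reason, the decoder of \Cref{thm_uneqfinalcode} applies unchanged to both models.
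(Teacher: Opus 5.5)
Your proposal is correct and follows the paper's argument. The paper likewise observes (in \Cref{sec_relatedmodel} and just before the corollary) that a burst of $d\in[1,k]$ deletions and a $(t,k)$-localized deletion are length-changing $k$-substring edits, so the corresponding error sets lie in $\cE_{1,k}^{\ne}(\bmx)$ and \Cref{thm_uneqfinalcode} applies directly. Your remarks on the truncated window, and on the minimality convention \eqref{eq_minimalends} being irrelevant to membership in $\cE_{1,k}$, are accurate and slightly more explicit than the paper.
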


\section{Conclusion}\label{sec_conlcusion}
In this paper, we developed a new method to construct a $q$-ary $k$-substring edit correcting code with redundancy $\log n + 8\log\log n + o(\log\log n)$, for any fixed $q\ge2$ and $k\ge1$. This improves upon the previous binary construction of Li \emph{et al.}, whose redundancy was $\log n + 16k\log\log n + o(\log\log n)$. When the deleted substring and the inserted sequence have different lengths, we constructed $q$-ary codes with redundancy $\log n + O_{q,k}(1)$, which is optimal up to an additive constant. This gives $q$-ary $(\le t)$-burst-deletion correcting codes and $(t,T)$-localized deletion correcting codes with redundancies $\log n + O_{q,t}(1)$ and $\log n + O_{q,T}(1)$, respectively, for all fixed $q\ge2$ and $1\le t\le T$.

Several questions remain. Recall that the lower bound on the redundancy of a $k$-substring edit correcting code is $\log n-O_{q,k}(1)$ for fixed $q$ and $k$. For general $k$-substring edits, it is natural to ask whether the additive $O_{q,k}(\log\log n)$ redundancy can be reduced to $O_{q,k}(1)$. A natural idea is to extend the unequal-length construction to include the equal-length setting. However, we showed in Proposition~B.1 that a straightforward extension fails for a particular syndrome class, indicating that new ideas are required. For the unequal-length model, reducing the parameter-dependent constants in the redundancy and improving the decoding complexity would make the construction more useful at moderate lengths.

\appendices
\section{Proof of \texorpdfstring{\Cref{prop_complexitySynne}}{the proposition}}\label{appendix_pfSynne}
\begin{IEEEproof}
For fixed $q,k$, the parameters
$\ell_{\ast},D_{\ast},B_{\ast},T_{\ast},U_{\ast},E_{\ast}$,
and $K_{\ast}$ are constants, whereas
$M_{n,\ast}=\Theta_k(n)$. We compute the marker sequence,
the marker syndrome $\mathsf{H}_{n}^{\ne}$, and the label checks as follows.

First, compute $\bmalpha=\bm{\chi}_k(\bmx)$.
Pad $\bmx$ with $k$ zeros on the left and
$\ell_{\ast}+k$ zeros on the right. Scan the resulting sequence using a sliding window of length $\ell_{\ast}+2k+1$. Its $n$ successive windows correspond to $\bmx_{[i-k,i+\ell_{\ast}+k]}$ for $i\in[n]$, under the zero extension.

By \Cref{lem_markerproperty}~(iii), each window determines whether its corresponding coordinate is a marker. Perform the required tests and comparisons, and output $1$ if the coordinate is a marker and $0$ otherwise. Advance the window by removing its first symbol and appending the next symbol. Since $q,k$ are fixed, updating one window and deciding whether the corresponding coordinate is a marker cost $O_{q,k}(1)$ bit operations. There are exactly $n$ windows, so computing $\bmalpha$ costs $O_{q,k}(n)$ bit operations.

Next, compute $\mathsf{H}_{n}^{\ne}(\bmalpha)$ by scanning $\bmalpha$ from left to right. For the VT residue, maintain $V\in[0,M_{n,\ast}-1]$, initially zero, and a position counter $i$, initially one. Whenever $\alpha_i=1$, add $i$ to $V$ and subtract $M_{n,\ast}$ if the sum is at least $M_{n,\ast}$. Since $i\le n<M_{n,\ast}$, the sum is less than $2M_{n,\ast}$. Therefore, at most one subtraction is needed. After reading each bit, increase $i$ by one. The counter $i$, the residue $V$, and the temporary sum have $O_k(\log n)$ bits. Each increment, addition, comparison, or subtraction therefore costs $O_k(\log n)$ bit operations. There are $n$ increments and at most $n$ residue updates. Therefore, computing $\vt\parenv{\bmalpha}\bmod{M_{n,\ast}}$ costs $O_k\parenv{n\log n}$ bit operations.

The $2K_{\ast}+1$ residues $W:=\lwt{\bmalpha}\bmod 2U_{\ast}$, $P_r$, and $R_r$ are computed as we read $\bmalpha$ from left to right. Use $a=i\bmod K_{\ast}$ and $b=i\bmod 2U_{\ast}$ for the current position $i$. Let $c$ be the number of ones already read modulo $K_{\ast}$. Initially, set $a=b=1$, $c=W=0$, and all $P_r,R_r$ to zero. If $\alpha_i=1$, first set $c=(c+1)\bmod K_{\ast}$. Then update $W\gets(W+1)\bmod 2U_{\ast}$, $P_a\gets(P_a+1)\bmod 2$, and
$R_c\gets(R_c+b)\bmod 2U_{\ast}$. For the $j$-th one, the updated value of $c$ is $j\bmod K_{\ast}$, so its position contributes to the required $R_r$. After each bit, increase $a$ modulo $K_{\ast}$ and $b$ modulo $2U_{\ast}$. These counters and residues have $O_k(1)$ bits, and each input bit requires only a constant number
of updates. Thus the computation of these $2K_{\ast}+1$ residues costs $O_k(n)$ bit operations. Together with the VT computation, computing $\mathsf{H}_{n}^{\ne}(\bmalpha)$ costs $O_k\parenv{n\log n}$ bit operations.

Finally, compute the $2U_{\ast}$ label syndromes $\bmG_{r}^{\ne}(\bmx)$ by scanning $0\bmx$ and $\bmalpha$ in order. Initialize all group sums $\bmG_{r}^{\ne}(\bmx)$ to zero and a group counter $g$ to zero. Start the first segment with the initial symbol $0$. Whenever $\alpha_i=1$, finish the current segment before $x_i$, add its label to $\bmG_g^{\ne}(\bmx)$ componentwise modulo $q$, increase $g$ modulo $2U_{\ast}$, and start the next segment at $x_i$. After reading $x_n$, finish the last segment and add its label to the corresponding $\bmG_{r}^{\ne}(\bmx)$. For segment $\bmu_j^x$, the group counter is $g=j\bmod 2U_{\ast}$, as required.

For each segment, start with $d=0$ and two empty strings $\Lab_1$ and $\Lab_2$. Read the segment from left to right. For each new symbol, set $d=\min\{d+1,B_{\ast}\}$.
Add the symbol to the end of $\Lab_1$ if $\abs{\Lab_1}<B_{\ast}$. Also add the symbol to the end of $\Lab_2$. If this makes $\Lab_2$ longer than $B_{\ast}$, remove the first symbol of $\Lab_2$. At the end of a segment of length $m$, we have $d=\min\{m,B_{\ast}\}$, and the two length-$d$ sequences $\Lab_1$ and $\Lab_2$ are respectively the prefix and suffix of that segment.
Add $B_{\ast}-d$ zeros to the end of each of $\Lab_1$ and $\Lab_2$, and then form the label $\Lab=\Lab_1\Lab_2$. Since $q,k$ are fixed, the counter and both strings have $O_{q,k}(1)$ bits. Thus each symbol update and each label computation costs $O_{q,k}(1)$ bit operations.

Adding one label to a group sum requires $2B_{\ast}$
component updates modulo $q$. Each update adds two
integers in $[0,q-1]$ and subtracts $q$ if necessary,
so it costs $O_q(1)$ bit operations. Increasing $g$ modulo $2U_{\ast}$, setting $d=0$, and resetting $\Lab_1$ and $\Lab_2$ to empty sequences also cost $O_{q,k}(1)$ bit operations per segment. Reading $0\bmx$ processes $n+1$ symbols and gives at most $n+1$ segments. Updating $d$ and the two stored strings costs
$O_{q,k}(1)$ bit operations for each input symbol.
After a segment has been read, forming its label,
updating the group sum and counter, and preparing
the stored strings for the next segment cost
$O_{q,k}(1)$ bit operations.
There are $n+1$ input symbols and at most $n+1$
segments, so all label checks cost $O_{q,k}(n)$
bit operations.

Computing $\bmalpha$ and the label syndromes costs
$O_{q,k}(n)$ bit operations, while computing
$\mathsf{H}_{n}^{\ne}(\bmalpha)$ costs
$O_k\parenv{n\log n}$ bit operations.
The total cost is therefore
$O_{q,k}\parenv{n\log n}$, as claimed.
\end{IEEEproof}

\section{Equal-Length Edits}\label{sec_counterexample}
In \Cref{thm_uneqfinalcode}, we show that every nonempty set $\cC^{\ne}(n;\theta)$ is an unequal-length $k$-substring edit correcting code. The example below shows that the correction property in \Cref{thm_uneqfinalcode} cannot be extended to all equal-length edits.
\begin{proposition}\label{prop_example}
Let $q\ge3$ and $k=2$. At length $n=84$, there is a
syndrome value $\theta\in\cV_{\mathsf{Syn}}$ such that
$\cC^{\ne}(n;\theta)$ contains two distinct sequences
with a common received sequence after one substitution.
\end{proposition}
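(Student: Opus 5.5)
The plan is to exhibit an explicit pair. The counterexample exploits the fact that, for two equal-length sequences $\bmx,\bmz$ differing in exactly two coordinates $i<j$, there is always a common received sequence after one substitution. Indeed, define $\bmy$ by $y_\ell=x_\ell$ for $\ell\ne i$ and $y_i=z_i$. Then $\bmy$ is obtained from $\bmx$ by substituting at $i$. It is also obtained from $\bmz$ by substituting at $j$, because $y_j=x_j$ and $\bmy$ agrees with $\bmz$ everywhere else. A substitution is a $1$-substring edit, hence a $2$-substring edit, so $\bmy\in\cE_{1,2}(\bmx)\cap\cE_{1,2}(\bmz)$. The positions $i$ and $j$ may be arbitrarily far apart. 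This is exactly what \Cref{lem_affectedsegments} rules out in the unequal-length case; its proof used $h\ne0$. So it suffices to find $\bmx\ne\bmz$ in $\Sigma_q^{84}$ differing in two coordinates with $\mathsf{Syn}_{84}^{\ne}(\bmx)=\mathsf{Syn}_{84}^{\ne}(\bmz)$, and then to take $\theta$ equal to this common value.

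I would make the two syndrome components agree separately. First, I would arrange $\bm{\chi}_2(\bmx)=\bm{\chi}_2(\bmz)$, which gives $\mathsf{H}_{84}^{\ne}(\bm{\chi}_2(\bmx))=\mathsf{H}_{84}^{\ne}(\bm{\chi}_2(\bmz))$ trivially. For $k=2$ we have $\ell_{\ast}=2$, $U_{\ast}=10$, and $2U_{\ast}=20$ label groups. I would design a base sequence $\bmx$ whose markers are spaced about every $3$ or $4$ positions, giving at least $21$ segments in $0\bmx$ within length $84$, with all segments shorter than $B_{\ast}=2q^2(q-1)+1\ge37$. Every label is then the whole segment padded with zeros, written twice.

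Next, I would pick segments $\bmu_{j}^x$ and $\bmu_{j+20}^x$, which lie in the same group $r\equiv j \pmod{20}$. I would choose coordinates $i$ and $i'$ at the same offset inside these two segments, where both carry a symbol $c$ with $c-1\ge 0$ and $c+1\le q-1$. Here $q\ge3$ is used, for instance with $c=1$. I would set $z_i=c+1$ and $z_{i'}=c-1$, leaving all other symbols unchanged. The label differences are then $+1$ and $-1$ in the same two label components, so they cancel in $\bmG_r^{\ne}$ modulo $q$. All other groups are untouched.

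The main obstacle is ensuring that the two symbol changes do not alter the marker sequence. Whether a coordinate is special, and how the length-$3$ windows $\bmx_{[\ell,\ell+2]}$ compare lexicographically near $i$ and $i'$, depends on the new values. I would handle this by choosing the local context so that the changed symbol lies in a window where:
\begin{itemize}
\item specialness is preserved, e.g. the symbols at distance $\ell_{\ast}=2$ from it are $0$, and both $c\pm1$ remain nonzero;
\item its windows stay strictly above the minimal window of the nearby marker, which begins with $0$.
\end{itemize}
Raising a symbol to $c+1$ then keeps all comparisons, while lowering to $c-1\ge1$ keeps it above windows starting $00$. Finally, since $n=84$ is concrete, I would verify the chosen pair directly: compute both marker sequences and all $2U_{\ast}$ label sums, and confirm equality. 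If necessary, I would tune the base pattern by a small search until the checks pass.
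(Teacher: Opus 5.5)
Your reduction is correct, and it is the same mechanism the paper uses. You take two sequences with identical marker sequences, so the $\mathsf{H}_{84}^{\ne}$ components agree trivially. They differ in exactly two coordinates, lying in segments $j$ and $j+20$ of the same label group, and their label contributions cancel in $\bmG_j^{\ne}$. A two-coordinate difference then automatically gives a common output of one substitution.

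The gap is that you never exhibit the pair. The statement is a concrete existence claim at $n=84$. The explicit $\bmx,\bmz$, together with the check $\bm{\chi}_2(\bmx)=\bm{\chi}_2(\bmz)$, \emph{is} the proof. Writing ``design a base sequence'' and ``tune by a small search until the checks pass'' leaves the only nontrivial step undone.

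Worse, the one concrete rule you give fails exactly in the boundary case $q=3$. The requirements $c-1\ge0$ and $c+1\le q-1$ force $c=1$, so the lowered symbol becomes $0$. This contradicts your condition that both $c\pm1$ be nonzero, and with zeros at distance $\ell_{\ast}=2$, lowering to $0$ destroys specialness. Take the natural base $(0010)^{21}$: its markers are $1,5,\ldots,81$, and its only two real segments sharing a group are $\bmu_1$ and $\bmu_{21}$. Replacing $x_{83}=1$ by $0$ makes coordinate $81$ non-special, so the marker at $81$ disappears.

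The paper avoids this by exchanging two distinct nonzero symbols instead of moving a common symbol up and down. It takes $\bmy=(0010)^{21}$, $\bmx=(0020)(0010)^{20}$ and $\bmz=(0010)^{20}(0020)$. In each block $00c0$ with $c\in\{1,2\}$, only the first and third coordinates are special, and the first is the marker. Hence all three sequences have markers $\{1,5,\ldots,81\}$. Segments $1$ and $21$ lie in group $1$ and merely swap labels, so every $\bmG_r^{\ne}$ agrees.

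Your $\pm1$ scheme can also be rescued, for example with $\bmx=(0021)^{21}$ and $\bmz=(0022)(0021)^{19}(0020)$. There the symbol lowered to $0$ is the last of its block, and the specialness it destroys belongs only to non-markers. But such a pair has to be written down and checked, not assumed.
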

\begin{IEEEproof}
Here $\ell_{\ast}=2$ and $U_{\ast}=10$, so the segment
labels are grouped by their indices modulo $2U_{\ast}=20$.
Consider the following three sequences:
\begin{equation*}
    \bmy=(0010)^{21},\qquad
    \bmx=(0020)(0010)^{20},\qquad
    \bmz=(0010)^{20}(0020).
\end{equation*}
Clearly, $\bmx\ne\bmz$. All three sequences have length $n=84$. Replacing the symbol $2$ by $1$ at position $3$ of
$\bmx$, or at position $83$ of $\bmz$, gives $\bmy$.
Thus, $\bmx$ and $\bmz$ have a common received sequence after one substitution.

We next check their markers. In each block $00c0$,
where $c\in\{1,2\}$, only the first and third positions are special. Their corresponding length-three sequences are $00c$ and $c00$. A first position is a marker: every other special position within distance two is a third position, whose corresponding sequence starts with $1$ or $2$, whereas $00c$ starts with $0$. A third position is not a marker because the first position of its own block is at distance two and has a smaller corresponding sequence. There are no special coordinates outside $[84]$, since the first two symbols are zero and the sequences are extended by zeros. Hence the set of markers of each of $\bmx$, $\bmy$ and $\bmz$ is $\mathset{1,5,9,\ldots, 81}$, and $\bm{\chi}_k(\bmx)=\bm{\chi}_k(\bmz) =\bm{\chi}_k(\bmy)=(1000)^{21}$. Their marker syndromes in \eqref{eq_synHuneq} are
therefore equal.

The markers split $0\bmx$ and $0\bmz$ into the initial
segment $0$, followed by the $21$ displayed blocks
of length four.
For $\bmx$, segments $1$ and $21$ are $0020$ and $0010$;
for $\bmz$, they are $0010$ and $0020$.
All other segments are equal.
Since $1\equiv21\pmod{20}$, both labels are added to
$\bmG_1^{\ne}$.
For both sequences, this group sum contains one label
for $0020$ and one label for $0010$, so the sums are equal.
Every other group sum is also equal because its segments
are unchanged. Thus
$\bmG_r^{\ne}(\bmx)=\bmG_r^{\ne}(\bmz)$ for every
$r\in[0,19]$.

By \eqref{eq_fullsyndrome}, we have
$\mathsf{Syn}_{84}^{\ne}(\bmx)
=\mathsf{Syn}_{84}^{\ne}(\bmz)$.
Take $\theta$ to be this common value.
Then $\bmx,\bmz\in\cC^{\ne}(84;\theta)$, and both
give $\bmy$ after one substitution, as required.
\end{IEEEproof}

\bibliographystyle{IEEEtran}
\bibliography{ref}

@article{Rosser1962,
  author  = {Rosser, J. Barkley and Schoenfeld, Lowell},
  title   = {Approximate formulas for some functions of prime numbers},
  journal = {Ill. J. Math.},
  volume  = {6},
  number  = {1},
  pages   = {64--94},
  year    = {1962},
  month={Mar.}
}

@article{Levenshtein1966,
  author  = {Vladimir I. Levenshtein},
  title   = {Binary {C}odes {C}apable of {C}orrecting {D}eletions, {I}nsertions, and {R}eversals},
  journal = {Soviet Physics Doklady},
  volume  = {10},
  pages   = {707--710},
  year    = {1966}
}

@article{Levenshtein1967Burst,
  author  = {Vladimir I. Levenshtein},
  title   = {Asymptotically optimum binary code with correction for losses of one or two adjacent bits},
  journal = {Problemy Kibernetiki},
  volume  = {19},
  pages   = {293--298},
  year    = {1967},
  note    = {In Russian; English translation in Systems Theory Research, vol. 19, pp. 298--304, 1970}
}

@article{Tenengolts1984,
  author  = {G. Tenengolts},
  title   = {Nonbinary codes, correcting single deletion or insertion},
  journal = {IEEE Trans. Inf. Theory},
  volume  = {30},
  number  = {5},
  pages   = {766--769},
  year    = {1984},
  month   = {Sep.}
 }

@article{Agrawal2004Annals,
  author  = {Manindra Agrawal and Neeraj Kayal and Nitin Saxena},
  title   = {{PRIMES} is in {P}},
  journal = {Ann. Math.},
  volume  = {160},
  number  = {2},
  pages   = {781--793},
  year    = {2004},
  month={Sept.}
}

@ARTICLE{Schoeny2017IT,
  author={Schoeny, Clayton and Wachter-Zeh, Antonia and Gabrys, Ryan and Yaakobi, Eitan},
  journal={IEEE Trans. Inf. Theory}, 
  title={Codes {C}orrecting a {B}urst of {D}eletions or {I}nsertions}, 
  year={2017},
  month={Jan.},
  volume={63},
  number={4},
  pages={1971--1985}
}

@article{Gabrys2018Damerau,
  author  = {Ryan Gabrys and Eitan Yaakobi and Olgica Milenkovic},
  title   = {Codes in the {D}amerau {D}istance for {D}eletion and {A}djacent {T}ransposition {C}orrection},
  journal = {IEEE Trans. Inf. Theory},
  volume  = {64},
  number  = {4},
  pages   = {2550--2570},
  year    = {2018},
  month   = {Apr.}
  }

@article{Agrawal2019Annals,
  author  = {Manindra Agrawal and Neeraj Kayal and Nitin Saxena},
  title   = {Errata: {PRIMES} is in {P}},
  journal = {Ann. Math.},
  volume  = {189},
  number  = {1},
  pages   = {317--318},
  year    = {2019},
  month={Jan.}
}

@article{SunXiGe2023IT,
  author  = {Yubo Sun and Yuanxiao Xi and Gennian Ge},
  title   = {Sequence {R}econstruction {U}nder {S}ingle-{B}urst-{I}nsertion/{D}eletion/{E}dit {C}hannel},
  journal = {IEEE Trans. Inf. Theory},
  volume  = {69},
  number  = {7},
  pages   = {4466--4483},
  year    = {2023},
  month   = {Jul.}
  }

@article{Song2023IT,
 title = {{N}on-binary {T}wo-{D}eletion {C}orrecting {C}odes and {B}urst-{D}eletion {C}orrecting {C}odes},
  author    = {Wentu Song and Kui Cai},
  journal   = {IEEE Trans. Inf. Theory},
  volume    = {69},
  number={10},
  year      = {2023},
  month={Oct.},
  pages={6470--6484}
}

@ARTICLE{Ryan2023IT,
  author={Gabrys, Ryan and Guruswami, Venkatesan and Ribeiro, Jo{\~a}o and Wu, Ke},
  journal={IEEE Trans. Inf. Theory}, 
  title={Beyond {S}ingle-{D}eletion {C}orrecting {C}odes: {S}ubstitutions and {T}ranspositions}, 
  year={2023},
  month={Jan.},
  volume={69},
  number={1},
  pages={169--186}
  }

@article{LuZhang2023IT,
  author  = {Ziyang Lu and Yiwei Zhang},
  title   = {$t$-{D}eletion-$s$-{I}nsertion-{B}urst {C}orrecting {C}odes},
  journal = {IEEE Trans. Inf. Theory},
  volume  = {69},
  number  = {10},
  pages   = {6401--6413},
  year    = {2023},
  month   = {Oct.}
  }

@article{KhuatKim2023,
  author  = {Thi-Huong Khuat and Sunghwan Kim},
  title   = {Novel $q$-ary {C}ode {D}esign for {O}ne {E}dit or {A}djacent {T}ransposition {E}rror},
  journal = {IEEE Open J. Commun. Soc.},
  volume  = {4},
  pages   = {2830--2837},
  year    = {2023},
  doi     = {10.1109/OJCOMS.2023.3327435}
}

@article{Shuche2024IT,
 title     = {{N}on-binary {C}odes for {C}orrecting a {B}urst of at {M}ost $t$ {D}eletions},
  author    = {Shuche Wang and Yuanyuan Tang and Jin Sima and Ryan Gabrys and Farzad Farnoud},
  journal   = {IEEE Trans. Inf. Theory},
  volume    = {70},
  number={2},
  year= {2024},
  month={Feb.},
  pages={964--979}
}

@ARTICLE{Tuan2024IT,
  author={Thanh Nguyen, Tuan and Cai, Kui and Siegel, Paul H.},
  journal={IEEE Trans. Inf. Theory}, 
  title={A {N}ew {V}ersion of q-{A}ry {V}arshamov-{T}enengolts {C}odes {W}ith {M}ore {E}fficient {E}ncoders: {T}he {D}ifferential {VT} {C}odes and {T}he {D}ifferential {S}hifted {VT} {C}odes}, 
  year={2024},
  month={Oct.},
  volume={70},
  number={10},
  pages={6989--7004}
}

@article{Song2025Entropy,
 title= {Some {N}ew {C}onstructions of $q$-ary {C}odes for {C}orrecting a {B}urst of at {M}ost $t$ {D}eletions},
  author = {Wentu Song and Kui Cai and Tony Q. S. Quek},
  journal= {Entropy},
  volume= {27},
  number={1},
  pages={85},
  year= {2025},
  month={Jan.}
}

@article{YutingLi2025IT,
  author  = {Yuting Li and Yuanyuan Tang and Hao Lou and Ryan Gabrys and Farzad H. Farnoud},
  title   = {Optimal {C}odes {C}orrecting a {S}ubstring {E}dit},
  journal = {IEEE Trans. Inf. Theory},
  volume  = {71},
  number  = {7},
  pages   = {5178--5191},
  year    = {2025},
  month={Jul.}
}

@article{YuanyuanTang2025TCOMM,
  author  = {Yuanyuan Tang and Sarvin Motamen and Hao Lou and Kallie Whritenour and Shuche Wang and Ryan Gabrys and Farzad H. Farnoud},
  title   = {Correcting a {S}ubstring {E}dit {E}rror of {B}ounded {L}ength},
  journal = {IEEE Trans. Commun.},
  volume  = {73},
  number  = {1},
  pages   = {12--21},
  year    = {2025},
  month={Jan.}
}

@ARTICLE{LiGabrysFarnoud2026IT,
  author={Li, Yuting and Gabrys, Ryan and Farnoud, Farzad},
  journal={IEEE Trans. Inf. Theory}, 
  title={Constructing {L}ow-{R}edundancy {C}odes via {D}istributed {G}raph {C}oloring}, 
  year={2026},
  month={Oct.},
  volume={72},
  number={10},
  pages={7344-7358}
  }

@article{Yubo2025tsburstIT,
  author = {Yubo Sun and Ziyang Lu and Yiwei Zhang and Gennian Ge},
  title = {Asymptotically {O}ptimal {C}odes for {$(t,s)$}-{B}urst {E}rror},
  journal = {IEEE Trans. Inf. Theory},
  volume = {71},
  number = {3},
  pages = {1570--1584},
  month ={Mar.},
  year = {2025}
}

@ARTICLE{Yubosun2025IT,
  author={Yubo Sun and Gennian Ge},
  journal={IEEE Trans. Inf. Theory}, 
  title={Codes for {C}orrecting a {B}urst of {E}dits {U}sing {W}eighted-{S}ummation {VT} {S}ketch}, 
  year={2025},
  month={Mar.},
  volume={71},
  number={3},
  pages={1631--1646}
}

@ARTICLE{ZuoYe2026IT,
  author={Ye, Zuo and Sun, Yubo and Ge, Gennian},
  journal={IEEE Trans. Inf. Theory}, 
  title={Correcting {B}ursty/{L}ocalized {D}eletions: {A} {N}ew {E}rror-{P}osition-{E}stimation {C}ode}, 
  year={2026},
  month={Mar.},
  volume={72},
  number={3},
  pages={1667--1682}
  }

@book{Higham2002,
  author    = {Higham, Nicholas J.},
  title     = {Accuracy and Stability of Numerical Algorithms},
  edition = {2},
  publisher = {Society for Industrial and Applied Mathematics},
  address   = {Philadelphia, PA, USA},
  year      = {2002}
}

@book{Shoup2008,
  author    = {Shoup, Victor},
  title     = {A Computational Introduction to Number Theory and Algebra},
  edition   = {2},
  publisher = {Cambridge University Press},
  year      = {2008},
  url       = {https://www.shoup.net/ntb/}
}

@INPROCEEDINGS{Lenz2020ISIT,
  author={Lenz, Andreas and Polyanskii, Nikita},
  booktitle={Proc. Int. Symp. Inf. Theory (ISIT)}, 
  title={Optimal {C}odes {C}orrecting a {B}urst of {D}eletions of {V}ariable {L}ength}, 
  year={2020},
  month={Jun.},
  volume={},
  number={},
  pages={757--762},
  address={Los Angeles, CA, USA}
  }

@INPROCEEDINGS{RawadBitar2021ISIT,
  author={Rawad Bitar and Serge Kas Hanna and Nikita Polyanskii and Ilya Vorobyev},
  booktitle={Proc. Int. Symp. Inf. Theory (ISIT)},  
  title={Optimal {C}odes {C}orrecting {L}ocalized {D}eletions}, 
  year={2021},
  month={Jul.},
  volume={},
  number={},
  pages={1991--1996},
  address={Melbourne, Australia}
}
\end{document}